\numberwithin{equation}{section}
\newcommand{\R}{{\Bbb R}}
\newcommand{\C}{{\Bbb C}}
\newcommand{\Z}{{\Bbb Z}}
\DeclareMathOperator{\tr}{tr}
\DeclareMathOperator{\im}{Im}
\DeclareMathOperator{\re}{Re}
\DeclareMathOperator{\sgn}{sgn}
\DeclareMathOperator{\arctanh}{arctanh}
\DeclareMathOperator{\sech}{sech}
\newcommand{\BCS}{\text{\upshape BCS}}
\newcommand{\half}{\text{\upshape half}}
\newcommand{\AF}{\text{\upshape AF}}
\newcommand{\qm}{\text{\upshape qm}}
\def\XXint#1#2#3{{\setbox0=\hbox{$#1{#2#3}{\int}$}
\vcenter{\hbox{$#2#3$}}\kern-.5\wd0}}
\newtheorem{theorem}{Theorem}[section]
\newtheorem{lemma}[theorem]{Lemma}
\newtheorem{remark}[theorem]{Remark}
\newtheorem{figuretext}{Figure}
\title[Universality of mean-field antiferromagnetic order]{Universality of mean-field antiferromagnetic order in an anisotropic 3D Hubbard model at half-filling}
\author{E. Langmann$^{1}$ and J. Lenells$^{2}$}
\address{$^1$Department of Physics, KTH Royal Institute of Technology, \\ 106 91 Stockholm, Sweden
	\\
$^{2}$Department of Mathematics, KTH Royal Institute of Technology, \\ 100 44 Stockholm, Sweden}
\email{langmann@kth.se}
\email{jlenells@kth.se}
\begin{document}
\begin{abstract}
We study the 3D anisotropic Hubbard model on a cubic lattice with hopping parameter $t$ in the $x$- and $y$-directions and a possibly different hopping parameter $t_z$ in the $z$-direction; this model interpolates between the 2D and 3D Hubbard models corresponding to the limiting cases $t_z=0$ and $t_z=t$, respectively.
We first derive all-order asymptotic expansions for the density of states. Using these expansions and units such that $t=1$, we analyze how the N\'eel temperature and the antiferromagnetic mean field depend on the coupling parameter, $U$, and on the hopping parameter $t_z$. We derive asymptotic formulas valid in the weak coupling regime, and we study in particular the transition from the three-dimensional to the two-dimensional model as $t_z \to 0$. 
It is found that the asymptotic formulas are qualitatively different for $t_z = 0$ (the two-dimensional case) and $t_z > 0$ (the case of nonzero hopping in the $z$-direction). 
Our results show that certain universality features of the three-dimensional Hubbard model are lost in the limit $t_z \to 0$ in which the three-dimensional model reduces to the two-dimensional model.   
\end{abstract} 

\maketitle

\noindent
{\small{\sc AMS Subject Classification (2020)}: 81T25, 82D03, 45M05, 81V74.}

\noindent
{\small{\sc Keywords}: Hubbard model, Hartree--Fock theory, universality, N\'eel temperature, antiferromagnetism, mean-field equation.}


\section{Introduction}
The Hubbard model was independently introduced by Gutzwiller \cite{G1963}, Hubbard \cite{H1963}, and Kanamori \cite{K1963} in 1963 as a model for electrons in transition and rare-earth metals. It is often described as being of similar importance to interacting electron physics as the Ising model is to statistical mechanics, see e.g. \cite{QSACG2022}. Interest in the Hubbard model rose in the late 1980's following a proposal by Anderson \cite{A1987} that the 2D Hubbard model provides a prototype model for high-temperature superconductivity.
Although a lot of effort has been devoted to the study of the Hubbard model in the roughly sixty years since it was first described, many basic questions  remain unsettled, see \cite{ABKR2022} for a recent review. 

In this paper, we consider the thermodynamic limit of the Hubbard model with coupling parameter $U > 0$ and nearest-neighbor hopping on a hypercubic lattice $\Lambda_L \subset \Z^n$ with $L^n$ lattice points in two ($n = 2$) and three ($n = 3$) dimensions; see Section~\ref{sec:Hubbard} for precise definitions. In two dimensions, we take the nearest-neighbor hopping parameter $t$ to be the same for all bonds; in three dimensions, we allow for a different hopping parameter, $t_z$, in the $z$-direction, which makes the model anisotropic. 
We are particularly interested in the limit $t_z \to 0$ in which the anisotropic three-dimensional model reduces to the two-dimensional model. Our motivation for considering the limit $t_z \to 0$ is the following: The 2D Hubbard model (i.e., the model with $t_z = 0$) is often used as a model for high-temperature superconductivity, because the superconductivity is believed to be confined to two-dimensional copper-oxide planes with the surrounding layers merely playing the role of charge reservoirs. This assumption is based on the fact that the distance between the copper-oxide planes is larger than the copper-oxygen interspacing within a single layer \cite{PFCP2014}. However, it is known that there is also electron interaction between different planes (see e.g. \cite{TSKSS1989, SMRB1995}), meaning that the hopping parameter $t_z$ in the $z$-direction is small, but not zero. Setting $t_z = 0$ is an adequate approximation if $t_z \to 0$ is a regular limit, however, as we will argue in this paper, the limit $t_z \to 0$ is, at least in some respects, singular. Even though our results only concern a mean-field approximation at half-filling, they suggest that the anisotropic 3D Hubbard model with a small but nonzero $t_z$ value in some circumstances can serve as a more robust model for high-temperature superconductivity than the 2D Hubbard model.

\subsection{Mean-field equation for antiferromagnetic order}
We will consider the Hubbard model at half-filling (meaning that there is on average one electron per lattice site) and we will restrict ourselves to states whose expectation value of the spin operator has the form
$$\vec{m}(\mathbf{x}) = m_{\AF} (-1)^{x_1 + \cdots + x_n}  \vec{e},$$
where $m_{\AF} \in [0,1]$ is independent of $\mathbf{x} = (x_1, x_2, \dots, x_n) \in\Lambda_L$, and $\vec{e} \in \R^3$ is a fixed unit vector. 
Such states describe uniform antiferromagnetic order. 

In the analysis of the Hubbard model, one is typically forced to resort to various approximation schemes, and we will adopt the mean-field approximation obtained from Hartree--Fock theory. In the thermodynamic limit $L \to \infty$ of an infinitely large lattice, this gives the following equation for the antiferromagnetic order parameter,  
\begin{align}\label{Deltadef} 
\Delta_{\AF} := \frac{Um_{\AF}}{2}, 
\end{align}
which is our starting point:
\begin{align}\label{meanfieldtprimezero}
\Delta_{\AF} = \Delta_{\AF} \int_{[-\pi,\pi]^n} 
\frac{U\tanh(\frac{\sqrt{\Delta_{\AF}^2 + \varepsilon(\mathbf{k})^2}}{2T})}{2\sqrt{\Delta_{\AF}^2 + \varepsilon(\mathbf{k})^2}} \frac{d\mathbf{k}}{(2\pi)^n},
\end{align}
where $T > 0$ is the temperature and
\begin{align}\label{epsilondef}
\varepsilon(\mathbf{k}) := \begin{cases} 
- 2 t(\cos{k_1} + \cos{k_2}) & \text{for $n = 2$}, \\
- 2 t(\cos{k_1} + \cos{k_2}) - 2 t_z \cos{k_3} & \text{for $n = 3$}. 
\end{cases}
\end{align}
If $T = 0$, the mean-field equation for $\Delta_{\AF}$ is given by
\begin{align}\label{meanfieldtprimezeroT0}
\Delta_{\AF} = \Delta_{\AF} \int_{[-\pi,\pi]^n} 
\frac{U}{2\sqrt{\Delta_{\AF}^2 + \varepsilon(\mathbf{k})^2}} \frac{d\mathbf{k}}{(2\pi)^n},
\end{align}
which is the natural limit of (\ref{meanfieldtprimezero}) as $T$ tends to $0$. We note that $\Delta_{\AF}$ has a natural physics interpretation as the antiferromagnetic energy gap, and $\varepsilon(\mathbf{k})$ is known as the tight-binding band relation.

A systematic account of Hartree--Fock theory for the Hubbard model can be found in \cite{BLS1994} and a derivation of equations (\ref{meanfieldtprimezero}) and (\ref{meanfieldtprimezeroT0}) is provided in Section \ref{derivationsec}. It is worth mentioning that, for any dimension $n \geq 2$, the Hartree--Fock approximation of the Hubbard model becomes exact in the limit of small $U$, in the sense that the difference between the Hartree--Fock mean-field energy and the true ground state energy vanishes in this limit \cite{BP1997}. 

When $t_z$ is set to zero, equation (\ref{meanfieldtprimezero}) with $n = 3$ reduces to the corresponding equation with $n = 2$, because when $t_z = 0$ the integrand is independent of $k_3$ and $\frac{1}{2\pi} \int_{[-\pi, \pi]} dk_3 = 1$. In other words, the equation for $\Delta_{\AF}$ for the three-dimensional model reduces to the corresponding equation for the two-dimensional model in the limit $t_z \to 0$.
We will therefore henceforth assume that $n = 3$; the two-dimensional model will be included in our analysis by allowing for $t_z = 0$.

\subsection{N\'eel temperature}
It turns out that for any $U> 0$, $t > 0$, and $t_z \geq 0$, there is a unique temperature $T_N > 0$, referred to as the N\'eel temperature, such that (\ref{meanfieldtprimezero}) has a unique positive solution $\Delta_{\AF}>0$ whenever $T \in (0, T_N)$ whereas no such solution exists if $T \geq T_N$. 
The N\'eel temperature (for $n = 3$) can be characterized as the unique solution $T_N > 0$ of the equation
\begin{align}\label{TNeelequation}
1 = \int_{[-\pi,\pi]^3} 
\frac{U\tanh(\frac{\varepsilon(\mathbf{k})}{2T_N})}{2\varepsilon(\mathbf{k})} \frac{dk_1dk_2dk_3}{(2\pi)^3}.
\end{align}
We include proofs of these facts in Section \ref{existencesubsec}. Clearly, $\Delta_{\AF}=0$ is a solution of equation (\ref{meanfieldtprimezero}) for any temperature $T$. However, for $T < T_N$, the magnetically ordered state with $\Delta_{\AF}>0$ is energetically preferred. The phase transition from the ordered to the disordered state occurs at the N\'eel temperature $T_N$.

\subsection{Scaling properties}
Equation (\ref{meanfieldtprimezero}) depends on the four parameters $(U,t,t_z, T)$ and equation (\ref{TNeelequation}) depends on the three parameters $(U,t,t_z)$.
It is easy to see that if $\Delta_{\AF}$ and $T_N$ are solutions of (\ref{meanfieldtprimezero}) and (\ref{TNeelequation}) associated to the parameters $(U,t,t_z, T)$ and $(U, t, t_z)$, respectively, then, for any $\alpha > 0$, $\alpha\Delta_{\AF}$ and $\alpha T_N$ are solutions of (\ref{meanfieldtprimezero}) and (\ref{TNeelequation})  associated to the parameters $(\alpha U, \alpha t, \alpha t_z, \alpha T)$ and $(\alpha U, \alpha t, \alpha t_z)$, respectively. It is convenient to use these scaling properties with $\alpha = 1/t$ to set $t = 1$.\footnote{Some formulas below, such as \eqref{N0expansionfirstfew} and \eqref{Ntzat0expansion}, become slightly simpler with a different choice of $t$, namely $t = 1/16$. However, we choose $t = 1$ because it is the most common normalization in the literature.}
In the rest of the paper, we therefore assume that $t = 1$ and write $\Delta_{\AF}(U, t_z, T)$ and $T_N(U, t_z)$ for the solutions of (\ref{meanfieldtprimezero}) and (\ref{TNeelequation}) associated to the parameters $(U,1,t_z, T)$ and $(U, 1, t_z)$, respectively. A similar statement applies to the solution of equation (\ref{meanfieldtprimezeroT0}) at temperature $T = 0$ which we denote by $\Delta_{\AF}(U, t_z, 0)$.

\begin{figure}
\bigskip
\begin{center}
\begin{overpic}[width=.46\textwidth]{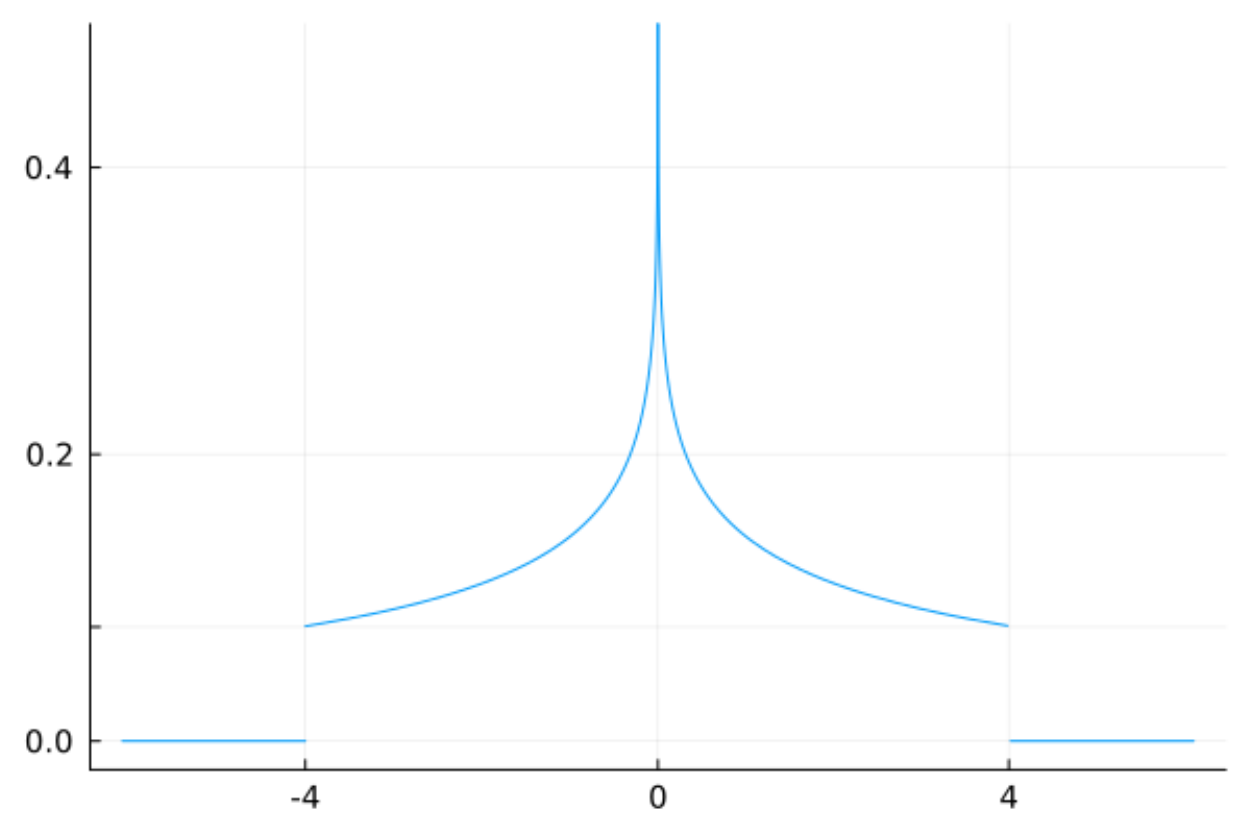}
      \put(2,68){\footnotesize $N_{0}(\epsilon)$}
      \put(101,3.7){\footnotesize $\epsilon$}
       \put(1,15.4){\footnotesize $\frac{1}{4\pi}$}
   \end{overpic}
   \hspace{0.5cm}
\begin{overpic}[width=.46\textwidth]{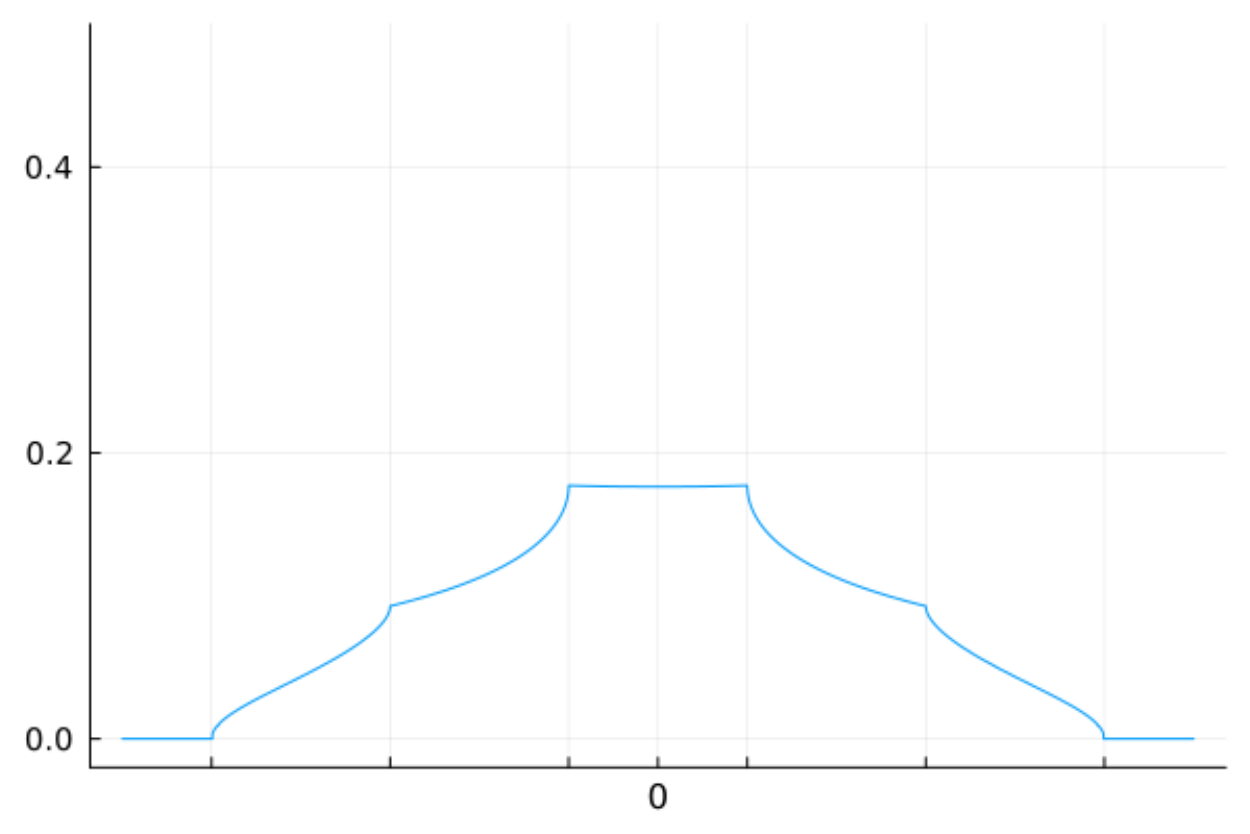}
      \put(1,68){\footnotesize $N_{t_z}(\epsilon)$}
      \put(101,3.7){\footnotesize $\epsilon$}
      \put(10,0.1){\footnotesize $^{-4-2t_z}$}
      \put(25,0.1){\footnotesize $^{-4+2t_z}$}
      \put(40.5,0.1){\footnotesize $^{-2t_z}$}
      \put(58,0.1){\footnotesize $^{2t_z}$}
      \put(70,0.1){\footnotesize $^{4-2t_z}$}
      \put(85,0.1){\footnotesize $^{4+2t_z}$}
    \end{overpic}
     \begin{figuretext}\label{N0N3Dschematicfig}
       The function $\epsilon \mapsto N_{t_z}(\epsilon)$ for $t_z = 0$ (left) and $t_z = 1/2$ (right). 
      \end{figuretext}
     \end{center}
\end{figure}

\subsection{Density of states}
The density of states $N_{t_z}(\epsilon)$ is defined by
\begin{align}\nonumber
N_{t_z}(\epsilon)  
& = \int_{[-\pi,\pi]^3} \delta(\varepsilon(\mathbf{k}) - \epsilon) \frac{dk_1dk_2dk_3}{(2\pi)^3}
	\\\label{3Ddensityofstates}
& = \int_{[-\pi,\pi]^3} \delta\big(2 (\cos{k_1} + \cos{k_2}) + 2 t_z \cos{k_3} + \epsilon\big) \frac{dk_1dk_2dk_3}{(2\pi)^3},
\end{align}
see Figure \ref{N0N3Dschematicfig}. We note that $N_{t_z}(\epsilon)$ is an even function of $\epsilon$, and that
\begin{align}\label{2Ddensityofstates}
N_{0}(\epsilon) = \int_{[-\pi,\pi]^2} \delta\big(2 (\cos{k_1} + \cos{k_2}) + \epsilon\big) \frac{dk_1dk_2}{(2\pi)^2}
\end{align}
is the density of states for the two-dimensional model. 
In terms of $N_{t_z}$, 
equation (\ref{meanfieldtprimezero}) for $\Delta_{\AF}$ can be written as
\begin{align}\label{mAFeq}
\frac{1}{U} = \int_0^\infty N_{t_z}(\epsilon)  
\frac{\tanh(\frac{\sqrt{\Delta_{\AF}^2 + \epsilon^2}}{2T})}{\sqrt{\Delta_{\AF}^2 + \epsilon^2}}  d\epsilon
\end{align}
and equation (\ref{TNeelequation}) for $T_N$ can be written as
\begin{align}\label{TNeelequationNtz}
\frac{1}{U} = \int_0^\infty N_{t_z}(\epsilon) 
\frac{\tanh(\frac{\epsilon}{2T_N})}{\epsilon} d\epsilon.
\end{align}
When writing (\ref{mAFeq}), we have ignored the trivial zero solution $\Delta_{\AF}=0$.\footnote{See Lemma \ref{Gminlemma} for a proof of the fact that the nonzero solution is the physically preferred solution for temperatures below the N\'eel temperature.} Moreover, we adopt the convention that $\tanh(\cdot/T) = 1$ for $T = 0$ and then equation (\ref{meanfieldtprimezeroT0}) for $\Delta_{\AF}$ at temperature $T= 0$ is included as a special case of (\ref{mAFeq}).

\subsection{Brief description of main results}
The main results of the paper are presented in the form of eight theorems:
\begin{enumerate}[$-$]
\item Theorems \ref{2Ddensityth} and \ref{3Ddensityth} provide asymptotic expansions to all orders of the two-dimensional density of states $N_0(\epsilon)$ and of the three-dimensional density of states $N_{t_z}(\epsilon)$, respectively, as $\epsilon \to 0$. While $N_0(\epsilon)$ has a logarithmic singularity at $\epsilon = 0$, the function $N_{t_z}(\epsilon)$ is regular at $\epsilon = 0$ for any $t_z > 0$. This is a first indication of the singular nature of the limit $t_z \to 0$. The logarithmic singularity is known as a van Hove singularity in the physics literature and is a consequence of the fact that $\varepsilon(\mathbf{k})$ has saddle points on the fermi surface $\varepsilon(\mathbf{k}) =0$ when $n = 2$ (at $\mathbf{k} = (\pi, 0)$ and $\mathbf{k} = (0,\pi)$). No saddle points are present on this fermi surface when $t_z > 0$.

\item Theorem \ref{3Dtzdensityth} provides an asymptotic expansion of $N_{t_z}(0)$ as $t_z \to 0$, showing in particular that $N_{t_z}(0)$ has a logarithmic singularity as $t_z \to 0$. 

\item Theorems \ref{2DNeelth} and \ref{3DNeelth} provide the leading asymptotic behavior of the N\'eel temperature $T_N(U,t_z)$ as $U$ tends to zero for $t_z = 0$ and $t_z > 0$, respectively. Both $T_N(U,0)$ and $T_N(U,t_z)$ with $t_z > 0$ are exponentially small in the limit $U \to 0$, but the exponential decay is much faster if $t_z > 0$.

\item Theorems \ref{2Dmeanfieldth} and \ref{3Dmeanfieldth} provide asymptotic expansions of the antiferromagnetic order parameter $\Delta_{\AF}(U, t_z, T)$ as $U \to 0$ for $t_z = 0$ and $t_z > 0$, respectively. In fact, rather than stating the asymptotics for $\Delta_{\AF}$, we give the asymptotics for the quotient
\begin{align}\label{hatmdef}
\hat{m}(U, t_z, T) := \frac{\Delta_{\AF}(U, t_z, T)}{T_N(U, t_z)}, 
\end{align}
which we sometimes call the gap ratio. 
It is natural to consider $\hat{m}$ because the coefficients of the expansions of $\hat{m}$ depend on $T$ and $t_z$ only via the reduced temperature $\frac{T}{T_N(U,t_z)}$.

\item Theorem \ref{3Dmeanfieldimprovedth} computes an exponentially small subleading term in the asymptotics of $\hat{m}(U, t_z, T)$, thereby providing a refinement of Theorem \ref{3Dmeanfieldth}.

\end{enumerate}

\subsection{Universality}
The asymptotic formulas for $\hat{m}$ obtained in Theorems \ref{2Dmeanfieldth}--\ref{3Dmeanfieldimprovedth} involve a function, $f_\BCS$, well-known from Bardeen--Cooper--Schrieffer (BCS) theory for superconductivity (see e.g.\ \cite{L2006}). The function $f_{\BCS}(y)$ can be defined for $y \in [0,1]$ as the unique solution of the equation $J(f_{\BCS}(y), y) =0$, where
\begin{align}\label{Jdef} 
J(x,y) := \int_0^\infty \bigg(\frac{\tanh(\frac{\sqrt{x^2 + \epsilon^2}}{2y})}{\sqrt{x^2 +  \epsilon^2 }} 
- \frac{\tanh(\frac{\epsilon}{2})}{\epsilon} \bigg) d\epsilon,
\end{align}
and $\tanh(\cdot/y) = 1$ for $y = 0$ by convention; it is a decreasing function of $y$ that satisfies $f_{\BCS}(0) = \pi e^{-\gamma} \approx 1.764$ and $f_{\BCS}(1) = 0$, where $\gamma \approx 0.5772$ is Euler's gamma constant, see Figure \ref{fBCSc1fig} (see \cite[Appendix~A]{LT2023} for further details about $f_{\BCS}(y)$). In BCS theory, $f_\BCS$ describes the famous universal behavior of the gap ratio. More precisely, the ratio of the temperature-dependent superconducting gap in BCS theory, $\Delta(T)$, to the superconducting critical temperature, $T_c$, satisfies
\begin{align}\label{fracDeltaTcfBCS}
\frac{\Delta(T)}{T_c} \approx f_\BCS(T/T_c)
\end{align}
up to correction terms that are exponentially small in the BCS coupling parameter, see \cite{LT2023} for a proof. The independence of the right-hand side of (\ref{fracDeltaTcfBCS}) on model parameters is a reflection of the universality of BCS theory. 
In a completely analogous way, Theorem \ref{3Dmeanfieldth} shows that the function $\hat{m}(U, t_z, T)$ in (\ref{hatmdef}) satisfies
\begin{align}\label{hatmfBCS}
\hat{m}(U, t_z, T) \approx f_{\BCS}(T/T_N)
\end{align}
up to correction terms that are exponentially small in the coupling parameter $U$ of the Hubbard model. 
The independence of the right-hand side of (\ref{hatmfBCS}) on model parameters is a sign of universality in the Hubbard model, at least in the Hartree--Fock approximation. It is therefore interesting to note that equation (\ref{hatmfBCS}) is valid {\it only} when $t_z > 0$. For the two-dimensional Hubbard model (i.e., for $t_z = 0$), the universal prediction (\ref{hatmfBCS}) does not hold, rather it is replaced by an expansion of the form
$$\hat{m}(U, 0, T) = f_{\BCS}(T/T_N) + c_1(T/T_N) \sqrt{U} + O(U) \qquad \text{as $U \downarrow 0$}$$
for a special function $c_1(y)$ of $y\in[0,1]$ defined in  \eqref{c1def}, see Figure~\ref{fBCSc1fig}. 
The $U$-dependent term $c_1(T/T_N) \sqrt{U}$ on the right-hand side spoils the universality that was present in (\ref{hatmfBCS}).
Our results therefore show that certain universality features of the three-dimensional Hubbard model are lost in the limit $t_z \to 0$ in which the three-dimensional model reduces to the two-dimensional model.

\begin{figure}
\bigskip
\bigskip
\begin{center}
\begin{overpic}[width=.46\textwidth]{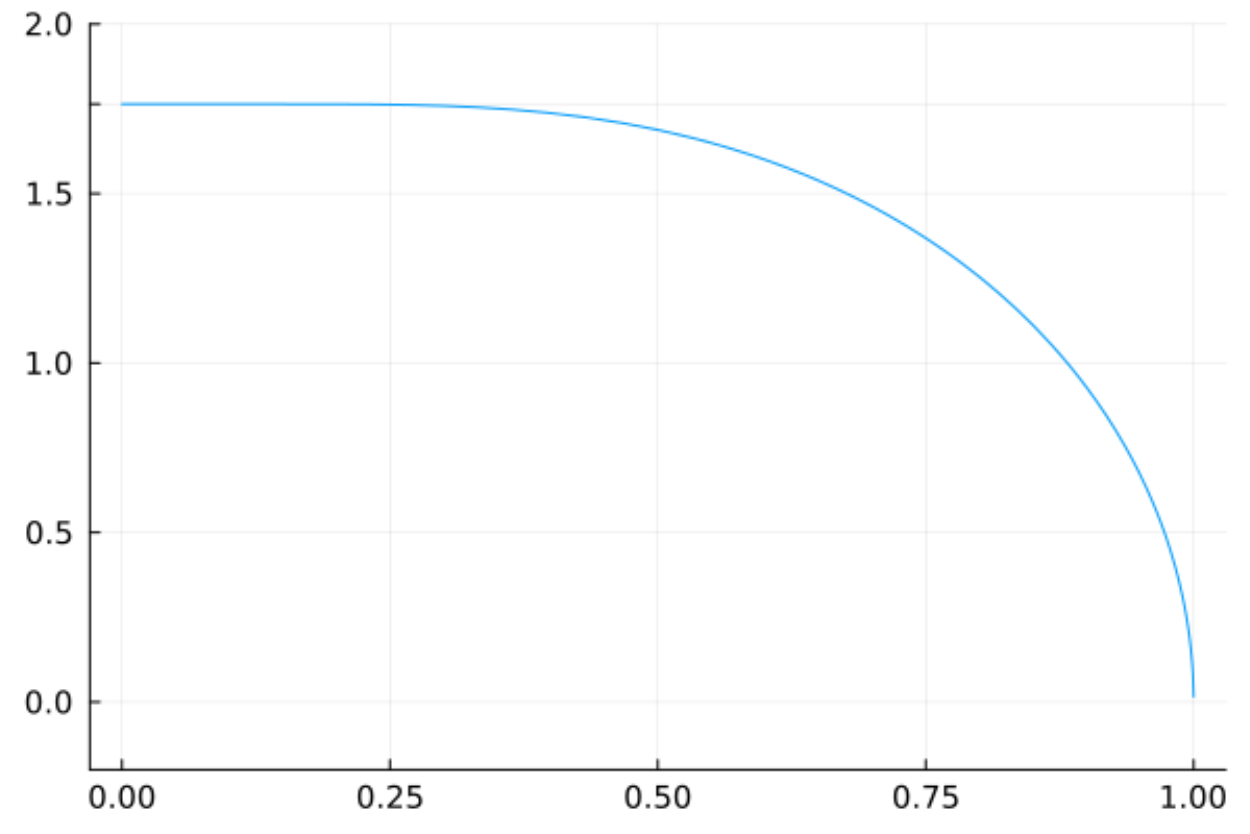}
      \put(1,69){\footnotesize $f_{\BCS}(y)$}
      \put(-4,57.5){\footnotesize $\pi e^{-\gamma}$}
      \put(100.5,3.7){\footnotesize $y$}
    \end{overpic}
   \hspace{0.2cm}
\begin{overpic}[width=.46\textwidth]{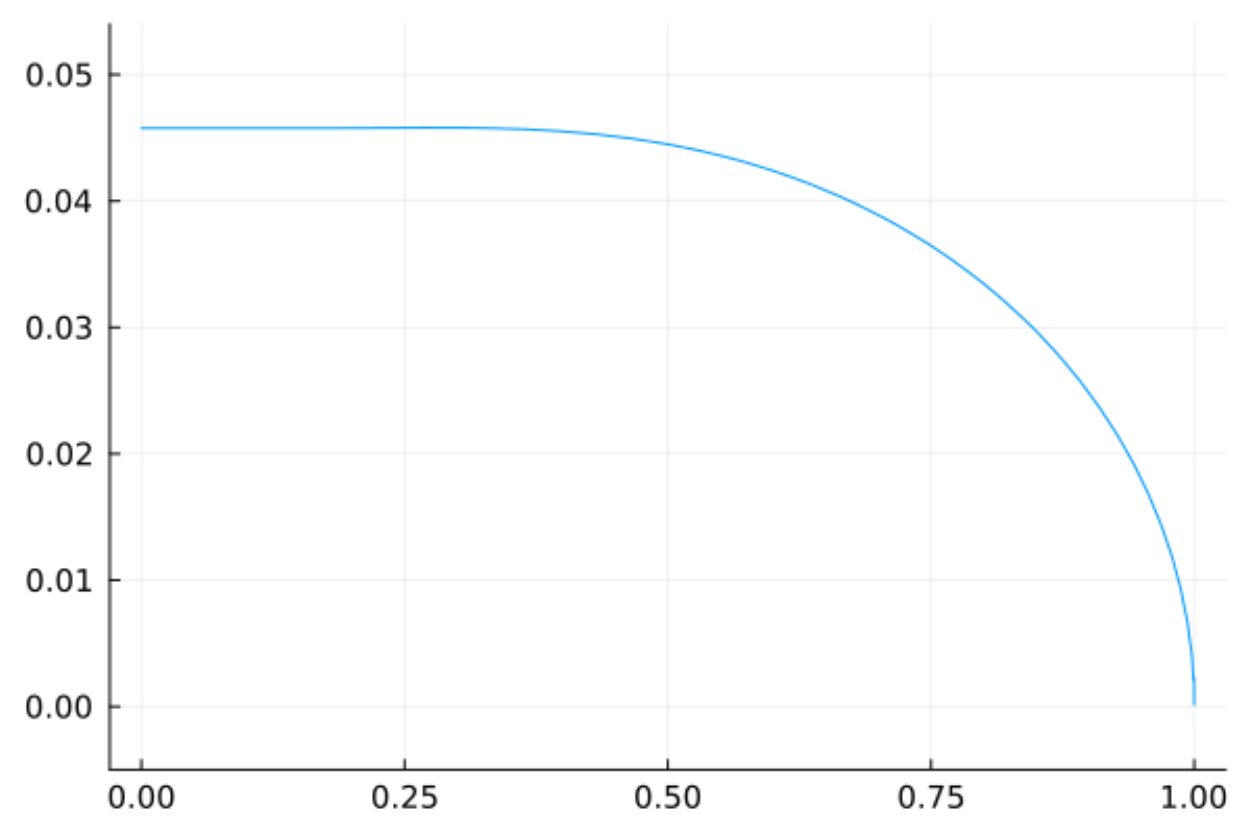}
      \put(4,68){\footnotesize $c_1(y)$}
      \put(100.5,3.7){\footnotesize $y$}
    \end{overpic}
     \begin{figuretext}\label{fBCSc1fig}
       The graphs of the function $f_{\BCS}(y)$ (left) and $c_1(y)$ (right). Note that $c_1(y) \approx \alpha_0 f_{\BCS}(y)$ where $\alpha_0 = c_1(0)/f_{\BCS}(0) \approx 0.02594$ and the error of this approximation, $c_1(y) - \alpha_0 f_{BCS}(y)$, is in the range between 0 and 0.001.
\end{figuretext}
     \end{center}
\end{figure}

\subsection{Related other works} Mathematical results about the Hubbard model in dimensions $n\geq 2$ have a long tradition, with famous examples including works by  Lieb, Lieb and Mattis, and Nagaokar; see e.g.\ \cite{T1998} for review. Our work owes much to previous mathematical results about Hartree--Fock theory for the Hubbard model \cite{BLS1994,BP1996,BP1997}. 
We were also inspired by previous mathematical work on BCS theory; see e.g.\ \cite{FHNS2007,HS2008,HL2022,HL2023}. 

We mention interesting work on non-interacting lattice fermion models with the tight-binding band relation $\varepsilon(\mathbf{k})$ in \eqref{epsilondef} \cite{DJ2001}. Using methods from special function theory, the authors of \cite{DJ2001} obtain an expansion that allows them to compute the density of states $N_{t_z}(\epsilon)$; our expansion is different.   

There is a large amount of work on Hartree--Fock theory on the Hubbard model in the literature which can be traced from the reviews \cite{ABKR2022,QSACG2022}; most of this work is numerical. 

The anisotropic 3D Hubbard model has received recent attention in the physics literature due to the possibility to realize it in cold atom systems; see e.g.\ \cite{I2020} and references therein.

\subsection{Organization of the paper}
The main results are stated in Section \ref{mainsec}. In Section \ref{derivationsec}, we review some aspects of Hartree--Fock theory for the Hubbard model and explain how it leads to the antiferromagnetic mean-field equations (\ref{meanfieldtprimezero}) and (\ref{meanfieldtprimezeroT0}); we also include proofs of the fact that $T_N$ and $\Delta_{\AF}$
are well-defined by these equations. In Sections \ref{densityofstatessec}--\ref{hatmsec}, we study asymptotic properties of the density of states $N_{t_z}(\epsilon)$, of the N\'eel temperature $T_N$, and of the gap ratio $\hat{m}$, respectively. In particular, Sections \ref{densityofstatessec}--\ref{hatmsec} contain proofs of all the main results. 

\subsection{Notation}
Throughout the paper $C >0$ and $c > 0$ will denote generic constants that may change within a calculation.
We write $x \downarrow a$ and $x \uparrow a$ to denote the limit as $x$ tends to $a \in \R$ from above and from below, respectively. 
In order to include the cases of positive and zero temperatures in the same formulas, we adopt the convention that $\tanh(\cdot/T) \equiv 1$ for $T = 0$ and $\tanh(\cdot/y) \equiv 1$ for $y = 0$ (for example, this allows us to include (\ref{meanfieldtprimezeroT0}) as a special case of (\ref{meanfieldtprimezero})).

\section{Main results}\label{mainsec}

\subsection{Asymptotic formulas for the density of states}

Our first main result, whose proof is given in Section \ref{2Ddensitysubsec}, provides an asymptotic expansion to all orders of the two-dimensional density of states $N_0(\epsilon)$ as $\epsilon \to 0$. 

\begin{theorem}[Behavior of $N_0(\epsilon)$ as $\epsilon \to 0$]\label{2Ddensityth}
The two-dimensional density of states $N_0(\epsilon)$ given by (\ref{2Ddensityofstates}) obeys the following asymptotic expansion to all orders as $\epsilon \downarrow 0$:
\begin{align}\label{N0expansion}
N_0(\epsilon)
   \sim -\frac{1}{\pi^2 (4-\epsilon)} \sum_{k,l=0}^\infty A_{k,l}\Big(-\frac{\epsilon}{4 - \epsilon}, \frac{4}{4-\epsilon}\Big).
\end{align}  
The functions $A_{k,l}$ in (\ref{N0expansion}) are defined in terms of the Gamma function $\Gamma(z)$ and the digamma function $\psi(z) = \frac{\Gamma'(z)}{\Gamma(z)}$ by
\begin{align}\nonumber
A_{k,l}(w_1, w_2&)
=  w_1^k (w_2-1)^l \frac{\Gamma \left(k+\frac{1}{2}\right)  \Gamma (k+l+1)}{  \Gamma (k+1)^2 \Gamma \left(\frac{1}{2}-l\right)
   \Gamma (l+1)^2} \bigg\{\ln(-w_1)+ \ln (w_2-1)
   	\\ \label{Akldef}
&  + 2\big(\psi(k+l+1)- \psi(k+1) -\psi(l+1)\big) +\psi\left(k+\frac{1}{2}\right)+\psi \left(\frac{1}{2}-l\right) \bigg\},
 \end{align}
where the principal branch is used for the logarithms. 
In particular, keeping the first several terms in (\ref{N0expansion}), we have
\begin{align}  \nonumber
N_0(\epsilon) 
= &\; \frac{\ln(\frac{16}{\epsilon})}{2 \pi^2} + \frac{\epsilon^2 (\ln(\frac{16}{\epsilon})-1)}{128 \pi^2}
+ \frac{3 \epsilon^4 (6 \ln(\frac{16}{\epsilon})-7)}{2^{16} \pi^2}
	\\ \nonumber
& +\frac{5}{3}\frac{\epsilon^6 (30 \ln(\frac{16}{\epsilon})-37)}{2^{22} \pi^2}
+ \frac{35}{3} \frac{\epsilon^8 (420 \ln(\frac{16}{\epsilon})-533)}{2^{33} \pi^2}
	\\ \label{N0expansionfirstfew}
&+ \frac{63}{5} \frac{\epsilon^{10} (1260\ln(\frac{16}{\epsilon})-1627)}{2^{39} \pi^2} + O\Big(\epsilon^{12} \ln{\tfrac{1}{\epsilon}}\Big) \qquad \text{as $\epsilon \downarrow 0$.}
\end{align} 
The expansions (\ref{N0expansion}) and (\ref{N0expansionfirstfew}) can be differentiated termwise with respect to $\epsilon$ any finite number of times.
\end{theorem}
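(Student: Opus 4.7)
The plan is to identify $N_0(\epsilon)$ with a complete elliptic integral of the first kind and then to obtain the double-sum expansion by combining the classical logarithmic expansion of the hypergeometric function at its singular point with an algebraic rearrangement in the variables $w_1=-\epsilon/(4-\epsilon)$ and $w_2=4/(4-\epsilon)$.

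As the first step, I would integrate out $k_2$ in \eqref{2Ddensityofstates} via the delta function, substitute $u=\cos k_1$, and apply the affine change of variable $u=-1+(2-\epsilon/2)v$ to bring $N_0$ into the form
\[
N_0(\epsilon) \;=\; \frac{1}{\pi^2(4-\epsilon)} \int_0^1 \frac{dv}{\sqrt{v(1-v)(v-w_1)(w_2-v)}}, \qquad \epsilon\in(0,4),
\]
in which the outer prefactor already matches the one in \eqref{N0expansion}. The standard reduction of a four-root elliptic integral with real roots $w_1<0<1<w_2$ then gives the familiar closed form $N_0(\epsilon)=K\bigl(\sqrt{1-(\epsilon/4)^2}\bigr)/(2\pi^2)$, with modulus determined by the cross-ratio of the four roots.

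In the second step, writing $K(k)=\tfrac\pi2\,{}_2F_1(\tfrac12,\tfrac12;1;k^2)$ and applying the degenerate logarithmic connection formula (DLMF 15.8.10) at $z=1$ with $a=b=\tfrac12$, I obtain the convergent single series
\[
N_0(\epsilon) \;=\; \frac{1}{2\pi^2}\sum_{n=0}^\infty \frac{\bigl((1/2)_n\bigr)^2}{(n!)^2}\Bigl[2\psi(n+1)-2\psi(n+\tfrac12)-2\ln(\epsilon/4)\Bigr](\epsilon/4)^{2n},
\]
which, using $\psi(\tfrac12)=-\gamma-2\ln 2$, already reproduces the explicit leading terms in \eqref{N0expansionfirstfew} and in particular the leading behaviour $\ln(16/\epsilon)/(2\pi^2)$.

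To recast this as the double sum \eqref{N0expansion}, I would exploit the algebraic identity $w_2-1=-w_1$, which makes $w_1^k(w_2-1)^l=(-1)^l w_1^{k+l}$ and reduces the double sum to an effective single sum indexed by $m=k+l$. The passage from $(\epsilon/4)^{2n}$ to powers of $w_1=-\epsilon/(4-\epsilon)$ is provided by the binomial expansion of $(1-\epsilon/4)^{-2n}$, and the reflection identity $1/\Gamma(\tfrac12-l)=(-1)^l\Gamma(l+\tfrac12)/\pi$ converts the Gamma quotient in \eqref{Akldef} into the Pochhammer factors appearing in the single-series coefficient. Termwise differentiability is then inherited from absolute convergence of the rearranged series on a neighbourhood of $\epsilon=0$, which follows from the leading bound $w_1^k(w_2-1)^l\sim(\epsilon/(4-\epsilon))^{k+l}$ combined with the polynomial growth of the Gamma and digamma combinations. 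The main obstacle will be the combinatorial identification in this last step: verifying that, for each $m$, the weighted sum over $(k,l)$ with $k+l=m$ of the logarithmic, digamma, and Gamma pieces in \eqref{Akldef} collapses exactly onto the single-series coefficient, including the precise recombination $\ln(-w_1)+\ln(w_2-1)=2\ln(\epsilon/(4-\epsilon))$ and the symmetric digamma shift $\psi(k+\tfrac12)+\psi(\tfrac12-l)$, as well as the absorption of the prefactor $1/(\pi^2(4-\epsilon))$.
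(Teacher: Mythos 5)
Your approach is genuinely different from the paper's, and the elliptic-integral reduction is correct. The paper casts $N_0(\epsilon)$ as a Pochhammer contour integral $F(w_1,w_2)$ with exponents $a=b=c=d=-\tfrac12$ and then invokes a general theorem from \cite{LV2019} on all-order asymptotics of Dotsenko--Fateev integrals (which requires a $\delta\to 0$ regularization because $a+b,c+d\in\Z$). You instead use the classical identity $N_0(\epsilon)=K\bigl(\sqrt{1-(\epsilon/4)^2}\bigr)/(2\pi^2)$ together with the degenerate hypergeometric connection formula at $z=1$, which yields a \emph{convergent} power--logarithm series rather than a merely asymptotic one, and avoids the exponent-degeneracy issue entirely. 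If the goal were only \eqref{N0expansionfirstfew} and termwise differentiability, your route is cleaner and more elementary.

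There are, however, two issues. First, a small but real arithmetic slip: with $K(k)=\tfrac\pi2\,{}_2F_1(\tfrac12,\tfrac12;1;k^2)$ and $\Gamma(\tfrac12)^2=\pi$, the overall prefactor in your series should be $\tfrac{1}{4\pi^2}$, not $\tfrac{1}{2\pi^2}$; the factor $2$ lost here exactly compensates the factor $2$ you keep inside the bracket via $-2\ln(\epsilon/4)=-\ln\bigl((\epsilon/4)^2\bigr)$. With $\tfrac{1}{4\pi^2}$ the $n=0,1$ terms reproduce \eqref{N0expansionfirstfew} verbatim.

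The more serious gap is the one you flag yourself: the passage from your single series to the theorem's double-sum form \eqref{N0expansion}. You correctly note that on the curve $w_2-1=-w_1$ the monomial prefactor in $A_{k,l}$ becomes $(-1)^l w_1^{k+l}$, but this does \emph{not} reduce the double sum to a single sum indexed by $m=k+l$ in the way your plan suggests: the Gamma and digamma weights in \eqref{Akldef} retain full separate $(k,l)$ dependence, the factor $1/\Gamma(\tfrac12-l)$ contributes its own sign, the logarithms contribute $\ln(-w_1)+\ln(w_2-1)=2\ln\tfrac{\epsilon}{4-\epsilon}$ rather than $2\ln(\epsilon/4)$, and the outer factor $\tfrac{1}{4-\epsilon}$ together with $w_1^m=\bigl(\tfrac{\epsilon}{4-\epsilon}\bigr)^m$ produces \emph{all} powers of $\epsilon$, not only even ones, once expanded. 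Consequently the coefficient of $\epsilon^{2n}$ and $\epsilon^{2n}\ln\epsilon$ receives contributions from every $(k,l)$ with $k+l\leq 2n$, weighted by binomial corrections from $(4-\epsilon)^{-(k+l+1)}$ and from $\ln(4-\epsilon)$. This nontrivial resummation is precisely what the paper outsources to \cite[Theorem~2.4]{LV2019}. Until you either carry it out or argue by uniqueness of power--logarithm asymptotic expansions (which would still require you to establish, independently, that the right-hand side of \eqref{N0expansion} \emph{is} the asymptotic expansion of $N_0$, and not merely a formal series), the theorem as stated is not proved.
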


\begin{figure}
\bigskip
\bigskip
\begin{center}
\begin{overpic}[width=.46\textwidth]{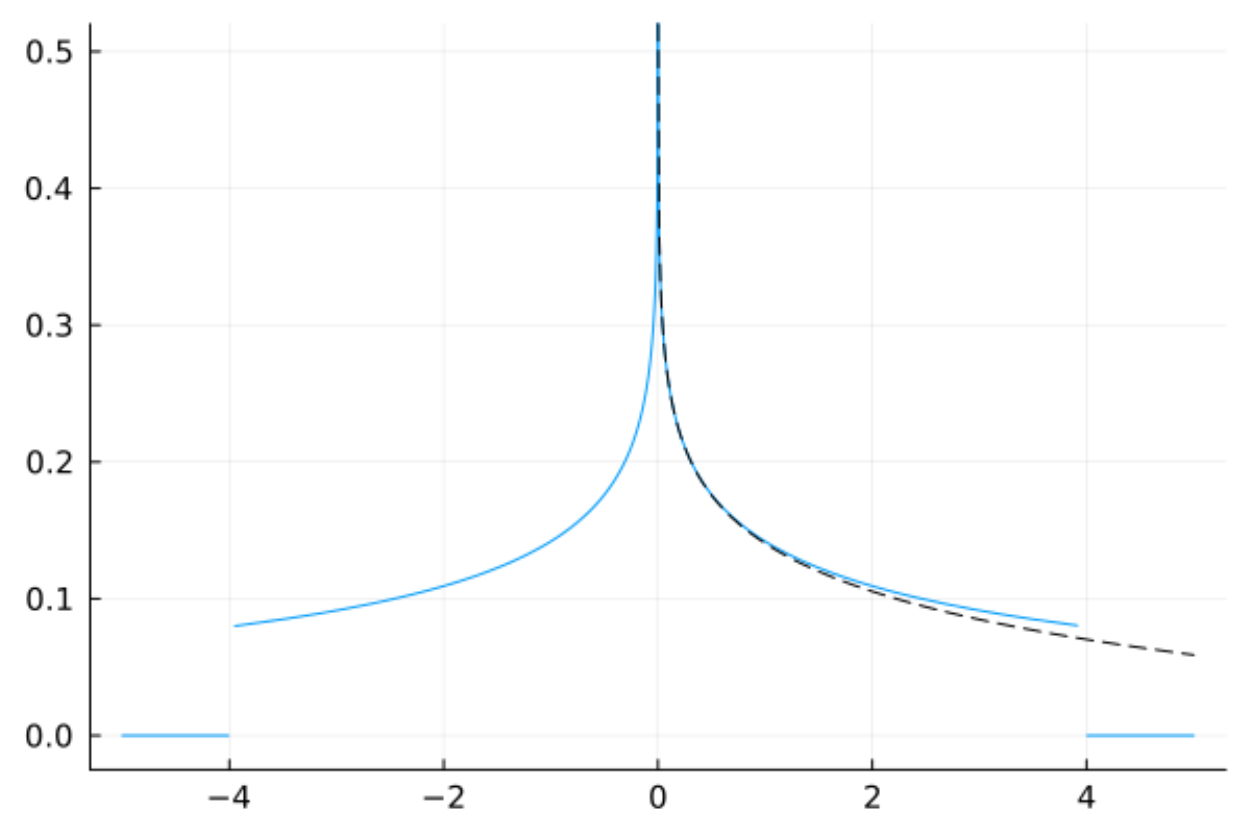}
      \put(2,68){\footnotesize $N_0(\epsilon)$}
      \put(101,3.7){\footnotesize $\epsilon$}
    \end{overpic}
   \hspace{0.5cm}
\begin{overpic}[width=.46\textwidth]{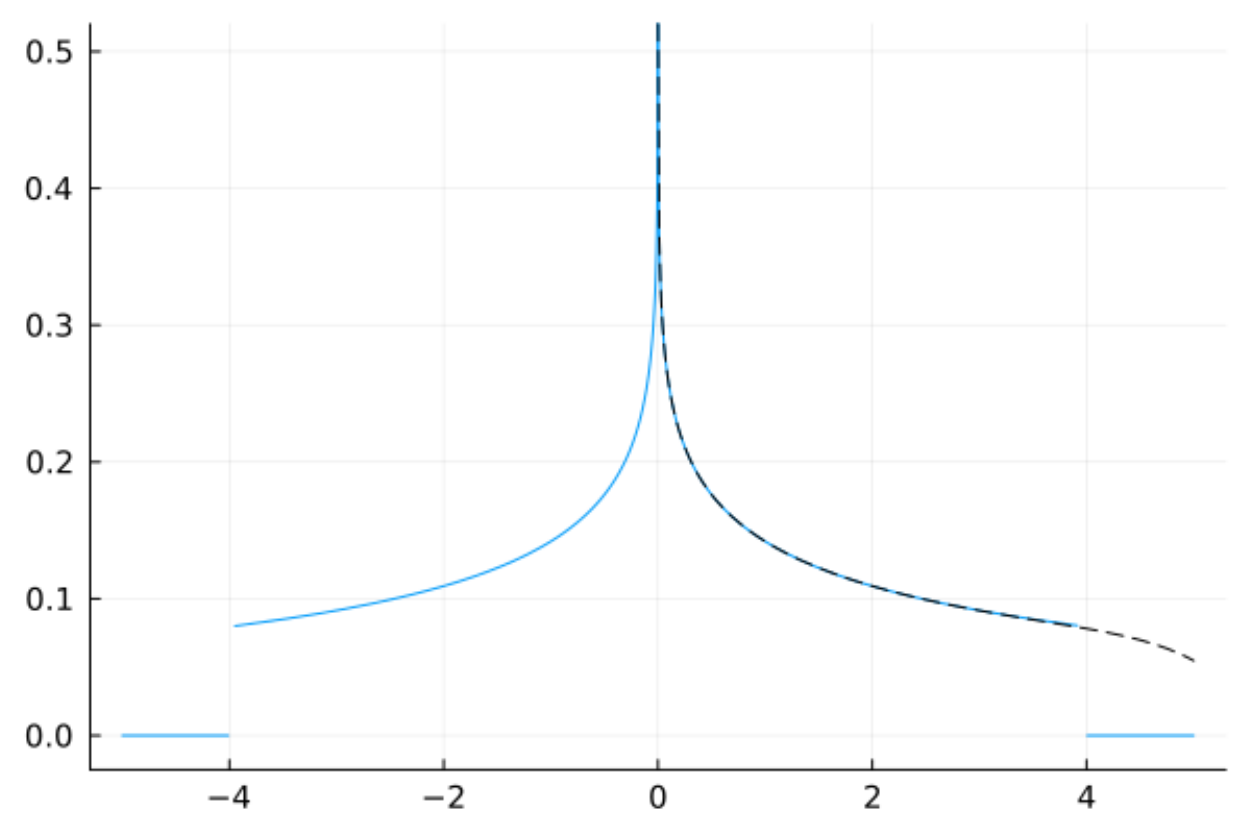}
      \put(2,68){\footnotesize $N_0(\epsilon)$}
      \put(101,3.7){\footnotesize $\epsilon$}
    \end{overpic}
     \begin{figuretext}[Illustration of Theorem \ref{2Ddensityth}]\label{N0fig}
       Left: The function $N_0(\epsilon)$ (solid blue) and the approximation $\frac{\ln(16/\epsilon)}{2 \pi^2}$ (dashed black).
       Right: The function $N_0(\epsilon)$ (solid blue) and the asymptotic approximation of (\ref{N0expansionfirstfew}) including terms up to order $O(\epsilon^{10})$ (dashed black).
      \end{figuretext}
     \end{center}
\end{figure}

\begin{figure}
\bigskip
\begin{center}
\begin{overpic}[width=.46\textwidth]{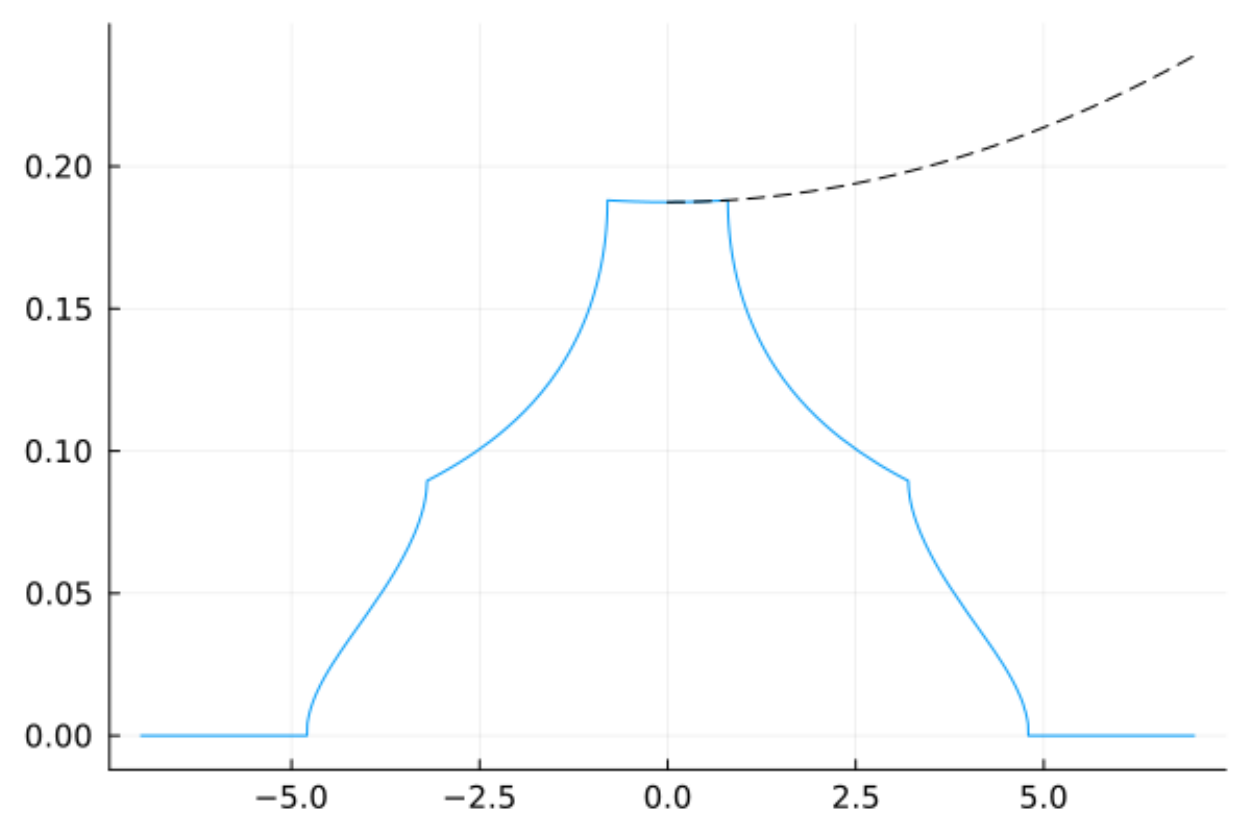}
      \put(2,68){\footnotesize $N_{t_z}(\epsilon)$}
      \put(101,3.7){\footnotesize $\epsilon$}
    \end{overpic}
   \hspace{0.5cm}
\begin{overpic}[width=.46\textwidth]{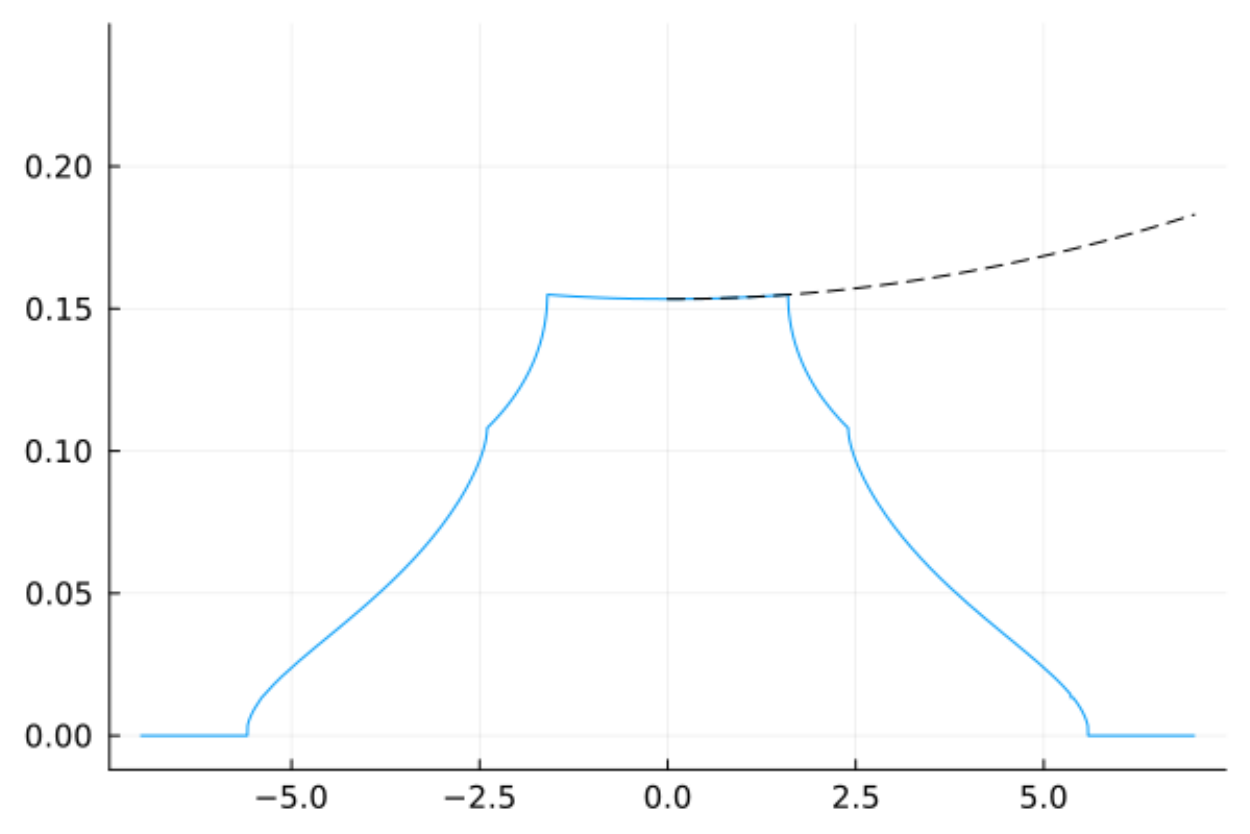}
      \put(2,68){\footnotesize $N_{t_z}(\epsilon)$}
      \put(101,3.7){\footnotesize $\epsilon$}
    \end{overpic}
     \begin{figuretext}[Illustration of Theorem \ref{3Ddensityth}]\label{N3Dleadingtwotz}
       The function $N_{t_z}(\epsilon)$ (solid blue) and the approximation $N_{t_z}(0) + \frac{N_{t_z}''(0)}{2} \epsilon^2$ (dashed black) for $t_z = 0.4$ (left) and $t_z = 0.8$ (right).
      \end{figuretext}
     \end{center}
\end{figure}

Theorem \ref{2Ddensityth} shows that the two-dimensional density of states $N_{0}(\epsilon)$ has a logarithmic singularity at $\epsilon = 0$, see Figure \ref{N0fig}. The next theorem shows that turning on an ever so slight hopping in the $z$-direction makes this singularity disappear, see Figure \ref{N3Dleadingtwotz}. The proof is presented in Section \ref{3Ddensitysubsec}.

 \begin{theorem}[Behavior of $N_{t_z}(\epsilon)$ as $\epsilon \to 0$ for $t_z > 0$]\label{3Ddensityth}
For each $t_z \in (0,2)$, the three-dimensional density of states $N_{t_z}(\epsilon)$ given by (\ref{3Ddensityofstates}) is an even function of $\epsilon$ that is real analytic for $|\epsilon| < \min(2t_z, 4 - 2t_z)$, so that 
\begin{align}\label{Ntzexpansion}
N_{t_z}(\epsilon) = \sum_{j=0}^\infty \frac{N_{t_z}^{(2j)}(0)}{(2j)!}\epsilon^{2j} \qquad \text{for $|\epsilon| < \min(2t_z, 4 - 2t_z)$. }
\end{align}  
Furthermore, the coefficients are given by
\begin{align*}
& N_{t_z}(0) = \int_{-2}^2 \frac{N_0(t_z u)}{\sqrt{4 - u^2}} \frac{du}{\pi},
	\\
& N_{t_z}^{(2j)}(0) = \re \bigg(\int_{\gamma_{-2,2}} \frac{\tilde{N}_0^{(2j)}(t_z u)}{\sqrt{4 - u^2}} \frac{du}{\pi}
\bigg) \qquad \text{for $j = 1,2,\dots$},
\end{align*}
where $\tilde{N}_0(\epsilon)$ is the unique analytic continuation of $N_0(\epsilon)$ from $\epsilon \in (0,4)$ to $\{\epsilon \in \C \,|\, \im \epsilon > 0\}$, and $\gamma_{-2,2}$ is the clockwise semicircle in the upper half-plane starting at $-2$ and ending at $2$. 
\end{theorem}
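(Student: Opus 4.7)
The plan is to reduce $N_{t_z}(\epsilon)$ to a one-dimensional integral against the two-dimensional density $N_0$, and then obtain analyticity by deforming the contour into the upper half $u$-plane, where Theorem \ref{2Ddensityth} supplies the analytic continuation $\tilde N_0$.

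First I would integrate out $k_3$ in the definition (\ref{3Ddensityofstates}). Fubini together with the definition (\ref{2Ddensityofstates}) of $N_0$ gives $N_{t_z}(\epsilon) = \int_{-\pi}^{\pi} N_0(\epsilon + 2t_z\cos k_3)\frac{dk_3}{2\pi}$, and the change of variables $u = 2\cos k_3$ turns this into
\begin{equation}\label{plan:reduction}
N_{t_z}(\epsilon) = \frac{1}{\pi}\int_{-2}^{2}\frac{N_0(\epsilon + t_z u)}{\sqrt{4-u^2}}\,du.
\end{equation}
Setting $\epsilon = 0$ in (\ref{plan:reduction}) yields the stated formula for $N_{t_z}(0)$, and evenness of $N_{t_z}$ follows from evenness of $N_0$ via the substitution $u\mapsto -u$.

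For analyticity and the derivative formulas I would use the following consequence of Theorem \ref{2Ddensityth}: near $\epsilon = 0$, one can write $N_0(\epsilon) = A(\epsilon)\ln(16/\epsilon) + B(\epsilon)$ with $A, B$ even and real-analytic, so the analytic continuation $\tilde N_0$ has the same shape with principal-branch $\ln$, and therefore
$$\re\tilde N_0(x + i0) = N_0(x) \qquad \text{for every } x \in (-4, 4) \setminus\{0\}$$
(only the imaginary part of $\ln$ jumps by $-i\pi$ as one crosses the negative real axis from above). Now fix $\epsilon_0 \in \R$ with $|\epsilon_0| < \min(2t_z, 4-2t_z)$ and set $\epsilon = \epsilon_0 + i\delta$, $\delta > 0$. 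For $u$ in the closed upper half-disk $\{|u|\le 2,\ \im u\ge 0\}$, the argument $\epsilon + t_z u$ stays in the open upper half-plane and bounded away from the band-edge singularities $\pm 4$ of $\tilde N_0$; the integrand $\tilde N_0(\epsilon + t_z u)/\sqrt{4 - u^2}$ is therefore holomorphic in $u$ on this half-disk (with integrable square-root singularities at $u = \pm 2$), and Cauchy's theorem yields
$$\int_{-2}^{2}\frac{\tilde N_0(\epsilon + t_z u)}{\sqrt{4-u^2}}\,du \;=\; \int_{\gamma_{-2,2}} \frac{\tilde N_0(\epsilon + t_z u)}{\sqrt{4-u^2}}\,du.$$
Sending $\delta \downarrow 0$ and taking real parts of both sides, the boundary-value identity together with (\ref{plan:reduction}) gives $\pi N_{t_z}(\epsilon_0) = \re\int_{\gamma_{-2,2}} \tilde N_0(\epsilon_0 + t_z u)/\sqrt{4-u^2}\,du$.

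To conclude, I would observe that $G(\epsilon) := \int_{\gamma_{-2,2}} \tilde N_0(\epsilon + t_z u)/(\pi\sqrt{4-u^2})\,du$ is holomorphic on the full complex disk $|\epsilon| < \min(2t_z, 4-2t_z)$: for such $\epsilon$ and every $u$ on $\gamma_{-2,2}$, the point $\epsilon + t_z u$ avoids the branch point at $0$ and the band edges $\pm 4$, and uniform bounds justify differentiation under the integral. Expanding $G$ in its convergent Taylor series and taking real parts termwise produces a convergent power-series representation of $N_{t_z}(\epsilon) = \re G(\epsilon)$, proving real-analyticity on $(-\min(2t_z,4-2t_z), \min(2t_z,4-2t_z))$ and identifying $N_{t_z}^{(2j)}(0) = \re G^{(2j)}(0)$, which is the stated formula; odd derivatives vanish automatically by evenness. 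I expect the main technical hurdle to be the limit $\delta \downarrow 0$ on the real-segment integral, which must accommodate the logarithmic singularity of $\tilde N_0(\epsilon_0 + t_z \cdot + i\delta)$ at $u = -\epsilon_0/t_z$; this can be handled by isolating a shrinking neighborhood of that point, estimating its local contribution as $O(r\ln(1/r))$, and applying dominated convergence on the complement.
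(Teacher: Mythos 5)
Your overall architecture matches the paper's: reduce to the 2D density via $N_{t_z}(\epsilon)=\frac{1}{\pi}\int_{-2}^2 N_0(\epsilon+t_z u)(4-u^2)^{-1/2}\,du$, invoke a real-part identity $\re\tilde N_{0+}=N_0$ to replace $N_0$ by the boundary value of its analytic continuation, push the contour onto $\gamma_{-2,2}$, and then read off analyticity of the resulting $u$-integral as a function of $\epsilon$.

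There is, however, a genuine gap at the key step. You assert that $N_0(\epsilon)=A(\epsilon)\ln(16/\epsilon)+B(\epsilon)$ with $A,B$ even and real-analytic, and call this ``a consequence of Theorem \ref{2Ddensityth},'' from which you deduce $\re\tilde N_0(x+i0)=N_0(x)$ on all of $(-4,4)\setminus\{0\}$. But Theorem \ref{2Ddensityth} only provides an asymptotic expansion as $\epsilon\downarrow 0$; it does not deliver an exact convergent representation of that shape on any neighborhood of $0$, let alone on the full range $(\epsilon_0-2t_z,\,\epsilon_0+2t_z)$ that the argument $\epsilon_0+t_z u$ actually sweeps as $u$ runs over $[-2,2]$ (for $t_z$ close to $2$ this is essentially all of $(-4,4)$). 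Such a global $A\ln+B$ representation is in fact true---it follows from the complete-elliptic-integral form of $N_0$---but it is strictly stronger than the paper's Theorem \ref{2Ddensityth} and would need a separate proof. The paper closes precisely this gap with a dedicated lemma (Lemma \ref{N0N0tildelemma}): the identity $N_0(x)=\re\tilde N_{0+}(x)$ for $x\in(-4,0)$ is established by deforming the Pochhammer integral representation of $N_0$ as $\epsilon$ traverses an upper-half-plane semicircle from $r$ to $-r$, splitting the resulting $v$-contour into three pieces, and observing that only the middle piece contributes to the real part. You would need an argument of this kind (or an independent proof of the global $A\ln+B$ form) to make your proof complete; the remaining ingredients---evenness under $u\mapsto -u$, the $\delta$-regularized Cauchy deformation with indentation at $u=\pm 2$, dominated convergence past the logarithmic singularity, and the power-series expansion of the holomorphic integral $G(\epsilon)$---are all sound and essentially coincide with the paper's handling.
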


Our next result provides an asymptotic expansion of the three-dimensional density of states $N_{t_z}(\epsilon)$ evaluated at $\epsilon = 0$ as $t_z \to 0$, see Figure \ref{N3Dateps0fig}. The proof is presented in Section \ref{3Dtzdensitysubsec}.

\begin{theorem}[Behavior of $N_{t_z}(0)$ as $t_z \to 0$]\label{3Dtzdensityth}
As $t_z \downarrow 0$, we have
\begin{align} \nonumber
N_{t_z}(0) = 
& \frac{\ln(\frac{16}{t_z})}{2 \pi^2}
+\frac{t_z^2 (2 \ln (\frac{16}{t_z})-3)}{128 \pi^2}
+ \frac{27 t_z^4 (4 \ln(\frac{16}{t_z})-7)}{2^{16} \pi^2}
	\\ \nonumber
&  +\frac{25 t_z^6 (20 \ln(\frac{16}{t_z})-37)}{2^{21}  \pi^2}
  +\frac{1225 t_z^8 (280 \ln(\frac{16}{t_z})-533)}{2^{33} \pi^2}
	\\ \label{Ntzat0expansion}
 &   +\frac{11907}{5} \frac{t_z^{10} (840 \ln  (\frac{16}{t_z})-1627)}{2^{38} \pi^2}
   + O\Big(t_z^{12} \ln{\tfrac{1}{t_z}}\Big),
\end{align}  
and this expansion can be differentiated termwise with respect to $t_z$ any finite number of times.
\end{theorem}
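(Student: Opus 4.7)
\medskip
\noindent\textbf{Proof proposal.} The plan is to reduce the claim to an asymptotic analysis of a one-dimensional integral using the formula from Theorem \ref{3Ddensityth}:
\begin{align*}
N_{t_z}(0) = \int_{-2}^{2} \frac{N_0(t_z u)}{\pi\sqrt{4-u^2}}\, du.
\end{align*}
Since $N_0$ is even, the asymptotic expansion \eqref{N0expansionfirstfew} of Theorem \ref{2Ddensityth} can be applied to $N_0(t_z u)$ for all $u \in [-2,2]$ as $t_z \downarrow 0$, with $\epsilon$ replaced by $t_z|u|$. Using $\ln(16/(t_z|u|)) = \ln(16/t_z) - \ln|u|$ separates the $t_z$-dependence from the $u$-dependence at each order, reducing the computation of the $t_z$-expansion to evaluating the moments
\begin{align*}
m_j := \int_{-2}^{2} \frac{u^{2j}}{\pi\sqrt{4-u^2}}\, du,
\qquad
\ell_j := \int_{-2}^{2} \frac{u^{2j}\ln|u|}{\pi\sqrt{4-u^2}}\, du, \qquad j = 0,1,2,\dots.
\end{align*}

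The moments $m_j$ and $\ell_j$ are standard: after the substitution $u = 2\cos\theta$, the first reduces to the classical Chebyshev-type integral $m_j = \binom{2j}{j}$, while for $\ell_j$ one differentiates the beta-function identity
\begin{align*}
\int_0^{\pi/2} \cos^s\theta\, d\theta = \frac{\sqrt{\pi}\,\Gamma\!\left(\frac{s+1}{2}\right)}{2\,\Gamma\!\left(\frac{s}{2}+1\right)}
\end{align*}
with respect to $s$ at $s = 2j$, which yields a closed expression in terms of digamma values at half-integers, together with a contribution $m_j \ln 2$. Inserting these moments into the expansion of $N_0(t_z u)$, multiplying by $1/(\pi\sqrt{4-u^2})$, and integrating term by term produces a sum of the form $\sum_j t_z^{2j}[\alpha_j \ln(16/t_z) + \beta_j]$, with $\alpha_j$ and $\beta_j$ rational combinations of $m_j$, $\ell_j$, and the coefficients read off from \eqref{N0expansionfirstfew}. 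A direct check at the first two orders (leading order giving $\ell_0 = 0$ hence $N_{t_z}(0) = \frac{\ln(16/t_z)}{2\pi^2} + O(t_z^2\ln\frac{1}{t_z})$, and $t_z^2$-order giving $\ell_1 = 1$, $m_1 = 2$, hence coefficient $(2\ln(16/t_z)-3)/(128\pi^2)$) reproduces the first two displayed coefficients in \eqref{Ntzat0expansion}; the remaining coefficients follow by the same bookkeeping.

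To justify the termwise integration, one uses the uniform remainder estimate following from Theorem \ref{2Ddensityth}, namely that truncating \eqref{N0expansionfirstfew} at order $2K$ leaves a remainder of size $O(\epsilon^{2K+2}\ln(1/\epsilon))$ uniformly for $\epsilon$ in a right-neighborhood of $0$. Since $|t_z u| \leq 2t_z$ and $1/\sqrt{4-u^2}$ is integrable on $(-2,2)$, this remainder integrates to a contribution of order $O(t_z^{2K+2}\ln(1/t_z))$, giving the claimed asymptotic expansion. The statement about termwise differentiation follows from the corresponding statement in Theorem \ref{2Ddensityth} together with differentiation under the integral sign: $\partial_{t_z}^N N_{t_z}(0) = \int_{-2}^{2} \frac{u^N N_0^{(N)}(t_z u)}{\pi\sqrt{4-u^2}}\, du$, and the same moment argument applies.

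The main obstacle is the bookkeeping needed to match the explicit coefficients in \eqref{Ntzat0expansion} with the digamma-valued coefficients arising from $\ell_j$; in practice one simplifies using $\psi(j+\tfrac12)-\psi(j+1) = -2\ln 2 + 2\sum_{r=1}^{j}\frac{1}{2r-1} - 2\sum_{r=1}^{j}\frac{1}{2r}$, which produces the rational prefactors and cancels out $\ln 2$ contributions in the correct way. Everything else is routine given Theorems \ref{2Ddensityth} and \ref{3Ddensityth}.
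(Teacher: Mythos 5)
Your proposal follows essentially the same route as the paper: insert the all-order expansion of $N_0$ from Theorem~\ref{2Ddensityth} into $N_{t_z}(0)=\int_{-2}^{2}\frac{N_0(t_z u)}{\pi\sqrt{4-u^2}}\,du$, split off the $t_z$-dependence via $\ln(16/(t_z|u|))=\ln(16/t_z)-\ln|u|$, reduce to the moments $\int u^{j}(4-u^2)^{-1/2}du$ and $\int u^{j}\ln u\,(4-u^2)^{-1/2}du$, and control the remainder by the uniform error estimate. The paper records those moments as the closed Gamma/digamma identities \eqref{integralidentities}; your Chebyshev substitution and beta-function differentiation produce the same values (e.g.\ $m_j=\binom{2j}{j}$, $\ell_0=0$, $\ell_1=1$, $\ell_2=\tfrac72$), and your spot-checks reproduce the paper's coefficients, so there is no substantive difference.
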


\begin{figure}
\bigskip
\bigskip
\begin{center}
\begin{overpic}[width=.46\textwidth]{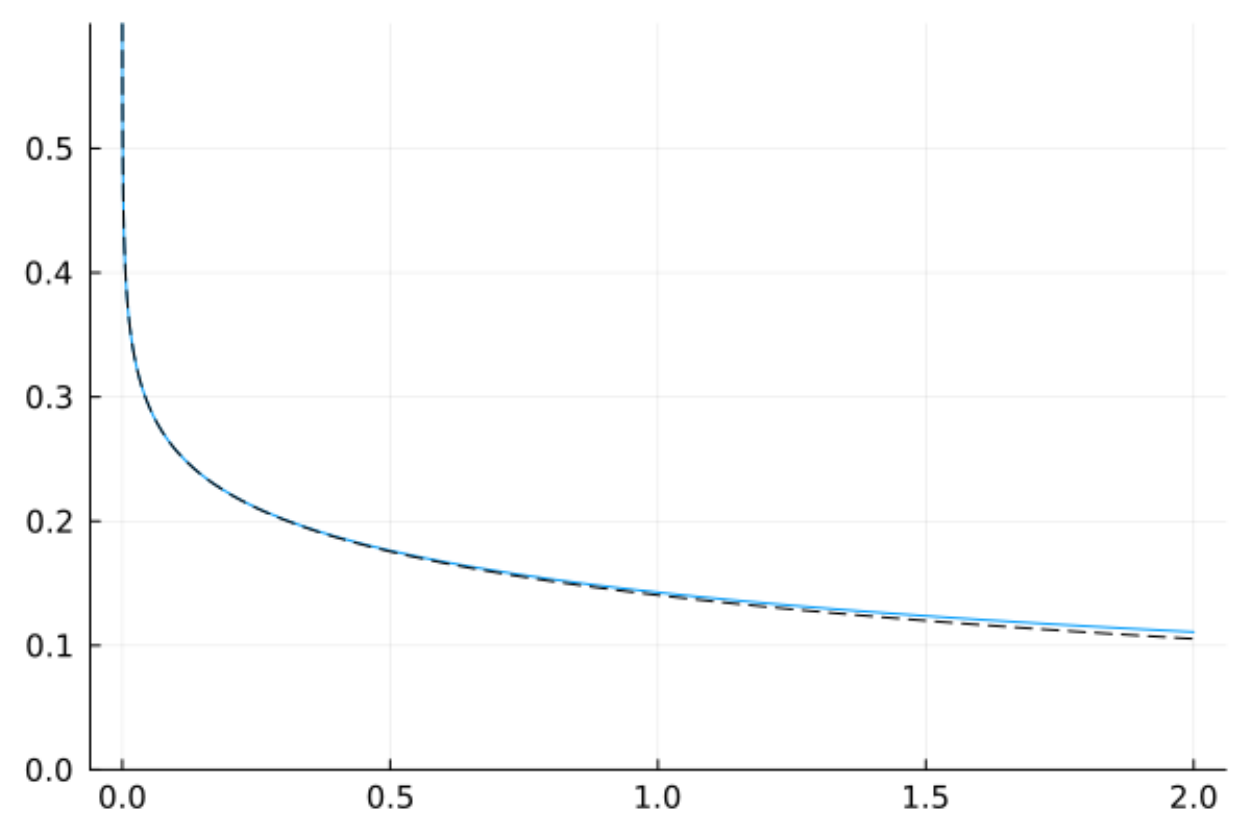}
      \put(2,68){\footnotesize $N_{t_z}(0)$}
      \put(101,3.7){\footnotesize $t_z$}
    \end{overpic}
   \hspace{0.5cm}
\begin{overpic}[width=.46\textwidth]{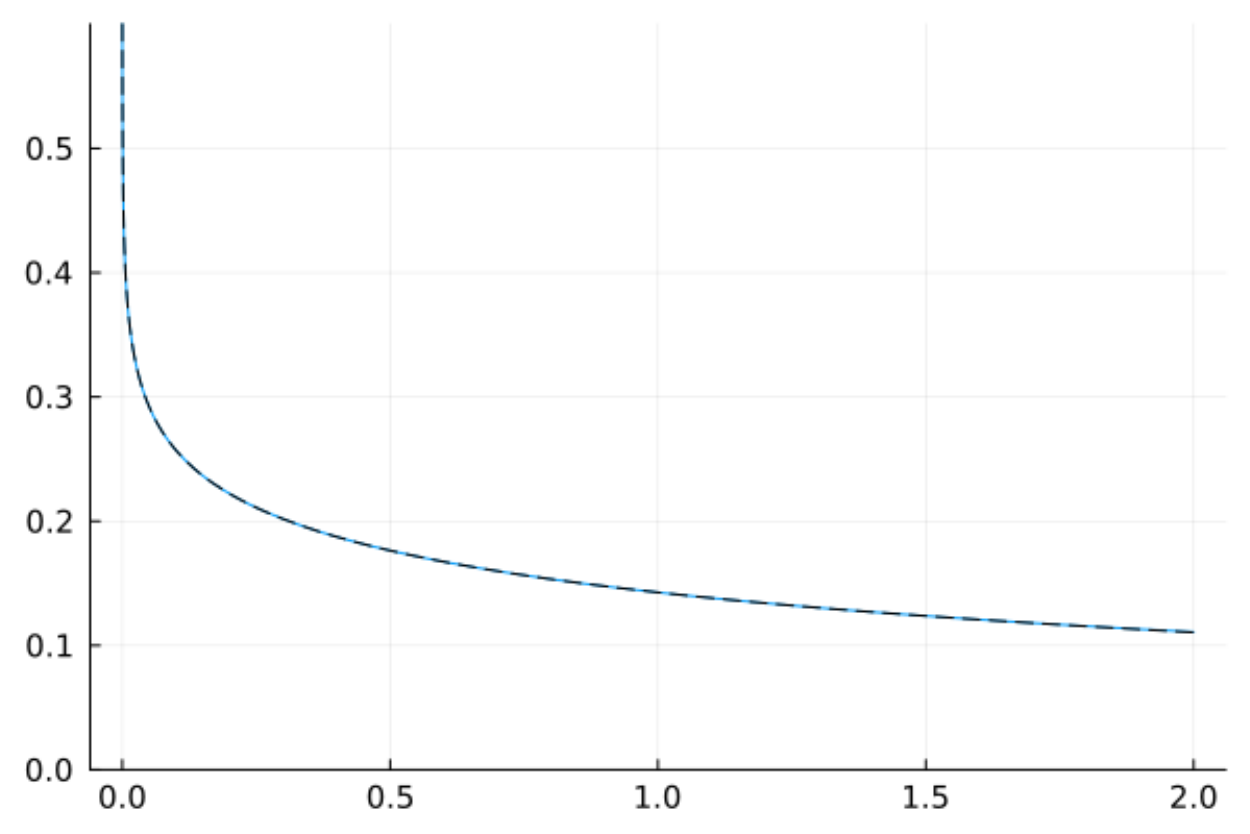}
      \put(2,68){\footnotesize $N_{t_z}(0)$}
      \put(101,3.7){\footnotesize $t_z$}
    \end{overpic}
     \begin{figuretext}[Illustration of Theorem \ref{3Dtzdensityth}]\label{N3Dateps0fig}
       Left: The function $t_z \mapsto N_{t_z}(0)$ (solid blue) and the approximation $\frac{\ln(16/t_z)}{2 \pi^2}$ (dashed black). 
       Right: The function $t_z \mapsto N_{t_z}(0)$ (solid blue) and the asymptotic approximation of (\ref{Ntzat0expansion}) including terms up to order $O(t_z^{10})$ (dashed black). 
       
\end{figuretext}
     \end{center}
\end{figure}

\subsection{The small $U$ behavior of the N\'eel temperature}
Recall that $T_N(U,t_z)$ reduces to the N\'eel temperature for the 2D Hubbard model when $t_z = 0$.
The following theorem gives the small $U$ behavior of $T_N(U,0)$, see Figure \ref{TNfig}. 

\begin{theorem}[Small $U$ behavior of the 2D N\'eel temperature]\label{2DNeelth}
The N\'eel temperature $T_N(U, 0)$ satisfies
\begin{align}\label{TNexpansion2D}
T_N(U,0)
= \frac{32}{\pi e^{-\gamma}}
e^{-\sqrt{ \frac{4 \pi^2}{U}  + a_1}}\Big(1 + O\Big(e^{-\frac{4\pi}{\sqrt{U}}}\Big)\Big) \qquad \text{as $U \downarrow 0$},
\end{align}
where $\gamma \approx 0.5772$ is Euler's gamma constant and the constant $a_1 \approx 0.3260$ is defined by
\begin{align} \nonumber
a_1 = & -4 \pi ^2 \int_{0}^{4} \bigg(N_0(\epsilon) - \frac{\ln(\frac{16}{\epsilon})}{2 \pi^2}\bigg)
\frac{1}{\epsilon}  d\epsilon
	\\ \label{a1def}
& - \int_{0}^\infty \frac{(\ln x)^2}{\cosh^2{x}} dx + (2\ln2)^2 +(\gamma+2\ln2-\ln\pi)^2.
\end{align} 
\end{theorem}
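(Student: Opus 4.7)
The strategy is to substitute the asymptotic expansion of $N_0$ given by Theorem~\ref{2Ddensityth} into the N\'eel temperature equation
\begin{align*}
\frac{1}{U} = \int_0^4 N_0(\epsilon)\,\frac{\tanh(\epsilon/(2T_N))}{\epsilon}\,d\epsilon,
\end{align*}
and evaluate the leading contribution in closed form modulo small errors. Writing $N_0(\epsilon) = \frac{\ln(16/\epsilon)}{2\pi^2} + R_0(\epsilon)$, Theorem~\ref{2Ddensityth} gives $R_0(\epsilon) = O(\epsilon^2 \ln(1/\epsilon))$ as $\epsilon \downarrow 0$, so $R_0/\epsilon$ is integrable on $(0,4]$; by dominated convergence, the $R_0$-part of the integral converges to $\int_0^4 R_0(\epsilon)/\epsilon\,d\epsilon$ as $T_N \downarrow 0$, with a subdominant correction. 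Observe that $-4\pi^2$ times this limit is precisely the first summand in the definition (\ref{a1def}) of $a_1$.

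For the leading logarithmic piece, I substitute $x = \epsilon/(2T_N)$ and split $\ln(16/\epsilon) = \ln(8/T_N) - \ln x$. The problem then reduces to two canonical asymptotics as $X := 2/T_N \to \infty$:
\begin{align*}
\int_0^X \frac{\tanh x}{x}\,dx &= \ln X + \gamma + \ln(4/\pi) + O(e^{-X}),\\
\int_0^X \frac{\ln x \cdot \tanh x}{x}\,dx &= \frac{(\ln X)^2}{2} - \frac{1}{2}\int_0^\infty \frac{(\ln x)^2}{\cosh^2 x}\,dx + O(e^{-X}(\ln X)^2).
\end{align*}
Both follow from integration by parts. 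In the first, one integrates by parts with $u = \tanh x$, $dv = dx/x$ and uses the Mellin-transform identity $\int_0^\infty \sech^2(x) x^{s-1}\,dx = 2^{2-s}\Gamma(s)(1-2^{2-s})\zeta(s-1)$, differentiated at $s=1$, to get $\int_0^\infty \sech^2(x)\ln x\,dx = -\gamma - \ln(4/\pi)$. In the second, one uses $\ln(x)\,dx/x = d[(\ln x)^2/2]$ together with the exponential decay of $1 - \tanh x$; the resulting integral $\int_0^\infty (\ln x)^2/\cosh^2 x\,dx$ is exactly the second summand appearing in (\ref{a1def}).

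Combining everything, the defining equation becomes a quadratic in $y := \ln(1/T_N)$:
\begin{align*}
\frac{1}{U} = \frac{y^2}{4\pi^2} + \frac{\gamma + 5\ln 2 - \ln\pi}{2\pi^2}\,y + C + (\text{subdominant errors}),
\end{align*}
where $C$ aggregates the constant contributions from (i) the Mellin constants $\gamma + \ln(4/\pi)$ entering the first identity, (ii) the integral of $(\ln x)^2/\cosh^2 x$ from the second, (iii) the shifts $\ln(8/T_N)-y = 3\ln 2$ and $\ln(2/T_N) - y = \ln 2$ introduced by the substitution, and (iv) the limit of the $R_0$-integral. The main technical obstacle is the careful bookkeeping needed to verify that $(\gamma + 5\ln 2 - \ln\pi)^2 - 4\pi^2 C$ equals exactly the constant $a_1$ of (\ref{a1def})—this is where the squared terms $(2\ln 2)^2$ and $(\gamma + 2\ln 2 - \ln\pi)^2$ in $a_1$ arise naturally, produced by completing the square so that the linear coefficient in $y$ matches $2(\gamma + 5\ln 2 - \ln\pi)/(4\pi^2)$. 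Solving the quadratic on the branch where $y \to +\infty$ as $U \downarrow 0$ then yields $y = \sqrt{4\pi^2/U + a_1} - (\gamma + 5\ln 2 - \ln\pi) + (\text{error})$, which exponentiates to (\ref{TNexpansion2D}) since $\ln(32 e^\gamma/\pi) = 5\ln 2 + \gamma - \ln\pi$; the error on the multiplicative scale becomes $O(e^{-4\pi/\sqrt U})$ after using the leading scaling $T_N \asymp e^{-2\pi/\sqrt U}$.
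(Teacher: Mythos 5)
Your proposal is correct and follows essentially the same route as the paper: decompose $N_0 = \frac{\ln(16/\epsilon)}{2\pi^2} + R_0$, observe that the $R_0$-part contributes the constant $a_0 = \int_0^4 R_0/\epsilon\,d\epsilon$ up to an exponentially small error, split $\ln(16/\epsilon) = \ln(8/T_N) + \ln(2T_N/\epsilon)$ and substitute $x = \epsilon/(2T_N)$, integrate by parts in the two canonical integrals, and solve the resulting quadratic in $\ln(1/T_N)$. The only cosmetic differences are that you derive $\int_0^\infty \sech^2(x)\ln x\,dx = \ln(\pi/4) - \gamma$ via the Mellin identity rather than quoting it, and your stated error $O(e^{-X})$ in the first canonical asymptotic is weaker than the sharp $O(\ln X\cdot e^{-2X})$ — but since $e^{-2/T_N}$ is in any case negligible against the $O(T_N^2\ln(1/T_N))$ error coming from the $R_0$-piece, this does not affect the final $O(e^{-4\pi/\sqrt U})$ bound after the bootstrap on $T_N$.
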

\begin{proof}
See Section \ref{2DNeelsubsec}.
\end{proof}

\begin{figure}
\bigskip
\begin{center}
\begin{overpic}[width=.46\textwidth]{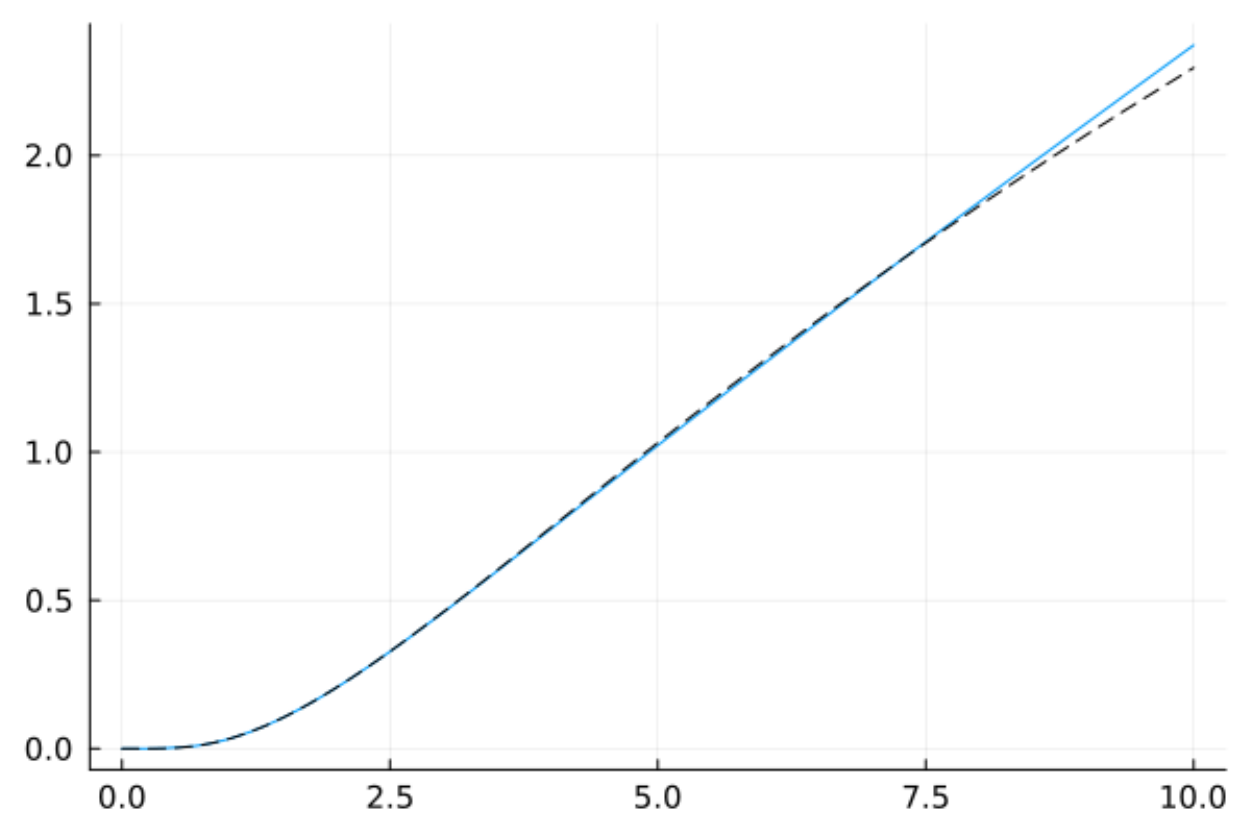}
      \put(0,68){\footnotesize $T_N(U,0)$}
      \put(101,3.7){\footnotesize $U$}
    \end{overpic}
   \hspace{0.5cm}
\begin{overpic}[width=.46\textwidth]{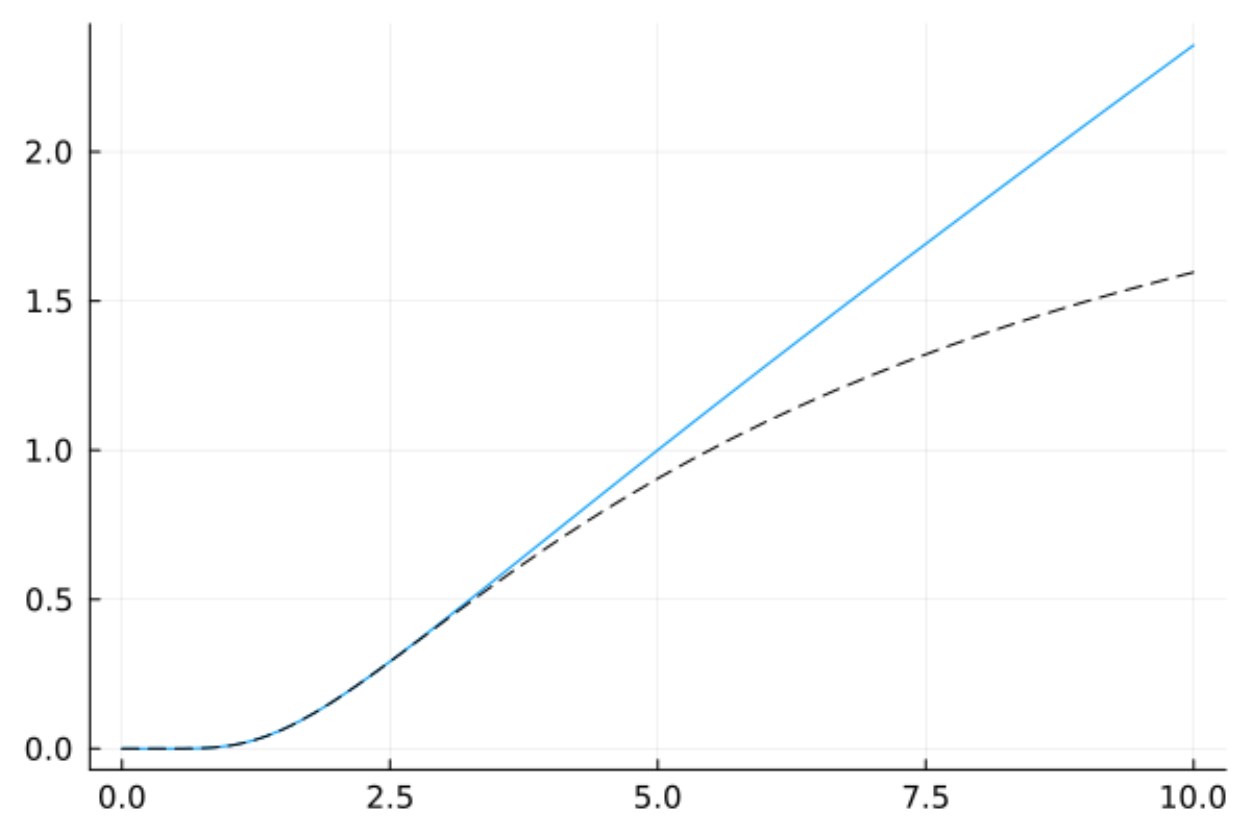}
      \put(0,68){\footnotesize $T_N(U,t_z)$}
      \put(101,3.7){\footnotesize $U$}
    \end{overpic}
     \begin{figuretext}[Illustration of Theorem \ref{2DNeelth} and Theorem \ref{3DNeelth}]\label{TNfig}
       Left: The 2D N\'eel temperature $T_N(U,0)$ (solid blue) and the approximation $\frac{32}{\pi e^{-\gamma}} \exp(-\sqrt{4 \pi^2/U  + a_1})$ (dashed black) of Theorem \ref{2DNeelth}. 
       Right: The 3D N\'eel temperature $T_N(U,t_z)$ (solid blue) and the approximation $\frac{8+4t_z }{\pi e^{-\gamma}}\exp(-N_{t_z}(0)^{-1}(1/U - b_0(t_z)))$ (dashed black) of Theorem \ref{3DNeelth} for $t_z = 1/2$. 
\end{figuretext}
     \end{center}
\end{figure}

Our next theorem gives the small $U$ behavior of the N\'eel temperature for the 3D Hubbard model in the case when $t_z > 0$, see Figure \ref{TNfig}. An important ingredient in the formula is the function $N_{t_z}(0)$, which is shown in Figure \ref{N3Dateps0fig}.

\begin{theorem}[Small $U$ behavior of the 3D N\'eel temperature]\label{3DNeelth}
The N\'eel temperature $T_N(U, t_z)$ satisfies
\begin{align}\label{TNexpansion3D}
T_N(U, t_z) = \frac{8+4t_z }{\pi e^{-\gamma}}e^{- \frac{1}{N_{t_z}(0)}\big(\frac{1}{U} - b_0(t_z)\big)}
\Big(1 + O(e^{-\frac{2}{N_{t_z}(0) U}})\Big)
 \qquad \text{as $U \downarrow 0$}
\end{align}
uniformly for $t_z$ in compact subsets of $(0, 2)$, where $\gamma$ is Euler's gamma constant and
\begin{align}\label{b0tzdef}
& b_0(t_z) = \int_{0}^{4+ 2t_z} \big(N_{t_z}(\epsilon) - N_{t_z}(0)\big)
\frac{1}{\epsilon}  d\epsilon.
\end{align}
\end{theorem}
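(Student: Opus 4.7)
\textbf{Proof plan for Theorem \ref{3DNeelth}.} The starting point is the defining equation (\ref{TNeelequationNtz}), which, since $N_{t_z}$ is supported in $[-(4+2t_z),4+2t_z]$, reads
\begin{align*}
\frac{1}{U} = N_{t_z}(0)\int_0^{4+2t_z}\frac{\tanh(\frac{\epsilon}{2T_N})}{\epsilon}\,d\epsilon
+ \int_0^{4+2t_z}\big(N_{t_z}(\epsilon)-N_{t_z}(0)\big)\frac{\tanh(\frac{\epsilon}{2T_N})}{\epsilon}\,d\epsilon.
\end{align*}
Call these integrals $I_1$ and $I_2$ respectively. The plan is to extract the sharp asymptotics of $I_1$ and to show that $I_2 = b_0(t_z) + O(T_N^2)$, then solve the resulting implicit equation for $T_N$.

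For $I_1$, the substitution $x = \epsilon/(2T_N)$ and the classical identity
\begin{align*}
\int_0^{\Lambda}\frac{\tanh x}{x}\,dx = \ln\Lambda + \ln\frac{4e^{\gamma}}{\pi} + O(e^{-2\Lambda})\qquad (\Lambda\to\infty),
\end{align*}
(which follows by writing $\tanh x = 1 - 2/(e^{2x}+1)$, separating the logarithmic piece on $[1,\Lambda]$, and using the known value $2\ln 2+\gamma-\ln\pi$ of the remainder constant) give, with $\Lambda = (2+t_z)/T_N$,
\begin{align*}
I_1 = \ln\frac{(8+4t_z)e^{\gamma}}{\pi T_N} + O\bigl(e^{-2(2+t_z)/T_N}\bigr).
\end{align*}
For $I_2$, write
\begin{align*}
I_2 = b_0(t_z) - \int_0^{4+2t_z}\bigl(N_{t_z}(\epsilon)-N_{t_z}(0)\bigr)\frac{1-\tanh(\frac{\epsilon}{2T_N})}{\epsilon}\,d\epsilon,
\end{align*}
where $b_0(t_z)$ is convergent because Theorem~\ref{3Ddensityth} shows that $N_{t_z}$ is even and real analytic near $0$, hence $N_{t_z}(\epsilon)-N_{t_z}(0)=O(\epsilon^2)$ as $\epsilon\to 0$. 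Using this together with $1-\tanh(\epsilon/(2T_N)) = 2/(e^{\epsilon/T_N}+1) \leq 2e^{-\epsilon/T_N}$, and splitting the integration over $[0,1]$ and $[1,4+2t_z]$, the remaining integral is bounded by $C\int_0^1 \epsilon\,e^{-\epsilon/T_N}\,d\epsilon + C e^{-1/T_N} = O(T_N^2)$.

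Combining these estimates and rearranging yields
\begin{align*}
\ln\frac{(8+4t_z)e^{\gamma}}{\pi T_N} = \frac{1}{N_{t_z}(0)}\Bigl(\frac{1}{U} - b_0(t_z)\Bigr) + O(T_N^2),
\end{align*}
and exponentiating then bootstrapping the leading-order relation $T_N\asymp \exp(-1/(N_{t_z}(0)U))$ back into the error term produces the claimed asymptotic (\ref{TNexpansion3D}). The main technical issue is \emph{uniformity in $t_z$ on compact subsets of $(0,2)$}: all implicit constants in the error estimates above (the constant $C$ bounding $|N_{t_z}(\epsilon)-N_{t_z}(0)|/\epsilon^2$ on $[0,1]$, the $L^\infty$ bound on $N_{t_z}$, and the lower bound on $N_{t_z}(0)$ which is needed when dividing by $N_{t_z}(0)$) must be controlled uniformly for $t_z$ in a compact subset of $(0,2)$. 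The needed uniform bounds follow from Theorem~\ref{3Ddensityth}, which gives a convergent Taylor expansion of $N_{t_z}(\epsilon)$ with radius at least $\min(2t_z,4-2t_z)>0$, together with the continuity of $N_{t_z}(0)$ in $t_z$ guaranteed by Theorem~\ref{3Dtzdensityth}. This uniformity argument is the main obstacle in converting the pointwise asymptotic into the uniform statement claimed.
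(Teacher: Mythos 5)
Your proposal is correct and follows essentially the same route as the paper's proof: the paper writes the N\'eel equation as $\frac1U - b_0(t_z) = J_1(t_z,T_N) + N_{t_z}(0)J_2(t_z,T_N)$ with $J_2$ your $I_1$ and $J_1$ the correction you split off from $I_2$, uses the same classical asymptotics of $\int_0^\Lambda \tanh(x)/x\,dx = \ln\Lambda + 2\ln 2 + \gamma - \ln\pi + O(\Lambda^{-1}e^{-2\Lambda})$, bounds $|J_1|\le CT_N^2$ exactly as you do, and then bootstraps the crude $T_N = O(e^{-1/(N_{t_z}(0)U)})$ back into the remainder to obtain $R_4 = O(e^{-2/(N_{t_z}(0)U)})$. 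The uniformity in $t_z$ is handled in the paper in the same way you propose, via uniform constants in the Taylor control of $N_{t_z}(\epsilon)$ from Theorem~\ref{3Ddensityth} and the lower bound on $N_{t_z}(0)$ over compacts, relying also on Lemma~\ref{TNlemma} for the uniform statement $T_N\downarrow 0$.
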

\begin{proof}
See Section \ref{3DNeelsubsec}.
\end{proof}

\subsection{The small $U$ behavior of the mean field}
Let $\hat{m}(U, t_z, T)$ be the function defined in (\ref{hatmdef}).
Our next theorem gives the small $U$ behavior of $\hat{m}$ for the 2D Hubbard model, i.e., in the case when $t_z = 0$, see Figure \ref{2Dmhatfig}. 

\begin{theorem}[Small $U$ behavior of 2D mean field]\label{2Dmeanfieldth}
The function $\hat{m}(U, 0, T)$ satisfies
\begin{align}\label{2Dhatmexpansion}
\hat{m}(U, 0, T) = f_{\BCS}\bigg(\frac{T}{T_N(U, 0)}\bigg) + c_1\bigg(\frac{T}{T_N(U, 0)}\bigg) \sqrt{U} + O(U) \qquad \text{as $U \downarrow 0$}
\end{align}
uniformly for all $T \geq 0$ such that $\frac{T}{T_N(U, 0)}$ remains in a compact subset of $[0,1)$, where
\begin{align}\label{c1def} 
c_1(y) =   (f_{\BCS}(y) - y f_{\BCS}'(y))
\int_{0}^\infty \frac{\ln(\frac{16}{\epsilon})}{2 \pi}
\bigg(\frac{\tanh(\frac{\sqrt{f_{\BCS}(y)^2 + \epsilon^2}}{2y})}{\sqrt{f_{\BCS}(y)^2 + \epsilon^2}}  
- \frac{\tanh(\frac{\epsilon}{2})}{\epsilon} \bigg) d\epsilon. 
\end{align}
In particular, for $T= 0$,
\begin{align*}
\hat{m}(U, 0, 0) = \pi e^{-\gamma} + c_1(0) \sqrt{U} + O(U) \qquad \text{as $U \downarrow 0$,}
\end{align*}
where $\gamma \approx 0.5772$ is Euler's gamma constant and
\begin{align}\label{c1at0}
c_1(0) = &\;  \pi e^{-\gamma}
\int_{0}^\infty\frac{\ln(\frac{16}{\epsilon})}{2 \pi}  
\bigg(\frac{1}{\sqrt{\pi^2 e^{-2\gamma} + \epsilon^2}}  
- \frac{\tanh(\frac{\epsilon}{2})}{\epsilon} \bigg) d\epsilon
 \approx 0.04576.
\end{align}
\end{theorem}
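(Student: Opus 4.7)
The strategy is to subtract the N\'eel temperature equation \eqref{TNeelequationNtz} (specialized to $t_z=0$) from the mean-field equation \eqref{mAFeq} to cancel $1/U$, then rescale by $\epsilon=T_N u$ and introduce $\hat{m}=\Delta_{\AF}/T_N$ and $y=T/T_N$. Denoting the integrand of $J$ from \eqref{Jdef} by
\[
I(x,y,u):=\frac{\tanh\bigl(\sqrt{x^2+u^2}/(2y)\bigr)}{\sqrt{x^2+u^2}}-\frac{\tanh(u/2)}{u},
\]
the subtracted equation becomes $\int_0^\infty N_0(T_N u)\,I(\hat m,y,u)\,du=0$. Using Theorem~\ref{2Ddensityth} to decompose
\[
N_0(T_N u)=\tfrac{1}{2\pi^2}\bigl(\ln(16/T_N)-\ln u\bigr)+R(T_N u),\qquad R(\eta)=O\bigl(\eta^2\ln(1/\eta)\bigr),
\]
and noting that $\int\tilde{I}(\hat m,y,u)\,du=J(\hat m,y)$ and the $R$-integral is exponentially small, one obtains
\[
\frac{\ln(16/T_N)}{2\pi^2}\,J(\hat m,y)=\frac{1}{2\pi^2}\int_0^\infty\ln u\cdot I(\hat m,y,u)\,du+O\bigl(T_N^2\ln(1/T_N)\bigr).
\]
By Theorem~\ref{2DNeelth}, $1/\ln(16/T_N(U,0))=\sqrt U/(2\pi)+O(U^{3/2})$, so $J(\hat m,y)=O(\sqrt U)$.

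\textbf{Implicit function theorem.} Since $J(f_{\BCS}(y),y)=0$ and $J(\cdot,y)$ is strictly decreasing in $x$ with $\partial_xJ(f_{\BCS}(y),y)$ bounded away from zero uniformly for $y$ in compact subsets of $[0,1)$, the implicit function theorem yields $\hat m-f_{\BCS}(y)=O(\sqrt U)$ and
\[
\hat m-f_{\BCS}(y)=\frac{1}{\partial_xJ(f_{\BCS}(y),y)\,\ln(16/T_N)}\int_0^\infty\ln u\cdot I(f_{\BCS}(y),y,u)\,du+O(U).
\]
Using $J(f_{\BCS}(y),y)=0$ to rewrite $\int\ln u\cdot I\,du=-\int\ln(16/u)\,I\,du$ and inserting the expansion of $1/\ln(16/T_N)$ gives $\hat m-f_{\BCS}(y)=c_1(y)\sqrt U+O(U)$ with
\[
c_1(y)=-\frac{1}{\partial_xJ(f_{\BCS}(y),y)}\int_0^\infty\frac{\ln(16/\epsilon)}{2\pi}\,I(f_{\BCS}(y),y,\epsilon)\,d\epsilon.
\]

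\textbf{Key identity and conclusion.} To match \eqref{c1def} it suffices to show $\partial_xJ(f_{\BCS}(y),y)=-1/(f_{\BCS}(y)-yf_{\BCS}'(y))$. I would derive this from the global identity
\[
x\,\partial_xJ(x,y)+y\,\partial_yJ(x,y)=-1 \quad\text{for all }(x,y),
\]
which I expect to follow by substituting $\epsilon=x\sinh(s)$ (so $\sqrt{x^2+\epsilon^2}=x\cosh(s)$) in the explicit expressions for $\partial_xJ$ and $\partial_yJ$ and integrating by parts in $s$; together with the chain-rule relation $\partial_yJ(f_{\BCS}(y),y)=-f_{\BCS}'(y)\partial_xJ(f_{\BCS}(y),y)$ obtained by differentiating $J(f_{\BCS}(y),y)=0$, this yields the claimed identity. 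Formula \eqref{c1at0} then follows by setting $y=0$ with the convention $\tanh(\cdot/0)\equiv 1$. The hard part will be the derivation of the scaling identity $x\partial_xJ+y\partial_yJ=-1$, since the naive integrations by parts produce boundary terms that must be handled with care; secondary difficulties are uniformity in $y$ on compact subsets of $[0,1)$ (uniform non-degeneracy of $\partial_xJ(f_{\BCS}(y),y)$ and uniform control of the tails of $I$) and careful bookkeeping of the $O(U)$ remainder, which collects contributions from the $O(U^{3/2})$ term in $1/\ln(16/T_N)$, the second-order Taylor term of $J$ at $f_{\BCS}(y)$, and the perturbation $I(\hat m,y,\cdot)-I(f_{\BCS}(y),y,\cdot)=O(\sqrt U)$.
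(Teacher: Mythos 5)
Your plan is correct and, modulo packaging, closely parallels the paper's proof; the substantive difference is where you place the key structural fact about $J$. The paper's route is to prove the exact functional equation $x = f_{\BCS}\bigl(e^{J(x,y)}y\bigr)e^{-J(x,y)}$ (its Lemma labelled \texttt{GJlemma}) and then Taylor-expand $z \mapsto f_{\BCS}(e^{z}y)e^{-z}$ at $z=0$, which produces the coefficient $f_{\BCS}(y)-y f_{\BCS}'(y)$ without ever differentiating $J$. Your route linearizes $J$ directly via the implicit function theorem together with the identity $\partial_x J(f_{\BCS}(y),y)=-1/(f_{\BCS}(y)-yf_{\BCS}'(y))$, which you reduce to the Euler relation $x\,\partial_x J+y\,\partial_y J=-1$. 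These are equivalent: the paper in fact obtains the closed form $J(x,y)=\ln\pi-\gamma-H(x/(2y))-\ln x$ by the same kind of integration by parts you sketch, and the Euler relation is just the infinitesimal version of the translational identity $J(\lambda x,\lambda y)=J(x,y)-\ln\lambda$ that the closed form makes manifest. What each approach buys: the paper's is slightly cleaner because it needs no control over $\partial_x J$ and its inverse near the zero set (the nonlinear identity is exact), while yours is more standard IFT bookkeeping. I would push back on your assessment that the scaling identity is ``the hard part'': it follows in two lines from a Frullani-type cancellation, since $J(\lambda x,\lambda y)-J(x,y)=\int_0^\infty\frac{\tanh(u/2)-\tanh(\lambda u/2)}{u}\,du=-\ln\lambda$; your proposed $\sinh$ substitution will work but is more labor than is needed. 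Two small corrections to your error accounting: the next-order correction in $1/\ln(16/T_N)$ is $O(U)$, not $O(U^{3/2})$, because $\ln(1/T_N)=2\pi/\sqrt{U}+O(1)$; and the remainder from the $R$-piece is $O\bigl(T_N^2(\ln(1/T_N))^2\bigr)$ rather than $O(T_N^2\ln(1/T_N))$ -- both are exponentially small and harmless, but worth getting right. The uniformity issues you flag (uniform nondegeneracy of $\partial_x J(f_{\BCS}(y),y)$, uniform control of tails, and stability of the $\ln u$-weighted integral under replacing $\hat m$ with $f_{\BCS}(y)$) are exactly what the paper's Lemmas corresponding to \texttt{int02lemma}, \texttt{intxylemma}, \texttt{2DhatmTlemma} and \texttt{tanhsqrtlemma} handle, and they would need to be supplied in your version too; none of them is an obstacle, but they are not negligible bookkeeping.
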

\begin{proof}
See Section \ref{2Dmeanfieldsubsec}. 
\end{proof}

\begin{figure}
\bigskip
\begin{center}
\begin{overpic}[width=.46\textwidth]{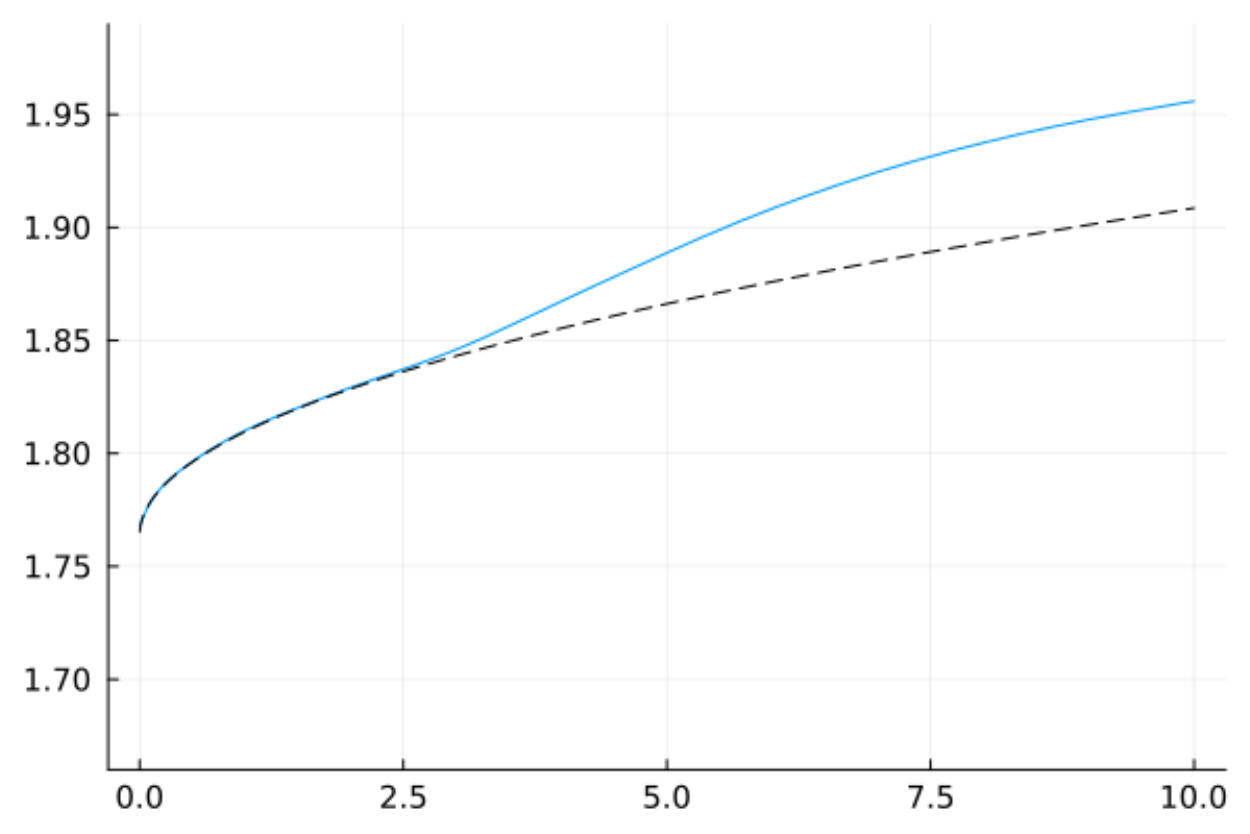}
      \put(-2,68){\footnotesize $\hat{m}(U, 0, yT_N)$}
      \put(101,3.7){\footnotesize $U$}
    \end{overpic}
   \hspace{0.5cm}
\begin{overpic}[width=.46\textwidth]{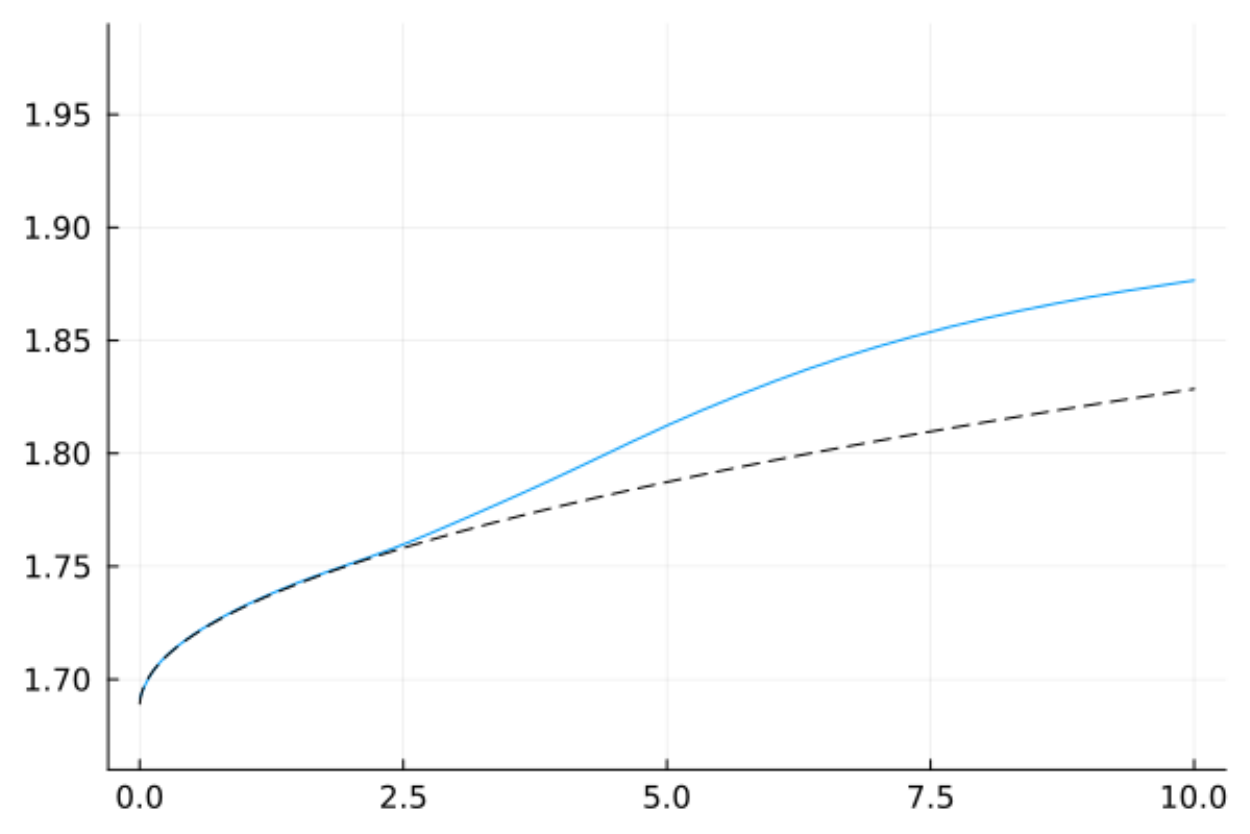}
      \put(-2,68){\footnotesize $\hat{m}(U, 0, yT_N)$}
      \put(101,3.7){\footnotesize $U$}
    \end{overpic}
     \begin{figuretext}[Illustration of Theorem \ref{2Dmeanfieldth}]\label{2Dmhatfig}
       The function $U \mapsto \hat{m}(U, 0, yT_N(U,0))$ (solid blue) and the approximation $f_{\BCS}(y) + c_1(y) \sqrt{U}$ (dashed black) of Theorem \ref{2Dmeanfieldth} for $y = 0$ (left) and $y = 1/2$ (right). 
\end{figuretext}
     \end{center}
\end{figure}

Our next theorem gives the small $U$ behavior of $\hat{m}$ in the case of $t_z > 0$, see Figure \ref{3Dmhatfig}. 

\begin{theorem}[Small $U$ behavior of 3D mean field]\label{3Dmeanfieldth}
The function $\hat{m}(U, t_z, T)$ satisfies
\begin{align*}
\hat{m}(U, t_z, T) = f_{\BCS}\bigg(\frac{T}{T_N(U, t_z)}\bigg) + O\Big(U^{-1} e^{- \frac{2}{N_{t_z}(0) U}} \Big) \qquad \text{as $U \downarrow 0$}
\end{align*}
uniformly for $t_z$ in compact subsets of $(0, 2)$ and for $T \geq 0$ such that $\frac{T}{T_N(U, t_z)}$ remains in a compact subset of $[0,1)$.
\end{theorem}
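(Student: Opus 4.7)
The plan is to subtract the N\'eel equation \eqref{TNeelequationNtz} from the mean-field equation \eqref{mAFeq} to eliminate $1/U$, then rescale by $\epsilon = T_N\xi$ where $T_N := T_N(U,t_z)$, $y := T/T_N$, and $\hat m := \Delta_{\AF}/T_N$. Since $N_{t_z}$ is supported on $[-4-2t_z,4+2t_z]$, this yields the reduced equation
\begin{align*}
0 = \int_0^L N_{t_z}(T_N \xi)\,K(\xi;\hat m, y)\,d\xi, \qquad L := (4+2t_z)/T_N,
\end{align*}
with kernel
\begin{align*}
K(\xi;x,y) := \frac{\tanh(\sqrt{x^2+\xi^2}/(2y))}{\sqrt{x^2+\xi^2}} - \frac{\tanh(\xi/2)}{\xi}.
\end{align*}
Theorem \ref{3DNeelth} gives $T_N = O(e^{-1/(N_{t_z}(0)U)})$, so $L$ is exponentially large in $1/U$. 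A standard monotonicity argument (based on the fact that $u \mapsto \tanh(u)/u$ is strictly decreasing on $(0,\infty)$) provides an a priori bound $\hat m \leq M$ uniform on the compact $t_z$- and $T/T_N$-ranges under consideration.

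\textbf{Main estimate.} Next, I would split $N_{t_z}(T_N \xi) = N_{t_z}(0) + R(T_N\xi)$ with $R(\epsilon) := N_{t_z}(\epsilon) - N_{t_z}(0)$. The leading piece becomes $N_{t_z}(0)\,J(\hat m, y) - N_{t_z}(0)\int_L^\infty K\,d\xi$. A direct large-$\xi$ expansion using $1-\tanh(r) = O(e^{-2r})$ produces
\begin{align*}
K(\xi;\hat m, y) = -\frac{\hat m^2}{2\xi^3} + O(\xi^{-5}) + O(e^{-\xi/y})
\end{align*}
uniformly in the relevant parameter ranges (the $y = 0$ case being included via the convention $\tanh(\cdot/0) \equiv 1$), giving the tail bound $|\int_L^\infty K\,d\xi| = O(L^{-2}) = O(T_N^2)$. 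For the error $\int_0^L R(T_N\xi)\,K\,d\xi$, Theorem \ref{3Ddensityth} ensures $R$ is real-analytic with $R(0) = R'(0) = 0$ near $\epsilon = 0$, so $|R(\epsilon)| \leq C\epsilon^2$ on $[0,\delta]$ for a fixed $\delta > 0$ depending only on the compact $t_z$-subset. Splitting the integral at $\xi = \delta/T_N$ and using $|K| \leq C$ on $[0,1]$ together with $|K| \leq C/\xi^3$ on $[1,\infty)$, one gets
\begin{align*}
\left|\int_0^{\delta/T_N} R(T_N\xi)K\,d\xi\right| \leq C T_N^2 \int_0^{\delta/T_N} \xi^2|K|\,d\xi = O\big(T_N^2\log(1/T_N)\big) = O(T_N^2/U),
\end{align*}
using $\log(1/T_N) = O(1/U)$ from Theorem \ref{3DNeelth}, while the contribution on $[\delta/T_N, L]$ is $O(T_N^2)$ by boundedness of $R$ and the $\xi^{-3}$ decay of $K$.

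\textbf{Inversion and conclusion.} Collecting the estimates yields
\begin{align*}
N_{t_z}(0)\,J(\hat m, y) = O\big(U^{-1} e^{-2/(N_{t_z}(0)U)}\big).
\end{align*}
Since $u \mapsto \tanh(u)/u$ is strictly decreasing on $(0,\infty)$, the function $x \mapsto \tanh(\sqrt{x^2+\xi^2}/(2y))/\sqrt{x^2+\xi^2}$ is strictly decreasing in $x > 0$, and hence $J(\cdot, y)$ is strictly decreasing with $\partial_x J(f_{\BCS}(y), y) < 0$ bounded away from zero uniformly for $y$ in a compact subset of $[0,1)$. A mean-value estimate then converts the displayed bound into $\hat m - f_{\BCS}(y) = O(U^{-1}e^{-2/(N_{t_z}(0)U)})$, which is the claim. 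The main obstacle is the error bookkeeping: the factor $\log(1/T_N) \sim 1/U$ (responsible for the $U^{-1}$ in the final estimate) appears precisely because the kernel decays only like $\xi^{-3}$; uniformity in $t_z$ requires that both $N_{t_z}(0)$ and $\min(2t_z, 4-2t_z)$ stay bounded below on the compact $t_z$-subset, and uniformity in $T/T_N$ requires $f_{\BCS}(y)$ to remain bounded below.
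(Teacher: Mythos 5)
Your proposal follows the same decomposition the paper uses: subtract the N\'eel equation from the gap equation, rescale to $\xi = \epsilon/T_N$, split $N_{t_z}(T_N\xi) = N_{t_z}(0) + R(T_N\xi)$, bound the tail beyond $L=(4+2t_z)/T_N$ and the $R$-error, and arrive at $N_{t_z}(0)J(\hat m,y)=O(U^{-1}e^{-2/(N_{t_z}(0)U)})$. The inversion at the end is genuinely different: you close with the strict monotonicity of $J(\cdot,y)$ and a mean-value inequality, whereas the paper derives the exact identity $x = f_{\BCS}\big(e^{J(x,y)}y\big)e^{-J(x,y)}$ (Lemma~\ref{GJlemma}) and Taylor-expands in $J$. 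The paper's identity is cleaner in that it bypasses any analysis of $\partial_x J$ away from $f_{\BCS}(y)$; the mean-value route is more elementary but forces you to know, in addition, that $\hat m$ is bounded above and bounded away from $0$, since $\partial_x J(0,y)=0$ for $y>0$ (evenness in $x$) and $J\to+\infty$ as $x^2+y^2\to 0$.

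Two points in your write-up are genuine gaps rather than routine omissions. First, the upper bound $\hat m\leq M$ does not follow ``by a standard monotonicity argument'' from the monotonicity of $u\mapsto\tanh(u)/u$: the paper's Lemma~\ref{mhatboundlemma} shows $\hat m<2$ via the pointwise inequality $1/\sqrt{4+\epsilon^2}<\tanh(\epsilon/2)/\epsilon$ for all $\epsilon>0$, which is a small but specific calculus fact, and is indispensable here because the only generic a priori bound, $\Delta_{\AF}<U/2$, gives $\hat m<U/(2T_N)$, exponentially large. Without this, your tail estimate $\int_L^\infty|K|\,d\xi = O(\hat m^2/L^2)$ and your $|K|\leq C/\xi^3$ bound on $[1,\infty)$ do not give anything exponentially small. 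Second, the assertion ``$|K|\leq C$ on $[0,1]$'' is not uniform in the parameters; the first term of $K$ blows up when both $\hat m$ and $y=T/T_N$ are small, and ruling that configuration out is exactly the content of the paper's Lemma~\ref{3DhatmTlemma}. In your bookkeeping this is recoverable without the lower bound, because the quantity you actually integrate, $\int_0^1\xi^2|K|\,d\xi$, is bounded unconditionally (use $\xi^2/\sqrt{\hat m^2+\xi^2}\leq\xi$), so you should state the integral estimate rather than the pointwise one. The lower bound $\sqrt{\hat m^2+y^2}\geq r$ that you still need for the mean-value inversion can, for $t_z>0$, be extracted a posteriori from $J(\hat m,y)=O(\epsilon)$ together with $J\to+\infty$ as $x^2+y^2\to 0$, but this step should be written out.
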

\begin{proof}
See Section \ref{3Dmeanfieldsubsec}. 
\end{proof}

\begin{figure}
\bigskip
\begin{center}
\begin{overpic}[width=.46\textwidth]{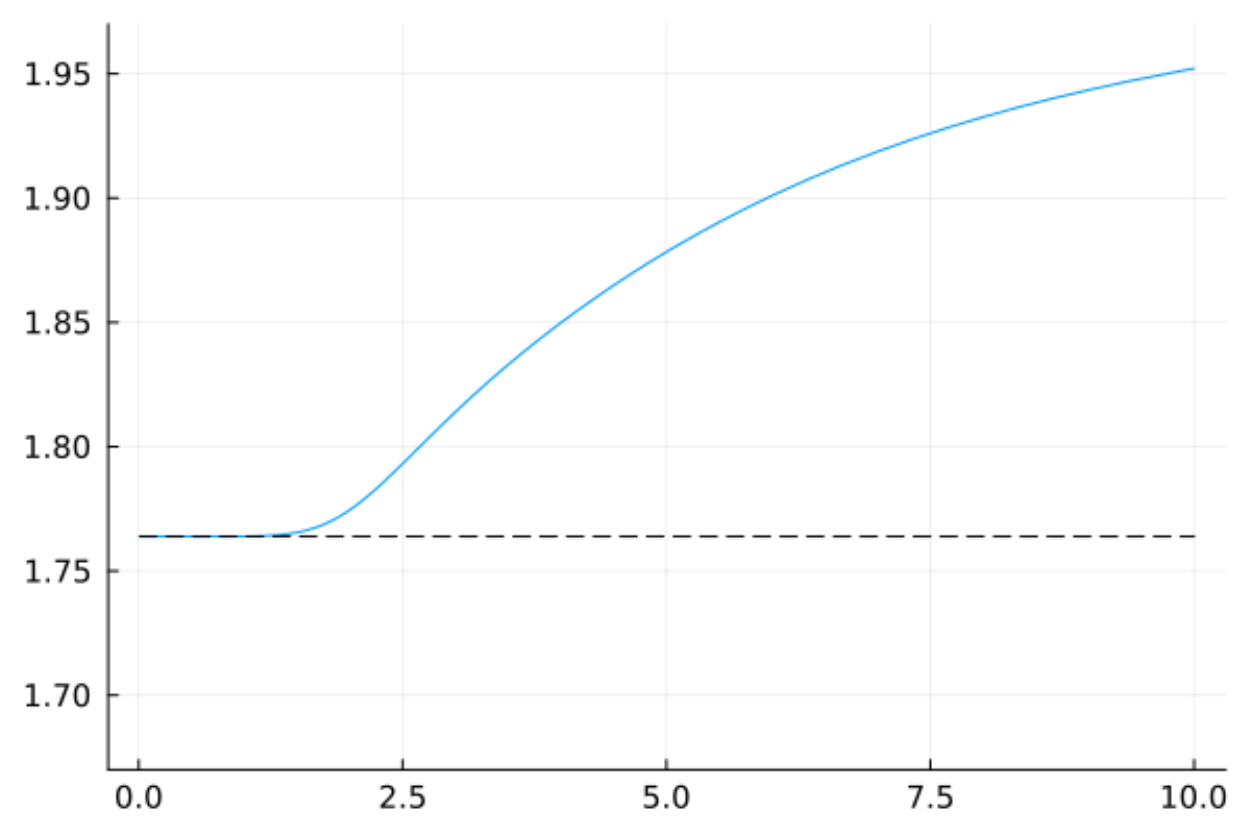}
      \put(-2,68){\footnotesize $\hat{m}(U, t_z, yT_N)$}
      \put(101,3.7){\footnotesize $U$}
    \end{overpic}
   \hspace{0.5cm}
\begin{overpic}[width=.46\textwidth]{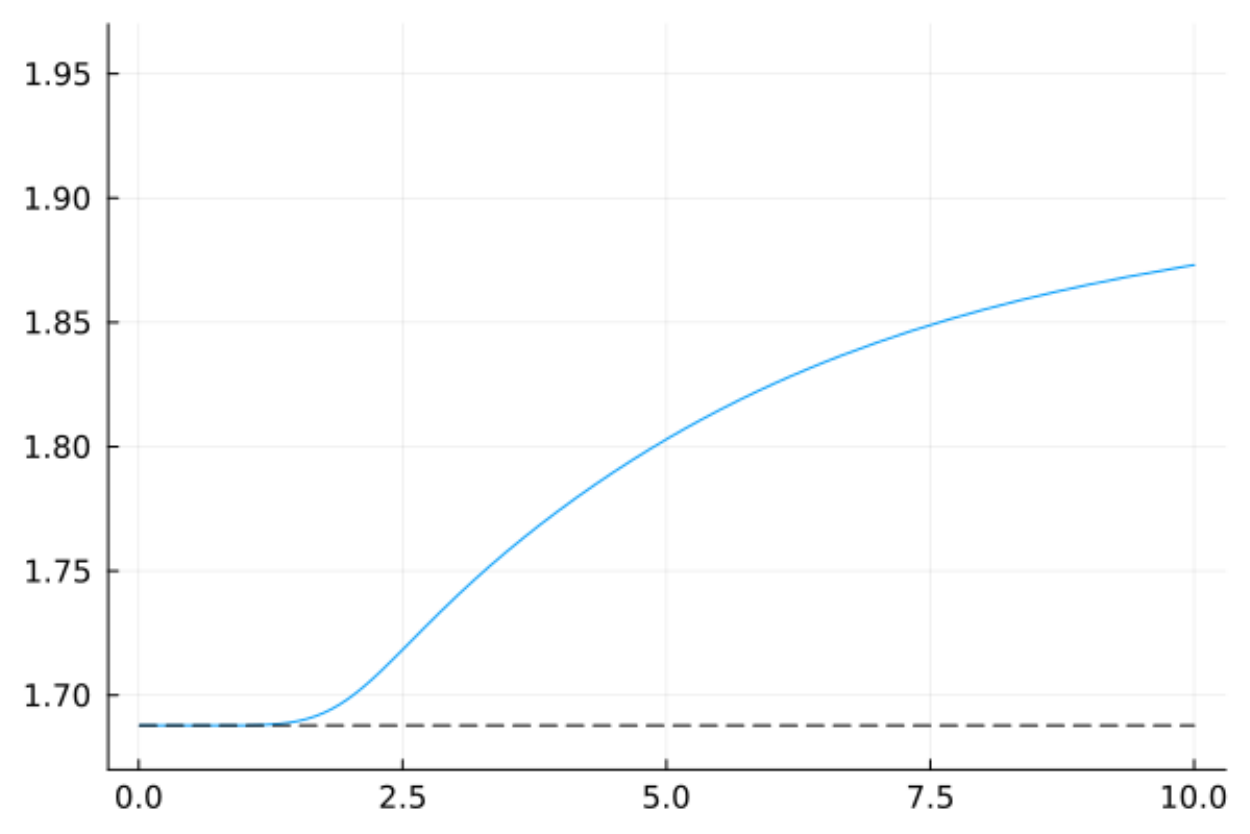}
      \put(-2,68){\footnotesize $\hat{m}(U, t_z, yT_N)$}
      \put(101,3.7){\footnotesize $U$}
    \end{overpic}
     \begin{figuretext}[Illustration of Theorem \ref{3Dmeanfieldth}]\label{3Dmhatfig}
       The function $U \mapsto \hat{m}(U, t_z, yT_N(U,t_z))$ (solid blue) and the approximation $f_{\BCS}(y)$ (dashed black) of Theorem \ref{3Dmeanfieldth} for $y = 0$ (left) and $y = 1/2$ (right) in the case of $t_z = 1/2$. 
\end{figuretext}
     \end{center}
\end{figure}

\begin{figure}
\bigskip
\begin{center}
\begin{overpic}[width=.46\textwidth]{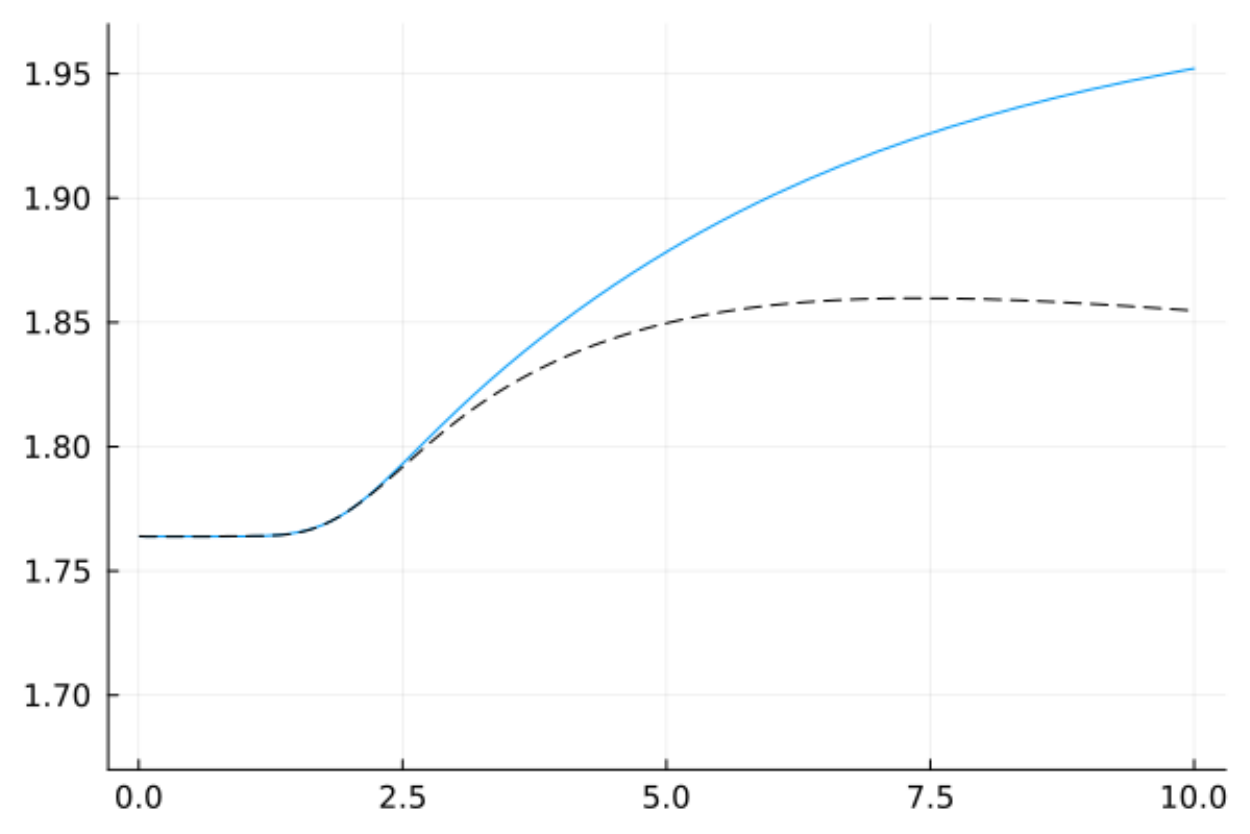}
      \put(-2,68){\footnotesize $\hat{m}(U, t_z, yT_N)$}
      \put(101,3.7){\footnotesize $U$}
    \end{overpic}
   \hspace{0.5cm}
\begin{overpic}[width=.46\textwidth]{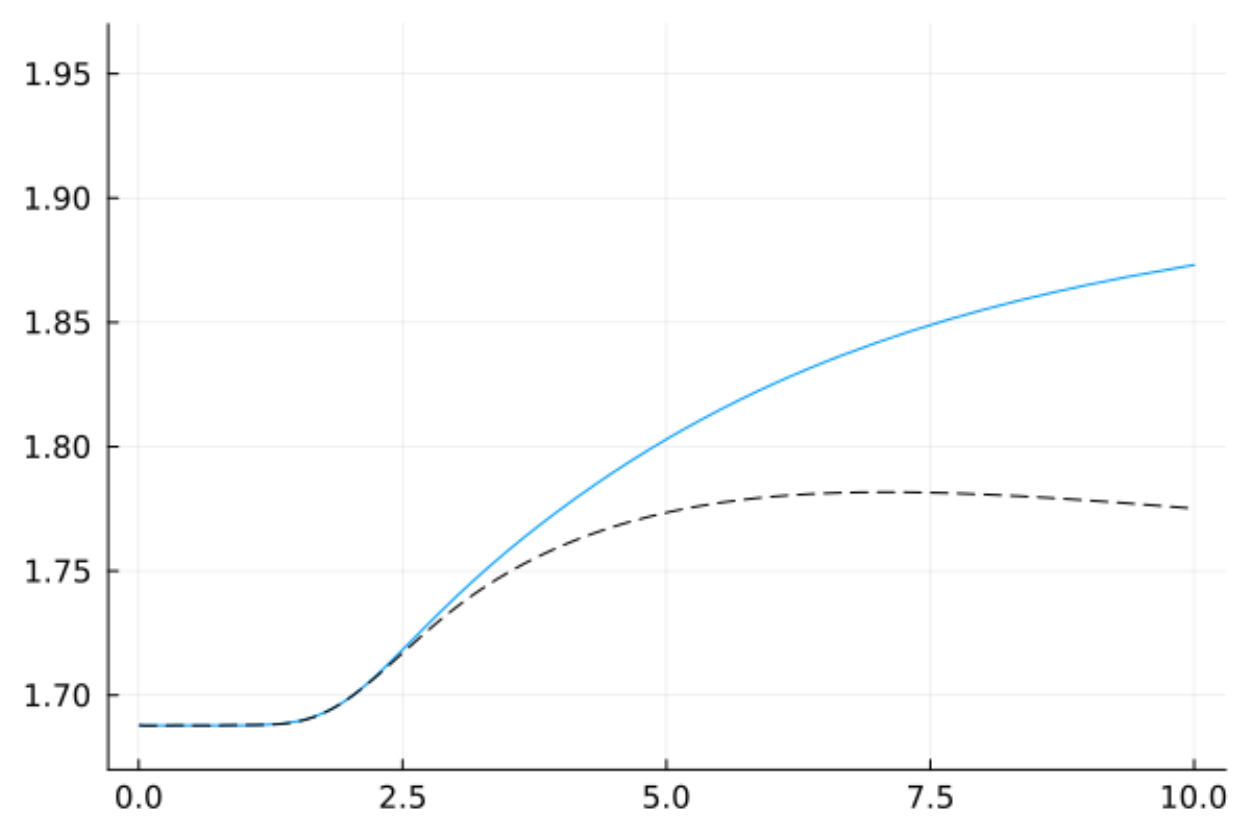}
      \put(-2,68){\footnotesize $\hat{m}(U, t_z, yT_N)$}
      \put(101,3.7){\footnotesize $U$}
    \end{overpic}
     \begin{figuretext}[Illustration of Theorem \ref{3Dmeanfieldimprovedth}]\label{3Dmhatimprovedfig}
       The function $U \mapsto \hat{m}(U, t_z, yT_N(U,t_z))$ (solid blue) and the approximation $f_{\BCS}(y) + g(U,t_z,y)$ (dashed black) of Theorem \ref{3Dmeanfieldth} for $y = 0$ (left) and $y = 1/2$ (right) in the case of $t_z = 1/2$. 
\end{figuretext}
     \end{center}
\end{figure}

\subsection{An exponentially small subleading term}
Theorems \ref{2Dmeanfieldth} and \ref{3Dmeanfieldth} provide approximations of $\hat{m}$ as $U \to 0$ in two and three dimensions. At the expense of introducing more complicated formulas, it is possible to include higher order terms in these approximations. To illustrate this point, we derive in this section an exponentially small subleading term in the expansion of $\hat{m}(U, t_z, T)$, thereby improving the formula of Theorem \ref{3Dmeanfieldth}.

\begin{theorem}[Small $U$ behavior of 3D mean field, improved version]\label{3Dmeanfieldimprovedth}
The function $\hat{m}(U, t_z, T)$ satisfies (see Figure \ref{3Dmhatimprovedfig})
\begin{align*}
\hat{m}(U, t_z, T) = f_{\BCS}\bigg(\frac{T}{T_N(U, t_z)}\bigg) 
+ g\bigg(U, t_z, \frac{T}{T_N(U, t_z)}\bigg)
+ O\Big(U^{-2} e^{- \frac{4}{N_{t_z}(0) U}} \Big) \qquad \text{as $U \downarrow 0$}
\end{align*}
uniformly for $t_z$ in compact subsets of $(0, 2)$ and for $T \geq 0$ such that $\frac{T}{T_N(U, t_z)}$ remains in a compact subset of $[0,1)$, where the subleading term is given by
\begin{align*}
g(U, t_z, y) = \big(f_{\BCS}(y) - y f_{\BCS}'(y)\big)
\int_0^\infty & \frac{N_{t_z}(T_{N}(U,t_z) \epsilon)  - N_{t_z}(0)}{N_{t_z}(0)} 
	\\
& \times \bigg(\frac{\tanh(\frac{\sqrt{f_{\BCS}(y)^2 + \epsilon^2}}{2y})}{\sqrt{f_{\BCS}(y)^2 + \epsilon^2}}  -
\frac{\tanh(\frac{\epsilon}{2})}{ \epsilon}\bigg) d \epsilon.
\end{align*}
In particular, for $T = 0$,
\begin{align*}
\hat{m}(U, t_z, 0) = \pi e^{-\gamma} 
+ g(U, t_z, 0)
+ O\Big(U^{-2} e^{- \frac{4}{N_{t_z}(0) U}} \Big) \qquad \text{as $U \downarrow 0$}
\end{align*}
uniformly for $t_z$ in compact subsets of $(0, 2)$, where
\begin{align*}
g(U, t_z, 0) = \pi e^{-\gamma} 
\int_0^\infty & \frac{N_{t_z}(T_{N}(U,t_z) \epsilon)  - N_{t_z}(0)}{N_{t_z}(0)} 
\bigg(\frac{1}{\sqrt{\pi^2 e^{-2\gamma} + \epsilon^2}}  -
\frac{\tanh(\frac{\epsilon}{2})}{ \epsilon}\bigg) d \epsilon.
\end{align*}
\end{theorem}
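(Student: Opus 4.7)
\textbf{Proof plan for Theorem \ref{3Dmeanfieldimprovedth}.}
The plan is to refine the proof of Theorem \ref{3Dmeanfieldth} by keeping one more term in the perturbative expansion around the BCS limit. First I subtract \eqref{TNeelequationNtz} from \eqref{mAFeq} to eliminate $1/U$ and rescale $\epsilon \mapsto T_N \epsilon$, where $T_N = T_N(U,t_z)$. Writing $y = T/T_N$ and $\hat{m} = \hat{m}(U,t_z,T)$, the result is
\begin{equation*}
0 = \int_0^\infty N_{t_z}(T_N \epsilon)\,K(\hat{m},y;\epsilon)\,d\epsilon, \qquad K(x,y;\epsilon) := \frac{\tanh(\sqrt{x^2+\epsilon^2}/(2y))}{\sqrt{x^2+\epsilon^2}} - \frac{\tanh(\epsilon/2)}{\epsilon}.
\end{equation*}
Decomposing $N_{t_z}(T_N\epsilon) = N_{t_z}(0) + R(\epsilon)$ with $R(\epsilon):= N_{t_z}(T_N\epsilon)-N_{t_z}(0)$, and using that the first piece contributes $N_{t_z}(0)\, J(\hat{m},y)$ with $J$ as in \eqref{Jdef}, gives
\begin{equation*}
J(\hat{m},y) = -\frac{1}{N_{t_z}(0)} \int_0^\infty R(\epsilon)\, K(\hat{m},y;\epsilon)\, d\epsilon.
\end{equation*}

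Next I set $\hat{m} = f_{\BCS}(y) + h$ and Taylor expand both sides about $x = f_{\BCS}(y)$. Since $J(f_{\BCS}(y),y) = 0$ by definition, this yields
\begin{equation*}
\partial_x J(f_{\BCS}(y),y)\, h + O(h^2) = -\frac{1}{N_{t_z}(0)} \int_0^\infty R(\epsilon)\, K(f_{\BCS}(y),y;\epsilon)\,d\epsilon + O\!\left(\frac{h}{N_{t_z}(0)}\int_0^\infty |R(\epsilon)|\,|\partial_x K(f_{\BCS}(y),y;\epsilon)|\, d\epsilon\right).
\end{equation*}
To convert $-1/\partial_x J(f_{\BCS}(y),y)$ into the coefficient $f_{\BCS}(y)-y f_{\BCS}'(y)$ that appears in $g$, I combine two identities: (i) implicit differentiation of $J(f_{\BCS}(y),y)=0$ in $y$ gives $\partial_x J\, f_{\BCS}'(y) + \partial_y J = 0$; and (ii) the scaling identity $x\,\partial_x J(x,y) + y\,\partial_y J(x,y) = -1$ for all $x,y>0$, obtained by substituting $\epsilon = \lambda \epsilon'$ in the integral for $J(\lambda x,\lambda y)$—the $\tanh(\sqrt{x^2+\epsilon^2}/(2y))/\sqrt{x^2+\epsilon^2}$ contribution becomes $\lambda$-independent while differentiating the remaining $-\tanh(\lambda\epsilon'/2)/\epsilon'$ in $\lambda$ at $\lambda=1$ produces $-\int_0^\infty \tfrac{1}{2}\sech^2(\epsilon'/2)\,d\epsilon' = -1$. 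Eliminating $\partial_y J$ between (i) and (ii) gives $\partial_x J(f_{\BCS}(y),y)\,(f_{\BCS}(y)-yf_{\BCS}'(y)) = -1$, exactly as needed. Substituting into the expression for $h$ identifies the leading correction with $g(U,t_z,y)$.

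For the error bound I use the analyticity and evenness of $N_{t_z}$ at $0$ (Theorem \ref{3Ddensityth}) to obtain $|R(\epsilon)| \le C\,T_N^2 \epsilon^2$ for $T_N \epsilon \le r_0$, together with the uniform bound $|R(\epsilon)| \le N_{t_z}(0)$. Combined with the standard decay bounds $|K(f_{\BCS}(y),y;\epsilon)|,|\partial_x K(f_{\BCS}(y),y;\epsilon)| = O((1+\epsilon)^{-3})$, which hold uniformly for $y$ in compact subsets of $[0,1)$, splitting $\int_0^\infty |R|\,|K|\,d\epsilon$ into contributions from small, intermediate, and large $\epsilon$ produces a bound of order $T_N^2|\ln T_N|$. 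Since $|\ln T_N| = O(1/U)$ by Theorem \ref{3DNeelth}, this gives $h = O(T_N^2/U)$ at leading order; feeding this into the $O(h^2)$ and $O(h\,|R|/N_{t_z}(0))$ remainders produces the claimed error $O(U^{-2}e^{-4/(N_{t_z}(0)U)})$. The main obstacle will be maintaining the bounds uniformly in $t_z$ on compact subsets of $(0,2)$ and in $y = T/T_N$ on compact subsets of $[0,1)$, where $f_{\BCS}'(y)$ and $\partial_x J(f_{\BCS}(y),y)$ must be controlled quantitatively; compactness and the smoothness of $f_{\BCS}$ suffice, but careful bookkeeping is required. The $T=0$ specialisation is immediate: at $y=0$ one has $f_{\BCS}(0)=\pi e^{-\gamma}$ and $f_{\BCS}(0)-0\cdot f_{\BCS}'(0) = \pi e^{-\gamma}$, so $g(U,t_z,0)$ reduces to the stated expression.
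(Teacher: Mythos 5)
Your proposal is correct, and the skeleton is the same as the paper's (subtract the N\'eel equation from the mean-field equation, rescale $\epsilon\mapsto T_N\epsilon$, split $N_{t_z}(T_N\epsilon)=N_{t_z}(0)+R(\epsilon)$, arrive at $J(\hat m,y)=-N_{t_z}(0)^{-1}\int_0^\infty R\,K(\hat m,y;\cdot)\,d\epsilon$, and estimate). Where you genuinely diverge is in the step that converts $J(\hat m,y)\approx(\text{small})$ into $\hat m\approx f_{\BCS}(y)+(\text{small})$ with the explicit coefficient $f_{\BCS}(y)-y f_{\BCS}'(y)$. The paper handles this in two layers: Lemma \ref{GJlemma} gives the exact global identity $x=f_{\BCS}\big(e^{J(x,y)}y\big)e^{-J(x,y)}$ (proved by integration by parts and the change of variables $\epsilon=\sqrt{s^2-x^2}$ followed by $s\mapsto sx$, which reveals $J(x,y)=\ln\pi-\gamma-H(x/(2y))-\ln x$), and then equation \eqref{fBCStaylor} (Taylor's theorem in $z=J$) directly delivers $\hat m=f_{\BCS}(y)+\big(yf_{\BCS}'(y)-f_{\BCS}(y)\big)J+O(J^2)$; this is equation \eqref{hatmfBCSJ}, already established during the proof of Theorem \ref{3Dmeanfieldth}. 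You instead Taylor-expand $J$ in its first argument about $f_{\BCS}(y)$ and identify $\partial_x J\big(f_{\BCS}(y),y\big)=-\big(f_{\BCS}(y)-yf_{\BCS}'(y)\big)^{-1}$ via the scaling identity $x\,\partial_x J+y\,\partial_y J=-1$ combined with implicit differentiation of $J(f_{\BCS}(y),y)=0$. These are two faces of the same scaling structure: your identity is the infinitesimal version of the relation $J(cx,cy)=J(x,y)-\ln c$ that underlies Lemma \ref{GJlemma}. Your route is a bit more self-contained in that it never needs the integration-by-parts reduction of $J$ to a single-variable function $H$, only the scaling of the integrand; the paper's route has the advantage that Lemma \ref{GJlemma} is proved once and then drives both the 2D and 3D mean-field theorems, and the multiplicative error bound of Lemma \ref{tanhsqrtlemma2} is slightly slicker than your additive Taylor expansion of $K$ (though, as you note, the $|K|,|\partial_x K|=O((1+\epsilon)^{-3})$ decay makes either work). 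Two minor points: the uniform bound $|R(\epsilon)|\le N_{t_z}(0)$ is not obviously correct as stated (it presumes $N_{t_z}(\epsilon)\le N_{t_z}(0)$, which need not hold because of van Hove singularities near $\pm 2t_z$, $\pm(4-2t_z)$), but a uniform bound $|R(\epsilon)|\le C(t_z)$ follows trivially from compactly-supported continuity of $N_{t_z}$ and is all you need; and to apply Lemma \ref{int02lemma}-type estimates as the paper does, one needs $\sqrt{\hat m^2+y^2}$ bounded away from $0$ (Lemma \ref{3DhatmTlemma}), which in your setup is implicitly guaranteed by starting from $\hat m=f_{\BCS}(y)+o(1)$ with $f_{\BCS}$ bounded away from $0$ on compact subsets of $[0,1)$.
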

\begin{proof}
See Section \ref{3Dmeanfieldimprovedsubsec}. 
\end{proof}

\section{Antiferromagnetic mean-field equation}\label{derivationsec}

\subsection{The Hubbard model}\label{sec:Hubbard} 
The Hubbard model describes itinerant electrons with spin $\sigma \in \{\uparrow, \downarrow\}$ on a lattice $\Lambda_L$. In this paper, $\Lambda_L$  will be a square lattice or a cubic lattice, i.e.,
$$\Lambda_L = \bigg\{(x_1, \dots, x_n) \in \Z^n \,\bigg|\, -\frac{L}{2} \leq x_j < \frac{L}{2} \;\; \text{for $j =1,\dots, n$}\bigg\}$$
for $n = 2$ or $n = 3$; we use periodic boundary conditions, i.e., we identity $\Lambda_L$ with the torus $(\Z/L\Z)^n$.
A single particle has to be at one of the $L^n$ lattice sites and can have either spin up or spin down; thus the single-particle Hilbert space $\mathcal{H}$ is isomorphic to $\C^{2L^n}$. The {\it fermion Fock space $\mathcal{F}$} is by definition the Grassmann algebra 
$$\mathcal{F} = \bigwedge \mathcal{H} = \bigwedge\nolimits^{0} \mathcal{H} \oplus \bigwedge\nolimits^1 \mathcal{H} \oplus \cdots \oplus \bigwedge\nolimits^{\dim \mathcal{H}} \mathcal{H}$$
and is therefore isomorphic to $\C^{4^{L^n}}$.
The Hubbard model is defined by the Hamiltonian operator $H:\mathcal{F} \to \mathcal{F}$ given by
\begin{align}\label{hubbardhamiltonian}
H =  \sum_{\mathbf{x}, \mathbf{y} \in \Lambda_{L}} \sum_{\sigma = \uparrow, \downarrow} t_{\mathbf{x}\mathbf{y}} c_{\mathbf{x} \sigma}^\dagger c_{\mathbf{y} \sigma} + U \sum_{\mathbf{x} \in \Lambda_L} n_{\mathbf{x} \uparrow} n_{\mathbf{x} \downarrow}
- \frac{U}{2} \mathcal{N} + \frac{U}{4}L^n,
\end{align}
where the coupling parameter $U > 0$ determines the strength of the on-site repulsion, $t_{\mathbf{x}\mathbf{y}}$ are the hopping matrix elements, $c_{\mathbf{x} \sigma}^\dagger$ is the operator on $\mathcal{F}$ which creates an electron of spin $\sigma \in \{\uparrow, \downarrow\}$ at site $\mathbf{x} \in \Lambda_L$, $c_{\mathbf{x} \sigma}$ is the corresponding annihilation operator, $n_{\mathbf{x} \sigma} := c_{\mathbf{x} \sigma}^\dagger c_{\mathbf{x} \sigma}$ is the density operator, and $\mathcal{N} := \sum_{\mathbf{x}, \sigma} n_{\mathbf{x} \sigma}$ is the number operator. 
The last two terms in (\ref{hubbardhamiltonian}) can be removed by shifting the ground state energy and the chemical potential $\mu$, but it is convenient to include these terms because they make particle-hole symmetry manifest and ensure that half-filling corresponds to $\mu = 0$. Another way to write \eqref{hubbardhamiltonian} is 
$$
H =  \sum_{\mathbf{x}, \mathbf{y} \in \Lambda} \sum_{\sigma = \uparrow, \downarrow} t_{\mathbf{x}\mathbf{y}} c_{\mathbf{x} \sigma}^\dagger c_{\mathbf{y} \sigma} + U \sum_{\mathbf{x} \in \Lambda_L} (n_{\mathbf{x}\uparrow} -\tfrac12)(n_{\mathbf{x} \downarrow}-\tfrac12).$$

In this paper, we consider the case of a nearest neighbor hopping matrix given in two dimensions by 
\begin{align}\label{txy2D}
\text{$n = 2$}: \qquad
t_{\mathbf{x}\mathbf{y}} 
= \begin{cases} -t & \text{if $\mathbf{x}$ and $\mathbf{y}$ are nearest neighbors},
	\\
0 & \text{otherwise}.
\end{cases}
\end{align}
In three dimensions, we allow for a different hopping constant in the $z$-direction, so that 
\begin{align}\label{txy3D}
\text{$n = 3$}: \qquad
t_{\mathbf{x}\mathbf{y}} 
= \begin{cases} 
-t & \text{if $\mathbf{x}$ and $\mathbf{y}$ are nearest neighbors and $x_3 = y_3$},
	\\
-t_z & \text{if $\mathbf{x}$ and $\mathbf{y}$ are nearest neighbors and $x_3 = y_3 \pm 1$},
	\\
0 & \text{otherwise}.
\end{cases}
\end{align}

Instead of using the basis $\{|\mathbf{x} \sigma\rangle\}_{\mathbf{x} \in \Lambda_L, \sigma \in \{\uparrow, \downarrow\}}$ for the single-particle Hilbert space $\mathcal{H} \cong \C^{2L^2}$, it is often convenient to use the basis $\{|\mathbf{k} \sigma\rangle\}_{\mathbf{k} \in \Lambda_L^*, \sigma \in \{ \uparrow, \downarrow\}}$ defined by
\begin{align}\label{kbasisdef}
\langle \mathbf{x}\sigma |\mathbf{k} \sigma' \rangle = \frac{1}{L^{n/2}} e^{i\mathbf{k} \cdot \mathbf{x}} \delta_{\sigma \sigma'}
\end{align}
with the dual lattice $\Lambda_L^*$ given by
$$\Lambda_L^* = \bigg\{(k_1, \dots, k_n) \in \frac{2\pi}{L}\Z^n \,\bigg|\, -\pi \leq k_j < \pi\;\; \text{for $j =1,\dots, n$}\bigg\}.$$
Under the change of basis from $\{|\mathbf{x}\sigma\rangle\}$ to $\{|\mathbf{k}\sigma\rangle\}$, the creation and annihilation operators transform as
\begin{align*}
c_{\mathbf{x} \sigma}^\dagger = \sum_{\mathbf{k}} c_{\mathbf{k} \sigma}^\dagger \langle \mathbf{k}\sigma |\mathbf{x} \sigma \rangle
= \frac{1}{L^{n/2}}  \sum_{\mathbf{k}} c_{\mathbf{k} \sigma}^\dagger e^{-i\mathbf{k} \cdot \mathbf{x}}, \quad
c_{\mathbf{x} \sigma} = \sum_{\mathbf{k}}  \langle \mathbf{x}\sigma |\mathbf{k} \sigma \rangle c_{\mathbf{k} \sigma}
= \frac{1}{L^{n/2}}  \sum_{\mathbf{k}} c_{\mathbf{k} \sigma} e^{i\mathbf{k} \cdot \mathbf{x}},
\end{align*}
and using these relations we arrive at a Fourier space representation for the Hubbard Hamiltonian:
$$H = H_0 + H_{\text{int}} - \frac{U}{2} \mathcal{N} + \frac{U}{4}L^n,$$
where
\begin{align}\label{HubbardH0}
& H_0 := 
\sum_{\mathbf{k} \in \Lambda_L^*}  \sum_{\sigma = \uparrow, \downarrow}   \varepsilon(\mathbf{k}) c_{\mathbf{k} \sigma}^\dagger c_{\mathbf{k} \sigma}, 
	\\
& H_{\text{int}} := \frac{U}{L^n}  \sum_{\mathbf{k}^{(1)}, \mathbf{k}^{(2)}, \mathbf{k}^{(3)}, \mathbf{k}^{(4)} \in \Lambda_L^*} 
c_{\mathbf{k}^{(1)} \uparrow}^\dagger 
 c_{\mathbf{k}^{(3)} \downarrow}^\dagger 
c_{\mathbf{k}^{(4)} \downarrow}
c_{\mathbf{k}^{(2)} \uparrow} 
 \sum_{\mathbf{n} \in \Z^n} \delta_{\mathbf{k}^{(1)} - \mathbf{k}^{(2)} + \mathbf{k}^{(3)} - \mathbf{k}^{(4)}, 2\pi\mathbf{n}},
\end{align}
with $\varepsilon(\mathbf{k})$ given by (\ref{epsilondef}).

\subsection{Hartree--Fock theory for the Hubbard model}
We review some relevant aspects of Hartree--Fock theory for the Hubbard model from \cite{BLS1994, BP1996}.

An {\it observable} $A$ is a self-adjoint linear map $A:\mathcal{F} \to \mathcal{F}$. 
A {\it state} (also known as a {\it density matrix}) is a linear map $\rho:\mathcal{F} \to \mathcal{F}$ such that $\rho \geq 0$ and $\tr \rho = 1$. The {\it expectation value} of an observable $A$ in the state $\rho$ is $\langle A \rangle_\rho := \tr(\rho A)$. 
We consider the grand canonical ensemble where the system is in contact with a heat and particle reservoir. In this case, the most probable state of the system for a given inverse temperature $\beta  =1/T \in (0, +\infty]$ and chemical potential $\mu \in \R$ is the state that minimizes the {\it grand canonical potential} defined by
$$\Omega_{\qm}(\rho) = \tr[(H - \mu \mathcal{N})\rho] + \frac{1}{\beta} \tr[\rho \ln \rho].$$
For each $\beta$ and $\mu$, there is a unique state $\rho^{(\beta,\mu)}$ at which $\Omega_{\qm}(\rho)$ is minimized, i.e.,
$$\Omega_{\qm}(\rho^{(\beta,\mu)}) = \min_{\rho \geq 0, \; \tr(\rho) = 1} \Omega_{\qm}(\rho);$$
this follows from the finite dimensionality of $\mathcal{F}$ and the strict convexity of $\Omega_{\qm}$.
The unique minimizer $\rho^{(\beta,\mu)}$ is known as the {\it Gibbs state} and is given in terms of the grand canonical partition function $Z := \tr[ e^{-\beta(H - \mu \mathcal{N})}]$ by
$$\rho^{(\beta,\mu)} = \frac{e^{-\beta(H - \mu \mathcal{N})}}{Z}.$$

It is exceedingly difficult to determine the equilibrium state $\rho^{(\beta,\mu)}$ in general. However, important progress can be made within the so-called Hartree--Fock approximation. In this approximation, one does not minimize grand canonical potential over all states $\rho$. Instead one minimizes $\Omega_{\qm}(\rho)$ over the set of quasi-free states, where a state $\rho$ is {\it quasi-free} if all correlation functions can be computed with the help of Wick's theorem in the sense that
$$\langle e_1e_2\cdots e_{2N-1} \rangle_\rho = 0$$
and
$$\langle e_1e_2\cdots e_{2N} \rangle_\rho = \sum_\pi' \sgn(\pi) \langle e_{\pi(1)\pi(2)}\rangle_\rho \cdots \langle e_{\pi(2N-1)\pi(2N)}\rangle_\rho$$
whenever $e_1, e_2, \dots, e_{2N}$ are operators such that each $e_j$ is either a creation or an annihilation operator, and the sum runs over all permutations $\pi$ such that $\pi(1) < \pi(3) < \cdots < \pi(2N-1)$ and $\pi(2j-1) < \pi(2j)$ for $1 \leq j \leq N$. 
In particular, if $c_j^\dagger$ and $c_j$ are the creation and annihilation operators with respect to a basis $\{|f_j \rangle\}$ for the single particle Hilbert space $\mathcal{H}$, it holds that
$$\langle c_j^\dagger c_k^\dagger c_l c_m \rangle_\rho
= \langle c_j^\dagger c_k^\dagger  \rangle_\rho \langle c_l c_m \rangle_\rho
- \langle c_j^\dagger c_l \rangle_\rho \langle  c_k^\dagger c_m \rangle_\rho
+ \langle c_j^\dagger c_m \rangle_\rho \langle c_k^\dagger c_l \rangle_\rho.$$

Clearly, any quasi-free state is uniquely determined by the values of terms of the form $\langle c_j^\dagger c_k^\dagger \rangle_\rho = \overline{\langle c_k c_j  \rangle_\rho}$ and $\langle c_j^\dagger c_k \rangle_\rho$. 
In fact, the repulsive Hubbard Hamiltonian (\ref{hubbardhamiltonian}) has a positive definite interaction term, and therefore it holds that if $\rho$ is a quasi-free state that minimizes $\Omega_{\qm}(\rho)$ within the set of quasi-free states, then $\rho$ is {\it particle-number conserving} in the sense that $\langle c_k^\dagger c_l^\dagger \rangle_\rho = 0$ for all $k$ and $l$, see  \cite[Theorem 2.11]{BLS1994}. It is therefore enough to consider particle-number conserving quasi-free states, and such states are in one-to-one correspondence with the set of {\it one-particle density matrices}, which are defined as linear maps $\gamma: \mathcal{H} \to \mathcal{H}$ such that $0 \leq \gamma \leq 1$; the correspondence is given explicitly by $\rho \mapsto \gamma$ where $\gamma$ is defined by $\langle f_j | \gamma f_k\rangle = \langle c_j^\dagger c_k \rangle_\rho$, see \cite[Lemma 1]{BP1996}. 

The {\it HF grand canonical potential} $\Omega$ is defined by
$$\Omega(\gamma) = \Omega_{\qm}(\rho_\gamma),$$
where $\rho_\gamma$ is the quasi-free state corresponding to the one-particle density matrix $\gamma$.
For fixed $\beta$ and $\mu$, any one-particle density matrix $\gamma^{(\beta, \mu)}$ that minimizes $\Omega(\gamma)$ is known as a {\it HF Gibbs state}. Since $\mathcal{H}$ is finite-dimensional, the set of one-particle density matrices is compact; thus there exists at least one HF Gibbs state, but it is not necessarily unique. We let
$$\Omega^{(\beta, \mu)} := \Omega(\gamma^{(\beta, \mu)}) = \min_{0 \leq \gamma \leq 1} \Omega (\gamma)$$
denote the minimum value of the HF grand canonical potential for any given $\beta$ and $\mu$. 

The next lemma shows that the expectation value of a one-body operator in a quasi-free state $\rho_\gamma$ can be expressed in terms of $\gamma$ alone. 

\begin{lemma}\label{rhogammalemma}
Let $\rho_\gamma$ be the quasi-free state corresponding to a one-particle density matrix $\gamma$.
If $L:\mathcal{H}\to \mathcal{H}$ is a self-adjoint linear map, and $\hat{L} = \sum_{j,k} \langle f_j |L | f_k\rangle c_j^\dagger c_k:\mathcal{F} \to \mathcal{F}$ is the second quantization of $L$, then 
$$\langle \hat{L} \rangle_{\rho_\gamma} = \tr(\gamma L).$$
\end{lemma}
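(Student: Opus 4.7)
The proof is essentially a bookkeeping exercise that reduces the expectation value of a second-quantized one-body operator to a trace on the single-particle Hilbert space. The plan is to expand $\hat{L}$ in the basis $\{|f_j\rangle\}$, use linearity of the expectation $\langle \cdot \rangle_{\rho_\gamma}$ to pull the coefficients $\langle f_j|L|f_k\rangle$ out of the trace, and then recognize the remaining double sum as $\tr(\gamma L)$ via the defining property of the one-particle density matrix. No properties of quasi-free states beyond the definition of $\gamma$ are actually needed for this identity; quasi-freeness only enters through the fact that $\rho_\gamma$ is the particular state attached to $\gamma$ and hence realizes the relation $\langle f_j|\gamma f_k\rangle = \langle c_j^\dagger c_k\rangle_{\rho_\gamma}$.

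Concretely, I would proceed as follows. First, by the definition of the second quantization,
\begin{align*}
\hat{L} = \sum_{j,k} \langle f_j|L|f_k\rangle\, c_j^\dagger c_k,
\end{align*}
so that, by linearity of $\tr(\rho_\gamma\,\cdot\,)$,
\begin{align*}
\langle \hat{L}\rangle_{\rho_\gamma} = \sum_{j,k} \langle f_j|L|f_k\rangle\,\langle c_j^\dagger c_k\rangle_{\rho_\gamma}.
\end{align*}
Next, I would invoke the defining relation between a particle-number conserving quasi-free state and its one-particle density matrix to rewrite $\langle c_j^\dagger c_k\rangle_{\rho_\gamma}$ as a matrix element of $\gamma$, so that the sum becomes a contraction between the matrices of $L$ and $\gamma$ in the basis $\{|f_j\rangle\}$. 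Finally, I would use completeness of the basis and cyclicity of the trace on $\mathcal{H}$ to identify this contraction with $\tr(\gamma L)$, i.e.\ to write
\begin{align*}
\sum_{j,k} \langle f_j|L|f_k\rangle\,\langle f_k|\gamma|f_j\rangle = \sum_j \langle f_j|L\gamma|f_j\rangle = \tr(L\gamma) = \tr(\gamma L).
\end{align*}

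The only ``hard'' point is a purely cosmetic one: the index convention used in the paper's definition of $\gamma$ must be tracked carefully against the index convention used in writing $\hat{L}$, so that the pair $(j,k)$ in $\langle c_j^\dagger c_k\rangle_{\rho_\gamma}$ matches the pair appearing in the matrix element of $\gamma$ that one wants to read off. Once the indices are aligned, the computation is a single line. I would include a brief remark noting that the self-adjointness hypothesis on $L$ is not used in the algebraic identity itself — it is already built into the consistency of the definitions, since both sides are real for self-adjoint $L$ — and that the finite-dimensionality of $\mathcal{H}$ makes all traces and sums unambiguous.
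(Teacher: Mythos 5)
Your proof is correct but takes a genuinely different route from the paper's. The paper uses self‑adjointness of $L$ to pass to an orthonormal eigenbasis $\{|\lambda\rangle\}$ of $L$, in which $\hat{L}=\sum_\lambda\langle\lambda|L|\lambda\rangle c_\lambda^\dagger c_\lambda$ is a sum of number operators; it then needs only the diagonal correlators $\langle c_\lambda^\dagger c_\lambda\rangle_{\rho_\gamma}=\langle\lambda|\gamma|\lambda\rangle$, which are manifestly real. You instead expand $\hat{L}$ in the original basis $\{f_j\}$ and contract against the full matrix of $\gamma$; this makes self‑adjointness of $L$ genuinely unnecessary, so your remark on that point is accurate for your argument but is the opposite of what happens in the paper, where self‑adjointness is used centrally. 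What your route loses is immunity to the index‑convention issue, which I would not call ``purely cosmetic.'' The paper's stated correspondence is $\langle f_j|\gamma f_k\rangle=\langle c_j^\dagger c_k\rangle_\rho$, with the same index order on both sides, while your final display uses $\langle c_j^\dagger c_k\rangle_{\rho_\gamma}=\langle f_k|\gamma|f_j\rangle$, the transposed order. For a hermitian $\gamma$ these two conventions differ by a complex conjugation of the matrix, and with the same‑order convention your chain of equalities would produce $\tr(L\gamma^{T})=\tr(L\overline{\gamma})$ rather than $\tr(L\gamma)$; these disagree whenever $\gamma$ has non‑real entries in the basis $\{f_j\}$. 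The eigenbasis argument never sees the off‑diagonal entries of $\gamma$ and so sidesteps this entirely. If you keep your route, state explicitly that the one‑particle density matrix is taken to satisfy $\gamma_{jk}=\langle c_k^\dagger c_j\rangle_\rho$ (the convention under which $\gamma$ transforms covariantly as an operator on $\mathcal{H}$ under unitary basis changes); with that pinned down, the contraction is indeed a one‑liner valid for any linear $L$, self‑adjoint or not.
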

\begin{proof}
Since $L$ is self-adjoint, there exists a basis $\{| \lambda \rangle\}$ of eigenvectors of $L$ for $\mathcal{H}$. In this basis, $L$ is diagonal and $\hat{L} = \sum_{\lambda} \langle \lambda | L | \lambda \rangle c_\lambda^\dagger c_\lambda$.
Thus, recalling that $\langle \lambda | \gamma | \lambda \rangle = \langle c_\lambda^\dagger c_\lambda \rangle_{\rho_\gamma}$, 
\begin{align*}
  \tr_\mathcal{H} (\gamma L )
  &=   \sum_{\lambda} \langle \lambda | \gamma | \lambda \rangle \langle \lambda |L| \lambda \rangle
  =   \sum_{\lambda} \langle c_\lambda^\dagger c_\lambda \rangle_{\rho_{\gamma}}\langle \lambda |L | \lambda \rangle
   =   \sum_{\lambda} \tr_{\mathcal{F}}\big[ \rho_{\gamma} c_\lambda^\dagger c_\lambda\big] \langle \lambda |L | \lambda \rangle
	\\
&   =   \tr_{\mathcal{F}}\bigg[ \rho_{\gamma} \sum_\lambda \langle \lambda |L | \lambda \rangle c_\lambda^\dagger c_\lambda\bigg] 
  =   \tr_{\mathcal{F}}[ \rho_{\gamma} \hat{L}] = \langle \hat{L} \rangle_{\rho_\gamma},
\end{align*}
which is the desired conclusion.
\end{proof}

The Gibbs state $\rho^{(\beta,\mu)}$ is a linear map on the $4^{L^n}$-dimensional Fock space $\mathcal{F}$, whereas a HF Gibbs state $\gamma^{(\beta, \mu)}$ is a linear map on the $2L^n$-dimensional  single-particle Hilbert space $\mathcal{H}$. The problem of determining a HF Gibbs state therefore involves minimization over far fewer parameters than the problem of determining the true Gibbs state. It turns out that it is possible to reduce the number of independent parameters in the minimization problem even further (from $4L^{2n}$ to $4L^n$ real parameters) by applying a result of \cite{BP1996}. To this end, define the compact sets $K$ and $B$ by
\begin{align*}
& K = \{d:\Lambda_L \to \R \,|\; \text{$|d(\mathbf{x})| \leq 1$ for all $\mathbf{x} \in \Lambda_L$}\},
	\\
& B = \{\vec{m}:\Lambda_L \to \R^3 \,|\; \text{$|\vec{m}(\mathbf{x})| \leq 1$ for all $\mathbf{x} \in \Lambda_L$}\}.
\end{align*}
Let $\vec{\sigma} = (\sigma_1, \sigma_2, \sigma_3)$ be the vector of Pauli matrices, let $h$ be the linear map on $\mathcal{H}$ given in the basis $\{|\mathbf{x} \sigma\rangle\}$ by the $2L^2 \times 2L^2$ matrix 
\begin{align}\label{hdef}
h_{\mathbf{x}\tau, \mathbf{y}\lambda} = t_{\mathbf{x}\mathbf{y}}  \delta_{\tau \lambda} - \mu\delta_{\mathbf{x}\mathbf{y}} \delta_{\tau \lambda} + \frac{U}{2}  \delta_{\mathbf{x}\mathbf{y}} \big(d(\mathbf{x})\delta_{\tau \lambda} - \vec{m}(\mathbf{x}) \cdot \vec{\sigma}_{\tau \lambda}\big),
\end{align}
and consider the functional $\mathcal{G}:K \times B \to \R$ defined by
\begin{align}\label{calGdef}
\mathcal{G}(d, \vec{m}) = -\frac{1}{\beta}\tr\bigg[\ln \cosh\bigg(\frac{\beta h}{2}\bigg)\bigg] + \frac{U}{4} \sum_{\mathbf{x} \in \Lambda_L} \big(|\vec{m}(\mathbf{x})|^2 - d(\mathbf{x})^2\big)
- L^n(\beta^{-1}2\ln{2} + \mu).
\end{align}
Then, by \cite[Theorem 1]{BP1996}, for any $0 < \beta < \infty$ and $\mu \in \R$, 
\begin{align}\label{OmegaG}
\Omega^{(\beta, \mu)} = \min_{\vec{m} \in B} \max_{d \in K} \mathcal{G}(d, \vec{m}),
\end{align}
and, if $d^{(\beta, \mu)}, \vec{m}^{(\beta, \mu)}$ is an extremizer of the right-hand side of (\ref{OmegaG}) in the sense that
$$\mathcal{G}(d^{(\beta, \mu)}, \vec{m}^{(\beta, \mu)}) = \min_{\vec{m} \in B} \max_{d \in K} \mathcal{G}(d, \vec{m})$$
and $h^{(\beta, \mu)}$ is the associated linear map (defined by (\ref{hdef}) with $d, \vec{m}$ replaced by $d^{(\beta, \mu)}, \vec{m}^{(\beta, \mu)}$), then: 
\begin{enumerate}[$(i)$]
\item $\gamma^{(\beta, \mu)} := \big(1 + e^{\beta h^{(\beta, \mu)}}\big)^{-1}$ is a HF Gibbs state, i.e., $\Omega(\gamma^{(\beta, \mu)}) = \Omega^{(\beta, \mu)}$.

\item For any $\mathbf{x} \in \Lambda_L$,  
\begin{subequations}\label{dvecm}
\begin{align}\label{dvecma}
& d^{(\beta, \mu)}(\mathbf{x}) = \tr(\gamma^{(\beta, \mu)} (1_{\mathbf{x}} \otimes I)) - 1, 
	\\\label{dvecmb}
& \vec{m}^{(\beta, \mu)}(\mathbf{x}) = \tr(\gamma^{(\beta, \mu)}(1_{\mathbf{x}} \otimes \vec{\sigma})),
\end{align}
\end{subequations}
where $1_{\mathbf{x}} = |\mathbf{x}\rangle\langle \mathbf{x}|$ is the projection onto $\mathbf{x}$ and $I$ is the identity map in spin space. 
\end{enumerate}
Thus, instead of minimizing $\Omega$, one can search for extremizers of $\mathcal{G}$. 

The relations (\ref{dvecm}) show that $d^{(\beta, \mu)}(\mathbf{x}) \in [-1,1]$ and $\vec{m}^{(\beta, \mu)}(\mathbf{x})$ are the expected electron density and the expected spin at $\mathbf{x}$ in the HF Gibbs state $\gamma^{(\beta, \mu)}$, where the normalization is such that $d^{(\beta, \mu)}(\mathbf{x}) = 0$ corresponds to one electron per site (i.e., to half-filling). Indeed, the expected electron density and the expected spin at $\mathbf{x}$ in a state $\rho:\mathcal{F} \to \mathcal{F}$ are given by $\langle n_{\mathbf{x}}\rangle_\rho$ and $\langle \vec{S}_{\mathbf{x}}\rangle_\rho$, respectively, where $n_{\mathbf{x}} := \sum_{\sigma} c_{\mathbf{x} \sigma}^\dagger c_{\mathbf{x} \sigma}$ and $\vec{S}_{\mathbf{x}} := \sum_{\lambda, \tau} c_{\mathbf{x}\lambda}^\dagger \vec{\sigma}_{\lambda \tau} c_{\mathbf{x}\tau}$ are the density and spin operators at $\mathbf{x}$. 
But $n_{\mathbf{x}}$ is the second quantization of $1_{\mathbf{x}} \otimes I$, and $\vec{S}_{\mathbf{x}}$ is the second quantization of $1_{\mathbf{x}} \otimes \vec{\sigma}$, so Lemma \ref{rhogammalemma} and (\ref{dvecm}) imply that 
\begin{align*}
& d^{(\beta, \mu)}(\mathbf{x}) = \langle n_{\mathbf{x}}\rangle_{\rho_{\gamma^{(\beta, \mu)} }} - 1, 
\qquad
 \vec{m}^{(\beta, \mu)}(\mathbf{x}) = \langle \vec{S}_{\mathbf{x}} \rangle_{\rho_{\gamma^{(\beta, \mu)} }}.
\end{align*}

\subsection{The case of a vanishing chemical potential}
If $\mu = 0$, the following results were obtained for any $0 < \beta \leq \infty$ in \cite{BLS1994} (see also \cite{BP1996}): 

\begin{enumerate}[$(i)$]

\item Any extremizer $(d^{(\beta, 0)}, \vec{m}^{(\beta, 0)})$ of $\mathcal{G}(d, \vec{m})$ satisfies $d^{(\beta, 0)}(\mathbf{x}) = 0$ for all $\mathbf{x}$. 

\item $\Omega^{(\beta, 0)} = \min_{\vec{m} \in B} \mathcal{G}(0, \vec{m})$, where
\begin{align}\label{calGdzerodef}
 \mathcal{G}(0, \vec{m}) &= 
\begin{cases}
-\frac{1}{\beta}\tr\big[\ln \cosh\big(\frac{\beta h}{2}\big)\big] + \frac{U}{4} \sum_{\mathbf{x} \in \Lambda_L} |\vec{m}(\mathbf{x})|^2 
- L^n \beta^{-1}2\ln{2}, & 0 < \beta < \infty,
	\\
-\frac{1}{2}\tr |h| + \frac{U}{4} \sum_{\mathbf{x} \in \Lambda_L} |\vec{m}(\mathbf{x})|^2, & \beta = \infty,
\end{cases}
	\\
 h_{\mathbf{x}\tau, \mathbf{y}\lambda} & = t_{\mathbf{x}\mathbf{y}}  \delta_{\tau \lambda} - \frac{U}{2}  \delta_{\mathbf{x}\mathbf{y}} \vec{m}(\mathbf{x}) \cdot \vec{\sigma}_{\tau \lambda}.
\end{align}

\item If $\vec{m}^{(\beta, 0)}$ is a minimizer of $\mathcal{G}(0, \vec{m})$, then there is a fixed unit vector $\vec{e} \in \R^3$ such that 
\begin{align}\label{vecmAF}
\vec{m}^{(\beta, 0)}(\mathbf{x}) = |\vec{m}^{(\beta, 0)}(\mathbf{x})| (-1)^{x_1 + \cdots + x_n} \vec{e}.
\end{align}

\end{enumerate}
The fact that $d^{(\beta, 0)}(\mathbf{x}) = 0$ for all $\mathbf{x}$ shows that the system is at half-filling when $\mu = 0$.

\subsection{Derivation of equations (\ref{meanfieldtprimezero}) and (\ref{meanfieldtprimezeroT0})}
In this paper, we are interested in the case of uniform antiferromagnetic order at half-filling. We therefore take $\mu = 0$ and assume that $|\vec{m}^{(\beta, 0)}(\mathbf{x})|$ is independent of $\mathbf{x}$. By (\ref{vecmAF}), this implies that $\vec{m}$ has the form
\begin{align}\label{vecmdef}
\vec{m}(\mathbf{x}) = m_{\AF} (-1)^{x_1 + \cdots + x_n}  \vec{e},
\end{align}
where $m_{\AF} \in [0,1]$ is a constant and $\vec{e} \in \R^3$ is a fixed unit vector. 
In this case, the expression for the functional $\mathcal{G}(d, \vec{m})$ in (\ref{calGdef}) can be simplified as follows. In terms of the basis $\{|\mathbf{k} \sigma\rangle\}$ defined in (\ref{kbasisdef}), the linear map $h$ is given by
$$h_{\mathbf{k}\tau, \mathbf{k}'\lambda} = 
\varepsilon(\mathbf{k}) \delta_{\mathbf{k}\mathbf{k}'}\delta_{\tau\lambda}
- \Delta_{\AF} \delta_{\mathbf{k},\mathbf{k}'+\mathbf{Q}} \vec{e} \cdot \vec{\sigma}_{\tau \lambda},
$$
where $\varepsilon(\mathbf{k})$ is the function in (\ref{epsilondef}), $\mathbf{Q} \in \R^n$ is the vector $\mathbf{Q} := (\pi, \dots, \pi)$, and $\mathbf{k}'+\mathbf{Q}$ is computed modulo $2\pi\Z^n$, so that $\mathbf{k}'+\mathbf{Q} \in \Lambda_L^*$.
We observe that $h$ is diagonal in the basis $\{|\mathbf{k} \sigma\rangle\}$ except that for each $\mathbf{k}$, the four elements $\mathbf{k}\uparrow$, $\mathbf{k} + \mathbf{Q} \uparrow$, $\mathbf{k}\downarrow$, and $\mathbf{k}+ \mathbf{Q} \downarrow$ are coupled via the $4 \times 4$ matrix
\begin{align*}
& \begin{pmatrix} 
 h_{\mathbf{k}\uparrow , \mathbf{k}\uparrow} & 
 h_{\mathbf{k}\uparrow , \mathbf{k}+\mathbf{Q}\uparrow} & 
 h_{\mathbf{k}\uparrow , \mathbf{k}\downarrow} & 
 h_{\mathbf{k}\uparrow , \mathbf{k}+ \mathbf{Q}\downarrow} 
 	\\
h_{\mathbf{k}+\mathbf{Q}\uparrow, \mathbf{k}\uparrow} & 
h_{\mathbf{k}+\mathbf{Q}\uparrow, \mathbf{k}+\mathbf{Q}\uparrow} & 
h_{\mathbf{k}+\mathbf{Q}\uparrow, \mathbf{k}\downarrow} & 
h_{\mathbf{k}+\mathbf{Q}\uparrow, \mathbf{k}+\mathbf{Q}\downarrow} & 
 	\\
h_{\mathbf{k}\downarrow, \mathbf{k}\uparrow} & 
h_{\mathbf{k}\downarrow, \mathbf{k}+\mathbf{Q}\uparrow} & 
h_{\mathbf{k}\downarrow, \mathbf{k}\downarrow} & 
h_{\mathbf{k}\downarrow, \mathbf{k}+\mathbf{Q}\downarrow} & 
 	\\
h_{\mathbf{k}+\mathbf{Q}\downarrow, \mathbf{k}\uparrow} & 
h_{\mathbf{k}+\mathbf{Q}\downarrow, \mathbf{k}+\mathbf{Q}\uparrow} & 
h_{\mathbf{k}+\mathbf{Q}\downarrow, \mathbf{k}\downarrow} & 
h_{\mathbf{k}+\mathbf{Q}\downarrow, \mathbf{k}+\mathbf{Q}\downarrow} 
 \end{pmatrix}
 =
 	\\
 & \begin{pmatrix} \varepsilon(\mathbf{k}) & 
   -\Delta_{\AF}(\vec{e} \cdot \vec{\sigma})_{\uparrow \uparrow}  & 
0  & 
   -\Delta_{\AF}(\vec{e} \cdot \vec{\sigma})_{\uparrow \downarrow}
   	\\
-\Delta_{\AF}(\vec{e} \cdot \vec{\sigma})_{\uparrow \uparrow} &
\varepsilon(\mathbf{k}+\mathbf{Q})  & 
-\Delta_{\AF}(\vec{e} \cdot \vec{\sigma})_{\uparrow \downarrow} & 
0
	\\
0 & 
-\Delta_{\AF}(\vec{e} \cdot \vec{\sigma})_{\downarrow \uparrow}  & 
 \varepsilon(\mathbf{k}) & 
 -\Delta_{\AF}(\vec{e} \cdot \vec{\sigma})_{\downarrow \downarrow} 
 	\\
-\Delta_{\AF}(\vec{e} \cdot \vec{\sigma})_{\downarrow \uparrow}  &
0 &
-\Delta_{\AF}(\vec{e} \cdot \vec{\sigma})_{\downarrow \downarrow} & 
\varepsilon(\mathbf{k}+\mathbf{Q}) 
 \end{pmatrix}.
\end{align*}
Using that $\varepsilon(\mathbf{k}+\mathbf{Q}) = -\varepsilon(\mathbf{k})$ and $|\vec{e}|^2 = 1$, we find that this $4 \times 4$ matrix has two double eigenvalues given by
$$E_\pm := \pm \sqrt{\Delta_{\AF}^2 + \varepsilon(\mathbf{k})^2}.$$
Hence
\begin{align*}
\tr\Big[\ln \cosh\Big(\frac{\beta h}{2}\Big)\Big] 
& = 2 \sum_{\mathbf{k} \in \Lambda_{L, \half}^*} \Big\{\ln \cosh\Big(\frac{\beta E_+}{2}\Big) + \ln \cosh\Big(\frac{\beta E_-}{2}\Big)\Big\}
	\\
& = 2 \sum_{\mathbf{k} \in \Lambda_L^*} \ln \cosh\Big(\frac{\beta E_+}{2}\Big),
\end{align*}
where 
$$\Lambda_{L, \half}^* := \bigg\{(k_1, \dots, k_n) \in \frac{2\pi}{L}\Z^n \,\bigg|\, -\pi \leq k_j < \pi\;\; \text{for $j =1,\dots, n-1$ and $-\pi \leq k_n < 0$}\bigg\}$$
includes exactly one of the points $\mathbf{k}$ and $\mathbf{k} + \mathbf{Q}$ for each $\mathbf{k} \in \Lambda_L^*$. Similarly, 
$$\tr |h| = \sum_{\mathbf{k} \in \Lambda_{L, \half}^*} 4E_+
= \sum_{\mathbf{k} \in \Lambda_{L}^*} 2E_+.$$
Consequently, we can write (\ref{calGdzerodef}) as (recall that $m_{\AF} = 2\Delta_{\AF}/U$)
\begin{align}\label{calGmAF}
\mathcal{G}(0, \vec{m}) = \begin{cases}
- 2 \sum_{\mathbf{k} \in \Lambda_L^*} \frac{1}{\beta} \ln\big( 2 \cosh\big(\frac{\beta E_+}{2}\big)\big)
+ \frac{L^n\Delta_{\AF}^2}{U}, & 0 < \beta < \infty,
	\\
-\sum_{\mathbf{k} \in \Lambda_{L}^*} E_+ + \frac{L^n \Delta_{\AF}^2}{U}, & \beta = \infty.
\end{cases}
\end{align}
A calculation gives 
\begin{align}\label{dcalGdm}
\frac{\partial \mathcal{G}(0, \vec{m})}{\partial \Delta_{\AF}}
=  \begin{cases}
L^n\big(- \frac{\Delta_{\AF}}{L^n} \sum_{\mathbf{k} \in \Lambda_{L}^*} \frac{\tanh(\frac{\sqrt{\Delta_{\AF}^2 + \varepsilon(\mathbf{k})^2}}{2T})}{\sqrt{\Delta_{\AF}^2 + \varepsilon(\mathbf{k})^2}} 
+ \frac{2\Delta_{\AF}}{U}\big), & 0 < \beta < \infty,
	\\
L^n\big(-\frac{\Delta_{\AF}}{L^n} \sum_{\mathbf{k} \in \Lambda_{L}^*} \frac{1}{\sqrt{\Delta_{\AF}^2 + \varepsilon(\mathbf{k})^2}} + \frac{2\Delta_{\AF}}{U}\big), & \beta = \infty.
\end{cases}
\end{align}
Equations (\ref{meanfieldtprimezero}) and (\ref{meanfieldtprimezeroT0}) follow by taking the thermodynamic limit $L\to \infty$ of the critical point equation $\partial \mathcal{G}(0, \vec{m})/\partial m_{\AF}= 0$ using the rule $\frac{1}{L^n} \sum_{\mathbf{k} \in \Lambda_{L}^*} \to \int_{[-\pi,\pi]^n}  \frac{d\mathbf{k}}{(2\pi)^n}$.

\subsection{Existence of solutions of the mean-field equation}\label{existencesubsec}
We first show that the N\'eel temperature $T_N$ is well-defined by (\ref{TNeelequation}).

\begin{lemma}\label{TNlemma}
For any $U > 0$ and $t_z \geq 0$, there is a unique $T_N = T_N(U, t_z) > 0$ satisfying (\ref{TNeelequation}). Moreover, $T_N(U,t_z) \downarrow 0$ uniformly for $t_z \geq 0$ as $U \downarrow 0$.
\end{lemma}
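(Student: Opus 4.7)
The plan is to introduce, for fixed $U > 0$ and $t_z \geq 0$, the function
\begin{equation*}
F_{t_z}(T) := \int_0^\infty N_{t_z}(\epsilon) \, \frac{\tanh(\frac{\epsilon}{2T})}{\epsilon}\, d\epsilon, \qquad T > 0,
\end{equation*}
and to show that $F_{t_z}:(0,\infty)\to(0,\infty)$ is a continuous, strictly decreasing bijection. This is essentially a direct application of the intermediate value theorem: since the support of $N_{t_z}$ is contained in $[-4-2t_z,\,4+2t_z]$ the integral reduces to a bounded interval, so continuity follows from dominated convergence with the pointwise bound $\tanh(\epsilon/(2T))/\epsilon \leq 1/(2T)$ (valid on any interval $T \geq T_0 > 0$). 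Strict monotonicity comes from differentiating under the integral: for each $\epsilon > 0$, $\partial_T[\tanh(\epsilon/(2T))/\epsilon] = -\mathrm{sech}^2(\epsilon/(2T))/(2T^2) < 0$, and $N_{t_z}$ is strictly positive on a set of positive measure.

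Next I would establish the two limits. As $T \to \infty$, $\tanh(\epsilon/(2T))/\epsilon \to 0$ pointwise and is dominated by $N_{t_z}(\epsilon)/(2T_0)$ for $T\geq T_0$; since $N_{t_z}$ is integrable (indeed $\int_{\R} N_{t_z}(\epsilon)\, d\epsilon = 1$ by Fubini applied to the $\delta$-function definition), dominated convergence yields $F_{t_z}(T) \to 0$. As $T \downarrow 0$, the integrand increases monotonically to $N_{t_z}(\epsilon)/\epsilon$, so monotone convergence gives
\begin{equation*}
\lim_{T \downarrow 0} F_{t_z}(T) = \int_0^\infty \frac{N_{t_z}(\epsilon)}{\epsilon}\, d\epsilon.
\end{equation*}
This limit is $+\infty$ for every $t_z \geq 0$: for $t_z=0$ the expansion in Theorem \ref{2Ddensityth} gives $N_0(\epsilon) \sim (2\pi^2)^{-1}\ln(16/\epsilon)$, and for $t_z>0$ the integral representation in Theorem \ref{3Ddensityth} shows $N_{t_z}(0)>0$ together with continuity, so in either case $N_{t_z}(\epsilon)/\epsilon$ is not integrable near $0$. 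Combining these facts, $F_{t_z}$ is a continuous strictly decreasing bijection of $(0,\infty)$ onto $(0,\infty)$, which gives a unique $T_N = T_N(U,t_z) > 0$ with $F_{t_z}(T_N) = 1/U$.

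For the uniformity statement, the key observation is the uniform elementary bound $\tanh(x) \leq x$ for $x \geq 0$, which gives $\tanh(\epsilon/(2T))/\epsilon \leq 1/(2T)$, and hence
\begin{equation*}
F_{t_z}(T) \leq \frac{1}{2T} \int_0^\infty N_{t_z}(\epsilon)\, d\epsilon = \frac{1}{4T},
\end{equation*}
where the last equality uses that $N_{t_z}$ is even with $\int_{\R} N_{t_z} = 1$. Setting $T = T_N(U,t_z)$ and rearranging $F_{t_z}(T_N) = 1/U$ yields $T_N(U,t_z) \leq U/4$, a bound independent of $t_z \geq 0$. This makes the uniform convergence $T_N(U,t_z) \downarrow 0$ as $U \downarrow 0$ immediate.

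I expect no serious obstacle; the only point requiring a moment of care is the divergence of $\int_0^\infty N_{t_z}(\epsilon)/\epsilon\, d\epsilon$ uniformly over the cases $t_z=0$ and $t_z>0$, which is why I would invoke Theorems \ref{2Ddensityth} and \ref{3Ddensityth} to treat the two regimes separately. The elegance of the argument lies in the fact that the bound $\tanh x \leq x$ together with the total-weight identity $\int N_{t_z} = 1$ delivers the sharp-enough estimate $T_N \leq U/4$ with no model-specific input, so uniformity in $t_z$ is automatic.
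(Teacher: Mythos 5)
Your proposal is correct and follows essentially the same route as the paper: define $F_{t_z}(T) = \int_0^\infty N_{t_z}(\epsilon)\tanh(\epsilon/(2T))/\epsilon\,d\epsilon$, show it is continuous, strictly decreasing, and bijective from $(0,\infty)$ onto $(0,\infty)$, and then derive a $t_z$-independent bound for the uniformity claim. The one small stylistic difference is in that last step: you apply $\tanh x \leq x$ globally to get the clean bound $F_{t_z}(T)\leq 1/(4T)$ (hence $T_N\leq U/4$), whereas the paper splits at $\epsilon = 2T_0$ (using $\tanh x \leq x$ on $[0,1]$ and $\tanh x \leq 1$ on $[1,\infty)$) arriving at a comparable but slightly looser estimate; both use the total-weight identity $\int_0^\infty N_{t_z} = 1/2$ and both yield $T_N(U,t_z)=O(U)$ uniformly.
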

\begin{proof}
Fix $U > 0$. For each $t_z \geq 0$, 
\begin{align}\label{ftzTdef}
f_{t_z}(T) := \int_0^\infty N_{t_z}(\epsilon) 
\frac{\tanh(\frac{\epsilon}{2T})}{\epsilon} d\epsilon
\end{align}
is a continuous function of $T \in (0, +\infty)$ that tends to $+\infty$ as $T \to 0$ and to $0$ as $T \to +\infty$. Hence the equation $f_{t_z}(T_N) = 1/U$, which is equivalent to (\ref{TNeelequationNtz}) and hence also to (\ref{TNeelequation}), has at least one solution $T$ in $(0, +\infty)$. Since 
\begin{align}\label{ftzprimeTdef}
f_{t_z}'(T) = -\int_0^\infty N_{t_z}(\epsilon) 
\frac{1}{2T^2 \cosh^2(\frac{\epsilon}{2T})} d\epsilon < 0 \quad \text{for all $T > 0$,}
\end{align}
the solution is unique. Let $T_0 > 0$. Since $\tanh{x} \leq x$ for $x \in [0,1]$ and $\tanh{x} \leq 1$ for $x \geq 1$, we have
$$f_{t_z}(T_0) \leq \frac{1}{2T_0} \int_0^{2T_0} N_{t_z}(\epsilon) 
d\epsilon
+ \int_{2T_0}^\infty N_{t_z}(\epsilon) 
\frac{d\epsilon}{\epsilon} 
\leq \frac{1}{T_0} \int_0^{\infty} N_{t_z}(\epsilon) 
d\epsilon = \frac{1}{2T_0}$$
for all $t_z \geq 0$. Consequently, $f_{t_z}(T) \leq 1/(2T_0)$ for all $T \geq T_0$ and all $t_z \geq 0$.
It follows that if $U < 2T_0$, then $T_N(U, t_z) < T_0$ for all $t_z \geq 0$. Since $T_0 > 0$ was arbitrary, we conclude that $T_N(U,t_z) \downarrow 0$ uniformly for $t_z \geq 0$ as $U \downarrow 0$.
\end{proof}

The next lemma establishes the existence of a unique strictly positive solution $\Delta_{AF}$ of (\ref{meanfieldtprimezero})--(\ref{meanfieldtprimezeroT0}) for temperatures below $T_N$.
Note that, by \eqref{Deltadef}, $0\leq \Delta_{\AF}\leq U/2$ corresponds to $0\leq m_{\AF}\leq 1$; the latter is to be expected by the physics interpretation of $m_{\AF}$ as the magnitude of the spin expectation value.

\begin{lemma}\label{mAFlemma}
For any $U > 0$ and $t_z \geq 0$, equations (\ref{meanfieldtprimezero})--(\ref{meanfieldtprimezeroT0}) have a unique strictly positive solution $\Delta_{\AF} = \Delta_{\AF}(U,t_z,T) > 0$ whenever $T \in [0, T_N(U,t_z))$,  whereas if $T \geq T_N(U,t_z)$ the only solution is $\Delta_{\AF} = \Delta_{\AF}(U,t_z,T) = 0$. Moreover, $\Delta_{\AF}(U,t_z,T) \in [0,U/2)$ for all $U > 0$, $t_z \geq 0$, and $T \geq 0$.
\end{lemma}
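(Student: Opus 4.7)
The plan is to re-express the nontrivial part of the mean-field equation in the form $F(\Delta_{\AF}, T) = 1/U$, where
\begin{align*}
F(\Delta, T) := \int_0^\infty N_{t_z}(\epsilon) \, \frac{\tanh\!\big(\tfrac{\sqrt{\Delta^2+\epsilon^2}}{2T}\big)}{\sqrt{\Delta^2+\epsilon^2}} \, d\epsilon,
\end{align*}
with the convention $\tanh(\cdot/T) \equiv 1$ at $T=0$; this is just equation (\ref{mAFeq}) with $\Delta_{\AF}$ divided out. The three claims of the lemma (existence, uniqueness, and the bound $\Delta_{\AF} < U/2$) will all follow from monotonicity and boundary behaviour of $F$, so my main task is to organize these properties.

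First I would show that for each fixed $T \geq 0$, the map $\Delta \mapsto F(\Delta, T)$ is continuous and strictly decreasing on $[0,\infty)$, with $F(0,T) = f_{t_z}(T)$ (the function from the proof of Lemma \ref{TNlemma}) and $F(\Delta, T) \to 0$ as $\Delta \to \infty$. The strict monotonicity uses that $x \mapsto \tanh(x/(2T))/x$ is strictly decreasing on $(0,\infty)$ (this includes $T=0$, for which it reduces to $1/x$), that $\sqrt{\Delta^2+\epsilon^2}$ is strictly increasing in $\Delta$ for each $\epsilon$, and that $N_{t_z}$ is positive on a set of positive measure; the decay at $\Delta=\infty$ is immediate from dominated convergence using $\int_0^\infty N_{t_z}(\epsilon)\, d\epsilon = \tfrac12$ (by the evenness of $N_{t_z}$ and the normalization of the Brillouin zone).

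Combining these facts with the characterization of the N\'eel temperature from Lemma \ref{TNlemma}, namely $f_{t_z}(T) > 1/U$ iff $T < T_N(U,t_z)$, the intermediate value theorem supplies a unique positive solution $\Delta_{\AF}(U,t_z,T)$ precisely when $T < T_N(U,t_z)$, and excludes any positive solution when $T \geq T_N(U,t_z)$. Note that at $T=0$ one has $F(0,0) = \int_0^\infty N_{t_z}(\epsilon)/\epsilon\, d\epsilon = +\infty$ because $N_{t_z}(0) > 0$, which is consistent with the fact that $T_N(U,t_z) > 0$.

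For the bound $\Delta_{\AF} < U/2$, I would estimate, for any $\Delta > 0$ and $T \geq 0$,
\begin{align*}
F(\Delta, T) \,\leq\, \int_0^\infty \frac{N_{t_z}(\epsilon)}{\sqrt{\Delta^2+\epsilon^2}}\, d\epsilon \,<\, \int_0^\infty \frac{N_{t_z}(\epsilon)}{\Delta}\, d\epsilon \,=\, \frac{1}{2\Delta},
\end{align*}
using $\tanh \leq 1$ and the strict inequality $\sqrt{\Delta^2+\epsilon^2} > \Delta$ on the positive-measure set where $\epsilon \neq 0$ and $N_{t_z}(\epsilon) > 0$. Inserting $F(\Delta_{\AF}, T) = 1/U$ yields $\Delta_{\AF} < U/2$. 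The mildly delicate point is the strict inequality at $T=0$ and near $\epsilon = 0$, but since the $\tanh$ factor is absent there the pointwise strict bound $1/\sqrt{\Delta^2+\epsilon^2} < 1/\Delta$ on a set of positive measure is enough. The main obstacle is merely bookkeeping: verifying the strict monotonicities and the boundary limits of $F$ uniformly in the parameter regimes, and checking that the $T=0$ conventions are handled consistently throughout.
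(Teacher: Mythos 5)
Your proposal is correct and follows essentially the same route as the paper: define $F_T(\Delta)=\int_0^\infty N_{t_z}(\epsilon)\tanh(\sqrt{\Delta^2+\epsilon^2}/(2T))/\sqrt{\Delta^2+\epsilon^2}\,d\epsilon$, show it strictly decreases from $F_T(0)=f_{t_z}(T)$ to $0$, invoke the characterization of $T_N$ via $f_{t_z}(T_N)=1/U$, and bound $\Delta_{\AF}<U/2$ by $F_T(\Delta)<1/(2\Delta)$ using $\int_0^\infty N_{t_z}=1/2$. The only cosmetic difference is that you argue strict monotonicity by composing the strictly decreasing map $x\mapsto\tanh(x/(2T))/x$ with the strictly increasing map $\Delta\mapsto\sqrt{\Delta^2+\epsilon^2}$, whereas the paper differentiates under the integral and uses $x-\sinh x\cosh x<0$; these are the same computation.
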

\begin{proof}
Fix $U > 0$ and $t_z \geq 0$. Recalling the reformulation (\ref{mAFeq}) of equations (\ref{meanfieldtprimezero})--(\ref{meanfieldtprimezeroT0}), we see that it is enough to show that the equation $F_T(\Delta) = 1/U$ has a unique solution $\Delta > 0$ whenever $T \in [0, T_N(U,t_z))$ and that no such solution exists if $T \geq T_N(U,t_z)$, where
\begin{align}\label{Fmdef}
F_T(\Delta) := \int_0^\infty N_{t_z}(\epsilon)  
\frac{\tanh(\frac{\sqrt{\Delta^2 + \epsilon^2}}{2T})}{\sqrt{\Delta^2 + \epsilon^2}}  d\epsilon.
\end{align}
The function $F_T$ is continuous $[0,+\infty) \to [0,+\infty)$ and obeys $F_T(\Delta) \to 0$ as $\Delta \to +\infty$. Moreover, $F_T(0)$ equals the right-hand side of (\ref{ftzTdef}), which is a decreasing function of $T \in [0,+\infty)$ by (\ref{ftzprimeTdef}). 
The N\'eel temperature $T_N = T_N(U,t_z)$ is defined by the equation $F_{T_N}(0) = 1/U$.
We infer that if $T < T_N$, then $F_T(0) > 1/U$, and so the equation $F_T(\Delta) = 1/U$ has at least one solution $\Delta$ in $(0, +\infty)$. Since
$$F_T'(\Delta) =  \int_0^\infty N_{t_z}(\epsilon)  
\frac{\Delta}{(\Delta^2 + \epsilon^2)^{3/2}} \frac{x - \sinh(x)\cosh(x)}{\cosh^2(x)}\bigg|_{x = \frac{\sqrt{\Delta^2 + \epsilon^2}}{2T}} d\epsilon$$
and $x - \sinh(x)\cosh(x) < 0$ for all $x > 0$, we have $F_T'(\Delta) < 0$ for $\Delta\in (0, +\infty)$ so the solution is unique.  It only remains to show that $\Delta_{\AF}(U,t_z,T) < U/2$. This follows because if $\Delta \geq U/2$, then
$$F_T(\Delta) \leq \int_0^\infty N_{t_z}(\epsilon)  
\frac{1}{\sqrt{\Delta^2 + \epsilon^2}}  d\epsilon
< \frac{2}{U} \int_0^\infty N_{t_z}(\epsilon)  d\epsilon
= \frac{1}{U},$$
so no $\Delta \geq U/2$ can fulfill the equation $F_T(\Delta) = 1/U$.
\end{proof}

Lemma \ref{mAFlemma} implies that $\Delta_{\AF} = 0$ is the only solution of (\ref{meanfieldtprimezero}) if $T \geq T_N$. On the other hand, for any $T \in [0, T_N)$, there are two distinct solutions in $[0,U/2]$, namely $0$ and $\Delta_{\AF}(U,t_z,T) \in (0, U/2)$. The next lemma shows that the state corresponding to $\Delta_{\AF}(U,t_z,T)$ is favored over the state corresponding to $0$, and it therefore justifies discarding the zero solution when writing (\ref{mAFeq}).
The statement of the lemma involves the thermodynamic limit
\begin{align*}
G_T(\Delta) :=&\; \lim_{L\to \infty} L^{-3}\mathcal{G}(0, (2\Delta/U)(-1)^{x_1 + x_2 + x_3}  \vec{e})
	\\
= &\; \begin{cases}
- 2 T \int_{[-\pi,\pi]^3}  \ln\big( 2 \cosh\big(\frac{\sqrt{\Delta^2 + \varepsilon(\mathbf{k})^2}}{2 T}\big)\big) \frac{d\mathbf{k}}{(2\pi)^3}
+ \frac{\Delta^2}{U}, & 0 < T < \infty,
	\\
-\int_{[-\pi,\pi]^3} \sqrt{\Delta^2 + \varepsilon(\mathbf{k})^2} \frac{d\mathbf{k}}{(2\pi)^3} + \frac{\Delta^2}{U}, & T = 0.
\end{cases}
\end{align*}
 of $\mathcal{G}$ restricted to antiferromagnetic vector fields of the form (\ref{vecmdef}) in dimension $n = 3$.

\begin{lemma}\label{Gminlemma}
Let $U > 0$ and $t_z \geq 0$. If $T \in [0, T_N(U,t_z))$, then $\Delta = \Delta_{\AF}(U,t_z,T) > 0$ is the unique point at which $G_T$ attains its minimum on $[0,U/2]$.
If $T \geq T_N(U,t_z)$, then $\Delta = 0$ is the unique point at which $G_T$ attains its minimum on $[0,U/2]$.
\end{lemma}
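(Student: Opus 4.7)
The plan is to show that the critical point analysis of $G_T$ on $[0,U/2]$ reduces, via differentiation, precisely to the analysis of $F_T(\Delta) = 1/U$ already carried out in Lemma~\ref{mAFlemma}, and then to exploit strict monotonicity of $F_T$ in $\Delta$ to pin down the sign of $G_T'$ on the entire interval.

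First I would differentiate $G_T$ under the integral sign, justified by dominated convergence since all integrands (and their $\Delta$-derivatives) are uniformly bounded on $[0,U/2] \times [-\pi,\pi]^3$ for each fixed $T$, including $T=0$. Using $\frac{d}{d\Delta}\ln(2\cosh(\sqrt{\Delta^2+\varepsilon^2}/(2T))) = \frac{\Delta}{2T\sqrt{\Delta^2+\varepsilon^2}}\tanh(\sqrt{\Delta^2+\varepsilon^2}/(2T))$ and passing to the density-of-states representation via \eqref{3Ddensityofstates}, one obtains the clean factorization
\begin{equation*}
G_T'(\Delta) = 2\Delta \Bigl(\tfrac{1}{U} - F_T(\Delta)\Bigr), \qquad \Delta \in [0, U/2],
\end{equation*}
valid for all $T\in[0,+\infty)$, where $F_T$ is the function in \eqref{Fmdef}. (For $T=0$ the same computation goes through using the convention $\tanh(\cdot/T)\equiv 1$.) In particular, the critical points of $G_T$ in $[0,U/2)$ are exactly $\Delta=0$ and the positive solutions of $F_T(\Delta)=1/U$, which are characterized by Lemma~\ref{mAFlemma}.

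Next I would use the strict monotonicity $F_T'(\Delta)<0$ for $\Delta>0$ established in the proof of Lemma~\ref{mAFlemma} (for $T=0$ this is immediate from the explicit integrand). This implies that the factor $\tfrac{1}{U}-F_T(\Delta)$ is a strictly increasing function of $\Delta\in[0,U/2]$. Two cases then follow directly:
\begin{itemize}
\item If $T<T_N(U,t_z)$, the definition of $T_N$ gives $F_T(0)>1/U$, so $\tfrac{1}{U}-F_T(\Delta)$ is negative on $[0,\Delta_{\AF})$ and positive on $(\Delta_{\AF}, U/2]$ (with unique zero at $\Delta_{\AF}$). Hence $G_T'<0$ on $(0,\Delta_{\AF})$ and $G_T'>0$ on $(\Delta_{\AF}, U/2]$, making $G_T$ strictly decreasing on $[0,\Delta_{\AF}]$ and strictly increasing on $[\Delta_{\AF},U/2]$, so $\Delta_{\AF}$ is the unique minimizer.
\item If $T\geq T_N(U,t_z)$, then $F_T(0)\leq 1/U$, and strict monotonicity forces $F_T(\Delta)<1/U$ for every $\Delta\in(0,U/2]$. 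Thus $G_T'(\Delta)>0$ for $\Delta\in(0,U/2]$, so $G_T$ is strictly increasing on $[0,U/2]$ and $\Delta=0$ is the unique minimizer.
\end{itemize}

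I do not expect a serious obstacle: the formula $G_T'(\Delta)=2\Delta(1/U-F_T(\Delta))$ does the heavy lifting by transferring the problem to Lemma~\ref{mAFlemma}. The only points requiring a little care are (a) unifying the $T>0$ and $T=0$ cases, which is handled cleanly by the $\tanh(\cdot/T)\equiv 1$ convention, and (b) the boundary behavior at $\Delta=U/2$; the latter causes no issue because the strict inequality $F_T(U/2)<1/U$ obtained at the end of the proof of Lemma~\ref{mAFlemma} shows that the endpoint is not a minimizer, so no separate argument is needed there.
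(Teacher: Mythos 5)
Your proof is correct and follows essentially the same route as the paper's: compute $G_T'(\Delta)=2\Delta\bigl(\tfrac1U-F_T(\Delta)\bigr)$, invoke (or, as you do, re-derive from $F_T'<0$) the sign information about $\tfrac1U - F_T$ from the proof of Lemma~\ref{mAFlemma}, and read off the monotonicity of $G_T$ in each case. The only cosmetic difference is that you re-derive the sign of $\tfrac1U - F_T(\Delta)$ from strict monotonicity of $F_T$ rather than citing it, and you add a remark about justifying differentiation under the integral; neither changes the substance of the argument.
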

\begin{proof}
Fix $U > 0$ and $t_z \geq 0$. We have, for any $T \geq 0$,
$$G_T'(\Delta) = 2\Delta \bigg(\frac{1}{U} - F_T(\Delta)\bigg),
$$
where $F_T(\Delta)$ is the function in (\ref{Fmdef}). 
Suppose first that $T \in [0, T_N(U,t_z))$. In this case, the proof of Lemma \ref{mAFlemma} shows that $F_T(\Delta) > 1/U$ for $\Delta \in [0, \Delta_{\AF}(U,t_z,T))$ and $F_T(\Delta) < 1/U$ for $\Delta \in (\Delta_{\AF}(U,t_z,T), U/2]$. Hence $\Delta_{\AF}(U,t_z,T) \in (0,U/2)$ is the unique critical point of $G_T$ on $(0,U/2)$ and $G_T$ is decreasing (resp. increasing) to the left (resp. right) of $\Delta_{\AF}(U,t_z,T)$, so $\Delta_{\AF}(U,t_z,T)$ is indeed the minimizer of $G_T$. 

If $T \geq T_N(U,t_z)$, then we instead have $G_T'(0) = 0$ and $G_T'(\Delta) > 0$ for $\Delta \in (0,U/2)$, so $0$ is the unique minimizer of $G_T$.
\end{proof}

\section{Asymptotics for the density of states}\label{densityofstatessec}
In this section, we derive asymptotic properties of the density of states function $N_{t_z}(\epsilon)$ defined in (\ref{3Ddensityofstates}). In particular, we prove Theorems \ref{2Ddensityth}, \ref{3Ddensityth}, and \ref{3Dtzdensityth}.

Making the change of variables $u_j = 2\cos k_j$ for $j = 1,2,3$, 
we can write (\ref{3Ddensityofstates}) as
\begin{align*}
N_{t_z}(\epsilon)  
& = 8\int_0^\pi \int_0^\pi \int_0^\pi \delta\big(2 (\cos{k_1} + \cos{k_2}) + 2 t_z \cos{k_3} + \epsilon\big) \frac{dk_1 dk_2 dk_3}{(2\pi)^3}
	\\
& = \int_{-2}^2 \int_{-2}^2 \int_{-2}^2 \frac{\delta\big(u_1 + u_2 + t_z u_3 + \epsilon\big)}{\sqrt{4 - u_1^2} \sqrt{4 - u_2^2} \sqrt{4 - u_3^2}} \frac{du_1 du_2 du_3}{\pi^3}.
\end{align*}
For $t_z = 0$, this reduces to
\begin{align}\label{N0epsilon}
N_{0}(\epsilon)  
& = \int_{-2}^2 \int_{-2}^2 \frac{\delta\big(u_1 + u_2 + \epsilon\big)}{\sqrt{4 - u_1^2} \sqrt{4 - u_2^2} } \frac{du_1 du_2}{\pi^2}
\end{align}
because $\int_{-2}^2 \frac{du_3}{\sqrt{4 - u_3^2}} = \pi$. In particular, we can express $N_{t_z}(\epsilon)$ in terms of $N_0(\epsilon)$:
\begin{align}\label{N3DN2D}
N_{t_z}(\epsilon)  
& = \int_{-2}^2 \frac{N_0\big(t_z u + \epsilon\big)}{\sqrt{4 - u^2}} \frac{du}{\pi}.
\end{align}
From (\ref{N0epsilon}), we infer that $N_0(\epsilon)$ is an even function of $\epsilon \in \R \setminus \{0, \pm 4\}$ that vanishes identically for $\epsilon \in (-\infty, -4) \cup (4, +\infty)$. Performing the integral with respect to $u_2$ in (\ref{N0epsilon}), we obtain
\begin{align}\label{N0epsilon2}
N_0(\epsilon)
   = \int_{-2}^{2- \epsilon} \frac{1}{\sqrt{4 - u^2}} \frac{1}{\sqrt{4 - (u + \epsilon)^2}} \frac{du}{\pi^2} \qquad \text{for $\epsilon \in (0,4)$}.
\end{align}   
We see that $N_0(\epsilon)$ is continuous for $\epsilon \in \R \setminus \{0, \pm 4\}$ with a singularity at $\epsilon = 0$. 

\subsection{Proof of Theorem \ref{2Ddensityth}}\label{2Ddensitysubsec}
Our goal is to find the behavior of the integral in (\ref{N0epsilon2}) as $\epsilon \downarrow 0$. 
The difficulty lies in the fact that the singularities of the integrand merge with the endpoints of integration as $\epsilon \downarrow 0$. This means that a naive approach based on substituting the expansion
$$ \frac{1}{\sqrt{4 - u^2}} \frac{1}{\sqrt{4 - (u + \epsilon)^2}} = \frac{1}{4-u^2} + \frac{u}{(4 - u^2)^2}\epsilon + \frac{\left(u^2+2\right)}{\left(4-u^2\right)^3}\epsilon^2 + \cdots$$
into (\ref{N0epsilon2}) is unfruitful, because the resulting terms are not integrable at $u = \pm 2$.
We therefore instead proceed as follows.

Let $F(w_1, w_2) = F(a,b,c,d; w_1, w_2)$ be defined for $w_1,w_2 \in \C \setminus [0, \infty)$ by the Pochhammer integral
\begin{align}\label{Fdef}
F(w_1, w_2) = \int_A^{(0+,1+,0-,1-)} v^a (v-w_1)^b (v-w_2)^{c} (1-v)^{d} dv, 
\end{align}
where $A \in (0,1)$ is a base point, $a,b,c,d$ are real exponents, and $w_1, w_2$ are assumed to lie outside the contour. More precisely, the Pochhammer contour in (\ref{Fdef}) begins at the base point $A$, encircles $0$ once in the positive (counterclockwise) direction, returns to $A$, encircles $1$ once in the positive direction, returns to $A$, encircles $0$ once in the negative direction, returns to $A$, and finally encircles $1$ once in the negative direction before returning to $A$.
In order to make $F$ single-valued, we restrict the domain of definition in (\ref{Fdef}) to $w_1,w_2 \in \C \setminus [0, \infty)$. Note that $F$ can be analytically continued to a multiple-valued analytic function of $w_1,w_2 \in \C \setminus \{0, 1\}$. Asymptotic expansions of $F(w_1, w_2)$ to all orders as one or both of the points $w_1$ and $w_2$ approach $0$ or $1$ were obtained in \cite{LV2019}.

To prove Theorem \ref{2Ddensityth}, we will use the following result from \cite{LV2019}. 
For $c \in \C$, we define $\rho_c(w)$ and $H(a,d)$ by
\begin{align}\label{rhodef}
\rho_c(w) = \begin{cases} e^{-i\pi c}, & \im w \geq 0, \\
e^{i\pi c}, & \im w < 0,
\end{cases}
\end{align}
and
\begin{align}\label{Hdef}
  H(a,d) = \int_A^{(0+,1+,0-,1-)} v^a (1-v)^{d} dv.
\end{align}

\begin{theorem}\cite[Theorem 2.4]{LV2019} \label{mainth4}
Let $a,b,c,d \in \R \setminus \Z$ be such that $a+b,c+d \notin\Z$. 
Then $F$ satisfies the following asymptotic expansion to all orders as $w_1 \to 0$ and $w_2 \to 1$ with $w_1,w_2 \in \C \setminus [0, \infty)$:
\begin{align}\nonumber
F(w_1, w_2) \sim &\; \sum_{k=0}^\infty\sum_{l=0}^\infty \frac{ \rho_c(w_2)}{k!\, l!}\bigg\{ D_{kl}^{(1)} w_1^k (w_2- 1)^l 
+ D_{kl}^{(2)} w_1^k (w_2-1)^{c+d+1+ l}
	\\ \label{Fasymptotics}
& + \rho_{a+b}(w_1) D_{kl}^{(3)} w_1^{a+b+1+ k}(w_2-1)^l\bigg\}
\end{align}
where the coefficients $D_{kl}^{(j)} \equiv D_{kl}^{(j)}(a,b,c,d)$, $j = 1,2,3$, are given by
\begin{align*}
& D_{kl}^{(1)} = \frac{(e^{2\pi i a} -1)(e^{2\pi i d} -1)}{(e^{2\pi i(a+b)} - 1)(e^{2\pi i(c+d)} - 1)} \frac{\Gamma(b+1)\Gamma(c+1)(-1)^k}{\Gamma(b+1-k)\Gamma(c+1-l)} H(a+b-k, c+d-l),
	\\
& D_{kl}^{(2)} = \frac{(e^{2\pi i a} -1) e^{\pi i d}}{1 - e^{2\pi i (c+d)}} \frac{\Gamma(b+1)\Gamma(a+b+1-k)(-1)^k}{\Gamma(b+1-k)\Gamma(a+b+1-k-l)}H(c, d+l),
	\\
& D_{kl}^{(3)} = \frac{(e^{2\pi i d} -1) e^{\pi i a}}{e^{2\pi i (a+b)} - 1} \frac{\Gamma(c+1)\Gamma(-c-d+k-l)}{\Gamma(c+1-l)\Gamma(-c-d+l)}H(a+k, b),
\end{align*} 
and principal branches are used for all powers.
\end{theorem}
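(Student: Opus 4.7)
The plan is to establish \eqref{Fasymptotics} by deforming the Pochhammer contour, analyzing three distinct regions, and showing that each region produces exactly one of the three families $D^{(1)}_{kl}, D^{(2)}_{kl}, D^{(3)}_{kl}$ in the statement.

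First, fix $\delta \in (0, 1/4)$ and restrict to $|w_1|, |1 - w_2| < \delta/2$, so the pairs $\{0, w_1\}$ and $\{w_2, 1\}$ lie in disjoint disks of radius $\delta$. Using the commutator structure $\gamma_0 \gamma_1 \gamma_0^{-1} \gamma_1^{-1}$ of the Pochhammer contour, I would deform the path of integration in \eqref{Fdef} into two "local Pochhammer loops" $\Gamma_0$ around $\{0, w_1\}$ (contained in $|v| \leq \delta$) and $\Gamma_1$ around $\{w_2, 1\}$ (contained in $|v - 1| \leq \delta$), joined by bridging arcs lying in the outer region $\{\delta < |v|,\ \delta < |v - 1|\}$. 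The phase prefactors $\rho_c(w_2)$ and $\rho_{a+b}(w_1)$ in \eqref{Fasymptotics} arise from keeping track of the branches of $(v - w_2)^c$ and $v^a(v - w_1)^b$ as the base point is transported around these arcs, using the convention \eqref{rhodef}.

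Second, on the bridging arcs I would Taylor-expand the two factors that depend on the small parameters,
\begin{align*}
(v - w_1)^b = v^b \sum_{k=0}^\infty \binom{b}{k} \Bigl(-\tfrac{w_1}{v}\Bigr)^k, \qquad
(v - w_2)^c = (v - 1)^c \sum_{l=0}^\infty \binom{c}{l} \Bigl(\tfrac{1-w_2}{v-1}\Bigr)^l,
\end{align*}
valid for $|v| > |w_1|$ and $|v - 1| > |1 - w_2|$. Integrating termwise against $v^a(1-v)^d$ along the deformed outer part yields a double series in $w_1^k(w_2 - 1)^l$; the remaining $v$-integrals reduce, after collecting the explicit monodromy phases $(e^{2\pi i a} - 1)(e^{2\pi i d} - 1)$ and $(e^{2\pi i(a+b)} - 1)^{-1}(e^{2\pi i(c+d)} - 1)^{-1}$, to $H(a + b - k, c + d - l)$. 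This matches the $D^{(1)}_{kl}$ series. On $\Gamma_0$ I would rescale $v = w_1 u$; then $v^a(v - w_1)^b\, dv = w_1^{a+b+1}\, u^a(u-1)^b\, du$, while $(v - w_2)^c(1 - v)^d$ is holomorphic in $v$ on $|v| \leq \delta$ and can be expanded as $\sum_k \alpha_k v^k$ with explicit $\alpha_k$ depending on $w_2$ through $(-w_2)^c$. Integration in $u$ along the resulting local Pochhammer loop around $\{0,1\}$ produces $H(a + k, b)$, and the overall factor $w_1^{a+b+1+k}$ together with the phase $\rho_{a+b}(w_1)$ reproduces the $D^{(3)}_{kl}$ series. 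The treatment of $\Gamma_1$ is symmetric under the involution $v \mapsto 1 - v$ that exchanges $(a, b, w_1) \leftrightarrow (d, c, 1 - w_2)$, and yields the $D^{(2)}_{kl}$ series with $H(c, d + l)$.

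Third, to obtain a genuine asymptotic expansion I would truncate each series at finite order and bound the remainders. On the outer arcs the tail of each binomial series is controlled by $|w_1|/\delta$ and $|1-w_2|/\delta$; on the inner contours the Taylor tail of the slow factors is of order $(\delta)^{k+1}$ multiplied by $w_1^{a+b+1}$ or $(w_2 - 1)^{c+d+1}$. All three remainder bounds are of strictly higher order in $(w_1, w_2 - 1)$ than the terms retained, and differentiation under the integral sign combined with the freedom to shrink $\delta$ yields the claim to all orders. The main obstacle will be Step 1: executing the contour deformation rigorously and tracking the branch choices so that the explicit phase factors in $D^{(j)}_{kl}$ come out correctly; in particular, the denominators $e^{2\pi i(a+b)} - 1$ and $e^{2\pi i(c+d)} - 1$ are precisely what is needed to reassemble the outer-region Pochhammer integrals into the canonical $H$-functions, and the hypotheses $a, b, c, d, a + b, c + d \notin \Z$ ensure that these phases are nonzero so the three contributions stay cleanly separated and no degenerate logarithmic terms appear.
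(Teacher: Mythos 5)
A preliminary remark on the comparison you are asking me to make: this theorem is not proved in the paper at all. It is imported verbatim from \cite{LV2019} (Theorem 2.4 there), and the paper only \emph{uses} it, in Section \ref{2Ddensitysubsec}, after specializing the exponents. So your proposal cannot be matched against an internal argument; it has to stand on its own as a proof of the quoted result, and judged that way it is an outline with the decisive step missing rather than a proof.

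The gap is exactly the step you defer to the end ("the main obstacle will be Step 1"), and it is where the entire content of the theorem lies. The contour in \eqref{Fdef} encircles $0$ and $1$ while $w_1$ and $w_2$ lie \emph{outside} it; in the limit $w_1\to 0$, $w_2\to 1$ the contour is pinched between $0$ and $w_1$ and between $w_2$ and $1$. It is therefore \emph{not} homotopic, in the complement of the four branch points, to two local Pochhammer loops around $\{0,w_1\}$ and $\{w_2,1\}$ joined by bridging arcs, because such loops enclose $w_1$ and $w_2$ while the original contour does not. What is true is a (twisted-)homological identity expressing the original cycle as a monodromy-weighted combination of local cycles, and those weights are precisely the nontrivial prefactors $\frac{(e^{2\pi i a}-1)(e^{2\pi i d}-1)}{(e^{2\pi i(a+b)}-1)(e^{2\pi i(c+d)}-1)}$, $\frac{(e^{2\pi i a}-1)e^{\pi i d}}{1-e^{2\pi i(c+d)}}$, $\frac{(e^{2\pi i d}-1)e^{\pi i a}}{e^{2\pi i(a+b)}-1}$, together with the side-of-the-cut factors $\rho_c(w_2)$, $\rho_{a+b}(w_1)$ of \eqref{rhodef}; asserting that these "arise from keeping track of the branches" is naming the answer, not deriving it. Even granting the decomposition, the bookkeeping as written does not close: the bridging-arc integrals are $\delta$-dependent open-arc integrals and are not termwise equal to the closed double-loop quantities $H(a+b-k,c+d-l)$ of \eqref{Hdef}, so the clean attribution "arcs $\to D^{(1)}$, $\Gamma_0\to D^{(3)}$, $\Gamma_1\to D^{(2)}$" requires recombining the regular parts of the local contributions with the arcs before the $\delta$-dependence cancels; on $\Gamma_0$ the analytic factor $(v-w_2)^c(1-v)^d$ still carries $w_2$-dependence that must be re-expanded in integer powers of $w_2-1$ (and one should verify that no doubly anomalous terms $w_1^{a+b+1+k}(w_2-1)^{c+d+1+l}$ are produced, since none appear in \eqref{Fasymptotics}); and the remainder estimates must be uniform in the simultaneous, independent limits $w_1\to0$, $w_2\to1$ on either side of the cut, which your "strictly higher order" remark does not yet address. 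In short, the three-family structure of your plan is the right shape, but the monodromy decomposition that produces the coefficients $D^{(j)}_{kl}$ is the proof, and it is not carried out.
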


In order to use Theorem \ref{mainth4}, we let $a = b = c = d = -1/2$, assume that $\epsilon \in (0,4)$, and write (\ref{N0epsilon2}) as
\begin{align*}
N_0(\epsilon)
   = \frac{1}{\pi^2} \int_{-2}^{2- \epsilon} (u-(-2))^a (u-(-2-\epsilon))^b (2-u)^c (2-\epsilon - u)^d  du.
\end{align*}   
The linear fractional transformation
$$u \mapsto v = \frac{u+2}{4-\epsilon}$$
maps $(-2,2-\epsilon,\infty)$ to $(0,1,\infty)$, and changing variables from $u$ to $v$, we obtain
\begin{align}\label{N0w1w2}
N_0(\epsilon)
   = \frac{1}{\pi^2} (4-\epsilon)^{a+b+c+d+1} \int_{0}^{1}  v^a (v-w_1)^b (w_2 - v)^c   (1-v)^d dv \qquad \text{for $\epsilon \in (0,4)$},
\end{align}   
where principal branches are used and
\begin{align}\label{w1w2epsilon}
w_1 := -\frac{\epsilon}{4 - \epsilon}, \qquad w_2 := \frac{4}{4-\epsilon} + i0.
\end{align}
We have added an infinitesimal positive imaginary part to $w_2$ to ensure that $w_1,w_2 \in \C \setminus [0, \infty)$. 
As $\epsilon \downarrow 0$, we have $w_1 \uparrow 0$ and $w_2 \downarrow 1$. 
Since $w_2$ has an infinitesimal positive imaginary part and $v \in [0,1]$, we have
$$(w_2 -v)^c = e^{c \ln|w_2 -v| + ci\arg(w_2 - v)} = e^{c \ln|w_2 -v| + ci\arg(v - w_2) + c\pi i}
= e^{c\pi i}(v- w_2)^c.$$
We therefore find
\begin{align}\label{N0epsilonintabcd}
N_0(\epsilon)
   = -\frac{i}{\pi^2 (4-\epsilon)} \int_{0}^{1}  v^a (v-w_1)^b (v - w_2)^c  (1-v)^d dv.
\end{align}   

On the other hand, since $a =-1/2 > -1$ and $d =-1/2> -1$, we can collapse the contour in (\ref{Fdef}) onto $[0,1]$, which gives
\begin{align*}
F(w_1, w_2) & = (-1 + e^{2\pi i a} - e^{2\pi i (a+d)} + e^{2\pi i d}) \int_0^1 v^a (v-w_1)^b (v-w_2)^{c} (1-v)^{d} dv
	\\
& = -4 \int_0^1 v^a (v-w_1)^b (v-w_2)^{c} (1-v)^{d} dv.
\end{align*}
Comparing with (\ref{N0epsilonintabcd}), we conclude that
\begin{align}\label{N0F}
N_0(\epsilon)
   = \frac{i}{4\pi^2 (4-\epsilon)} F(w_1, w_2).
\end{align}  

Even though (\ref{N0F}) relates $N_0$ to $F$, we cannot immediately apply Theorem \ref{mainth4} because the assumption $a+b,c+d \notin\Z$ is not fulfilled. However, we can set $a = -\frac{1}{2} + \delta$, $b = -\frac{1}{2}$, $c = -\frac{1}{2} + \delta$, $d = -\frac{1}{2}$ in Theorem \ref{mainth4}, and then let $\delta \downarrow 0$.\footnote{It is not clear from the statement of Theorem \ref{mainth4} that the asymptotic formula for $F$ remains valid in the limit $\delta \to 0$. However, a look at the proof in \cite{LV2019} shows that this limit may be taken term-wise in (\ref{Fasymptotics}).} 
In this limit, the poles at $\delta = 0$ introduced by the factors $e^{2\pi i(a+b)} - 1$ and $e^{2\pi i(c+d)} - 1$ in the coefficients $D_{kl}^{(j)}$ are cancelled by zeros originating from the poles of the Gamma functions. Using the identity
$$H(a,d) = (-1 + e^{2\pi i a} - e^{2\pi i (a+d)} + e^{2\pi i d}) \frac{\Gamma (a+1) \Gamma (d+1)}{\Gamma (a+d+2)}$$
and taking $\delta \to 0$ yields the following asymptotic expansion of $F(w_1, w_2)$ to all orders as $w_1 \uparrow 0$ and $w_2 \downarrow 1$:
\begin{align}\label{FAkl}
F(w_1, w_2) \sim 4i \sum_{k,l=0}^\infty A_{k,l}(w_1, w_2),
\end{align}
where the functions $A_{k,l}$ are given by (\ref{Akldef}).
Using this formula in (\ref{N0F}) with $w_1$ and $w_2$ given by (\ref{w1w2epsilon}), we obtain the all-order expansion of $N_0(\epsilon)$ stated in (\ref{N0expansion}), and a direct calculation then gives (\ref{N0expansionfirstfew}). 

It follows from the proof of Theorem \ref{mainth4} presented in \cite{LV2019} that the asymptotic formula (\ref{Fasymptotics}) can be differentiated term-by-term with respect to $w_1$ and $w_2$ any finite number of times. Repeating the above arguments, we see that the same is true of (\ref{FAkl}). It follows that the formulas (\ref{N0expansion}) and (\ref{N0expansionfirstfew}) can be differentiated term-by-term with respect to $\epsilon$ any finite number of times. 
This completes the proof of Theorem \ref{2Ddensityth}.

\subsection{Proof of Theorem \ref{3Ddensityth}}\label{3Ddensitysubsec}
Let $t_z \in (0,2)$. By Theorem \ref{2Ddensityth}, $N_0$ has a logarithmic singularity at $\epsilon = 0$. Hence, it is clear from (\ref{N3DN2D}) that $N_{t_z}(\epsilon)$ has a continuous extension to $\epsilon = 0$ such that
\begin{align}\label{Ntz0}
N_{t_z}(0) = \int_{-2}^2 \frac{N_0(t_z u)}{\sqrt{4 - u^2}} \frac{du}{\pi}.
\end{align}
However, for $j \geq 1$, the $j$th derivative $N_0^{(j)}(t_z u)$ has a more severe singularity at $u = 0$, so it is not possible to differentiate inside the integral in (\ref{N3DN2D}) as it stands. In what follows, we therefore show that the contour of integration in (\ref{N3DN2D}) can be deformed away from the singularity at $\epsilon = 0$. 

The function $F$ in (\ref{Fdef}) can be analytically continued to a multiple-valued analytic function of $w_1,w_2 \in \C \setminus \{0, 1\}$. Hence, by (\ref{N0F}), $N_0(\epsilon)$ can be analytically continued from $0 < \epsilon < 4$ to a multiple-valued analytic function of $\epsilon \in \C \setminus \{0, 4\}$ (i.e., $N_0$ has an analytic continuation to the universal cover of $\C \setminus \{0, 4\}$). We let $\tilde{N}_0(\epsilon)$ be the (single-valued) analytic continuation of $N_0(\epsilon)$ from $\epsilon \in (0,4)$ to $\epsilon \in \C \setminus \big((-\infty, 0] \cup [4, +\infty)\big)$. 
For $\epsilon \in \R$, let $\tilde{N}_{0+}(\epsilon) = \lim_{h \downarrow 0}\tilde{N}_{0}(\epsilon + i h)$ be the boundary values of $\tilde{N}_{0}(\epsilon)$ from the upper half-plane. 
For $\epsilon \in (0,4)$, we have $\tilde{N}_{0+}(\epsilon) = N_0(\epsilon)$. 
For $\epsilon \in (-4,0)$, $\tilde{N}_{0+}(\epsilon)$ does not coincide with $N_0(\epsilon)$, but we have the following lemma.

\begin{lemma}\label{N0N0tildelemma}
 For $\epsilon \in (-4,0)$, it holds that $N_0(\epsilon) = \re \tilde{N}_{0+}(\epsilon)$.
\end{lemma}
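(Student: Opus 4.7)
The plan is to obtain $\tilde{N}_{0+}(\epsilon')$ for $\epsilon' \in (-4,0)$ as a complex contour integral, by analytically continuing the real representation (\ref{N0epsilon2}) through the upper half $\epsilon$-plane while continuously deforming the contour of integration, and then to read off its real part. Fix $\epsilon' \in (-4,0)$ and set $\epsilon_0 := -\epsilon' \in (0,4)$. Starting from (\ref{N0epsilon2}) at $\epsilon = \epsilon_0$, I would trace the semicircle $\epsilon(\theta) = \epsilon_0 e^{i\theta}$ for $\theta \in [0,\pi]$ and simultaneously deform the initial real contour $[-2, 2-\epsilon_0]$ so that its moving right endpoint tracks $2 - \epsilon(\theta)$. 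Inspecting the principal-branch cuts of $\sqrt{4-u^2}$ and of $\sqrt{4-(u+\epsilon)^2}$ (which moves continuously with $\epsilon$ but lies in the closed lower half $u$-plane whenever $\im\epsilon > 0$), one checks that the straight-line contour from $-2$ to $2-\epsilon(\theta)$ stays clear of every cut except at its own endpoints, so the continuation is unobstructed.

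At $\theta = \pi$ the endpoint reaches $2+\epsilon_0$ on the real axis and the limiting contour runs from $-2$ to $2+\epsilon_0$ passing infinitesimally below the branch point at $u=2$. The integrand has only integrable inverse-square-root singularities there, so the small half-loop beneath $u=2$ contributes nothing in the limit, and hence
\begin{equation*}
\tilde{N}_{0+}(\epsilon') \;=\; \frac{1}{\pi^2}\int_{-2}^{2+\epsilon_0}\frac{du}{\sqrt{4-u^2}\,\sqrt{4-(u-\epsilon_0)^2}},
\end{equation*}
with principal branches and the real interval approached from below.

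A direct sign check decomposes $[-2, 2+\epsilon_0]$ into three pieces: on the middle subinterval $u \in (\epsilon_0-2, 2)$ both radicands are positive and the integrand is real-positive, while on each flanking subinterval $(-2, \epsilon_0-2)$ and $(2, 2+\epsilon_0)$ exactly one radicand is negative and its principal square root, approached from below, acquires a factor $\mp i$, making the integrand purely imaginary. Taking real parts suppresses the flanking contributions and leaves
\begin{equation*}
\re \tilde{N}_{0+}(\epsilon') \;=\; \frac{1}{\pi^2}\int_{\epsilon_0-2}^{2}\frac{du}{\sqrt{(4-u^2)\,(4-(u-\epsilon_0)^2)}},
\end{equation*}
which coincides with $N_0(\epsilon')$ by the natural analog of (\ref{N0epsilon2}) at $\epsilon = \epsilon' \in (-4,0)$ (where integrating out $u_2$ in (\ref{N0epsilon}) yields an integration range from $-2-\epsilon'$ to $2$), or equivalently by the evenness of $N_0$ combined with a change of variables. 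The main obstacle is the geometric bookkeeping in the first two steps: verifying that the deformed contour avoids both branch cuts throughout the continuation and correctly identifying that the limit contour passes beneath, rather than above, the branch point at $u=2$. Once that geometry is pinned down, the real/imaginary split in the last step is essentially algebraic.
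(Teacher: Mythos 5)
Your proof is correct and takes essentially the same route as the paper's: trace $\epsilon$ along the upper semicircle, deform the integration contour accordingly, and observe that at the endpoint the contour splits into three pieces, the flanking two being purely imaginary while the middle one reproduces $N_0$. The only difference is bookkeeping---the paper first normalizes the contour to $[0,1]$ via the linear fractional transformation in (\ref{N0w1w2}) and then tracks the moving parameters $w_1,w_2$, reading off the side-piece phases $e^{\pi i b}=e^{\pi i c}=-i$, whereas you stay in the $u$-variable and track the moving endpoint $2-\epsilon(\theta)$, extracting the same phases directly from the branch geometry; the two pictures are related by $v=(u+2)/(4-\epsilon)$.
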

\begin{proof}
Fix $r \in (0,4)$ and let $\epsilon(\alpha) = re^{i \alpha}$. 
According to (\ref{N0w1w2}), we have
\begin{align*}
\tilde{N}_0(\epsilon(0))
   = \frac{1}{\pi^2} (4-\epsilon(0))^{a+b+c+d+1} \int_{0}^{1}  v^a (v-w_1(0))^b (w_2(0) - v)^c   (1-v)^d dv,
\end{align*}   
where $a=b=c=d=-1/2$, $w_1(\alpha) := -\frac{\epsilon(\alpha)}{4 - \epsilon(\alpha)}$, and $w_2(\alpha) := \frac{4}{4-\epsilon(\alpha)} + i0$.
As $\alpha$ increases from $0$ to $\pi$, $\epsilon(\alpha)$ traces out a counterclockwise semicircle in the upper half-plane from $r$ to $-r$, $w_1(\alpha)$ moves below $0$ from $w_1(0) = - \frac{r}{4-r} < 0$ to $w_1(\pi) = \frac{r}{4 + r} > 0$, and $w_2(\alpha)$ moves above $1$ from $w_2(0) = \frac{4}{4-r}  + i0$ to $w_2(\pi) = \frac{4}{4 + r} < 1$. Hence, 
\begin{align*}
\tilde{N}_{0+}(-r) =\lim_{\alpha \uparrow \pi} \tilde{N}_0(\epsilon(\alpha))
= &\; \frac{(4 - \epsilon(\pi))^{a+b+c+d+1} }{\pi^2} 
	\\
&\times
\bigg(e^{\pi i b} \int_{0}^{w_1(\pi)} v^a (w_1(\pi) - v)^b (w_2(\pi) - v)^c   (1-v)^d dv
	\\
&\qquad +\int_{w_1(\pi)}^{w_2(\pi)}  v^a (v-w_1(\pi))^b (w_2(\pi) - v)^c   (1-v)^d dv
	\\
&\qquad + e^{\pi i c} \int_{w_2(\pi)}^{1}  v^a (v-w_1(\pi))^b  (v-w_2(\pi))^c   (1-v)^d dv
\bigg).
 \end{align*} 
Since $e^{\pi i b} = e^{\pi i c} = -i$, only the middle integral contributes to the real part of $\tilde{N}_{0+}(-r)$, i.e.,
\begin{align*}
\re \tilde{N}_{0+}(-r) 
= \frac{(4 - \epsilon(\pi))^{a+b+c+d+1}}{\pi^2}  
\int_{w_1(\pi)}^{w_2(\pi)}  v^a (v-w_1(\pi))^b (w_2(\pi) - v)^c   (1-v)^d dv.
\end{align*} 
Performing the change of variables $u = \frac{v-w_1(\pi)}{w_2(\pi) - w_1(\pi)}$, we obtain
\begin{align*}
\re \tilde{N}_{0+}(-r) 
= &\; \frac{(4 - \epsilon(\pi))^{a+b+c+d+1} }{\pi^2}  (w_2(\pi) - w_1(\pi))^{a+b+c+d+1}
	\\
&\times \int_0^1 \Big(u + \frac{w_1(\pi)}{w_2(\pi) - w_1(\pi)}\Big)^a u^b  (1-u)^c  \Big(\frac{1-w_1(\pi)}{w_2(\pi) - w_1(\pi)} - u\Big)^d du.
\end{align*} 
Using that $\epsilon(\pi) = -r$, $w_1(\pi) = \frac{r}{4 + r}$, and $w_2(\pi) = \frac{4}{4 + r}$, we conclude that \begin{align*}
\re \tilde{N}_{0+}(-r) 
= &\; \frac{(4 - r)^{a+b+c+d+1} }{\pi^2} 
 \int_0^1 \Big(u - \frac{-r}{4-r}\Big)^a u^b  (1-u)^c  \Big(\frac{4}{4-r} - u\Big)^d du.
\end{align*} 
In view of (\ref{N0w1w2}) and the fact that $a=b=c=d=-1/2$, the right-hand side equals $N_0(r)$. Since $r \in (0, 4)$ was arbitrary, this completes the proof. 
\end{proof}

%

By virtue of Lemma \ref{N0N0tildelemma} and (\ref{N3DN2D}), we have, for $\epsilon \in (-4 + 2t_z, 4 - 2t_z)$,
\begin{align}
N_{t_z}(\epsilon)  
 = \int_{-2}^2 \frac{\re \tilde{N}_{0+}\big(t_z u + \epsilon\big)}{\sqrt{4 - u^2}} \frac{du}{\pi}
 = \re \int_{-2}^2 \frac{\tilde{N}_{0+}\big(t_z u + \epsilon\big)}{\sqrt{4 - u^2}} \frac{du}{\pi}.
\end{align}
The contour in the right-most integral can be deformed into the upper half-plane. This gives
\begin{align}\label{Ntzepsilonre}
N_{t_z}(\epsilon)  
 = \re \int_{\gamma_{-2,2}} \frac{\tilde{N}_{0}\big(t_z u + \epsilon\big)}{\sqrt{4 - u^2}} \frac{du}{\pi}.
\end{align}
where $\gamma_{-2,2}$ is any contour starting at $-2$ and ending at $2$ which (apart from the endpoints) lies in $\{\im u > 0\}$. For definiteness, we let $\gamma_{-2,2}$ be the clockwise semicircle of radius $2$ starting at $-2$ and ending at $2$. 

For $t_z \in (0,2)$, $t_zu + \epsilon$ stays away from $\{0,4\}$ for all $u\in \gamma_{-2,2}$ provided that $\epsilon \in \C$ satisfies $|\epsilon| < \min(2t_z, 4 - 2t_z)$. Hence, the map $\epsilon \mapsto \int_{\gamma_{-2,2}} \frac{\tilde{N}_{0}(t_z u + \epsilon)}{\sqrt{4 - u^2}} \frac{du}{\pi}$ is analytic in the open disk $\{\epsilon \in \C \,|\, |\epsilon| < \min(2t_z, 4 - 2t_z)\}$. Theorem \ref{3Ddensityth} now follows from (\ref{Ntz0}) and (\ref{Ntzepsilonre}).

\subsection{Proof of Theorem \ref{3Dtzdensityth}}\label{3Dtzdensitysubsec}
Setting $\epsilon = 0$ in (\ref{N3DN2D}) and then using that $N_0$ is an even function, we obtain
\begin{align}\label{Ntzat0}
N_{t_z}(0) = \int_{-2}^2 \frac{N_0\big(t_z u\big)}{\sqrt{4 - u^2}} \frac{du}{\pi}
= 2\int_{0}^2 \frac{N_0\big(t_z u\big)}{\sqrt{4 - u^2}} \frac{du}{\pi}.
\end{align}
Substituting in the asymptotic formula (\ref{N0expansionfirstfew}) for $N_0$, we obtain, as $t_z \downarrow 0$,
\begin{align*}
N_{t_z}(0)
= &\; 2\int_{0}^2 \frac{1}{\sqrt{4 - u^2}} \bigg( \frac{\ln(\frac{16}{\epsilon})}{2 \pi^2} + \frac{\epsilon^2 (\ln(\frac{16}{\epsilon})-1)}{128 \pi^2}
+ \frac{3 \epsilon^4 (6 \ln(\frac{16}{\epsilon})-7)}{2^{16} \pi^2}
	\\ \nonumber
& +\frac{5}{3}\frac{\epsilon^6 (30 \ln(\frac{16}{\epsilon})-37)}{2^{22} \pi^2}
+ \frac{35}{3} \frac{\epsilon^8 (420 \ln(\frac{16}{\epsilon})-533)}{2^{33} \pi^2}
	\\ 
&+ \frac{63}{5} \frac{\epsilon^{10} (1260\ln(\frac{16}{\epsilon})-1627)}{2^{39} \pi^2} 
\bigg)\bigg|_{\epsilon = t_z u} \frac{du}{\pi}
+ E(t_z),
\end{align*}
where the error term $E(t_z)$ satisfies
$$|E(t_z)| \leq 2\int_{0}^2 \frac{C(ut_z)^{12}}{\sqrt{4 - u^2}}\bigg(1 + \ln\Big(\frac{1}{ut_z}\Big)\bigg) \frac{du}{\pi}
\leq C t_z^{12} \ln\frac{1}{t_z}$$
for some constant $C> 0$ independent of $0 < t_z \leq 1$.
Using the identities
\begin{subequations}\label{integralidentities}
\begin{align}
& 2\int_{0}^2 \frac{u^j}{\sqrt{4 - u^2}} \frac{du}{\pi} = \frac{2^j\Gamma(\frac{j+1}{2})}{\sqrt{\pi} \Gamma(\frac{j}{2} +1)}, \qquad j = 0,1, 2, \dots,
	\\
&  2\int_{0}^2 \frac{u^j \ln{u}}{\sqrt{4 - u^2}} \frac{du}{\pi} = \frac{2^{j-1} \Gamma (\frac{j+1}{2}) \big(2\ln(2) -\psi(\frac{j}{2}+1)+\psi(\frac{j+1}{2})\big)}{\sqrt{\pi }  \Gamma (\frac{j}{2}+1)}, \quad j = 0,1, 2, \dots,
\end{align}
\end{subequations}
long but straightforward calculations yield the expansion in (\ref{Ntzat0expansion}). 

The claim about termwise differentiation of (\ref{Ntzat0expansion}) follows from the above argument and the analogous claim in Theorem \ref{2Ddensityth}.
This completes the proof of Theorem \ref{3Dtzdensityth}. 

\begin{remark}
  By substituting the expansion (\ref{N0expansion}) for $N_0$ into (\ref{Ntzat0}) and using (\ref{integralidentities}), one can obtain an expansion to all orders of $N_{t_z}(0)$ as $t_z \downarrow 0$. 
\end{remark}


\section{Asymptotics for the N\'eel temperature}\label{Neelsec}
In this section, we study the asymptotic behavior of the N\'eel temperature $T_N(U, t_z)$ as $U \downarrow 0$. Our goal is to prove Theorems \ref{2DNeelth} and \ref{3DNeelth}.

\subsection{Proof of Theorem \ref{2DNeelth}}\label{2DNeelsubsec}
Suppose $t_z=0$. 
From (\ref{N0epsilon}), we see that $N_0(\epsilon) = 0$ for $\epsilon \geq 4$. 
Thus, equation (\ref{TNeelequationNtz}) for $T_N = T_N(U,0)$ can be written as 
\begin{align}\label{TNeelequationN0}
\frac{1}{U} = \int_0^4 N_{0}(\epsilon) 
\frac{\tanh(\frac{\epsilon}{2T_N})}{\epsilon} d\epsilon.
\end{align}
Adding and subtracting terms, we write (\ref{TNeelequationN0}) as
\begin{align}\label{TceqI1234}
\frac{1}{U} - a_0 = &\; I_1(T_N) + \frac{\ln(\frac{8}{T_N})}{2\pi^2} I_2(T_N)  + \frac{1}{2\pi^2}I_3(T_N) ,
\end{align}
where the constant $a_0$ is given by
\begin{align*}
& a_0 = \int_{0}^{4} \bigg(N_0(\epsilon) - \frac{\ln(\frac{16}{\epsilon})}{2 \pi^2}\bigg)
\frac{1}{\epsilon}  d\epsilon \approx 0.007013,
\end{align*}
and the functions $I_j(T)$, $j =1, 2, 3$, are defined by
\begin{align}
& I_1(T) = \int_{0}^{4} \bigg(N_0(\epsilon) - \frac{\ln(\frac{16}{\epsilon})}{2 \pi^2} \bigg)\frac{\tanh(\frac{\epsilon}{2T}) - 1}{\epsilon}  d\epsilon,
	\\ \label{I2def}
& I_2(T) = \int_{0}^{4} \frac{\tanh(\frac{\epsilon}{2T})}{\epsilon}  d\epsilon,
	\\
& I_3(T) =  \int_{0}^{4} \ln\bigg(\frac{2T}{\epsilon}\bigg) \frac{\tanh(\frac{\epsilon}{2T})}{\epsilon}  d\epsilon.
\end{align}

The next three lemmas describe the behavior of $I_1(T)$, $I_2(T)$, and $I_3(T)$ for small $T$.

\begin{lemma}\label{I1lemma}
There exists a constant $C > 0$ such that
$$|I_1(T)| \leq  C T^2 \ln\Big(\frac{2}{T}\Big)  \qquad \text{for all $T \in (0,1)$.}$$
\end{lemma}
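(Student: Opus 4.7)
\textbf{Proof proposal for Lemma \ref{I1lemma}.}

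The plan is to use the refined behavior of $N_0(\epsilon) - \frac{\ln(16/\epsilon)}{2\pi^2}$ near $\epsilon=0$ supplied by Theorem \ref{2Ddensityth} together with two complementary estimates on $1-\tanh(\epsilon/(2T))$: the trivial bound $0\le 1-\tanh(\epsilon/(2T))\le 1$ valid everywhere, and the exponentially sharp bound $1-\tanh(\epsilon/(2T))=\frac{2}{e^{\epsilon/T}+1}\le 2e^{-\epsilon/T}$, which kicks in once $\epsilon\gtrsim T$. Splitting the domain of integration at $\epsilon=T$ (and then once more at a fixed point away from $0$) makes it possible to balance the two singular behaviors against each other.

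From the expansion \eqref{N0expansionfirstfew}, I get a constant $C>0$ and a small $\epsilon_0\in(0,1]$ such that
\begin{equation*}
\Bigl|N_0(\epsilon)-\frac{\ln(16/\epsilon)}{2\pi^2}\Bigr|\le C\,\epsilon^2\bigl(1+\ln(1/\epsilon)\bigr)\qquad\text{for }\epsilon\in(0,\epsilon_0],
\end{equation*}
while on $[\epsilon_0,4]$ the function $N_0$ is continuous and bounded and so is $\ln(16/\epsilon)/(2\pi^2)$, giving a uniform bound $|N_0(\epsilon)-\frac{\ln(16/\epsilon)}{2\pi^2}|\le C$ there. I split
\begin{equation*}
I_1(T)=\int_0^{T}+\int_T^{\epsilon_0}+\int_{\epsilon_0}^{4}=:J_1+J_2+J_3,
\end{equation*}
assuming without loss of generality $T\in(0,\epsilon_0)$.

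For $J_1$ I use $|{\tanh(\epsilon/(2T))-1}|/\epsilon\le 1/\epsilon$, so the integrand is bounded by $C\epsilon(1+\ln(1/\epsilon))$, whose integral over $[0,T]$ is $\tfrac12 T^2\ln(1/T)+O(T^2)\le C\,T^2\ln(2/T)$. For $J_2$ I use the exponential bound, getting the integrand $\le 2C\,\epsilon(1+\ln(1/\epsilon))e^{-\epsilon/T}$; substituting $u=\epsilon/T$ gives
\begin{equation*}
|J_2|\le 2C\,T^2\int_1^{\infty}u\bigl(1+\ln(1/T)+|\ln u|\bigr)e^{-u}\,du\le C\,T^2\ln(2/T).
\end{equation*}
For $J_3$ the factor $|\tanh(\epsilon/(2T))-1|\le 2e^{-\epsilon_0/T}$ is already exponentially small in $1/T$ and the other factor is bounded, so $|J_3|=O(e^{-\epsilon_0/T})$, which is dominated by $T^2\ln(2/T)$ for $T\in(0,1)$. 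Adding the three contributions yields the claim.

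The only place where real care is needed is the second piece $J_2$, where the logarithmic factor in $N_0-\ln(16/\epsilon)/(2\pi^2)$ interacts with the substitution $u=\epsilon/T$ to produce an extra $\ln(1/T)$; this is precisely what accounts for the $\ln(2/T)$ appearing in the final bound rather than a pure $T^2$. The rest is bookkeeping. An entirely analogous argument with slightly different splitting points will be needed for the higher-order integrals $I_2$ and $I_3$ later, so it is worth being explicit about how the exponential tail of $1-\tanh$ cooperates with the polynomial decay of $N_0-\ln(16/\epsilon)/(2\pi^2)$.
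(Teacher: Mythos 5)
Your proof is correct, and the line of reasoning is essentially the same as in the paper, but the bookkeeping is organized differently. The paper does not split the domain of integration: it uses the bound $\left|N_0(\epsilon)-\tfrac{\ln(16/\epsilon)}{2\pi^2}\right|\le C\epsilon^2(1+|\ln\epsilon|)$ on all of $(0,4)$ together with the single inequality $|\tanh a-1|\le 2e^{-2a}$ (which is valid for all $a\ge 0$, including $a=0$), so that after one substitution $x=2\epsilon/T$ the whole integral collapses to $CT^2\ln(2/T)$ in one stroke. Your three-way split at $\epsilon=T$ and $\epsilon=\epsilon_0$ works just as well; the split at $T$ is actually unnecessary since the exponential bound already controls the contribution for $\epsilon<T$ (there $e^{-\epsilon/T}\ge e^{-1}$, so the trivial and exponential bounds differ only by the constant $2e^{-1}$). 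On the other hand your explicit split at $\epsilon_0$ is a small improvement in rigor: the paper tacitly extends the $O(\epsilon^2\ln\tfrac1\epsilon)$ estimate, which Theorem \ref{2Ddensityth} only guarantees near $\epsilon=0$, to the whole interval $(0,4)$, which is true (both $N_0$ and the logarithm are bounded away from $\epsilon=0$) but is left implicit. So you take the same route, with a superfluous subdivision near $0$ and a clarifying one near $4$.
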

\begin{proof}
Using (\ref{N0expansionfirstfew}), we can estimate
\begin{align*}
|I_1(T)| \leq C \int_{0}^{4} \epsilon^2 (1 + |\ln \epsilon|) \frac{|\tanh(\frac{\epsilon}{2T}) - 1|}{\epsilon}  d\epsilon
\end{align*}
for $T \in (0, 1)$.
Using that $|\tanh(a) - 1 | \leq 2 e^{-2a}$ for $a \geq 0$ and then changing variables to $x = \frac{2\epsilon}{T}$, we find, again for $T \in (0,1)$,
\begin{align*}
|I_1(T)| 
& \leq C \int_{0}^{4} \epsilon (1 + |\ln \epsilon|)  e^{-\frac{\epsilon}{T}} d\epsilon
\leq C T^2 \int_{0}^{\frac{8}{T}} x \Big(1 + |\ln x| + \ln \frac{2}{T}\Big) e^{-\frac{x}{2}} dx
	\\
& \leq C T^2 \int_{0}^{\infty} x (1 + |\ln x|) e^{-\frac{x}{2}} dx
+ C T^2 \ln\Big(\frac{2}{T}\Big) \int_{0}^{\infty} x e^{-\frac{x}{2}} dx
 \leq C T^2  \ln\Big(\frac{2}{T}\Big),
\end{align*}
which is the desired conclusion.
\end{proof}

\begin{lemma}\label{I2lemma}
There exists a constant $C > 0$ such that
$$\bigg|I_2(T) - \ln\Big(\frac{2}{T}\Big) + \ln \Big(\frac{\pi }{4} \Big)  - \gamma\bigg|
 \leq  C \ln\Big(\frac{2}{T}\Big) e^{-\frac{4}{T}} \qquad \text{for all $T \in (0,1)$.}$$
\end{lemma}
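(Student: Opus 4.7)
The plan is to scale out $T$ by substituting $x = \epsilon/(2T)$, which turns $I_2$ into
$$I_2(T) = \int_0^{2/T} \frac{\tanh x}{x}\, dx,$$
so the lemma reduces to the asymptotic identity
$$\int_0^a \frac{\tanh x}{x}\, dx = \ln a + \gamma - \ln\!\Big(\frac{\pi}{4}\Big) + O\!\big(e^{-2a}\ln a\big) \quad \text{as } a \to \infty,$$
applied with $a = 2/T$ (so $2a = 4/T$ produces the required rate $e^{-4/T}$). To obtain this identity I would integrate by parts with $u = \tanh x$, $dv = dx/x$, using that $\tanh(x)\ln x \to 0$ as $x \downarrow 0$, to get
$$\int_0^a \frac{\tanh x}{x}\, dx = \tanh(a)\,\ln a \;-\; \int_0^a \frac{\ln x}{\cosh^2 x}\, dx.$$
Writing $\tanh(a) = 1 - (1-\tanh a)$ together with $0 \leq 1 - \tanh a \leq 2 e^{-2a}$ converts the boundary term into $\ln a + O(e^{-2a}\ln a)$. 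Extending the remaining integral to $\infty$ costs a tail $\int_a^\infty (\ln x)/\cosh^2 x\, dx$, which the bound $1/\cosh^2 x \leq 4 e^{-2x}$ combined with $\ln(a+y) \leq \ln a + \ln(1+y)$ (valid for $a \geq 1$, $y \geq 0$) shows is of order $e^{-2a}\ln a$.

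The main computational step, and the one I expect to be the principal obstacle, is the closed-form evaluation
$$\int_0^\infty \frac{\ln x}{\cosh^2 x}\, dx = \ln\!\Big(\frac{\pi}{4}\Big) - \gamma.$$
I would obtain this via the Mellin transform: expanding $1/\cosh^2 x = 4\sum_{n=1}^\infty (-1)^{n+1} n\, e^{-2nx}$ and integrating termwise gives, for $\re s > 1$,
$$M(s) := \int_0^\infty \frac{x^{s-1}}{\cosh^2 x}\, dx = 4 \cdot 2^{-s}\,\Gamma(s)\,\eta(s-1),$$
where $\eta(s) = (1 - 2^{1-s})\zeta(s)$ is the Dirichlet eta function. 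Analytic continuation is automatic from this closed form. Differentiating at $s=1$ and inserting the standard values $\Gamma(1) = 1$, $\Gamma'(1) = -\gamma$, $\eta(0) = 1/2$, and $\eta'(0) = \tfrac12 \ln(\pi/2)$ (the last following from $\zeta'(0) = -\tfrac12\ln(2\pi)$) yields exactly $\ln(\pi/4) - \gamma$.

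Assembling the three ingredients (boundary term $\tanh(a)\ln a$, extension of the integral to $\infty$, and the evaluation of $\int_0^\infty (\ln x)/\cosh^2 x\, dx$) and substituting $a = 2/T$ produces
$$I_2(T) = \ln\!\Big(\frac{2}{T}\Big) - \ln\!\Big(\frac{\pi}{4}\Big) + \gamma + O\!\Big(\ln\!\Big(\frac{2}{T}\Big) e^{-4/T}\Big),$$
uniformly for $T \in (0,1)$, which is the bound claimed in the lemma. Everything apart from the Mellin-transform evaluation is routine bookkeeping of exponentially small tails, so the one line requiring real thought is the identification of the constant $\ln(\pi/4) - \gamma$.
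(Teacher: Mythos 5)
Your proof is correct and follows essentially the same route as the paper: the change of variables $x = \epsilon/(2T)$, integration by parts to produce $\ln(x)\tanh(x)\big|_0^{2/T} - \int_0^{2/T}(\ln x)/\cosh^2 x\,dx$, and then extending to the infinite upper limit while bounding the exponentially small boundary and tail terms by $O(\ln(2/T)\,e^{-4/T})$. The only difference is that the paper simply quotes the value $\int_0^\infty (\ln x)/\cosh^2 x\,dx = \ln(\pi/4) - \gamma$ as a known identity, whereas you supply a correct Mellin-transform derivation of it.
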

\begin{proof}
Making the change of variables $x = \frac{\epsilon}{2T}$ in the expression for $I_2(T)$, we find
\begin{align*}
I_2(T) & = \int_{0}^{\frac{2}{T}}  \frac{\tanh{x}}{x}dx
 =  \ln(x) \tanh{x} \Big|_{0}^{\frac{2}{T}}
- \int_{0}^{\frac{2}{T}} \frac{\ln{x}}{\cosh^2{x}}dx
	\\
& =  \ln\Big(\frac{2}{T}\Big) \tanh\Big(\frac{2}{T}\Big)
- \int_{0}^\infty \frac{\ln{x}}{\cosh^2{x}} dx
+ \int_{\frac{2}{T}}^\infty  \frac{\ln{x}}{\cosh^2{x}}dx
\end{align*}
for $T \in (0,1)$.
Since $\int_{0}^\infty \frac{\ln{x}}{\cosh^2{x}} dx = \ln \left(\frac{\pi }{4}\right)-\gamma$, it follows that
\begin{align*}
& I_2(T) - \ln\Big(\frac{2}{T}\Big) + \ln \Big(\frac{\pi }{4}\Big)-\gamma
= R_1\Big(\frac{2}{T}\Big),
\end{align*}
where
\begin{align}\label{R1def}
R_1(a) :=  \ln(a) (\tanh(a) -1)
+ \int_{a}^\infty  \frac{\ln{x}}{\cosh^2{x}}dx.
\end{align}
Since, for $a \geq 2$,
\begin{align}\nonumber
|R_1(a)| 
& \leq \ln(a) |\tanh(a) -1|
+ \int_{a}^\infty  \frac{\ln{x}}{\cosh^2{x}}dx
\leq \ln(a) 2 e^{-2a}
+ C \int_{a}^\infty e^{-2x} \ln(x) dx
	\\ \label{R1logtanh}
& \leq \ln(a) 2 e^{-2a}
+ C e^{-2a} \ln(a) + C\int_{a}^\infty \frac{e^{-2x}}{x}dx
\leq C \ln(a) e^{-2a},
\end{align}
the desired conclusion follows.
\end{proof}

\begin{lemma}\label{I3lemma}
There exists a constant $C > 0$ such that
$$\bigg|I_3(T) + \frac{(\ln \frac{2}{T})^2}{2}  - \int_{0}^\infty \frac{(\ln x)^2}{2\cosh^2{x}} dx\bigg|
 \leq  C\Big( \ln\frac{2}{T}\Big)^2 e^{-\frac{4}{T}} \qquad \text{for all $T \in (0,1)$.}$$
\end{lemma}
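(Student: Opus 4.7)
The plan is to mimic the structure of the proof of Lemma \ref{I2lemma}: change variables to remove the dependence of the integrand on $T$ except through the upper endpoint, then integrate by parts to convert $\tanh$ into $\sech^2$ so that the tail decays exponentially, and finally extend the domain of integration to all of $(0,\infty)$ with an explicit exponentially small remainder.

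Concretely, I would first substitute $x = \epsilon/(2T)$ in the definition of $I_3(T)$. Since $\ln\bigl(\frac{2T}{\epsilon}\bigr) = -\ln x$ and $d\epsilon/\epsilon = dx/x$, this gives
\begin{equation*}
I_3(T) = -\int_{0}^{2/T} \frac{\ln(x)\,\tanh(x)}{x}\,dx.
\end{equation*}
Next I would integrate by parts with $u = \tanh(x)$ and $dv = \frac{\ln x}{x}\,dx$, so that $v = \frac{1}{2}(\ln x)^2$. The boundary term at $x = 0$ vanishes because $\tanh(x) \sim x$ kills the logarithmic singularity of $(\ln x)^2$, while at $x = 2/T$ we pick up $\tanh(2/T)\,\tfrac{1}{2}(\ln(2/T))^2$. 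This yields
\begin{equation*}
I_3(T) = -\tanh\!\Bigl(\tfrac{2}{T}\Bigr)\frac{(\ln\tfrac{2}{T})^2}{2} + \int_{0}^{2/T} \frac{(\ln x)^2}{2\cosh^2 x}\,dx.
\end{equation*}

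Rearranging and adding and subtracting the infinite integral gives
\begin{equation*}
I_3(T) + \frac{(\ln\tfrac{2}{T})^2}{2} - \int_{0}^{\infty}\frac{(\ln x)^2}{2\cosh^2 x}\,dx
 = \bigl(1 - \tanh(\tfrac{2}{T})\bigr)\frac{(\ln\tfrac{2}{T})^2}{2} - \int_{2/T}^{\infty}\frac{(\ln x)^2}{2\cosh^2 x}\,dx.
\end{equation*}
Both terms on the right are exponentially small in $2/T$: for the first, $|1-\tanh(a)| \le 2e^{-2a}$ yields a bound $(\ln\tfrac{2}{T})^2 e^{-4/T}$; for the second, $1/\cosh^2 x \le 4 e^{-2x}$ and a straightforward estimate (splitting off $e^{-x}$ and bounding $e^{-x}(\ln x)^2$ by a constant for $x \ge 2$) give an upper bound of the same order $C(\ln\tfrac{2}{T})^2 e^{-4/T}$, exactly analogous to \eqref{R1logtanh} in the proof of Lemma \ref{I2lemma}.

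No real obstacle is anticipated; the only subtle point is checking that the boundary contribution at $x = 0$ in the integration by parts indeed vanishes, which it does because $\tanh(x)(\ln x)^2 = O(x(\ln x)^2) \to 0$ as $x\downarrow 0$. Putting the two exponentially small estimates together and restricting to $T\in(0,1)$ (so that $2/T \ge 2$ and the constants in the tail estimate are uniform) yields the stated bound.
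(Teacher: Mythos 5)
Your change of variables, integration by parts, and rearrangement are exactly what the paper does, and you correctly identify
\begin{equation*}
I_3(T) + \frac{(\ln\frac{2}{T})^2}{2} - \int_{0}^{\infty}\frac{(\ln x)^2}{2\cosh^2 x}\,dx
= \bigl(1 - \tanh(\tfrac{2}{T})\bigr)\,\frac{(\ln\frac{2}{T})^2}{2} - \int_{2/T}^{\infty}\frac{(\ln x)^2}{2\cosh^2 x}\,dx
\end{equation*}
as the quantity to estimate, and the first term is handled correctly. However, your treatment of the tail integral has a genuine gap: you propose to bound it by writing $\frac{1}{\cosh^2 x}\le 4e^{-2x}$, splitting off one factor of $e^{-x}$, and using $(\ln x)^2 e^{-x}\le C$ for $x\ge 2$. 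That yields
\begin{equation*}
\int_{2/T}^{\infty}\frac{(\ln x)^2}{2\cosh^2 x}\,dx \;\le\; C\int_{2/T}^{\infty} e^{-x}\,dx \;=\; C\,e^{-2/T},
\end{equation*}
which is \emph{not} $O\bigl((\ln\frac{2}{T})^2\,e^{-4/T}\bigr)$: the ratio $e^{-2/T}\big/\bigl[(\ln\frac{2}{T})^2 e^{-4/T}\bigr]=e^{2/T}/(\ln\frac{2}{T})^2\to\infty$ as $T\downarrow 0$. Your pointwise bound gives away a full factor of $e^{-x}$ in the integrand and cannot recover the doubled decay rate $e^{-4/T}$ stated in the lemma.

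The fix is to estimate $\int_{a}^{\infty}(\ln x)^2 e^{-2x}\,dx$ more carefully, for instance by integrating by parts once more (which is what the chain of inequalities \eqref{R1logtanh} you cite actually does for the analogous $R_1$ term, rather than a pointwise bound):
\begin{equation*}
\int_{a}^{\infty}(\ln x)^2 e^{-2x}\,dx = \tfrac{1}{2}(\ln a)^2 e^{-2a} + \int_{a}^{\infty}\frac{\ln x}{x}\,e^{-2x}\,dx \le \tfrac{1}{2}(\ln a)^2 e^{-2a} + \tfrac{1}{2e}\,e^{-2a} \le C(\ln a)^2 e^{-2a}
\end{equation*}
for $a\ge 2$, using $\frac{\ln x}{x}\le\frac{1}{e}$. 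With $a=2/T$ this gives the required $C(\ln\frac{2}{T})^2 e^{-4/T}$. (Alternatively, shift $x=a+u$ and bound $(\ln(a+u))^2\le 2(\ln a)^2+u^2/2$.) With that repair your argument agrees with the paper's; as written, the final step does not deliver the claimed exponent.
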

\begin{proof}
Making the change of variables $x = \frac{\epsilon}{2T}$ in the expression for $I_3(T)$, we find
\begin{align*}
I_3(T) & 
= \int_{0}^{\frac{2}{T}} \ln\bigg(\frac{1}{x}\bigg) \frac{\tanh{x}}{x}dx
= - \frac{(\ln x)^2}{2} \tanh{x} \bigg|_{0}^{\frac{2}{T}}
+ \int_{0}^{\frac{2}{T}} \frac{(\ln x)^2}{2\cosh^2{x}} dx
	\\
& = -\frac{(\ln\frac{2}{T})^2}{2} \tanh\bigg(\frac{2}{T}\bigg)
+ \int_{0}^\infty \frac{(\ln x)^2}{2\cosh^2{x}} dx
- \int_{\frac{2}{T}}^\infty \frac{(\ln x)^2}{2\cosh^2{x}} dx.
\end{align*}
It follows that
\begin{align*}
& I_3(T) + \frac{(\ln\frac{2}{T})^2}{2} - \int_{0}^\infty \frac{(\ln x)^2}{2\cosh^2{x}} dx
= R_2\Big(\frac{2}{T}\Big),
\end{align*}
where
$$R_2(a) :=  \frac{(\ln a)^2}{2} (1 - \tanh(a)) - \int_{a}^\infty \frac{(\ln x)^2}{2\cosh^2{x}} dx.$$
Since, for $a \geq 2$,
\begin{align*}
|R_2(a)| & \leq \frac{(\ln a)^2}{2} (1 - \tanh(a)) + C \int_{a}^\infty (\ln x)^2 e^{-2x} dx
	\\
& \leq (\ln a)^2 e^{-2a} + C (\ln a)^2 e^{-2a} + C \int_{a}^\infty \frac{2\ln x}{x} e^{-2x} dx
 \leq  C (\ln a)^2 e^{-2a},
\end{align*}
the desired conclusion follows.
\end{proof}

The above lemmas suggest that we rewrite the equation (\ref{TceqI1234}) for $T_N$ as follows:
\begin{align}\nonumber
& \frac{1}{U} - a_0
+ \frac{\ln(\frac{8}{T_N})}{2\pi^2}\bigg(- \ln\Big(\frac{2}{T_N}\Big) + \ln \Big(\frac{\pi }{4} \Big)  - \gamma\bigg)
 + \frac{1}{2\pi^2}\bigg(\frac{(\ln \frac{2}{T_N})^2}{2}  - \int_{0}^\infty \frac{(\ln x)^2}{2\cosh^2{x}} dx\bigg)
	\\ \nonumber
= &\; I_1(T_N) + \frac{\ln(\frac{8}{T_N})}{2\pi^2}\bigg(I_2(T_N)
- \ln\Big(\frac{2}{T_N}\Big) + \ln \Big(\frac{\pi }{4} \Big)  - \gamma\bigg)  
	\\ \label{TNlogsinserted}
& + \frac{1}{2\pi^2}\bigg(I_3(T_N) + \frac{(\ln \frac{2}{T_N})^2}{2}  - \int_{0}^\infty \frac{(\ln x)^2}{2\cosh^2{x}} dx\bigg).
\end{align}
Applying Lemma \ref{I1lemma}, Lemma \ref{I2lemma}, and Lemma \ref{I3lemma}, we see that the absolute value of the right-hand side of (\ref{TNlogsinserted}) is bounded above by
\begin{align*}
&  C \bigg( T_N^2 \ln\Big(\frac{2}{T_N}\Big) + \ln\Big(\frac{8}{T_N}\Big) \ln\Big(\frac{2}{T_N}\Big) e^{-\frac{4}{T_N}}
+ \Big( \ln\frac{2}{T_N}\Big)^2 e^{-\frac{4}{T_N}}\bigg)
\leq C T_N^2 \ln\Big(\frac{2}{T_N}\Big)
\end{align*}
whenever $T_N \in (0,1)$. It follows that
\begin{align}\nonumber
 \frac{1}{U}  - a_0
&+ \frac{\ln(\frac{8}{T_N})}{2\pi^2}\bigg(- \ln\Big(\frac{2}{T_N}\Big) + \ln \Big(\frac{\pi }{4} \Big)  - \gamma\bigg)
 	\\ \label{logTNR3}
& + \frac{1}{2\pi^2}\bigg(\frac{(\ln \frac{2}{T_N})^2}{2}  - \int_{0}^\infty \frac{(\ln x)^2}{2\cosh^2{x}} dx\bigg)
  = R_3(T_N),
 \end{align}
 where the function $R_3$ satisfies
\begin{align}\label{R3bound}
|R_3(T)| \leq C T^2 \ln\Big(\frac{2}{T}\Big)\qquad \text{for all $T \in (0,1)$.}
\end{align}
Solving (\ref{logTNR3}) for $\ln T_N$, we obtain
\begin{align}\label{logTNsolvedfor}
\ln T_N(U,0)
=\gamma + \ln\Big(\frac{32}{\pi}\Big) 
-\sqrt{4 \pi ^2 \left(\frac{1}{U}-R_3(T_N(U,0))\right) + a_1},
\end{align}
where $a_1 \approx 0.3260$ is the constant defined in (\ref{a1def}).
We know from Lemma \ref{TNlemma} that $T_N \downarrow 0$ as  $U \downarrow 0$, so for all sufficiently small $U$ we have $T_N \in (0,1)$. Thus (\ref{R3bound}) shows that $R_3(T_N(U,0)) = O(1)$ as $U \downarrow 0$. But then (\ref{logTNsolvedfor}) implies that $T_N(U,0) = O(e^{-\frac{2\pi}{\sqrt{U}}})$ as $U \downarrow 0$. Substituting this back into (\ref{R3bound}), we infer that $R_3(T_N(U,0)) = O(e^{-\frac{4\pi}{\sqrt{U}}}/\sqrt{U})$ as $U \downarrow 0$. Inserting this in (\ref{logTNsolvedfor}), we arrive at (\ref{TNexpansion2D}). This completes the proof of Theorem \ref{2DNeelth}.

\subsection{Proof of Theorem \ref{3DNeelth}}\label{3DNeelsubsec}
Let $K$ be a compact subset of $(0,2)$ and suppose that $t_z \in K$.
Adding and subtracting terms, we can write equation (\ref{TNeelequationNtz}) for $T_N = T_N(U,t_z)$ as
\begin{align}\label{Tneeleq2}
\frac{1}{U} - b_0(t_z) = &\; J_1(t_z, T_N) + N_{t_z}(0) J_2(t_z, T_N),
\end{align}
where $b_0(t_z)$ is the function in (\ref{b0tzdef}), and $J_1$ and $J_2$ are defined by
\begin{align*}
& J_1(t_z, T) = \int_{0}^{4+2t_z} \big(N_{t_z}(\epsilon) - N_{t_z}(0) \big)\frac{\tanh(\frac{\epsilon}{2T}) - 1}{\epsilon}  d\epsilon
	\\
& J_2(t_z, T) = \int_{0}^{4+2t_z} \frac{\tanh(\frac{\epsilon}{2T})}{\epsilon}  d\epsilon.
\end{align*}

The following two lemmas describe the behavior of $J_1(t_z, T)$ and $J_2(t_z, T)$ for small $T$.

\begin{lemma}\label{J1lemma}
There exists a constant $C > 0$ such that
$$|J_1(t_z, T)| \leq  C T^2  \qquad \text{for all $t_z \in K$ and $T \in (0,1)$.}$$
\end{lemma}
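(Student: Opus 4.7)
The plan is to split the integration at a small $\delta > 0$ lying inside the common radius of analyticity
$$r := \min_{t_z \in K} \min(2t_z, 4 - 2t_z) > 0$$
provided by Theorem \ref{3Ddensityth}, and to control the two resulting pieces using (i) the elementary decay estimate $|\tanh(x) - 1| \le 2 e^{-2x}$ for $x \ge 0$, which gives $|\tanh(\tfrac{\epsilon}{2T}) - 1| \le 2 e^{-\epsilon/T}$, together with (ii) the quadratic vanishing of $N_{t_z}(\epsilon) - N_{t_z}(0)$ near $\epsilon = 0$ inherited from the Taylor expansion (\ref{Ntzexpansion}) (note that the linear coefficient vanishes because $N_{t_z}$ is even). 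Concretely, I would fix $\delta = r/2$ and write $J_1 = J_1^{\mathrm{in}} + J_1^{\mathrm{out}}$ split at $\epsilon = \delta$.

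For the inner piece, (ii) yields $|N_{t_z}(\epsilon) - N_{t_z}(0)| \le C_0 \epsilon^2$ uniformly for $t_z \in K$ and $\epsilon \in [0,\delta]$, and combining with (i) gives
$$|J_1^{\mathrm{in}}(t_z, T)| \le 2 C_0 \int_0^\delta \epsilon\, e^{-\epsilon/T}\, d\epsilon \le 2 C_0 T^2 \int_0^\infty u\, e^{-u}\, du = 2 C_0 T^2$$
after the substitution $u = \epsilon/T$. For the outer piece, the function $N_{t_z}$ is uniformly bounded by some $M > 0$ for $t_z \in K$ and $\epsilon \in [0, 4+2t_z]$ (from (\ref{N3DN2D}) together with $N_0 \in L^1$, which is clear from the merely logarithmic singularity identified in Theorem \ref{2Ddensityth}), so
$$|J_1^{\mathrm{out}}(t_z, T)| \le \frac{4M}{\delta} \int_\delta^{4+2t_z} e^{-\epsilon/T}\, d\epsilon \le \frac{4M}{\delta}\, T\, e^{-\delta/T}.$$
Since $e^{-\delta/T}/T$ stays bounded on $(0,1)$, the outer contribution is in fact $o(T^n)$ for every $n$, and adding the two bounds gives the claim.

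The one mildly technical point is verifying that $C_0$ and $M$ can be chosen independently of $t_z \in K$. Uniform boundedness giving $M$ is immediate from (\ref{N3DN2D}), since the $L^1$ norm of $N_0$ is a fixed number and the Chebyshev weight $1/\sqrt{4 - u^2}$ does not depend on $t_z$. For $C_0$, I would invoke the analytic representation (\ref{Ntzepsilonre}): by shrinking the contour $\gamma_{-2,2}$ slightly into the upper half-plane away from its endpoints, the integrand $\tilde N_0(t_z u + \epsilon)/(\pi\sqrt{4 - u^2})$ is analytic in $\epsilon$ and uniformly bounded for $t_z \in K$ and $\epsilon$ in a small fixed disk around $0$, so a standard Cauchy estimate controls the Taylor coefficients $N_{t_z}^{(2j)}(0)/(2j)!$ uniformly on $K$ by $M_K / r^{2j}$, and summing the geometric tail on $|\epsilon| \le r/2$ produces a uniform $C_0$. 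This uniformity, though routine, is the main point to check — the rest is direct exponential-tail bookkeeping.
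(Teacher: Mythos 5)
Your proof is correct and follows essentially the same route as the paper: the quadratic vanishing of $N_{t_z}(\epsilon)-N_{t_z}(0)$ near $\epsilon=0$ (from the even Taylor expansion of Theorem \ref{3Ddensityth}) combined with the exponential decay $|\tanh(a)-1|\le 2e^{-2a}$ and the substitution $\epsilon\mapsto\epsilon/T$. The paper compresses the splitting step by absorbing the bounded tail into a single global bound $|N_{t_z}(\epsilon)-N_{t_z}(0)|\le C\epsilon^2$ valid on all of $[0,4+2t_z]$ (trivially true away from $0$ since $N_{t_z}$ is bounded), whereas you split explicitly at $\delta=r/2$ and spell out the Cauchy-estimate argument for uniformity of the Taylor constant over $K$; both are sound, and your version simply makes the uniformity bookkeeping more explicit.
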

\begin{proof}
From Theorem \ref{3Ddensityth}, we have
\begin{align*}
N_{t_z}(\epsilon)  
& = N_{t_z}(0) + O(\epsilon^2) \qquad \text{as $\epsilon \to 0$}
\end{align*}
uniformly for $t_z \in K$, where $N_{t_z}(0) = \int_{-2}^2 \frac{N_0(t_z u)}{\sqrt{4 - u^2}} \frac{du}{\pi}$.
Thus we can estimate
\begin{align*}
|J_1(t_z, T)| \leq C \int_{0}^{\infty} \epsilon^2 \frac{|\tanh(\frac{\epsilon}{2T}) - 1|}{\epsilon}  d\epsilon
\end{align*}
for $t_z \in K$ and $T \in (0, 1)$.
Using that $|\tanh(a) - 1 | \leq 2 e^{-2a}$ for $a \geq 0$ and then changing variables to $x = \frac{2\epsilon}{T}$, we find, again for $t_z \in K$ and $T \in (0,1)$,
\begin{align*}
|J_1(T)| 
& \leq C \int_{0}^\infty \epsilon e^{-\frac{\epsilon}{T}} d\epsilon
\leq C T^2 \int_{0}^\infty x e^{-\frac{x}{2}} dx \leq C T^2,
\end{align*}
which is the desired conclusion.
\end{proof}

\begin{lemma}\label{J2lemma}
There exists a constant $C > 0$ such that
$$\bigg|J_2(t_z, T) - \ln\Big(\frac{2+t_z}{T}\Big) + \ln \Big(\frac{\pi }{4} \Big)  - \gamma\bigg|
 \leq  C \ln\Big(\frac{2+t_z}{T}\Big) e^{-\frac{4+2t_z}{T}} \;\; \text{for all $t_z \in K$ and $T \in (0,1)$.}$$
\end{lemma}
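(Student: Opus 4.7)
The plan is to imitate the proof of Lemma \ref{I2lemma} essentially verbatim, with only the upper limits of integration and the constants adapted to the three-dimensional setting. First, I would change variables via $x = \epsilon/(2T)$ to rewrite
\begin{align*}
J_2(t_z, T) = \int_0^{(2+t_z)/T} \frac{\tanh x}{x}\, dx.
\end{align*}
Next, I would integrate by parts, using $(\ln x)' = 1/x$ and $(\tanh x)' = 1/\cosh^2 x$, to obtain
\begin{align*}
J_2(t_z, T) = \ln\!\Big(\tfrac{2+t_z}{T}\Big) \tanh\!\Big(\tfrac{2+t_z}{T}\Big) - \int_0^{(2+t_z)/T} \frac{\ln x}{\cosh^2 x}\, dx,
\end{align*}
where the boundary term at $0$ vanishes since $\tanh x \sim x$ as $x\to 0$. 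Splitting the remaining integral as $\int_0^\infty - \int_{(2+t_z)/T}^\infty$ and using the known identity $\int_0^\infty \frac{\ln x}{\cosh^2 x}\,dx = \ln(\pi/4) - \gamma$, the claimed difference becomes
\begin{align*}
J_2(t_z,T) - \ln\!\Big(\tfrac{2+t_z}{T}\Big) + \ln\!\Big(\tfrac{\pi}{4}\Big) - \gamma
= R\!\left(\tfrac{2+t_z}{T}\right),
\end{align*}
with
\begin{align*}
R(a) := \ln(a)\bigl(\tanh(a) - 1\bigr) + \int_a^\infty \frac{\ln x}{\cosh^2 x}\, dx,
\end{align*}
which is exactly the function $R_1(a)$ from \eqref{R1def}.

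The bound $|R(a)| \leq C \ln(a)\, e^{-2a}$ for $a \geq 2$ was already established in \eqref{R1logtanh}, using $|\tanh(a) - 1| \leq 2e^{-2a}$ and the estimate $\int_a^\infty e^{-2x} \ln x\, dx \leq C e^{-2a}\ln(a)$. Since $K$ is a compact subset of $(0,2)$ there is a constant $t_{\min} > 0$ with $t_z \geq t_{\min}$ for all $t_z \in K$, and together with $T \in (0,1)$ this forces $a = (2+t_z)/T \geq 2 + t_{\min} > 2$, so the estimate applies. Substituting $a = (2+t_z)/T$ yields the stated bound with exponential factor $e^{-(4+2t_z)/T}$.

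The only subtlety, and the single place where compactness of $K$ matters, is in ensuring the lower bound $a \geq 2$ needed to invoke \eqref{R1logtanh} uniformly in $t_z$. There is no real obstacle: the argument is a direct transcription of the proof of Lemma \ref{I2lemma} with $4$ replaced by $4+2t_z$ and $2$ replaced by $2+t_z$, and the constant $C$ may be chosen uniformly for $t_z \in K$.
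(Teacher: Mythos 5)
Your proof is correct and follows the paper's own proof of this lemma essentially verbatim: same change of variables, same integration by parts, same identification of the remainder with $R_1\big(\tfrac{2+t_z}{T}\big)$ from \eqref{R1def}, and the same invocation of \eqref{R1logtanh}. The observation about compactness of $K$ ensuring $a\geq 2$ is also the role it plays in the paper.
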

\begin{proof}
Making the change of variables $x = \frac{\epsilon}{2T}$ in the expression for $J_2(t_z,T)$, we find
\begin{align*}
J_2(t_z,T) & = \int_{0}^{\frac{2+t_z}{T}}  \frac{\tanh{x}}{x}dx
 =  \ln(x) \tanh{x} \Big|_{0}^{\frac{2+t_z}{T}}
- \int_{0}^{\frac{2+t_z}{T}} \frac{\ln{x}}{\cosh^2{x}}dx
	\\
& =  \ln\Big(\frac{2+t_z}{T}\Big) \tanh\Big(\frac{2+t_z}{T}\Big)
- \int_{0}^\infty \frac{\ln{x}}{\cosh^2{x}} dx
+ \int_{\frac{2+t_z}{T}}^\infty  \frac{\ln{x}}{\cosh^2{x}}dx
\end{align*}
for $t_z \in K$ and $T \in (0,1)$.
Since $\int_{0}^\infty \frac{\ln{x}}{\cosh^2{x}} dx = \ln \left(\frac{\pi }{4}\right)-\gamma$, it follows that
\begin{align*}
& J_2(t_z,T) - \ln\Big(\frac{2+t_z}{T}\Big) + \ln \Big(\frac{\pi }{4}\Big)-\gamma
= R_1\Big(\frac{2+t_z}{T}\Big),
\end{align*}
where $R_1$ is the function in (\ref{R1def}). The desired conclusion now follows from (\ref{R1logtanh}).
\end{proof}

We rewrite the equation (\ref{Tneeleq2}) for $T_N$ as follows:
\begin{align}\nonumber
& \frac{1}{U} - b_0(t_z)
+ N_{t_z}(0)\bigg(- \ln\Big(\frac{2+t_z}{T_N}\Big) + \ln \Big(\frac{\pi }{4} \Big)  - \gamma\bigg)
	\\ \nonumber
= &\; J_1(t_z, T_N) + N_{t_z}(0)\bigg(J_2(t_z, T_N)
- \ln\Big(\frac{2+t_z}{T_N}\Big) + \ln \Big(\frac{\pi }{4} \Big)  - \gamma\bigg).
\end{align}
Applying Lemma \ref{J1lemma} and Lemma \ref{J2lemma}, we see that the absolute value of the right-hand side is bounded above by
\begin{align*}
&  C \bigg( T_N^2 + \ln\Big(\frac{2+t_z}{T_N}\Big) e^{-\frac{4+2t_z}{T_N}} \bigg)
\leq C T_N^2
\end{align*}
for $t_z \in K$ and $T_N \in (0,1)$. It follows that
\begin{align}\label{logTNR4}
& \frac{1}{U} - b_0(t_z)
+ N_{t_z}(0)\bigg(- \ln\Big(\frac{2+t_z}{T_N}\Big) + \ln \Big(\frac{\pi }{4} \Big)  - \gamma\bigg) = R_4(t_z, T_N),
 \end{align}
 where the function $R_4$ satisfies
\begin{align}\label{R4bound}
|R_4(t_z, T)| \leq C T^2 \qquad \text{for all $t_z \in K$ and $T \in (0,1)$.}
\end{align}
Solving (\ref{logTNR4}) for $\ln T_N$, we obtain
\begin{align}\label{logTNtzsolvedfor}
\ln T_N(U,t_z)
= \gamma + \ln\Big(\frac{8+4t_z}{\pi}\Big) - \frac{1}{N_{t_z}(0)}\bigg(\frac{1}{U} - b_0(t_z)\bigg) + \frac{R_4(t_z, T_N(U,t_z))}{N_{t_z}(0)}.
\end{align}
By Lemma \ref{TNlemma}, $T_N \downarrow 0$ as  $U \downarrow 0$ uniformly for $t_z \geq 0$, so for all sufficiently small $U$ we have $T_N \in (0,1)$. Thus (\ref{R4bound}) shows that $R_4(t_z, T_N(U,t_z)) = O(1)$ as $U \downarrow 0$ uniformly for $t_z \in K$. But then (\ref{logTNtzsolvedfor}) implies that $T_N(U,t_z) = O(e^{-\frac{1}{N_{t_z}(0) U}})$ as $U \downarrow 0$ uniformly for $t_z \in K$. Substituting this back into (\ref{R4bound}), we infer that $R_4(t_z, T_N(U,t_z)) = O(e^{-\frac{2}{N_{t_z}(0) U}})$ as $U \downarrow 0$ uniformly for $t_z \in K$. Inserting this in (\ref{logTNtzsolvedfor}), we arrive at (\ref{TNexpansion3D}). This completes the proof of Theorem \ref{3DNeelth}.

\section{Asymptotics for $\hat{m}$}\label{hatmsec}
In this section, we study the asymptotic behavior of the mean-field ratio $\hat{m}$ defined in (\ref{hatmdef}) as $U \downarrow 0$. Taking the difference of (\ref{mAFeq}) and (\ref{TNeelequationNtz}), and using that $N_{t_z}(\epsilon) = 0$ for $\epsilon > 4 + 2t_z$ as a consequence of (\ref{3Ddensityofstates}), we obtain
$$0 = \int_0^{4 + 2t_z} N_{t_z}(\epsilon)  
\bigg(\frac{\tanh(\frac{\sqrt{\Delta_{\AF}^2 + \epsilon^2}}{2T})}{\sqrt{\Delta_{\AF}^2 + \epsilon^2}}  -
\frac{\tanh(\frac{\epsilon}{2T_N})}{\epsilon} \bigg)d\epsilon,
$$
where $\Delta_{\AF} = \Delta_{\AF}(U, t_z, T)$ and $T_N = T_N(U,t_z)$. Performing the change of variables $\epsilon \to T_N\epsilon$, we get
\begin{align}\label{mAFeq3}
\int_0^{\frac{4 + 2t_z}{T_N}} N_{t_z}(T_N\epsilon)  
\bigg(\frac{\tanh(\frac{\sqrt{\hat{m}^2 + \epsilon^2}}{2T/T_N})}{\sqrt{\hat{m}^2 + \epsilon^2}}  -
\frac{\tanh(\frac{\epsilon}{2})}{\epsilon} \bigg) d\epsilon 
= 0,
\end{align}
where $\hat{m} = \hat{m}(U, t_z, T)$ is given by (\ref{hatmdef}). We will prove Theorems \ref{2Dmeanfieldth}--\ref{3Dmeanfieldimprovedth} by approximating $N_{t_z}(T_N\epsilon)$ in (\ref{mAFeq3}) by its leading asymptotics for small $T_N\epsilon$. Since $N_{0}(\epsilon)$ has a logarithmic singularity at $\epsilon = 0$ whereas $N_{t_z}(\epsilon)$ has no such singularity when $t_z > 0$, qualitatively different asymptotic formulas are obtained for $t_z = 0$ and $t_z > 0$. 

\subsection{Three lemmas}
Before turning to the proofs of the theorems, we prove three lemmas that will be used repeatedly. The first lemma establishes an upper bound on $\hat{m}$. Recall that, by convention, $\tanh(\cdot/y) = 1$ for $y = 0$.

\begin{lemma}\label{mhatboundlemma}
For $U> 0$, $t_z \geq 0$, and $T \in [0, T_N(U,t_z)]$, it holds that $\hat{m}(U, t_z, T) \in [0, 2)$.
\end{lemma}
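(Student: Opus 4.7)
The plan is to establish the upper bound by reducing it to an elementary pointwise inequality between hyperbolic functions. The lower bound $\hat{m}(U,t_z,T) \geq 0$ and the boundary case $\hat m(U,t_z,T_N(U,t_z)) = 0$ are immediate from Lemma \ref{mAFlemma}, so the only real work is to show $\Delta_{\AF}(U, t_z, T) < 2 T_N(U, t_z)$ for $T \in [0, T_N(U, t_z))$, where $\Delta_{\AF} > 0$.

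The first step is to reintroduce the function $F_T$ from \eqref{Fmdef} so that $F_T(\Delta_{\AF}(U, t_z, T)) = 1/U = F_{T_N(U,t_z)}(0)$. Two monotonicity properties will be exploited. First, as verified in the proof of Lemma \ref{mAFlemma}, $\Delta \mapsto F_T(\Delta)$ is strictly decreasing on $(0,\infty)$. Second, $T \mapsto F_T(\Delta)$ is nonincreasing for each fixed $\Delta \geq 0$: differentiating under the integral gives
\[
\partial_T F_T(\Delta) = -\int_0^\infty \frac{N_{t_z}(\epsilon)}{2T^2\cosh^2\!\big(\tfrac{\sqrt{\Delta^2+\epsilon^2}}{2T}\big)}\, d\epsilon \leq 0.
\]
Suppose, for contradiction, that $\Delta_{\AF}(U,t_z,T) \geq 2T_N(U,t_z)$. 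Then monotonicity in $\Delta$ gives $1/U = F_T(\Delta_{\AF}) \leq F_T(2T_N)$, and monotonicity in $T$ gives $F_T(2T_N) \leq F_0(2T_N)$. It therefore suffices to prove $F_0(2T_N) < F_{T_N}(0) = 1/U$.

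After the substitution $\epsilon = 2T_N x$ in both integrals, this strict inequality becomes
\[
\int_0^\infty N_{t_z}(2T_N x)\left(\frac{1}{\sqrt{1+x^2}} - \frac{\tanh x}{x}\right) dx < 0.
\]
The pointwise inequality $\tanh(x)/x > 1/\sqrt{1+x^2}$ for $x > 0$ is equivalent, after multiplying out and using $\cosh^2(x) - \sinh^2(x) = 1$, to $\sinh(x) > x$ for $x > 0$, which is elementary. Since $N_{t_z}$ is a nonnegative probability density supported on $[-4 - 2t_z, 4+2t_z]$, the integrand above is $\leq 0$ pointwise and strictly negative on a set of positive measure, which yields the desired strict inequality.

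The only potential obstacle is identifying the correct direction of the pointwise comparison, but once one notes that equality holds at $x = 0$ and checks a single value (or the asymptotics at $0$ and $\infty$), the reduction to $\sinh(x) > x$ is short. No refined estimates or convexity arguments are required; the lemma follows entirely from the two monotonicities of $F_T$ and the inequality $\sinh(x) > x$.
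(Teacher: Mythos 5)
Your argument is correct. The lower bound and the case $T = T_N$ are indeed immediate from Lemma~\ref{mAFlemma}, and your two monotonicity properties of $F_T$ are exactly the tools used implicitly in the paper: the paper works directly in the rewritten equation~\eqref{mAFeq3}, replaces $\tanh(\cdot) \leq 1$ and $x \geq 2$ to bound the first term of the integrand by $1/\sqrt{4+\epsilon^2}$, and then shows the resulting bracket is negative; unpacking your monotonicity chain $F_T(\Delta_\AF) \leq F_T(2T_N) \leq F_0(2T_N) < F_{T_N}(0)$ after the substitution $\epsilon = 2T_N x$ yields precisely the same pointwise inequality. The one genuine improvement in your write-up is the proof of that inequality: you clear denominators, square, and invoke $\cosh^2 - \sinh^2 = 1$ to reduce it to $\sinh x > x$, which is cleaner than the paper's route via the auxiliary function $g(\epsilon) = \arctanh\bigl(\epsilon/\sqrt{4+\epsilon^2}\bigr) - \epsilon/2$ and its derivative. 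So: same overall strategy, with a tidier elementary lemma at the end. One small caveat worth noting if you wrote this out in full: the differentiation $\partial_T F_T(\Delta)$ is only valid for $T > 0$, but the inequality $F_T(\Delta) \leq F_0(\Delta)$ for all $T \geq 0$ follows directly from $\tanh(a/(2T)) \leq 1$ under the convention for $T = 0$, so no issue arises.
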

\begin{proof}
The function $\epsilon \mapsto N_{t_z}(T_N\epsilon)$ is non-negative for all $\epsilon > 0$ and strictly positive for all sufficiently small $\epsilon > 0$. Consequently, in view of (\ref{mAFeq3}), it is enough to show that
\begin{align}\label{tanhinequality}
\frac{\tanh(\frac{\sqrt{x^2 + \epsilon^2}}{2y})}{\sqrt{x^2 + \epsilon^2}}  -
\frac{\tanh(\frac{\epsilon}{2})}{\epsilon} < 0 \qquad \text{whenever $\epsilon > 0$, $x \geq 2$, and $y \in [0,1]$}.
\end{align}

If $\epsilon > 0$, $x \geq 2$, and $y \in [0,1]$, then
\begin{align*}
\frac{\tanh(\frac{\sqrt{x^2 + \epsilon^2}}{2y})}{\sqrt{x^2 + \epsilon^2}}  -
\frac{\tanh(\frac{\epsilon}{2})}{\epsilon} 
\leq f(\epsilon), \qquad \text{where $f(\epsilon) := \frac{1}{\sqrt{4 + \epsilon^2}}  -
\frac{\tanh(\frac{\epsilon}{2})}{\epsilon}$}. 
\end{align*}
Thus (\ref{tanhinequality}) will follow if we can show that $f(\epsilon) < 0$ for all $\epsilon > 0$. As $\epsilon \to 0$, we have $f(\epsilon) = -\frac{\epsilon^2}{48} + O(\epsilon^4)$, so $f(\epsilon) < 0$ for all sufficiently small $\epsilon$. Moreover, $f(\epsilon) \neq 0$ for all $\epsilon > 0$, because $f(\epsilon) = 0$ if and only if $g(\epsilon) = 0$, where $g(\epsilon) := \arctanh(\frac{\epsilon}{\sqrt{4 + \epsilon^2}}) - \frac{\epsilon}{2}$.
But $g(\epsilon) = -\frac{\epsilon^3}{48} + O(\epsilon^5)$ as $\epsilon \to 0$ and $g'(\epsilon) = \frac{1}{\sqrt{4 + \epsilon^2}}  - \frac{1}{2} < 0$ for $\epsilon > 0$, so $g(\epsilon) < 0$ for all $\epsilon > 0$. This shows that $f(\epsilon) < 0$ for all $\epsilon > 0$ and thus completes the proof.
\end{proof}

The next lemma shows that the values of the integrals
$$\int_{0}^{2} \frac{\tanh(\frac{\sqrt{x^2 + \epsilon^2}}{2y})}{\sqrt{x^2 + \epsilon^2}} d\epsilon \geq 0 \quad \text{and} \quad
\int_{0}^{2} (1+|\ln{\epsilon}|) \frac{\tanh(\frac{\sqrt{x^2 + \epsilon^2}}{2y})}{\sqrt{x^2 + \epsilon^2}} d\epsilon \geq 0$$ 
tend to $+\infty$ if and only if $x \in [0,2]$ and $y \in [0,1]$ tend to $0$ simultaneously.

\begin{lemma}\label{int02lemma}
$(a)$ For any $r > 0$, there is a constant $C > 0$ such that
\begin{align}
\int_{0}^{2} (1+|\ln{\epsilon}|) \frac{\tanh(\frac{\sqrt{x^2 + \epsilon^2}}{2y})}{\sqrt{x^2 + \epsilon^2}} d\epsilon \leq C
\end{align}
for all $x \in [0,2]$ and $y \in [0,1]$ such that $\sqrt{x^2 + y^2} \geq r$. 

$(b)$ There exist constants $c > 0$ and $r > 0$ such that
\begin{align}
\int_{0}^{2} \frac{\tanh(\frac{\sqrt{x^2 + \epsilon^2}}{2y})}{\sqrt{x^2 + \epsilon^2}} d\epsilon \geq c|\ln(x^2+y^2)|
\end{align}
for all $x \in [0,2]$ and $y \in [0,1]$ such that $0 < \sqrt{x^2 + y^2} \leq r$. 
\end{lemma}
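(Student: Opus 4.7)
The plan is to prove each part by elementary estimates on $\tanh$ after a suitable splitting of the $(x,y)$-region; the guiding observation is that the quantity $\max(x,y)$ is what actually controls the size of the integral, and this is also how I would extract the factor $|\ln(x^2+y^2)|$ in (b).

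For part (a), I would exploit the fact that $\sqrt{x^2+y^2}\geq r$ forces at least one of $x\geq r/\sqrt{2}$ or $y\geq r/\sqrt{2}$ to hold. In the first case, $\tanh\leq 1$ together with $1/\sqrt{x^2+\epsilon^2}\leq 1/x\leq\sqrt{2}/r$ bounds the integrand by $(\sqrt{2}/r)(1+|\ln\epsilon|)$. In the second case, the elementary inequality $\tanh(u)\leq u$ yields
\[
\frac{\tanh(\sqrt{x^2+\epsilon^2}/(2y))}{\sqrt{x^2+\epsilon^2}}\leq\frac{1}{2y}\leq\frac{\sqrt{2}}{r},
\]
and again the integrand is dominated by a constant times $1+|\ln\epsilon|$. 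Since $\int_0^2(1+|\ln\epsilon|)\,d\epsilon<\infty$, part (a) follows at once.

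For part (b), set $m:=\max(x,y)$, so that $\rho/\sqrt{2}\leq m\leq\rho$ where $\rho:=\sqrt{x^2+y^2}$. I would choose $r\in(0,1)$, so that the subinterval $[m,1]\subset[0,2]$ is nonempty. On $[m,1]$ I would use: (i) $x\leq m\leq\epsilon$ gives $\sqrt{x^2+\epsilon^2}\leq\sqrt{2}\,\epsilon$; (ii) $y\leq m\leq\epsilon$ gives $\sqrt{x^2+\epsilon^2}/(2y)\geq\epsilon/(2y)\geq 1/2$ when $y>0$, hence $\tanh(\cdot)\geq\tanh(1/2)$; the convention $\tanh(\cdot/0)\equiv 1$ covers $y=0$. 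Combining (i) and (ii) gives
\[
\int_0^2\frac{\tanh(\sqrt{x^2+\epsilon^2}/(2y))}{\sqrt{x^2+\epsilon^2}}\,d\epsilon\;\geq\;\frac{\tanh(1/2)}{\sqrt{2}}\int_m^1\frac{d\epsilon}{\epsilon}\;=\;\frac{\tanh(1/2)}{\sqrt{2}}\ln(1/m).
\]
Since $m\leq\rho<1$, one has $\ln(1/m)\geq\ln(1/\rho)=\tfrac{1}{2}|\ln(x^2+y^2)|$, and (b) follows with $c=\tanh(1/2)/(2\sqrt{2})$.

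I do not expect any substantial obstacle. The only mild subtlety is the treatment of the boundary $y=0$, where one must rely on the convention $\tanh(\cdot/0)\equiv 1$ adopted in the paper rather than on a pointwise $\tanh$-estimate; the degenerate case $x=y=0$ is excluded in (b) by the hypothesis $\rho>0$. All other steps are routine one-variable inequalities once the $(x,y)$-splitting is in place.
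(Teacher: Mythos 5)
Your proof of part (a) follows essentially the same scheme as the paper: use $\tanh\leq 1$ when $x$ is bounded below, use $\tanh(u)\leq u$ when $y$ is bounded below, and conclude that the integral is bounded by a constant times $1/\max(x,y)$, which stays finite away from the origin. This is correct.

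Your proof of part (b) is correct and takes a genuinely different, and simpler, route than the paper. The paper lower-bounds $\tanh$ piecewise ($\tanh a\geq a/2$ for $a\leq1$, $\tanh a\geq 1/2$ for $a\geq1$), evaluates the resulting auxiliary integral $\phi(x,y)$ in closed form, and then performs a case analysis on $4y^2\gtrless x^2$ with an asymptotic argument to extract $|\ln(x^2+y^2)|$. You instead restrict the integral to $[\max(x,y),1]$, observe that on this interval $\epsilon\geq x$ gives $\sqrt{x^2+\epsilon^2}\leq\sqrt{2}\,\epsilon$ and $\epsilon\geq y$ gives $\tanh(\sqrt{x^2+\epsilon^2}/(2y))\geq\tanh(1/2)$, and so the integrand is pointwise $\geq\tanh(1/2)/(\sqrt{2}\,\epsilon)$. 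Integrating produces the logarithm directly, with no explicit evaluation or case split, and the constant $c=\tanh(1/2)/(2\sqrt{2})$ follows from $\max(x,y)\leq\sqrt{x^2+y^2}$. Your handling of $y=0$ via the paper's convention $\tanh(\cdot/0)\equiv1$ is also correct: the pointwise bound is trivially improved there and the same $c$ works. The trade-off is that the paper's explicit $\phi(x,y)$ is potentially reusable for sharper asymptotics, but for the stated inequality your argument is cleaner and yields an explicit constant.
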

\begin{proof}
(a) Since $\tanh(a) \leq 1$ for $a \geq 0$, we have
$$\int_{0}^{2}(1+|\ln{\epsilon}|) \frac{\tanh(\frac{\sqrt{x^2 + \epsilon^2}}{2y})}{\sqrt{x^2 + \epsilon^2}} d\epsilon
\leq \int_{0}^{2}(1+|\ln{\epsilon}|) \frac{\tanh(\frac{\sqrt{x^2 + \epsilon^2}}{2y})}{x} d\epsilon
\leq \frac{C}{x}$$
and, since $\tanh(a) \leq a$ for $a \geq 0$, we have
$$\int_{0}^{2}(1+|\ln{\epsilon}|) \frac{\tanh(\frac{\sqrt{x^2 + \epsilon^2}}{2y})}{\sqrt{x^2 + \epsilon^2}} d\epsilon
\leq \int_{0}^{2}(1+|\ln{\epsilon}|) \frac{\frac{\sqrt{x^2 + \epsilon^2}}{2y}}{\sqrt{x^2 + \epsilon^2}} d\epsilon
\leq \frac{C}{2y}.$$
Hence
$$\int_{0}^{2}(1+|\ln{\epsilon}|) \frac{\tanh(\frac{\sqrt{x^2 + \epsilon^2}}{2y})}{\sqrt{x^2 + \epsilon^2}} d\epsilon
\leq \frac{C}{\max(x,y)},$$
which shows that the integral remains bounded if both $x$ and $y$ stay away from $0$. Thus, assertion $(a)$ follows.

Using that $\tanh(a) \geq a/2$ for $0 \leq a \leq 1$, and $\tanh(a) \geq 1/2$ for $a \geq 1$, we obtain, for $x \in [0,2]$ and $y \in [0,1]$,
\begin{align}\label{phiintphi}
\int_{0}^{2} \frac{\tanh(\frac{\sqrt{x^2 + \epsilon^2}}{2y})}{\sqrt{x^2 + \epsilon^2}} d\epsilon
\geq \frac{\phi(x,y)}{2}
\end{align}
where
$$\phi(x,y) := \begin{cases}
\int_{0}^{\sqrt{4y^2 - x^2}} \frac{1}{2y} d\epsilon
+ \int_{\sqrt{4y^2 - x^2}}^2 \frac{1}{\sqrt{x^2 + \epsilon^2}} d\epsilon & \text{if $4y^2 \geq x^2$},
	\\
\int_0^2 \frac{1}{\sqrt{x^2 + \epsilon^2}} d\epsilon & \text{if $4y^2 \leq x^2$}.
\end{cases}
$$
If $4y^2 \geq x^2 \geq 0$, then
$$\phi(x,y) = \sqrt{1 - \frac{x^2}{4y^2}}
+ \frac{1}{2} \ln\bigg(\frac{(\sqrt{4+x^2}+2) (1-\sqrt{1-\frac{x^2}{4 y^2}})}{(\sqrt{4+x^2}-2)(1 + \sqrt{1-\frac{x^2}{4 y^2}})}\bigg),$$
so for all sufficiently small $x, y \geq 0$ satisfying $4y^2 \geq x^2 \geq 0$, we have
\begin{align}\label{phigeq1}
\phi(x,y) \geq c\bigg| \ln\bigg(\frac{1-\sqrt{1-\frac{x^2}{4 y^2}}}{\sqrt{4+x^2}-2}\bigg)\bigg|
\geq c\bigg| \ln\bigg(\frac{\frac{x^2}{4 y^2}}{x^2}\bigg)\bigg| \geq c|\ln(y^2)| \geq c|\ln(x^2+y^2)|.
\end{align}
On the other hand, if $0 \leq 4y^2 \leq x^2$, then
$$\phi(x,y) = \frac{1}{2} \ln\bigg(\frac{\sqrt{4+x^2}+2}{\sqrt{4+x^2}-2}\bigg),$$
so for all sufficiently small $x, y \geq 0$ satisfying $0 \leq 4y^2  \leq x^2$, we have
\begin{align}\label{phigeq2}
\phi(x,y) \geq c|\ln(x^2)| \geq c|\ln(x^2 + y^2)|.
\end{align}
Assertion $(b)$ follows from (\ref{phiintphi}), (\ref{phigeq1}), and (\ref{phigeq2}). 
\end{proof}

We will also need the following lemma. 

\begin{lemma}\label{intxylemma}
$(a)$ There exists a $C > 0$ such that
\begin{align}\label{intepsilonxy}
\int_{2}^{\infty} \big( 1 + |\ln \epsilon |\big)
\bigg|\frac{\tanh(\frac{\sqrt{x^2 + \epsilon^2}}{2y})}{\sqrt{x^2 + \epsilon^2}}  -
\frac{\tanh(\frac{\epsilon}{2})}{\epsilon} \bigg| d\epsilon \leq C
\end{align}
for $x \in [0,2]$ and $y \in [0,1]$.

$(b)$ There exists a $C > 0$ such that, for $\delta = 0, 1$,
\begin{align}\label{intepsilonxyz}
\int_{z}^{\infty} 
\big( 1 + \delta |\ln \epsilon |\big) \bigg|\frac{\tanh(\frac{\sqrt{x^2 + \epsilon^2}}{2y})}{\sqrt{x^2 + \epsilon^2}}  -
\frac{\tanh(\frac{\epsilon}{2})}{\epsilon} \bigg| d\epsilon \leq C( 1 + \delta \ln z) \bigg(\frac{e^{-z}}{z}  + \frac{1}{z^2}\bigg),
\end{align}
for  $x \in [0,2]$, $y \in [0,1]$, and $z \geq 2$.

$(c)$ There exists a $C > 0$ such that, for $\delta = 0, 1$,
\begin{align}\label{intepsilon2xyz}
\int_{2}^{z} \epsilon^2 \big( 1 + \delta |\ln \epsilon |\big)
\bigg|\frac{\tanh(\frac{\sqrt{x^2 + \epsilon^2}}{2y})}{\sqrt{x^2 + \epsilon^2}}  -
\frac{\tanh(\frac{\epsilon}{2})}{\epsilon} \bigg| d\epsilon \leq C +  C|\ln z |\big( 1 + \delta |\ln z |\big)
\end{align}
for $x \in [0,2]$, $y \in [0,1]$, and $z \geq 2$. 
\end{lemma}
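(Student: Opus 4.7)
The plan is to reduce all three parts to a single pointwise bound on the integrand for $\epsilon \geq 2$, and then to carry out elementary calculus estimates. The statement does not look conceptually difficult, so the main effort is in bookkeeping.

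\textbf{Step 1 (Pointwise bound).} For $\epsilon \geq 2$, $x \in [0,2]$, $y \in [0,1]$, split
\begin{align*}
\frac{\tanh(\frac{\sqrt{x^2+\epsilon^2}}{2y})}{\sqrt{x^2+\epsilon^2}} - \frac{\tanh(\frac{\epsilon}{2})}{\epsilon}
= \underbrace{\frac{\tanh(\frac{\sqrt{x^2+\epsilon^2}}{2y}) - 1}{\sqrt{x^2+\epsilon^2}}}_{A_1}
+ \underbrace{\Bigl(\frac{1}{\sqrt{x^2+\epsilon^2}} - \frac{1}{\epsilon}\Bigr)}_{A_2}
+ \underbrace{\frac{1 - \tanh(\frac{\epsilon}{2})}{\epsilon}}_{A_3}.
\end{align*}
The elementary inequality $0 \leq 1 - \tanh(a) \leq 2 e^{-2a}$ for $a \geq 0$, the convention $\tanh(\cdot/y) = 1$ at $y = 0$, the bound $\sqrt{x^2+\epsilon^2} \geq \epsilon$, and $y \leq 1$ together give $|A_1| \leq 2 e^{-\epsilon}/\epsilon$ (with $A_1 = 0$ when $y=0$). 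The identity $\sqrt{x^2+\epsilon^2} - \epsilon = x^2/(\sqrt{x^2+\epsilon^2} + \epsilon) \leq x^2/(2\epsilon)$ then yields $|A_2| \leq x^2/(2\epsilon^3) \leq 2/\epsilon^3$, and $|A_3| \leq 2 e^{-\epsilon}/\epsilon$. Summing,
\begin{align}\label{pointbound}
\bigg|\frac{\tanh(\frac{\sqrt{x^2+\epsilon^2}}{2y})}{\sqrt{x^2+\epsilon^2}} - \frac{\tanh(\frac{\epsilon}{2})}{\epsilon}\bigg|
\leq \frac{C}{\epsilon^3} + \frac{C e^{-\epsilon}}{\epsilon}
\qquad (\epsilon \geq 2).
\end{align}

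\textbf{Step 2 (Parts (a) and (c)).} For (a), integrate \eqref{pointbound} against $1 + |\ln\epsilon|$ over $[2,\infty)$; since $\int_2^\infty (1+|\ln\epsilon|)\epsilon^{-3}d\epsilon$ and $\int_2^\infty (1+|\ln\epsilon|)e^{-\epsilon}\epsilon^{-1}d\epsilon$ are finite, (a) follows. For (c), integrate \eqref{pointbound} against $\epsilon^2(1+\delta|\ln\epsilon|)$ over $[2,z]$: the $1/\epsilon^3$-contribution becomes $\int_2^z(1+\delta|\ln\epsilon|)\epsilon^{-1}d\epsilon \leq C(\ln z + \delta (\ln z)^2)$, while the $e^{-\epsilon}/\epsilon$-contribution is bounded by $\int_2^\infty \epsilon(1+|\ln\epsilon|)e^{-\epsilon}d\epsilon \leq C$ uniformly in $z$. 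Adding gives \eqref{intepsilon2xyz}.

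\textbf{Step 3 (Part (b)).} Integrate \eqref{pointbound} against $1+\delta|\ln\epsilon|$ over $[z,\infty)$. Using $\int_z^\infty\epsilon^{-3}d\epsilon = 1/(2z^2)$ and integration by parts $\int_z^\infty \epsilon^{-3}\ln\epsilon\, d\epsilon = \ln(z)/(2z^2)+1/(4z^2)$, the $1/\epsilon^3$-part contributes at most $C(1+\delta\ln z)/z^2$. For the exponential part, $\int_z^\infty \epsilon^{-1} e^{-\epsilon}d\epsilon \leq e^{-z}/z$, and for $z \geq e$ the function $\ln\epsilon/\epsilon$ is decreasing, so $\int_z^\infty \epsilon^{-1}\ln(\epsilon)e^{-\epsilon}d\epsilon \leq (\ln z/z)\int_z^\infty e^{-\epsilon}d\epsilon = (\ln z)e^{-z}/z$; on the compact range $z \in [2, e]$ all quantities are comparable to positive constants, so the inequality is trivial there. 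Summing gives the claimed bound $C(1+\delta\ln z)(e^{-z}/z + 1/z^2)$.

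\textbf{Main obstacle.} There is no conceptually hard step: the argument is a careful three-term decomposition followed by elementary estimates. The points that require some care are (i) treating the $y = 0$ case uniformly through the convention $\tanh(\cdot/0) \equiv 1$, which makes $A_1$ vanish rather than blow up; (ii) separating the range $z \in [2, e]$ from $z \geq e$ in Step 3 so that the monotonicity of $\ln\epsilon/\epsilon$ can be used; and (iii) keeping the two parameters $x$ and $y$ uniform in all bounds, which is what \eqref{pointbound} achieves.
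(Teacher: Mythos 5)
Your proof is correct and follows essentially the same route as the paper: the identical three-term decomposition of the integrand, the bounds $|\tanh(a)-1|\leq 2e^{-2a}$ and $\sqrt{x^2+\epsilon^2}\geq \max(\epsilon,\epsilon y)$ for the two $\tanh$-terms, and the algebraic bound $|1/\sqrt{x^2+\epsilon^2}-1/\epsilon|\leq x^2/(2\epsilon^3)$ for the middle term, yielding the same pointwise estimate $O(e^{-\epsilon}/\epsilon+1/\epsilon^3)$ which is then integrated against the various weights. The only cosmetic difference is that the paper retains the explicit $x^2$ factor throughout the intermediate bounds before discarding it via $x\leq 2$, while you absorb it immediately.
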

\begin{proof}
$(a)$ Let $h_1(x,y)$ be the left-hand side of (\ref{intepsilonxy}). For $x \in [0,2]$ and $y \in [0,1]$, we have
\begin{align*}
h_1(&x,y) \leq \int_{2}^{\infty} \big( 1 + |\ln \epsilon |\big)
\bigg(\bigg|\frac{\tanh(\frac{\sqrt{x^2 + \epsilon^2}}{2y}) - 1}{\sqrt{x^2 + \epsilon^2}}\bigg|  +
\bigg|\frac{\tanh(\frac{\epsilon}{2}) - 1}{\epsilon}\bigg| + 
\bigg|\frac{1}{\sqrt{x^2 + \epsilon^2}} - \frac{1}{\epsilon} \bigg| \bigg)d\epsilon.
\end{align*}
We next use the three inequalities (i) $|\tanh(a) - 1 | \leq 2 e^{-2a}$ for $a \geq 0$, (ii) $\sqrt{x^2 + \epsilon^2} \geq \epsilon y$ for $x \geq 0$ and $y \in [0,1]$, and (iii) $|\frac{1}{\sqrt{a + 1}} - 1| \leq \frac{a}{2}$ for $a \geq 0$. This gives
\begin{align}\nonumber
h_1(x,y) & \leq \int_{2}^{\infty} \big( 1 + |\ln \epsilon |\big)
\bigg(\frac{2 e^{-\frac{\sqrt{x^2 + \epsilon^2}}{y}}}{\sqrt{x^2 + \epsilon^2}} +
\frac{2 e^{-\epsilon}}{\epsilon} + 
 \frac{1}{\epsilon} \bigg|\frac{1}{\sqrt{\frac{x^2}{\epsilon^2} + 1}} - 1 \bigg| \bigg)d\epsilon 
	\\ \label{hestimate}
& \leq \int_{2}^{\infty} \big( 1 + |\ln \epsilon |\big)\bigg(\frac{4 e^{-\epsilon}}{\epsilon} + \frac{x^2}{2\epsilon^3} \bigg)d\epsilon 
\leq C
\end{align}
for $x \in [0,2]$ and $y \in [0,1]$, which proves $(a)$. 

$(b)$ The same steps that led to (\ref{hestimate}) yield, for $\delta = 0,1$,
\begin{align*}
\int_{z}^{\infty} \big( 1 + \delta |\ln \epsilon |\big)
\bigg|\frac{\tanh(\frac{\sqrt{x^2 + \epsilon^2}}{2y})}{\sqrt{x^2 + \epsilon^2}}  -
\frac{\tanh(\frac{\epsilon}{2})}{\epsilon} \bigg| d\epsilon
\leq \int_{z}^{\infty} \big( 1 + \delta |\ln \epsilon |\big) \bigg(\frac{4 e^{-\epsilon}}{\epsilon} + \frac{x^2}{2\epsilon^3} \bigg)d\epsilon
\end{align*}
for $x \in [0,2]$, $y \in [0,1]$, and $z \geq 2$. Since
$$\int_{z}^{\infty} \big( 1 + \delta |\ln \epsilon |\big) \bigg(\frac{4 e^{-\epsilon}}{\epsilon} + \frac{x^2}{2\epsilon^3} \bigg)d\epsilon 
\leq
C( 1 + \delta \ln z) \bigg(\frac{e^{-z}}{z}  + \frac{x^2}{z^2}\bigg),$$
the desired estimate follows.

$(c)$ The same steps that led to (\ref{hestimate}) yield, for $\delta = 0,1$,
\begin{align*}
\int_{2}^{z} \epsilon^2 \big( 1 + \delta |\ln \epsilon |\big)
\bigg|\frac{\tanh(\frac{\sqrt{x^2 + \epsilon^2}}{2y})}{\sqrt{x^2 + \epsilon^2}}  -
\frac{\tanh(\frac{\epsilon}{2})}{\epsilon} \bigg| d\epsilon
 \leq  \int_{2}^{z} \epsilon^2 \big( 1 + \delta |\ln \epsilon |\big)\bigg(\frac{4 e^{-\epsilon}}{\epsilon} + \frac{x^2}{2\epsilon^3} \bigg)d\epsilon 
\end{align*}
for $x \in [0,2]$, $y \in [0,1]$, and $z \geq 2$. Since
$$\int_{2}^{z} \epsilon^2 \big( 1 + \delta |\ln \epsilon |\big)\bigg(\frac{4 e^{-\epsilon}}{\epsilon} + \frac{x^2}{2\epsilon^3} \bigg)d\epsilon 
\leq C +  C|\ln z |\big( 1 + \delta |\ln z |\big)x^2,$$
the desired estimate follows.
\end{proof}

\subsection{Proof of Theorem \ref{2Dmeanfieldth}}\label{2Dmeanfieldsubsec}
Assume that $t_z=0$. We first show that $\hat{m}$ and $T/T_N$ cannot both be small.

\begin{lemma}\label{2DhatmTlemma}
There is an $r > 0$ such that
\begin{align}
\sqrt{\hat{m}(U, 0, T)^2 + \Big(\frac{T}{T_N(U,0)}\Big)^2} \geq r
\end{align}
for all sufficiently small $U > 0$ and all $T \in [0, T_N(U,0)]$.
\end{lemma}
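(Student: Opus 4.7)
The plan is to argue by contradiction. Suppose there exist sequences $U_k \downarrow 0$ and $T_k \in [0, T_N(U_k,0)]$ with $x_k := \hat m(U_k,0,T_k) \to 0$ and $y_k := T_k/T_N(U_k,0) \to 0$; then $x_k \in [0,2)$ by Lemma \ref{mhatboundlemma} and $y_k \in [0,1]$ by definition. Setting $T_{N,k} := T_N(U_k,0)$ and $\ell_k := \ln(1/T_{N,k})$, Lemma \ref{TNlemma} gives $T_{N,k} \downarrow 0$ and $\ell_k \to \infty$. Writing $\Psi_k(\epsilon) := \tanh(\sqrt{x_k^2+\epsilon^2}/(2y_k))/\sqrt{x_k^2+\epsilon^2} - \tanh(\epsilon/2)/\epsilon$, the mean-field identity \eqref{mAFeq3} (with $t_z = 0$, where $N_0$ vanishes outside $[0,4]$) reads $\int_0^{4/T_{N,k}} N_0(T_{N,k}\epsilon)\Psi_k(\epsilon)\,d\epsilon = 0$. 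I would split this integral at $\epsilon = 2$ and bound each piece, relying on the fact that the van Hove singularity of $N_0$ at the origin contributes a factor $\ell_k$ to the integrand.

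For the piece over $[0,2]$, Theorem \ref{2Ddensityth} (through $N_0(s) = \ln(16/s)/(2\pi^2) + O(s^2\ln(1/s))$) supplies both a pointwise lower bound $N_0(T_{N,k}\epsilon) \geq \ell_k/(4\pi^2)$ on $(0,2]$ (for $k$ large) and an upper bound $N_0(T_{N,k}\epsilon) \leq C(\ell_k + 1 + |\ln\epsilon|)$. Using Lemma \ref{int02lemma}(b) to lower bound $\int_0^2 \tanh(\sqrt{x_k^2+\epsilon^2}/(2y_k))/\sqrt{x_k^2+\epsilon^2}\,d\epsilon \geq c|\ln(x_k^2+y_k^2)|$ once $(x_k,y_k)$ is sufficiently small, together with the trivial estimate $\tanh(\epsilon/2)/\epsilon \leq 1/2$, yields
\begin{align*}
\int_0^2 N_0(T_{N,k}\epsilon)\,\Psi_k(\epsilon)\,d\epsilon \geq c\,\ell_k\,|\ln(x_k^2+y_k^2)| - C\ell_k.
\end{align*}

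For the piece over $[2, 4/T_{N,k}]$, the same upper bound on $N_0$ gives $N_0(T_{N,k}\epsilon) \leq C\ell_k$ on the whole range (since $|\ln(T_{N,k}\epsilon)| \leq \ell_k + C$ for $T_{N,k}\epsilon \in [2T_{N,k},4]$), while Lemma \ref{intxylemma}(a) supplies $\int_2^\infty |\Psi_k(\epsilon)|\,d\epsilon \leq C$ uniformly in $(x_k,y_k) \in [0,2]\times [0,1]$; together these produce
\begin{align*}
\left|\int_2^{4/T_{N,k}} N_0(T_{N,k}\epsilon)\,\Psi_k(\epsilon)\,d\epsilon\right| \leq C\ell_k.
\end{align*}

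Substituting both estimates into $\int_0^2 N_0\Psi_k\,d\epsilon = -\int_2^{4/T_{N,k}} N_0\Psi_k\,d\epsilon$ gives $c\ell_k |\ln(x_k^2+y_k^2)| - C\ell_k \leq C\ell_k$, and dividing by $\ell_k$ leaves $|\ln(x_k^2+y_k^2)| \leq C'$, i.e. $x_k^2 + y_k^2 \geq e^{-C'/c} =: r^2 > 0$, contradicting $(x_k, y_k) \to (0,0)$. The delicate point is to arrange the estimates so that the $\ell_k$-factors coming from the logarithmic divergence of $N_0$ appear with matching orders in the lower bound on the $[0,2]$ piece and the upper bound on the $[2,4/T_{N,k}]$ piece, so that they cancel and leave only the term $|\ln(x_k^2+y_k^2)|$; this is the quantitative trace of the divergence of $J(x,y)$ from \eqref{Jdef} as $(x,y) \to (0,0)$ and is precisely what prevents $\hat m$ and $T/T_N$ from vanishing simultaneously.
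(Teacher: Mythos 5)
Your proposal is correct and follows essentially the same route as the paper's proof: the same two-sided bounds on $N_0(\tau\epsilon)$ from Theorem~\ref{2Ddensityth} (lower bound $\gtrsim \ln(1/\tau)$ on $[0,2]$, upper bound $\lesssim \ln(1/\tau)$ on $[2,4/\tau]$), combined with Lemma~\ref{int02lemma}$(b)$ for the positive part of the $[0,2]$ integral and Lemma~\ref{intxylemma}$(a)$ for the tail. The only difference is cosmetic — you phrase it as a sequential contradiction whereas the paper directly shows the quantity $A(x,y,\tau)$ defined in its proof is nonzero for small $\sqrt{x^2+y^2}$ and $\tau$.
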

\begin{proof}
By Theorem \ref{2DNeelth}, $T_N(U,0) \downarrow 0$ as $U \downarrow 0$.
Hence, in view of (\ref{mAFeq3}), it is enough to show that there is an $r>0$ such that
\begin{align}\label{2Dintxynonzero}
A(x,y,\tau) := \int_0^{\frac{4}{\tau}} N_{0}(\tau \epsilon)  
\bigg(\frac{\tanh(\frac{\sqrt{x^2 + \epsilon^2}}{2y})}{\sqrt{x^2 + \epsilon^2}}  -
\frac{\tanh(\frac{\epsilon}{2})}{\epsilon} \bigg) d\epsilon \neq 0
\end{align}
whenever $\tau \in [0,1/4]$ and $x, y \geq 0$ are such that $\sqrt{x^2 + y^2} \leq r$.
By Theorem \ref{2Ddensityth}, there is a $c_1 > 0$ such that 
$$N_{0}(\tau \epsilon) > c_1 (\ln 2 + |\ln(\tau \epsilon)|) > c_1 |\ln{\tau}|$$ 
for all $\tau \in [0,1/4]$ and $\epsilon \in [0,2]$.
Therefore, 
\begin{align*}
A(x,y,\tau) \geq &\; c_1 |\ln{\tau}| \int_0^2 \frac{\tanh(\frac{\sqrt{x^2 + \epsilon^2}}{2y})}{\sqrt{x^2 + \epsilon^2}} d\epsilon
- B(x,y,\tau)
\end{align*}
where
\begin{align*}
B(x,y,\tau) := \int_0^2 N_{0}(\tau \epsilon) \frac{\tanh(\frac{\epsilon}{2})}{\epsilon} d\epsilon
 + \int_2^\infty N_{0}(\tau \epsilon)  
\bigg|\frac{\tanh(\frac{\sqrt{x^2 + \epsilon^2}}{2y})}{\sqrt{x^2 + \epsilon^2}}  -
\frac{\tanh(\frac{\epsilon}{2})}{\epsilon} \bigg| d\epsilon \geq 0.
\end{align*}
Utilizing Lemma \ref{intxylemma} $(a)$, we find, for $x \in [0,2]$, $y \in [0,1]$, and $\tau \in [0,1/4]$,
\begin{align*}
B(x,y,\tau) \leq &\; C\int_0^2 (1 + |\ln{\tau}| + |\ln{\epsilon}|) \frac{\tanh(\frac{\epsilon}{2})}{\epsilon} d\epsilon
	\\
& + C\int_2^\infty (1 + |\ln{\tau}| + |\ln{\epsilon}|)  
\bigg|\frac{\tanh(\frac{\sqrt{x^2 + \epsilon^2}}{2y})}{\sqrt{x^2 + \epsilon^2}}  -
\frac{\tanh(\frac{\epsilon}{2})}{\epsilon} \bigg| d\epsilon
\leq C_1 |\ln \tau|, 
\end{align*}
for some $C_1 > 0$ independent of $x$, $y$, and $\tau$.
Lemma \ref{int02lemma} $(b)$ then shows that there are constants $c > 0$ and $r_1 > 0$ such that
\begin{align}
A(x,y,\tau) \geq |\ln{\tau}| (c |\ln(x^2+y^2)| - C_1) \geq c |\ln(x^2+y^2)| - C_1
\end{align}
for all $\tau \in [0,1/4]$ and all $x, y \geq 0$ such that $0 < \sqrt{x^2 + y^2} \leq r_1$. 
It follows that (\ref{2Dintxynonzero}) holds for all $\tau \in [0,1/4]$ and all $x, y \geq 0$ with $\sqrt{x^2 + y^2} \leq r$ if $r > 0$ is sufficiently small.
\end{proof}

Let $K$ be a compact subset of $[0,1)$ and assume that $T/T_N \in K$. Since $t_z=0$, (\ref{mAFeq3}) can be written as
\begin{align}\label{mAFeq4}
& \frac{\ln(\frac{1}{T_N})}{2 \pi^2} \int_0^{\frac{4}{T_N}} 
\bigg(\frac{\tanh(\frac{\sqrt{\hat{m}^2 + \epsilon^2}}{2T/T_N})}{\sqrt{\hat{m}^2 + \epsilon^2}}  -
\frac{\tanh(\frac{\epsilon}{2})}{\epsilon} \bigg) d\epsilon 
= R(U,T),
\end{align}
where $\hat{m} = \hat{m}(U, 0, T)$, $T_N = T_N(U,0)$, and
$$R(U,T)
:= -\int_{0}^{\frac{4}{T_N}} \bigg(N_0(T_N\epsilon)  - \frac{\ln(\frac{16}{T_N\epsilon})}{2 \pi^2} + \frac{\ln(\frac{16}{\epsilon})}{2 \pi^2} \bigg)  
\bigg(\frac{\tanh(\frac{\sqrt{\hat{m}^2 + \epsilon^2}}{2T/T_N})}{\sqrt{\hat{m}^2 + \epsilon^2}}  -
\frac{\tanh(\frac{\epsilon}{2})}{\epsilon} \bigg) d\epsilon.$$
Equation (\ref{mAFeq4}) implies that
\begin{align}\label{mAFeq5}
& J(\hat{m}, T/T_N)
= \frac{2 \pi^2 R(U,T)}{\ln(\frac{1}{T_N})} + \int_{\frac{4}{T_N}}^\infty \bigg(\frac{\tanh(\frac{\sqrt{\hat{m}^2 + \epsilon^2}}{2T/T_N})}{\sqrt{\hat{m}^2 + \epsilon^2}}  -
\frac{\tanh(\frac{\epsilon}{2})}{\epsilon} \bigg) d\epsilon,
\end{align}
where $J$ is the function defined in (\ref{Jdef}). 
By Theorem \ref{2Ddensityth}, we have 
$$\bigg|N_0(T_N\epsilon)  - \frac{\ln(\frac{16}{T_N\epsilon})}{2 \pi^2} \bigg| \leq C \qquad \text{for $\epsilon \in [0, 4/T_N]$},$$
and hence
$$|R(U,T)| \leq C \int_{0}^{\frac{4}{T_N}} \big( 1 + |\ln \epsilon |\big)
\bigg|\frac{\tanh(\frac{\sqrt{\hat{m}^2 + \epsilon^2}}{2T/T_N})}{\sqrt{\hat{m}^2 + \epsilon^2}}  -
\frac{\tanh(\frac{\epsilon}{2})}{\epsilon} \bigg| d\epsilon.$$

By Lemma \ref{mhatboundlemma}, we have $\hat{m} \in [0,2)$, and by Lemma \ref{2DhatmTlemma}, $\sqrt{\hat{m}^2 + (T/T_N)^2}$ is bounded away from $0$ for all sufficiently small $U > 0$. Hence we may apply Lemma \ref{int02lemma} $(a)$ with $x = \hat{m}$ and $y = T/T_N$ to see that
$$\int_{0}^2 \big( 1 + |\ln \epsilon |\big)
\bigg|\frac{\tanh(\frac{\sqrt{\hat{m}^2 + \epsilon^2}}{2T/T_N})}{\sqrt{\hat{m}^2 + \epsilon^2}}  -
\frac{\tanh(\frac{\epsilon}{2})}{\epsilon} \bigg| d\epsilon \leq C$$
for all small $U > 0$ and $T \in [0, T_N]$. Applying also parts $(a)$ and $(b)$ of Lemma \ref{intxylemma} with $x = \hat{m}$, $y = T/T_N$, $z = 4/T_N$, and $\delta = 0$, we obtain the following estimates for all small enough $U > 0$ and all $T \in [0, T_N]$:
\begin{align}\label{RUTbounded}
& |R(U,T)| \leq C,
	\\ \label{intsmallerthanTN2}
& \bigg|\int_{\frac{4}{T_N}}^\infty \bigg(\frac{\tanh(\frac{\sqrt{\hat{m}^2 + \epsilon^2}}{2T/T_N})}{\sqrt{\hat{m}^2 + \epsilon^2}}  -
\frac{\tanh(\frac{\epsilon}{2})}{\epsilon} \bigg) d\epsilon \bigg|
\leq C \big(T_N e^{-\frac{4}{T_N}}  + T_N^2 \big) \leq C T_N^2.
\end{align}
Consequently, using (\ref{mAFeq5}) and (\ref{TNexpansion2D}),
\begin{align}\label{Jestimate}
& |J(\hat{m}, T/T_N)|
\leq \frac{C}{\ln(\frac{1}{T_N})} + CT_N^2
\leq \frac{C}{\ln(\frac{1}{T_N})} \leq C \sqrt{U},
\end{align}
for all sufficiently small $U > 0$ and all $T \in [0, T_N]$.

The inequality (\ref{Jestimate}) shows that $J(\hat{m}, T/T_N) = O(\sqrt{U})$  for small $U$. Since $f_{\BCS}(y)$ is defined as the unique solution of $J(f_{\BCS}(y), y) =0$, this suggests that $\hat{m}(U,0,T)$ tends to $f_{\BCS}(T/T_N)$ as $U \downarrow 0$. We will use the following lemma to make this precise.

\begin{lemma}\label{GJlemma}
For all $x \geq 0$ and all $y \in [0,1]$ such that $e^{J(x, y)} y \leq 1$, it holds that
\begin{align}\label{fBCSJrelation}
x = f_{\BCS}(e^{J(x, y)} y) e^{-J(x, y)}.
\end{align}
\end{lemma}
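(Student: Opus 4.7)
The plan is to reduce the lemma to a scaling identity for $J$. Namely, I will show that for any $\lambda > 0$, any $x \geq 0$, and any $y \in [0, 1/\lambda]$,
\begin{align*}
J(\lambda x, \lambda y) = J(x,y) - \ln \lambda.
\end{align*}
Granting this, the lemma follows immediately: set $\alpha := J(x,y)$ and $\lambda := e^{\alpha}$, which is legitimate because by hypothesis $\lambda y = e^{J(x,y)} y \leq 1$. The identity then gives $J(e^{\alpha} x, e^{\alpha} y) = \alpha - \alpha = 0$. Since $f_{\BCS}$ is defined as the unique solution of $J(f_{\BCS}(y'), y') = 0$ for $y' \in [0,1]$, this forces $e^{\alpha} x = f_{\BCS}(e^{\alpha} y)$, i.e.\ $x = f_{\BCS}(e^{J(x,y)} y) e^{-J(x,y)}$, as claimed.

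To establish the scaling identity, I would change variables $\epsilon = \lambda \eta$ in the integral defining $J(\lambda x, \lambda y)$. Using $\sqrt{\lambda^2 x^2 + \epsilon^2} = \lambda \sqrt{x^2 + \eta^2}$, the first term in the integrand becomes $\tanh(\sqrt{x^2+\eta^2}/(2y))/(\lambda\sqrt{x^2+\eta^2})$ and multiplying by the Jacobian $\lambda\, d\eta$ reproduces exactly the first term of $J(x,y)$. The second term transforms to $\tanh(\lambda\eta/2)/\eta$, which is not quite the second term of $J(x,y)$. Adding and subtracting, I obtain
\begin{align*}
J(\lambda x, \lambda y) = J(x,y) + \int_0^\infty \bigg(\frac{\tanh(\frac{\eta}{2})}{\eta} - \frac{\tanh(\frac{\lambda\eta}{2})}{\eta}\bigg) d\eta.
\end{align*}
The convention $\tanh(\cdot/y) \equiv 1$ for $y=0$ makes the $y=0$ case go through the same way.

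The main technical obstacle is that the two terms in the remaining integral are each individually divergent at $\infty$. I would therefore introduce a cutoff $M$: for the truncated integral, the substitution $\zeta = \lambda \eta$ in the second piece yields
\begin{align*}
\int_0^M \frac{\tanh(\frac{\eta}{2})}{\eta}\, d\eta - \int_0^M \frac{\tanh(\frac{\lambda\eta}{2})}{\eta}\, d\eta = -\int_M^{\lambda M} \frac{\tanh(\frac{\zeta}{2})}{\zeta}\, d\zeta.
\end{align*}
Splitting $\tanh(\zeta/2)/\zeta = 1/\zeta - (1-\tanh(\zeta/2))/\zeta$, the first piece gives exactly $-\ln \lambda$ (independent of $M$), while the second piece is bounded by $C \int_M^{\lambda M} e^{-\zeta}/\zeta\, d\zeta = O(e^{-M})$ as $M \to \infty$, since $1-\tanh(\zeta/2) = O(e^{-\zeta})$. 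Taking $M \to \infty$ yields $J(\lambda x, \lambda y) - J(x,y) = -\ln \lambda$, completing the derivation.
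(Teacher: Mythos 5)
Your proposal is correct. You reduce the lemma to the scaling identity $J(\lambda x,\lambda y)=J(x,y)-\ln\lambda$, which you prove by substituting $\epsilon=\lambda\eta$ in the defining integral and then evaluating the leftover Frullani-type integral $\int_0^\infty\big(\tanh(\tfrac{\eta}{2})-\tanh(\tfrac{\lambda\eta}{2})\big)\frac{d\eta}{\eta}=-\ln\lambda$ via a cutoff, with the error controlled by $1-\tanh(\zeta/2)=O(e^{-\zeta})$. The conclusion then follows by taking $\lambda=e^{J(x,y)}$ and invoking uniqueness of the zero of $J(\cdot,y')$.

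The paper proves the same result by a different computation: it integrates $J(x,y)$ by parts using $\frac{d}{d\epsilon}\ln(\epsilon+\sqrt{x^2+\epsilon^2})=1/\sqrt{x^2+\epsilon^2}$ and after two further changes of variables arrives at the explicit representation $J(x,y)=\ln\pi-\gamma-H(x/(2y))-\ln x$ with $H(a)=a\int_1^\infty\ln(s+\sqrt{s^2-1})\,\sech^2(sa)\,ds$; the scaling property is then read off from the additive $-\ln x$ term and the fact that $H$ depends on $x$ and $y$ only through $x/y$. Your route is more elementary and self-contained: it uses only the direct change of variables plus the Frullani evaluation, and avoids the integration-by-parts manipulations and the auxiliary function $H$. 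The paper's route has the side benefit of producing a semi-explicit formula for $J$, but that is not needed for the lemma itself. Both arguments cover $y=0$ correctly under the convention $\tanh(\cdot/0)\equiv 1$, and the hypothesis $e^{J(x,y)}y\le 1$ is used in exactly the same way in both.
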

\begin{proof}
Let $x \geq 0$ and $y \in [0,1]$.
We write the definition (\ref{Jdef}) of $J$ as 
$$J(x,y) = \lim_{M \to +\infty}\bigg\{\int_0^M \frac{\tanh(\frac{\sqrt{x^2 + \epsilon^2}}{2y})}{\sqrt{x^2 + \epsilon^2 }} d\epsilon
- \int_0^M \frac{\tanh(\frac{\epsilon}{2})}{\epsilon} d\epsilon\bigg\}.$$
Integrating by parts in both integrals using that
$$\frac{d}{d\epsilon}\ln(\epsilon + \sqrt{x^2 + \epsilon^2}) = \frac{1}{\sqrt{x^2 + \epsilon^2}},$$
we find
\begin{align*}
J(&x,y) =  \lim_{M \to +\infty}\bigg\{\bigg[\ln \left(\epsilon + \sqrt{x^2 + \epsilon^2}\right) \tanh\Big(\frac{\sqrt{x^2 + \epsilon^2}}{2y}\Big)\bigg]_{\epsilon = 0}^M
 - \bigg[\ln(\epsilon) \tanh\Big(\frac{\epsilon}{2}\Big)\bigg]_{\epsilon = 0}^M
	\\
&\hspace{1.3cm}   - \int_0^M \frac{ \ln\big(\epsilon + \sqrt{x^2 + \epsilon^2}\big) \epsilon}{\cosh^2(\frac{\sqrt{x^2 + \epsilon^2}}{2y}) 2y \sqrt{x^2 + \epsilon^2}} d\epsilon
+ \int_0^M  \frac{\ln(\epsilon)}{2\cosh^2(\epsilon/2)} d\epsilon \bigg\}
	\\
= &\; \ln(2) - \ln(x) \tanh\Big(\frac{x}{2y}\Big)
- \int_0^\infty \frac{\ln \big(\epsilon + \sqrt{x^2 + \epsilon^2}\big) \epsilon}{\cosh^2(\frac{\sqrt{x^2 +  \epsilon^2}}{2y}) 2y \sqrt{x^2 + \epsilon^2}} d\epsilon
+ \int_0^\infty \frac{\ln(\epsilon)}{2\cosh^2(\epsilon/2)} d\epsilon .
\end{align*}
The last integral can be computed exactly and equals $\ln \left(\frac{\pi }{2}\right)- \gamma$.
The change of variables $\epsilon = \sqrt{s^2 - x^2}$ in the first integral on the right-hand side gives
\begin{align*}
J(x,y) = &\; \ln(2)  - \ln(x) \tanh\Big(\frac{x}{2y}\Big)
- \int_{x}^\infty \frac{\ln\big(s + \sqrt{s^2-x^2}\big) }{2 \cosh^2(\frac{s}{2y})y} ds
 + \ln \left(\frac{\pi }{2}\right) - \gamma.
\end{align*}
Changing variables $s \to sx$, we arrive at
\begin{align*}
J(x, y) = &\; \ln(\pi) - \gamma - H\Big(\frac{x}{2y}\Big) - \ln x,
\end{align*}
where $H:[0,+\infty) \to [0,+\infty)$ is defined by
$$H(a) = a \int_{1}^\infty \frac{\ln \big(s + \sqrt{s^2 -1}\big)}{\cosh^2(s a)} ds.$$
It follows that
$$J(xe^{J(x, y)}, e^{J(x, y)} y) 
= \ln(\pi) - \gamma - H\Big(\frac{x}{2y}\Big) - \ln x - J(x, y) = 0.$$
Since, by definition, $f_{\BCS}(e^{J(x, y)} y)$ is the unique solution of $J(f_{\BCS}(e^{J(x, y)} y), e^{J(x, y)} y) =0$, we conclude that $xe^{J(x, y)} = f_{\BCS}(e^{J(x, y)} y),$ which is the desired conclusion.
\end{proof}

Taylor's theorem applied to $z \mapsto f_{\BCS}(e^{z}y)e^{-z}$ gives the identity
\begin{align*}
f_{\BCS}(e^{z}y)e^{-z} = &\; f_{\BCS}(y) + \big(y f_{\BCS}'(y) - f_{\BCS}(y)\big)z 
	\\
& + \int_0^{z} \Big(y^2 f_{\BCS}''(e^t y)e^t  -y f_{\BCS}'(e^t y)+f_{\BCS}(e^t y)e^{-t}  \Big) (z-t)  dt
\end{align*}
for all $z \in \R$ and $y \in [0,1)$ such that $e^{z}y < 1$. Since $f_{\BCS}(y), f_{\BCS}'(y), f_{\BCS}''(y)$ are bounded on any compact subset of $[0, 1)$, we find
\begin{align}\label{fBCStaylor}
f_{\BCS}(e^{z}y)e^{-z} = &\; f_{\BCS}(y) + \big(y f_{\BCS}'(y) - f_{\BCS}(y)\big)z + O(z^2) \qquad \text{as $z \to 0$}
\end{align}
uniformly for $y \in K$.
By (\ref{fBCSJrelation}),
$$\hat{m} = f_{\BCS}\Big(e^{J(\hat{m}, T/T_N)}\frac{T}{T_N}\Big) e^{-J(\hat{m}, T/T_N)},$$
where, by (\ref{Jestimate}), $J(\hat{m}, T/T_N) = O(\sqrt{U})$ uniformly for $T/T_N \in K$ as $U \downarrow 0$. 
Together with (\ref{fBCStaylor}), this yields
\begin{align}\label{mhattaylor}
\hat{m} = &\; f_{\BCS}\Big(\frac{T}{T_N}\Big) + \bigg(\frac{T}{T_N} f_{\BCS}'\Big(\frac{T}{T_N}\Big) - f_{\BCS}\Big(\frac{T}{T_N}\Big)\bigg)J\Big(\hat{m},\frac{T}{T_N}\Big) + O(U),
\end{align}
as $U \downarrow 0$ uniformly for $T/T_N \in K$.
Applying also (\ref{mAFeq5}) and (\ref{intsmallerthanTN2}), we infer that
\begin{align}\label{hatmexpansion}
\hat{m} = &\; f_{\BCS}\Big(\frac{T}{T_N}\Big) + \bigg(\frac{T}{T_N} f_{\BCS}'\Big(\frac{T}{T_N}\Big) - f_{\BCS}\Big(\frac{T}{T_N}\Big)\bigg)\bigg(\frac{2 \pi^2 R(U,T)}{\ln(\frac{1}{T_N})} + O(T_N^2)\bigg) + O(U)
\end{align}
as $U \downarrow 0$ uniformly for all $T$ such that $T/T_N \in K$.

The next step in the proof is to compute the small $U$ limit of $R(U,T)$. To do this, we need the following lemma. 

\begin{lemma}\label{tanhsqrtlemma}
As $U \downarrow 0$, 
\begin{align}\label{tanhsqrtexpansion}
\frac{\tanh(\frac{\sqrt{\hat{m}^2 + \epsilon^2}}{2T/T_N})}{\sqrt{\hat{m}^2 + \epsilon^2}} 
= \frac{\tanh(\frac{\sqrt{f_{\BCS}(T/T_N)^2 + \epsilon^2}}{2T/T_N})}{\sqrt{f_{\BCS}(T/T_N)^2 + \epsilon^2}}\bigg(1 + O\Big(\frac{\sqrt{U}}{1 + \epsilon^2}\Big)\bigg)
\end{align}
uniformly for $\epsilon > 0$ and $T$ such that $T/T_N \in K$.
\end{lemma}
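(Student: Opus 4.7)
The plan is to reduce the claim to a mean-value-theorem estimate for the function
\[
G(u) := \frac{\tanh(\sqrt{u+\epsilon^2}/(2y))}{\sqrt{u+\epsilon^2}}, \qquad y := T/T_N(U,0),
\]
evaluated at the two points $u = \hat{m}^2$ and $u = f_{\BCS}(y)^2$. The essential input is the estimate
\[
\hat{m}(U,0,T) - f_{\BCS}(y) = O(\sqrt{U}) \qquad \text{as $U \downarrow 0$, uniformly for $y\in K$},
\]
which I would read off from (\ref{hatmexpansion}) by combining the bound $|R(U,T)| \leq C$ from (\ref{RUTbounded}) with the estimate $1/\ln(1/T_N(U,0)) = O(\sqrt{U})$, a direct consequence of Theorem \ref{2DNeelth} (since $\ln(1/T_N)\sim 2\pi/\sqrt{U}$). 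Because $K$ is a compact subset of $[0,1)$, $f_{\BCS}(y)$ is pinned between two positive constants on $K$, and hence $\hat{m}$ is likewise bounded and bounded away from $0$ for all sufficiently small $U$.

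A direct computation, exploiting that $G$ depends on $u$ only through $r^2 = u + \epsilon^2$, gives
\[
G'(u) = -\frac{\phi(\sqrt{u+\epsilon^2}/(2y))}{2(u+\epsilon^2)^{3/2}}, \qquad \phi(s) := \tanh(s) - \frac{s}{\cosh^2(s)},
\]
where $\phi$ vanishes like $s^3$ at $s=0$ and tends to $1$ as $s\to\infty$, so $\phi$ is uniformly bounded on $[0,\infty)$. (The $y=0$ case is recovered in the limit via the convention $\tanh(\cdot/y)\equiv 1$.) Applying the mean value theorem to $G$ on the interval between $\hat{m}^2$ and $f_{\BCS}(y)^2$, and using $|\hat{m}^2 - f_{\BCS}(y)^2| = O(\sqrt{U})$, one obtains
\[
|G(\hat{m}^2) - G(f_{\BCS}(y)^2)| \leq \frac{C\sqrt{U}}{(\xi+\epsilon^2)^{3/2}}
\]
for some $\xi$ lying between $\hat{m}^2$ and $f_{\BCS}(y)^2$. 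Thanks to the uniform positive lower bound on $\hat{m}$ and $f_{\BCS}(y)$, this gives $\xi + \epsilon^2 \geq c(1+\epsilon^2)$, hence the right-hand side is $\leq C\sqrt{U}(1+\epsilon^2)^{-3/2}$.

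To turn the absolute difference into a relative one, I would bound $G(f_{\BCS}(y)^2)$ from below: since $f_{\BCS}(y)/(2y)$ is bounded below by a positive constant on $K\cap(0,1)$ (and the case $y=0$ is immediate from the convention), $\tanh(\sqrt{f_{\BCS}(y)^2+\epsilon^2}/(2y)) \geq \tau_0 > 0$ uniformly, and hence
\[
G(f_{\BCS}(y)^2) \geq \frac{\tau_0}{\sqrt{f_{\BCS}(y)^2+\epsilon^2}} \geq \frac{c}{\sqrt{1+\epsilon^2}}.
\]
Dividing the last two displayed estimates produces the asserted $O(\sqrt{U}/(1+\epsilon^2))$ relative error.

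The main technical point is keeping track of the two different $\epsilon$-weights $(1+\epsilon^2)^{-3/2}$ and $(1+\epsilon^2)^{-1/2}$ produced by the absolute difference and by the lower bound, respectively, and checking that their quotient is exactly $(1+\epsilon^2)^{-1}$; a split into $\epsilon\leq 1$ and $\epsilon\geq 1$ may be convenient to make each bound explicit. Handling the boundary case $y=0$ uniformly with $y>0$ is a small bookkeeping issue that is resolved by the standing convention $\tanh(\cdot/0)\equiv 1$ and by the fact that $f_{\BCS}(0) = \pi e^{-\gamma} > 0$.
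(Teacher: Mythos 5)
Your proof is correct, and it takes a mildly different but genuinely streamlined route compared with the paper. The paper Taylor-expands $\tanh(\sqrt{\hat m^2+\epsilon^2}/(2y))$ and $\sqrt{\hat m^2+\epsilon^2}$ separately as functions of $\hat m$, bounds each relative error by $C\sqrt U/(1+\epsilon^2)$ (dividing by the target $\tanh$, respectively $\sqrt{\cdot}$, each time), and then multiplies the two $(1+O(\cdot))$ factors. You instead work with the combined ratio $G(u)=\tanh(\sqrt{u+\epsilon^2}/(2y))/\sqrt{u+\epsilon^2}$ as a function of $u=\hat m^2$, so a single application of the mean value theorem produces the whole absolute error in one stroke: the identity $G'(u)=-\phi(\sqrt{u+\epsilon^2}/(2y))/(2(u+\epsilon^2)^{3/2})$ with $\phi(s)=\tanh s - s\sech^2 s\in[0,1]$ packages the cancellation that the paper tracks factor-by-factor, and the conversion to a multiplicative error is done by a lower bound on $G(f_{\BCS}(y)^2)$ rather than by normalizing each factor during the expansion. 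Both approaches rely on the same preliminary facts, namely $\hat m=f_{\BCS}(T/T_N)+O(\sqrt U)$ from \eqref{hatmexpansion}, \eqref{RUTbounded}, and \eqref{TNexpansion2D}, on $\hat m$ and $f_{\BCS}(y)$ being bounded away from $0$ uniformly on $K$ (so that $\xi+\epsilon^2\geq c(1+\epsilon^2)$), and on a uniform positive lower bound for the $\tanh$ factor on $K$; your handling of the $y=0$ case via the stated convention is also fine since $\phi(\infty)=1$ matches $G'$ for $G(u)=1/\sqrt{u+\epsilon^2}$. What your version buys is a shorter proof with one derivative computation instead of two Taylor expansions; what the paper's version buys is that each factor's error is exhibited explicitly, which can be convenient if one factor later needs sharper treatment.
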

\begin{proof}
By (\ref{hatmexpansion}) and (\ref{RUTbounded}), we have $\hat{m} = f_{\BCS}(T/T_N) + O(\sqrt{U})$ as $U \downarrow 0$ uniformly for $T/T_N \in K$, and hence, by Taylor's theorem,
\begin{align*}
\tanh\bigg(\frac{\sqrt{\hat{m}^2 + \epsilon^2}}{2T/T_N}\bigg) 
& = \tanh\bigg(\frac{\sqrt{f_{\BCS}(T/T_N)^2 + \epsilon^2}}{2T/T_N}\bigg) 
+ \frac{\xi \sech^2(\frac{\sqrt{\xi^2 + \epsilon^2}}{2T/T_N}) }{2\frac{T}{T_N} \sqrt{\xi^2 + \epsilon^2}}(\hat{m} - f_{\BCS}(T/T_N)),
\end{align*}
where $\xi = \xi(U,T,\epsilon)$ lies in the interval between $\hat{m}$ and $f_{\BCS}(T/T_N)$.
For $\epsilon > 0$, $T/T_N \in K$, and $U > 0$ sufficiently small, we have
\begin{align*}
\frac{1}{\tanh(\frac{\sqrt{f_{\BCS}(T/T_N)^2 + \epsilon^2}}{2T/T_N}) } \frac{\xi \sech^2(\frac{\sqrt{\xi^2 + \epsilon^2}}{2T/T_N}) }{2\frac{T}{T_N} \sqrt{\xi^2 + \epsilon^2}}
& \leq C \frac{e^{-\frac{\sqrt{\xi ^2+\epsilon ^2}}{T/T_N}}}{2\frac{T}{T_N} \sqrt{\xi ^2+\epsilon ^2}}
	\\
& \leq 
C \frac{\sqrt{\xi ^2+\epsilon ^2}}{T/T_N} e^{-\frac{\sqrt{\xi ^2+\epsilon ^2}}{T/T_N}}
 \frac{1}{\xi^2+\epsilon ^2}
 \leq \frac{C}{1+\epsilon ^2}
\end{align*}
and therefore
\begin{align}\label{tanhexpansion}
\tanh\bigg(\frac{\sqrt{\hat{m}^2 + \epsilon^2}}{2T/T_N}\bigg) 
& = \tanh\bigg(\frac{\sqrt{f_{\BCS}(T/T_N)^2 + \epsilon^2}}{2T/T_N}\bigg) 
\bigg(1+ O\Big(\frac{\sqrt{U}}{1 + \epsilon^2}\Big)\bigg)
\end{align}
uniformly for $\epsilon > 0$ and $T/T_N \in K$ as $U \downarrow 0$. Similarly,
$$\sqrt{\hat{m}^2 + \epsilon^2}
= \sqrt{f_{\BCS}(T/T_N)^2 + \epsilon^2} + 
\frac{\eta}{\sqrt{\eta^2 + \epsilon^2}} (\hat{m} - f_{\BCS}(T/T_N))
$$
where $\eta = \eta(U,T,\epsilon)$ lies in the interval between $\hat{m}$ and $f_{\BCS}(T/T_N)$.
For $\epsilon > 0$, $T/T_N \in K$, and $U > 0$ sufficiently small, we have
$$\frac{\eta}{\sqrt{f_{\BCS}(T/T_N)^2 + \epsilon^2} \sqrt{\eta^2 + \epsilon^2}}
\leq \frac{C}{1 + \epsilon^2}$$
and therefore
\begin{align}\label{sqrtexpansion}
\sqrt{\hat{m}^2 + \epsilon^2}
= \sqrt{f_{\BCS}(T/T_N)^2 + \epsilon^2}\bigg(1+ O\Big(\frac{\sqrt{U}}{1 + \epsilon^2}\Big)\bigg).
\end{align}
uniformly for $\epsilon > 0$ and $T/T_N \in K$ as $U \downarrow 0$.
The desired conclusion follows by combining (\ref{tanhexpansion}) and (\ref{sqrtexpansion}).
\end{proof}

We can now determine the limiting behavior of $R(U,T)$ as $U$ tends to $0$.

\begin{lemma} \label{Rlemma}
As $U \downarrow 0$,
$$R(U,T) = -\int_{0}^\infty \frac{\ln(\frac{16}{\epsilon})}{2 \pi^2}
\bigg(\frac{\tanh(\frac{\sqrt{f_{\BCS}(T/T_N)^2 + \epsilon^2}}{2T/T_N})}{\sqrt{f_{\BCS}(T/T_N)^2 + \epsilon^2}}  
- \frac{\tanh(\frac{\epsilon}{2})}{\epsilon} \bigg) d\epsilon
+ O(\sqrt{U})$$
uniformly for all $T$ such that $T/T_N\in K$.
\end{lemma}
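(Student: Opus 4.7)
The plan is to establish the lemma by performing, successively, three substitutions in the integral defining $R(U,T)$, each contributing an error of size at most $O(\sqrt{U})$ uniformly for $T/T_N(U,0)\in K$: (i) replace $N_0(T_N\epsilon)$ by its leading small-argument asymptotic $\frac{\ln(16/(T_N\epsilon))}{2\pi^2}$ from Theorem~\ref{2Ddensityth}; (ii) replace $\hat m$ inside the first tanh/sqrt factor by $f_{\BCS}(T/T_N)$ via Lemma~\ref{tanhsqrtlemma}; (iii) extend the integration interval from $(0,4/T_N)$ to $(0,\infty)$. Only step (ii) will ultimately contribute a true $O(\sqrt{U})$ error; the other two will be seen to be exponentially small.

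For (i), Theorem~\ref{2Ddensityth} yields $|E(\epsilon)|\leq C(T_N\epsilon)^2\big(1+|\ln(T_N\epsilon)|\big)$ whenever $T_N\epsilon\in(0,2)$, where $E(\epsilon):=N_0(T_N\epsilon)-\frac{\ln(16/(T_N\epsilon))}{2\pi^2}$, and $|E|$ is uniformly bounded on $T_N\epsilon\in(0,4)$. I would bound the contribution of $E$ to $R(U,T)$ by splitting the $\epsilon$-range at $\epsilon=2$: on $(0,2)$ apply Lemma~\ref{int02lemma}(a), whose hypothesis $\sqrt{\hat m^2+(T/T_N)^2}\geq r$ holds by Lemmas~\ref{mhatboundlemma} and~\ref{2DhatmTlemma}; on $(2,4/T_N)$ use the pointwise decay $|\psi(\epsilon)|\leq C(1/\epsilon^3+e^{-c\epsilon})$ obtained exactly as in the proof of Lemma~\ref{intxylemma}. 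The result is exponentially small in $U^{-1/2}$ by Theorem~\ref{2DNeelth}, hence $o(\sqrt{U})$.

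For (ii), Lemma~\ref{tanhsqrtlemma} shows the replacement produces an error bounded by
\[C\sqrt{U}\int_0^{4/T_N}\frac{|\ln(16/\epsilon)|}{2\pi^2(1+\epsilon^2)}\cdot\frac{\tanh(\sqrt{f_{\BCS}(T/T_N)^2+\epsilon^2}/(2T/T_N))}{\sqrt{f_{\BCS}(T/T_N)^2+\epsilon^2}}\,d\epsilon.\]
On $(0,2)$ the singular factor $1+|\ln\epsilon|$ is absorbed by Lemma~\ref{int02lemma}(a) applied at $x=f_{\BCS}(T/T_N)$, $y=T/T_N$ (which stays bounded away from the origin uniformly on $K$ because $f_{\BCS}$ is bounded below on $K$); on $(2,4/T_N)$ the integrand is dominated by $(1+|\ln\epsilon|)/(\epsilon(1+\epsilon^2))$, which is integrable on $(2,\infty)$. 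Step (iii) is then handled directly by Lemma~\ref{intxylemma}(b) with $z=4/T_N$, producing a tail of order $T_N^2\ln(1/T_N)=o(\sqrt{U})$. Collecting the three estimates gives the claimed identity.

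I expect the main obstacle to be the uniform control in step (ii). The weight $\ln(16/\epsilon)$ is singular at $\epsilon=0$ and grows logarithmically at $\epsilon=4/T_N$, so the gain $(1+\epsilon^2)^{-1}$ provided by Lemma~\ref{tanhsqrtlemma} must be paired simultaneously with a log-weighted estimate near $0$ (Lemma~\ref{int02lemma}(a)) and with power decay of the first term at large $\epsilon$ (as in Lemma~\ref{intxylemma}), while keeping all constants independent of $T/T_N\in K\subset[0,1)$.
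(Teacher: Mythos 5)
Your proposal is correct and follows essentially the same route as the paper's proof: the paper likewise splits $R$ into the piece involving $N_0(T_N\epsilon)-\frac{\ln(16/(T_N\epsilon))}{2\pi^2}$ (your step (i), bounded via Theorem~\ref{2Ddensityth}, Lemma~\ref{int02lemma}(a), and Lemma~\ref{intxylemma}(c)) and the piece weighted by $\frac{\ln(16/\epsilon)}{2\pi^2}$, which it treats by extending the range to $(0,\infty)$ via Lemma~\ref{intxylemma}(b) and replacing $\hat m$ by $f_{\BCS}(T/T_N)$ via Lemma~\ref{tanhsqrtlemma}. The only cosmetic difference is that you perform the $\hat m\to f_{\BCS}$ replacement before extending the integration interval, whereas the paper does these two steps in the opposite order; both orderings are valid and rely on the same ingredients.
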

\begin{proof}
We have $R(U,T) = R_1(U,T) + R_2(U,T)$, where
\begin{align}
& R_1(U,T) = -\int_{0}^{\frac{4}{T_N}} \bigg(N_0(T_N\epsilon)  - \frac{\ln(\frac{16}{T_N\epsilon})}{2 \pi^2}\bigg)  
\bigg(\frac{\tanh(\frac{\sqrt{\hat{m}^2 + \epsilon^2}}{2T/T_N})}{\sqrt{\hat{m}^2 + \epsilon^2}}  -
\frac{\tanh(\frac{\epsilon}{2})}{\epsilon} \bigg) d\epsilon,
	\\ \label{R2def}
& R_2(U,T) = -\int_{0}^{\frac{4}{T_N}} \frac{\ln(\frac{16}{\epsilon})}{2 \pi^2}
\bigg(\frac{\tanh(\frac{\sqrt{\hat{m}^2 + \epsilon^2}}{2T/T_N})}{\sqrt{\hat{m}^2 + \epsilon^2}}  -
\frac{\tanh(\frac{\epsilon}{2})}{\epsilon} \bigg) d\epsilon.
\end{align}
Using the expansion of $N_0(\epsilon)$ proved in Theorem \ref{2Ddensityth}, we see that 
$$|R_1(U,T)| \leq C\int_{0}^{\frac{4}{T_N}} (T_N\epsilon)^2 \big( 1 + |\ln(T_N \epsilon) |\big)  
\bigg|\frac{\tanh(\frac{\sqrt{\hat{m}^2 + \epsilon^2}}{2T/T_N})}{\sqrt{\hat{m}^2 + \epsilon^2}}  -
\frac{\tanh(\frac{\epsilon}{2})}{\epsilon} \bigg| d\epsilon.$$
Recall that $\hat{m} \in [0,2)$ and $\sqrt{\hat{m}^2 + (T/T_N)^2} \geq r > 0$ by Lemma \ref{mhatboundlemma} and Lemma \ref{2DhatmTlemma}. Thus, applying Lemma \ref{int02lemma} $(a)$ and Lemma \ref{intxylemma} $(c)$ with $x = \hat{m}$, $y = T/T_N$, and $z = 4/T_N$, and using also the fact that $T_N = O(e^{-\frac{2 \pi}{\sqrt{U}}})$ as $U \downarrow 0$ by (\ref{TNexpansion2D}), we obtain
\begin{align}\label{R1estimate}
|R_1(U,T)| \leq C T_N^2 (1 + |\ln T_N|^2) \leq C U^{-1} e^{-\frac{4 \pi}{\sqrt{U}}}
\end{align}
for all sufficiently small $U > 0$ and all $T$ such that $T/T_N \in K$.

Part $(b)$ of Lemma \ref{intxylemma} with $\delta = 1$ implies that the upper limit of integration in (\ref{R2def}) can be replaced by $\infty$ with an error of order
$$O\Big(( 1 + |\ln T_N|) \big(T_N e^{-\frac{4}{T_N} } + T_N^2 \big)\Big)
= O\Big( U^{-1/2} e^{-\frac{4\pi}{\sqrt{U}}} \Big).$$
Employing also Lemma \ref{tanhsqrtlemma}, it transpires that
\begin{align}\nonumber
R_2(U,T) = & -\int_{0}^\infty \frac{\ln(\frac{16}{\epsilon})}{2 \pi^2}  
\bigg(\frac{\tanh(\frac{\sqrt{f_{\BCS}(T/T_N)^2 + \epsilon^2}}{2T/T_N})}{\sqrt{f_{\BCS}(T/T_N)^2 + \epsilon^2}}  
- \frac{\tanh(\frac{\epsilon}{2})}{\epsilon} \bigg) d\epsilon
	\\\nonumber
&+ O\bigg(\int_{0}^\infty \bigg| \frac{\ln(\frac{16}{\epsilon})}{2 \pi^2}  
\frac{\tanh(\frac{\sqrt{f_{\BCS}(T/T_N)^2 + \epsilon^2}}{2T/T_N})}{\sqrt{f_{\BCS}(T/T_N)^2 + \epsilon^2}} \bigg| \frac{\sqrt{U}}{1 + \epsilon^2} d\epsilon \bigg)
+ O\Big( U^{-1/2} e^{-\frac{4\pi}{\sqrt{U}}} \Big)
	\\ \label{R2estimate}
= & -\int_{0}^\infty \frac{\ln(\frac{16}{\epsilon})}{2 \pi^2}  
\bigg(\frac{\tanh(\frac{\sqrt{f_{\BCS}(T/T_N)^2 + \epsilon^2}}{2T/T_N})}{\sqrt{f_{\BCS}(T/T_N)^2 + \epsilon^2}}  
- \frac{\tanh(\frac{\epsilon}{2})}{\epsilon} \bigg) d\epsilon
+ O(\sqrt{U})
\end{align}
uniformly for $T$ such that $T/T_N \in K$.
Since $R = R_1 + R_2$, the desired conclusion follows from (\ref{R1estimate}) and (\ref{R2estimate}).
\end{proof}

Substituting the expansion of $R(U,T)$ established in Lemma \ref{Rlemma} into (\ref{hatmexpansion}), we infer that
\begin{align}\nonumber
\hat{m}(U,0,T) = &\; f_{\BCS}\Big(\frac{T}{T_N}\Big) - \bigg(\frac{T}{T_N} f_{\BCS}'\Big(\frac{T}{T_N}\Big) - f_{\BCS}\Big(\frac{T}{T_N}\Big)\bigg)\frac{2 \pi^2}{\ln(\frac{1}{T_N})}
	\\
&\times \int_{0}^\infty \frac{\ln(\frac{16}{\epsilon})}{2 \pi^2}  
\bigg(\frac{\tanh(\frac{\sqrt{f_{\BCS}(T/T_N)^2 + \epsilon^2}}{2T/T_N})}{\sqrt{f_{\BCS}(T/T_N)^2 + \epsilon^2}}  
- \frac{\tanh(\frac{\epsilon}{2})}{\epsilon} \bigg) d\epsilon  + O(U)
\end{align}
as $U \downarrow 0$ uniformly for $T/T_N \in K$.
Since, by (\ref{TNexpansion2D}), 
$$\frac{1}{\ln\frac{1}{T_N}} = \frac{\sqrt{U}}{2\pi} + O(U) \qquad \text{as $U \downarrow 0$},$$
the expansion in (\ref{2Dhatmexpansion}) follows. Since $f_{\BCS}(0) = \pi e^{-\gamma}$, the expansion in (\ref{c1at0}) follows when $T = 0$. This completes the proof of Theorem \ref{2Dmeanfieldth}.

\subsection{Proof of Theorem \ref{3Dmeanfieldth}}\label{3Dmeanfieldsubsec}
Let $K_1$ and $K_2$ be compact subsets of $(0, 2)$ and $[0,1)$, respectively, and suppose that $t_z \in K_1$ and $T/T_N(U, t_z) \in K_2$. The next lemma is analogous to Lemma \ref{2DhatmTlemma}.

\begin{lemma}\label{3DhatmTlemma}
There is an $r > 0$ such that
$$\sqrt{\hat{m}(U, t_z, T)^2 + \Big(\frac{T}{T_N(U,t_z)}\Big)^2} \geq r$$ 
for all sufficiently small $U > 0$, all $t_z \in K_1$, and all $T$ such that $T/T_N(U, t_z) \in K_2$.
\end{lemma}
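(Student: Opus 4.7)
The plan is to mimic the strategy used for Lemma \ref{2DhatmTlemma}, but with a crucial adjustment: here $N_{t_z}(\epsilon)$ is regular at $\epsilon = 0$ (by Theorem \ref{3Ddensityth}), so instead of exploiting a logarithmic blow-up of the density of states, I will exploit that $N_{t_z}(0)$ is strictly positive and bounded away from zero uniformly for $t_z$ in the compact set $K_1 \subset (0,2)$. By Lemma \ref{TNlemma}, $T_N(U,t_z) \downarrow 0$ uniformly for $t_z \geq 0$ as $U \downarrow 0$, so after the change of variables $\epsilon \mapsto T_N \epsilon$, the mean-field equation (\ref{mAFeq3}) reduces the problem to showing that
\begin{align*}
A(x,y,\tau,t_z) := \int_0^{\frac{4+2t_z}{\tau}} N_{t_z}(\tau \epsilon)\left(\frac{\tanh(\frac{\sqrt{x^2+\epsilon^2}}{2y})}{\sqrt{x^2+\epsilon^2}} - \frac{\tanh(\frac{\epsilon}{2})}{\epsilon}\right) d\epsilon \neq 0
\end{align*}
for all $t_z \in K_1$, all sufficiently small $\tau \geq 0$, and all $x,y \geq 0$ such that $\sqrt{x^2+y^2} \leq r$, where $r > 0$ is to be chosen.

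First I would establish two uniform bounds on $N_{t_z}$: by Theorem \ref{3Ddensityth}, $N_{t_z}(\epsilon)$ is real analytic for $|\epsilon| < \min(2t_z, 4-2t_z)$ with coefficients depending continuously on $t_z$, so by compactness of $K_1 \subset (0,2)$ there exist $c_0 > 0$ and $\tau_0, \delta > 0$ such that $N_{t_z}(\eta) \geq c_0$ whenever $t_z \in K_1$ and $|\eta| \leq \delta$; in particular $N_{t_z}(\tau \epsilon) \geq c_0$ for $\epsilon \in [0,2]$ and $\tau \leq \tau_0 = \delta/2$. Moreover, since $N_{t_z}(\epsilon)$ is a continuous, compactly supported function on $[-(4+2t_z), 4+2t_z]$ depending continuously on $t_z$, there is a constant $M > 0$ with $N_{t_z}(\eta) \leq M$ for all $\eta \in \R$ and $t_z \in K_1$.

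Next, I would split $A = A_1 - B$, where $A_1 = \int_0^2 N_{t_z}(\tau\epsilon) \frac{\tanh(\sqrt{x^2+\epsilon^2}/(2y))}{\sqrt{x^2+\epsilon^2}} d\epsilon$ and $B$ collects the remaining nonnegative pieces,
\begin{align*}
B = \int_0^2 N_{t_z}(\tau\epsilon)\frac{\tanh(\frac{\epsilon}{2})}{\epsilon}d\epsilon + \int_2^{\infty} N_{t_z}(\tau\epsilon)\left|\frac{\tanh(\frac{\sqrt{x^2+\epsilon^2}}{2y})}{\sqrt{x^2+\epsilon^2}} - \frac{\tanh(\frac{\epsilon}{2})}{\epsilon}\right|d\epsilon.
\end{align*}
Using $N_{t_z} \leq M$ in the first integral gives a bound independent of $t_z$, and in the second integral Lemma \ref{intxylemma}(a) with $x \in [0,2]$ and $y \in [0,1]$ yields another uniform bound. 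Thus $B \leq C_1$ for some $C_1 > 0$ independent of $x,y,\tau, t_z$ in the relevant ranges. On the other hand, using $N_{t_z}(\tau \epsilon) \geq c_0$ on $[0,2]$ and then Lemma \ref{int02lemma}(b) (with $r$ shrunk if necessary so that the hypothesis of the lemma is met), we get
\begin{align*}
A_1 \geq c_0 \int_0^2 \frac{\tanh(\frac{\sqrt{x^2+\epsilon^2}}{2y})}{\sqrt{x^2+\epsilon^2}} d\epsilon \geq c_0 c\, |\ln(x^2+y^2)|
\end{align*}
whenever $0 < \sqrt{x^2+y^2} \leq r$. Combining the two estimates gives $A \geq c_0 c |\ln(x^2+y^2)| - C_1$, which is strictly positive once $r$ is taken sufficiently small, uniformly in $t_z \in K_1$ and $\tau \in [0,\tau_0]$. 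Taking $U$ small enough that $T_N(U,t_z) \leq \tau_0$ for all $t_z \in K_1$ (possible by the uniform statement in Lemma \ref{TNlemma}), applying this to $(x, y, \tau) = (\hat m, T/T_N, T_N)$, and using that the left-hand side of (\ref{mAFeq3}) vanishes, we conclude that $\sqrt{\hat m^2 + (T/T_N)^2} \geq r$, as claimed.

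The main obstacle is verifying that the lower bound $N_{t_z}(\tau\epsilon) \geq c_0$ holds uniformly for $t_z \in K_1$, which requires the real-analytic control near $\epsilon = 0$ afforded by Theorem \ref{3Ddensityth} and the observation that $N_{t_z}(0) > 0$ continuously for $t_z \in (0,2)$; once this is in place, the rest of the argument is a straightforward adaptation of the proof of Lemma \ref{2DhatmTlemma}, with the constant $N_{t_z}(0)$ playing the role that $|\ln \tau|$ played in the 2D case.
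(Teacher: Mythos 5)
Your proposal takes essentially the same route as the paper: after reducing via (\ref{mAFeq3}) to showing the integral $A(t_z,x,y,\tau)$ is nonzero near $(x,y)=(0,0)$, you use the (uniform in $t_z \in K_1$) positive lower bound on $N_{t_z}$ near $\epsilon=0$ together with Lemma~\ref{int02lemma}(b) to produce a divergent logarithm, and the uniform upper bound on $N_{t_z}$ together with Lemma~\ref{intxylemma}(a) to bound the remaining contribution $B$. The only differences are bookkeeping ones, for instance invoking Lemma~\ref{TNlemma} rather than Theorem~\ref{3DNeelth} for the uniform $T_N \downarrow 0$ and spelling out the compactness argument for the constants $c_0$, $M$, $\tau_0$, which the paper states more tersely.
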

\begin{proof}
By Theorem \ref{3DNeelth}, $T_N(U,t_z) \downarrow 0$ as $U \downarrow 0$ uniformly for $t_z \in K_1$.
Hence, in view of (\ref{mAFeq3}), it is enough to show that there is an $r>0$ such that
\begin{align}\label{3Dintxynonzero}
A(t_z, x,y,\tau) := \int_0^{\frac{4 + 2t_z}{\tau}} N_{t_z}(\tau \epsilon)  
\bigg(\frac{\tanh(\frac{\sqrt{x^2 + \epsilon^2}}{2y})}{\sqrt{x^2 + \epsilon^2}}  -
\frac{\tanh(\frac{\epsilon}{2})}{\epsilon} \bigg) d\epsilon \neq 0
\end{align}
whenever $t_z \in K_1$, $\tau \in [0,1/4]$, and $x, y \geq 0$ are such that $\sqrt{x^2 + y^2} \leq r$.
By Theorem \ref{3Ddensityth}, there is a $c_1 > 0$ such that 
$$N_{t_z}(\tau \epsilon) > c_1$$ 
for all $\tau \in [0,1/4]$, $\epsilon \in [0,2]$, and $t_z \in K_1$.
Therefore, 
\begin{align*}
A(t_z, x,y,\tau) \geq &\; c_1 \int_0^2 \frac{\tanh(\frac{\sqrt{x^2 + \epsilon^2}}{2y})}{\sqrt{x^2 + \epsilon^2}} d\epsilon
- B(t_z, x,y,\tau)
\end{align*}
where
\begin{align*}
B(t_z, x,y,\tau) := \int_0^2 N_{t_z}(\tau \epsilon) \frac{\tanh(\frac{\epsilon}{2})}{\epsilon} d\epsilon
 + \int_2^\infty N_{t_z}(\tau \epsilon)  
\bigg|\frac{\tanh(\frac{\sqrt{x^2 + \epsilon^2}}{2y})}{\sqrt{x^2 + \epsilon^2}}  -
\frac{\tanh(\frac{\epsilon}{2})}{\epsilon} \bigg| d\epsilon \geq 0.
\end{align*}
Utilizing Lemma \ref{intxylemma} $(a)$, we find, for $t_z \in K_1$, $x \in [0,2]$, $y \in [0,1]$, and $\tau \in [0,1/4]$,
\begin{align*}
B(t_z, x,y,\tau) \leq &\; C\int_0^2  \frac{\tanh(\frac{\epsilon}{2})}{\epsilon} d\epsilon
+ C\int_2^\infty  
\bigg|\frac{\tanh(\frac{\sqrt{x^2 + \epsilon^2}}{2y})}{\sqrt{x^2 + \epsilon^2}}  -
\frac{\tanh(\frac{\epsilon}{2})}{\epsilon} \bigg| d\epsilon
\leq C_1, 
\end{align*}
for some $C_1 > 0$ independent of $t_z$, $x$, $y$, and $\tau$.
Lemma \ref{int02lemma} $(b)$ then shows that there are constants $c > 0$ and $r_1 > 0$ such that
\begin{align}
A(t_z, x,y,\tau) \geq c |\ln(x^2+y^2)| - C_1
\end{align}
for all $t_z \in K_1$, $\tau \in [0,1/4]$, and all $x,y \geq 0$ such that $0 < \sqrt{x^2 + y^2} \leq r_1$. 
It follows that (\ref{3Dintxynonzero}) holds for all $t_z \in K_1$, $\tau \in [0,1/4]$, and all $x,y \geq 0$ with $\sqrt{x^2 + y^2} \leq r$ if $r > 0$ is sufficiently small.
\end{proof}

Let us write (\ref{mAFeq3}) as
\begin{align}\label{mAFeq43D}
 \int_0^{\frac{4 + 2t_z}{T_N}} 
\bigg(\frac{\tanh(\frac{\sqrt{\hat{m}^2 + \epsilon^2}}{2T/T_N})}{\sqrt{\hat{m}^2 + \epsilon^2}}  -
\frac{\tanh(\frac{\epsilon}{2})}{\epsilon} \bigg) d\epsilon 
= E(U,t_z,T),
\end{align}
where $\hat{m} = \hat{m}(U, t_z, T)$, $T_N = T_N(U,t_z)$, and
$$E(U,t_z,T)
:= -\int_0^{\frac{4 + 2t_z}{T_N}} \frac{N_{t_z}(T_N\epsilon)  - N_{t_z}(0)}{N_{t_z}(0)} 
\bigg(\frac{\tanh(\frac{\sqrt{\hat{m}^2 + \epsilon^2}}{2T/T_N})}{\sqrt{\hat{m}^2 + \epsilon^2}}  -
\frac{\tanh(\frac{\epsilon}{2})}{\epsilon} \bigg) d\epsilon.$$
Equation (\ref{mAFeq43D}) implies that
\begin{align}\label{mAFeq6}
& J(\hat{m}, T/T_N)
= E(U,t_z,T) + \int_{\frac{4 + 2t_z}{T_N}}^\infty \bigg(\frac{\tanh(\frac{\sqrt{\hat{m}^2 + \epsilon^2}}{2T/T_N})}{\sqrt{\hat{m}^2 + \epsilon^2}}  -
\frac{\tanh(\frac{\epsilon}{2})}{\epsilon} \bigg) d\epsilon,
\end{align}
where $J$ is the function defined in (\ref{Jdef}). 
By Theorem \ref{3DNeelth},
\begin{align}\label{3DTNbound}
T_N(U, t_z) = O\big(e^{- \frac{1}{N_{t_z}(0) U}}\big)
\qquad \text{as $U \downarrow 0$}
\end{align}
uniformly for $t_z \in K_1$. 
From the expression for $N_{t_z}(0)$ given in Theorem \ref{3Ddensityth} we deduce that there is a $C > 0$ such that $1/C \leq N_{t_z}(0) \leq C$ for all $t_z \in K_1$. In particular, $T_N(U, t_z) \to 0$ as $U \downarrow 0$ uniformly for $t_z \in K_1$. Furthermore, by Lemma \ref{mhatboundlemma}, $\hat{m} \in [0, 2)$. 
Thus part $(b)$ of Lemma \ref{intxylemma} with $\delta = 0$ shows that the integral on the right-hand side of (\ref{mAFeq6}) is bounded above by
\begin{align}\label{estimate1}
\int_{\frac{4 + 2t_z}{T_N}}^\infty \bigg|\frac{\tanh(\frac{\sqrt{\hat{m}^2 + \epsilon^2}}{2T/T_N})}{\sqrt{\hat{m}^2 + \epsilon^2}}  -
\frac{\tanh(\frac{\epsilon}{2})}{\epsilon} \bigg| d\epsilon 
\leq C\big(T_N e^{-\frac{4}{T_N}}  + T_N^2 \big)
\leq C T_N^2  \leq Ce^{- \frac{2}{N_{t_z}(0) U}}
\end{align}
for all sufficiently small $U> 0$, uniformly for $t_z \in K_1$ and $T/T_N \in K_2$.
On the other hand, by Theorem \ref{3Ddensityth}, we have 
$$\bigg|\frac{N_{t_z}(T_N\epsilon)  - N_{t_z}(0)}{N_{t_z}(0)} \bigg| \leq C T_N^2 \epsilon^2 \qquad \text{for $\epsilon \in \bigg[0, \frac{4 + 2t_z}{T_N}\bigg]$ and $t_z \in K_1$},$$
and hence
$$|E(U,t_z,T)| \leq C T_N^2 \int_{0}^{\frac{4 + 2t_z}{T_N}} \epsilon^2
\bigg|\frac{\tanh(\frac{\sqrt{\hat{m}^2 + \epsilon^2}}{2T/T_N})}{\sqrt{\hat{m}^2 + \epsilon^2}}  -
\frac{\tanh(\frac{\epsilon}{2})}{\epsilon} \bigg| d\epsilon$$
for all sufficiently small $U> 0$ and uniformly for $t_z \in K_1$ and $T/T_N \in K_2$.
Since $\hat{m} \in [0,2)$ and $\sqrt{\hat{m}^2 + (T/T_N)^2} \geq r > 0$ by Lemma \ref{mhatboundlemma} and Lemma \ref{3DhatmTlemma}, we may employ Lemma \ref{int02lemma} $(a)$ and Lemma \ref{intxylemma} $(c)$ with $\delta = 0$ to get
\begin{align}\label{Eestimate}
|E(U,t_z,T)| \leq C T_N^2 \big(1 + |\ln T_N | \big)
\leq C T_N^2 |\ln T_N |
\leq C U^{-1}e^{- \frac{2}{N_{t_z}(0) U}}
\end{align}
for all sufficiently small $U> 0$ and uniformly for $t_z \in K_1$ and $T/T_N \in K_2$.
Utilizing (\ref{estimate1}) and (\ref{Eestimate}) in (\ref{mAFeq6}), we conclude that 
\begin{align}\label{mAFeq7}
& J(\hat{m}, T/T_N) = O\Big(U^{-1} e^{- \frac{2}{N_{t_z}(0) U}} \Big)
\end{align}
and hence, by the same argument that led to (\ref{mhattaylor}),
\begin{align}\nonumber
\hat{m} & = f_{\BCS}\Big(\frac{T}{T_N}\Big) 
+ \bigg(\frac{T}{T_N} f_{\BCS}'\Big(\frac{T}{T_N}\Big) - f_{\BCS}\Big(\frac{T}{T_N}\Big)\bigg)J(\hat{m}, T/T_N)
+ O\bigg(\frac{e^{- \frac{4}{N_{t_z}(0) U}}}{U^2} \bigg)
	\\ \label{hatmfBCSJ}
& = f_{\BCS}\Big(\frac{T}{T_N}\Big) + O\Big(U^{-1} e^{- \frac{2}{N_{t_z}(0) U}} \Big)
\end{align}
as $U \downarrow 0$ uniformly for $t_z \in K_1$ and $T\geq 0$ such that $T/T_N \in K_2$.
This completes the proof of Theorem \ref{3Dmeanfieldth}.

\subsection{Proof of Theorem \ref{3Dmeanfieldimprovedth}}\label{3Dmeanfieldimprovedsubsec}
As in the proof of Theorem \ref{3Dmeanfieldth}, we suppose that $t_z \in K_1$ and $T/T_N(U, t_z) \in K_2$ where $K_1$ and $K_2$ are compact subsets of $(0, 2)$ and $[0,1)$, respectively. 
Using Theorem \ref{3Dmeanfieldth}, the next lemma is proved in the same way as Lemma \ref{tanhsqrtlemma}.

\begin{lemma}\label{tanhsqrtlemma2}
As $U \downarrow 0$, 
\begin{align}\label{tanhsqrtexpansion2}
\frac{\tanh(\frac{\sqrt{\hat{m}^2 + \epsilon^2}}{2T/T_N})}{\sqrt{\hat{m}^2 + \epsilon^2}}
= \frac{\tanh(\frac{\sqrt{f_{\BCS}(T/T_N)^2 + \epsilon^2}}{2T/T_N})}{\sqrt{f_{\BCS}(T/T_N)^2 + \epsilon^2}}\bigg(1 + O\bigg(\frac{U^{-1} e^{- \frac{2}{N_{t_z}(0) U}}}{1 + \epsilon^2}\bigg)\bigg)
\end{align}
uniformly for $\epsilon > 0$, $t_z \in K_1$, and $T$ such that $T/T_N \in K_2$.
\end{lemma}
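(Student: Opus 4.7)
The plan is to mimic, essentially verbatim, the proof of Lemma \ref{tanhsqrtlemma}, with the only substantive change being that the input bound on $\hat m - f_{\BCS}(T/T_N)$ is the sharper exponential estimate
$$\hat m(U, t_z, T) - f_{\BCS}\big(T/T_N(U, t_z)\big) = O\big(U^{-1} e^{-2/(N_{t_z}(0) U)}\big),$$
which is precisely what Theorem \ref{3Dmeanfieldth} provides, uniformly for $t_z \in K_1$ and $T/T_N \in K_2$.

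First I apply the mean value theorem separately to the numerator and the denominator, writing (with $f_{\BCS} := f_{\BCS}(T/T_N)$)
$$\tanh\bigg(\frac{\sqrt{\hat m^2+\epsilon^2}}{2T/T_N}\bigg)
= \tanh\bigg(\frac{\sqrt{f_{\BCS}^2+\epsilon^2}}{2T/T_N}\bigg)
+ \frac{\xi\,\sech^2\!\big(\frac{\sqrt{\xi^2+\epsilon^2}}{2T/T_N}\big)}{2(T/T_N)\sqrt{\xi^2+\epsilon^2}}\,(\hat m - f_{\BCS})$$
and
$$\sqrt{\hat m^2+\epsilon^2}
= \sqrt{f_{\BCS}^2+\epsilon^2}
+ \frac{\eta}{\sqrt{\eta^2+\epsilon^2}}\,(\hat m - f_{\BCS}),$$
with $\xi, \eta$ in the interval between $\hat m$ and $f_{\BCS}$. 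Forming the ratio to the reference expression, one then wants to show that each relative remainder is $O(U^{-1} e^{-2/(N_{t_z}(0) U)}/(1+\epsilon^2))$.

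The key quantitative step, which is identical to the one in Lemma \ref{tanhsqrtlemma}, is the bound
$$\frac{1}{\tanh\!\big(\frac{\sqrt{f_{\BCS}^2+\epsilon^2}}{2T/T_N}\big)} \cdot \frac{\xi\,\sech^2\!\big(\frac{\sqrt{\xi^2+\epsilon^2}}{2T/T_N}\big)}{2(T/T_N)\sqrt{\xi^2+\epsilon^2}} \leq \frac{C}{1+\epsilon^2},$$
together with $|\eta|/(\sqrt{f_{\BCS}^2+\epsilon^2}\sqrt{\eta^2+\epsilon^2}) \leq C/(1+\epsilon^2)$. These inequalities depend only on the shapes of $\tanh$ and $\sech^2$ and on the fact that $\hat m, f_{\BCS} \in [0,2)$ with $\sqrt{f_{\BCS}^2 + (T/T_N)^2}$ bounded away from $0$ on $K_2$ (a property of $f_{\BCS}$); they do not involve the size of $U$. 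Multiplying the two Taylor expansions then produces a ratio of the form $1 + O(U^{-1} e^{-2/(N_{t_z}(0) U)}/(1+\epsilon^2))$, as claimed.

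The main (mild) obstacle is uniformity: one must check that the constants in the bounds above do not degenerate as $t_z$ ranges over $K_1$ or as $T/T_N$ approaches the endpoint of $K_2$. For this I will use that $f_{\BCS}$ and its derivatives are bounded on $K_2$ (so $f_{\BCS}$ stays bounded and, when $y < 1$, away from $0$) and that $N_{t_z}(0)$ is bounded and bounded away from $0$ for $t_z \in K_1$ by Theorem \ref{3Ddensityth}. Given these uniform bounds, the proof proceeds exactly as in Lemma \ref{tanhsqrtlemma} with $\sqrt{U}$ replaced by $U^{-1} e^{-2/(N_{t_z}(0) U)}$.
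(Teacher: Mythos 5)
Your proposal matches the paper's argument exactly: the paper proves Lemma \ref{tanhsqrtlemma2} by saying it is ``proved in the same way as Lemma \ref{tanhsqrtlemma}'' once the input $\hat m - f_{\BCS}(T/T_N) = O(U^{-1}e^{-2/(N_{t_z}(0)U)})$ from Theorem \ref{3Dmeanfieldth} replaces the $O(\sqrt{U})$ estimate. Your mean-value-theorem decomposition of numerator and denominator, the $C/(1+\epsilon^2)$ bounds, and the uniformity discussion via boundedness of $f_{\BCS}$ on $K_2$ and of $N_{t_z}(0)$ on $K_1$ are precisely what carries over, so the proposal is correct and essentially identical to the paper's proof.
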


The function $J(\hat{m}, T/T_N)$ in (\ref{hatmfBCSJ}) is given by
\begin{align}\label{JhatmTTN}
 J(\hat{m}, T/T_N)
=& -\int_0^\infty \frac{N_{t_z}(T_N\epsilon)  - N_{t_z}(0)}{N_{t_z}(0)} 
\bigg(\frac{\tanh(\frac{\sqrt{\hat{m}^2 + \epsilon^2}}{2T/T_N})}{\sqrt{\hat{m}^2 + \epsilon^2}}  -
\frac{\tanh(\frac{\epsilon}{2})}{\epsilon} \bigg) d\epsilon.
\end{align}
Utilizing (\ref{tanhsqrtexpansion2}) in (\ref{JhatmTTN}), we get
\begin{align}\nonumber
 J(\hat{m}, T/T_N)
=& -\int_0^\infty \frac{N_{t_z}(T_N \epsilon)  - N_{t_z}(0)}{N_{t_z}(0)} 
\bigg(\frac{\tanh(\frac{\sqrt{f_{\BCS}(T/T_N)^2 + \epsilon^2}}{2T/T_N})}{\sqrt{f_{\BCS}(T/T_N)^2 + \epsilon^2}}  -
\frac{\tanh(\frac{\epsilon}{2})}{ \epsilon}\bigg) d \epsilon
	\\ \label{JhatmTTN2}
& + O\Big(U^{-1} e^{- \frac{2}{N_{t_z}(0) U}} S(U,t_z,T)\Big) \qquad \text{as $U \downarrow 0$}
\end{align}
uniformly for $t_z \in K_1$ and  $T/T_N \in K_2$, where 
$$S(U,t_z,T) := \int_0^\infty \bigg|\frac{N_{t_z}(T_N \epsilon)  - N_{t_z}(0)}{N_{t_z}(0)} 
\frac{\tanh(\frac{\sqrt{f_{\BCS}(T/T_N)^2 + \epsilon^2}}{2T/T_N})}{\sqrt{f_{\BCS}(T/T_N)^2 + \epsilon^2}} \bigg| \frac{1}{1 + \epsilon^2}d \epsilon.$$

\begin{lemma}\label{Slemma}
The function $S(U,t_z,T)$ obeys the estimate
\begin{align*}
S(U,t_z,T) = O\Big(U^{-1} e^{- \frac{2}{N_{t_z}(0) U}} \Big)\qquad \text{as $U \downarrow 0$}
\end{align*}
uniformly for $t_z \in K_1$ and  $T/T_N \in K_2$.
\end{lemma}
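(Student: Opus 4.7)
The plan is to split the integral defining $S(U,t_z,T)$ at an $\epsilon$-threshold tuned so that on the near-origin piece we may exploit the real-analyticity of $N_{t_z}$ from Theorem \ref{3Ddensityth}, while on the tail the compact support of $N_{t_z}$ (plus the quadratic decay from $1/(1+\epsilon^2)$) gives enough room to spare. Concretely I would set $M := \min(2t_z, 4-2t_z)/(2T_N)$, so that on $[0,M]$ the argument $T_N \epsilon$ stays inside the disk of analyticity of $N_{t_z}$ around $0$, while $M \asymp 1/T_N$ for $t_z$ in the compact set $K_1$.

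\textbf{Near origin, $\epsilon \in [0,M]$.} Since $N_{t_z}$ is even, its Taylor expansion at $0$ begins $N_{t_z}(\epsilon) - N_{t_z}(0) = \tfrac{N_{t_z}''(0)}{2} \epsilon^2 + \cdots$, with coefficients bounded uniformly in $t_z \in K_1$ by the explicit integral representations in Theorem \ref{3Ddensityth} (combined with a compactness argument in the disk of convergence). Hence $|N_{t_z}(T_N \epsilon) - N_{t_z}(0)| \leq C (T_N \epsilon)^2$ on $[0,M]$. For the other factor, the hypothesis $T/T_N \in K_2$ with $K_2$ a compact subset of $[0,1)$ gives $f_{\BCS}(T/T_N) \geq c > 0$ (since $f_{\BCS}$ is decreasing and $f_{\BCS}(1)=0$), so
\[
\frac{\tanh(\tfrac{\sqrt{f_{\BCS}(T/T_N)^2+\epsilon^2}}{2T/T_N})}{\sqrt{f_{\BCS}(T/T_N)^2+\epsilon^2}} \cdot \frac{1}{1+\epsilon^2} \leq \frac{C}{(1+\epsilon^2)^{3/2}}
\]
(treating $y=0$ via the convention $\tanh(\cdot/0)=1$). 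Integrating yields $S\big|_{[0,M]} \leq C T_N^2 \int_0^\infty \epsilon^2/(1+\epsilon^2)^{3/2}\, d\epsilon$, which is $O(T_N^2 \ln(1/T_N))$ after a short computation (the integrand is $\sim 1/\epsilon$ for large $\epsilon$, so truncation at $M$ gives the logarithm).

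\textbf{Tail, $\epsilon \in [M,\infty)$.} Here $|N_{t_z}(T_N\epsilon) - N_{t_z}(0)|$ is crudely bounded by $2\|N_{t_z}\|_\infty$, which is finite and uniform for $t_z \in K_1$ since $N_{t_z}$ is continuous and vanishes outside $[-(4+2t_z), 4+2t_z]$. Using the same $C/(1+\epsilon^2)^{3/2}$ bound on the remaining factor gives a tail contribution of $O(1/M^2) = O(T_N^2)$. Summing the two pieces, $S(U,t_z,T) \leq C T_N^2 \ln(1/T_N)$.

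\textbf{Conclusion.} Finally, insert the N\'eel-temperature asymptotics from Theorem \ref{3DNeelth}: $T_N(U,t_z) = O(e^{-1/(N_{t_z}(0) U)})$ and $\ln(1/T_N) = O(1/U)$, both uniform for $t_z \in K_1$. This gives $S(U,t_z,T) = O(U^{-1} e^{-2/(N_{t_z}(0) U)})$ as required. The only delicate point is the uniformity in $t_z \in K_1$ of the Taylor estimate on $[0,M]$; this is handled by observing that the coefficient formulas in Theorem \ref{3Ddensityth} depend continuously on $t_z \in (0,2)$, so their suprema over $K_1$ are finite. No new analytic input is needed beyond Theorems \ref{3Ddensityth} and \ref{3DNeelth}.
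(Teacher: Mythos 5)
Your proof is correct and takes essentially the same route as the paper: bound $\tanh$ by $1$ and $\sqrt{f_{\BCS}^2+\epsilon^2}$ from below to get a cubically decaying kernel, split the $\epsilon$-range so that the Taylor bound $|N_{t_z}(T_N\epsilon)-N_{t_z}(0)|\leq C(T_N\epsilon)^2$ from Theorem \ref{3Ddensityth} can be used near the origin and a crude bound on the tail, obtain $S=O(T_N^2|\ln T_N|)$, and finish with the N\'eel asymptotics from Theorem \ref{3DNeelth}. The only cosmetic differences are your split point $\min(2t_z,4-2t_z)/(2T_N)$ versus the paper's $(4+2t_z)/T_N$ (the paper tacitly extends the quadratic bound to all of $[0,(4+2t_z)/T_N]$ since once $T_N\epsilon$ is bounded below, the right side is bounded below by a constant while the left side is bounded above), and a small slip where you write $\int_0^\infty \epsilon^2/(1+\epsilon^2)^{3/2}\,d\epsilon$, a divergent integral, before immediately correcting it to the truncated $\int_0^M$ that produces the logarithm.
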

\begin{proof}
Since $|\tanh x| \leq 1$ for $x \in \R$ and since there is a $c > 0$ such that $f_{\BCS}(T/T_N) > c$ for $T/T_N \in K_2$, we have
$$S(U,t_z,T) \leq C \int_0^\infty \bigg|\frac{N_{t_z}(T_N \epsilon)  - N_{t_z}(0)}{N_{t_z}(0)} 
\bigg| \frac{1}{1 + \epsilon^3}d \epsilon$$
for $t_z \in K_1$, $T/T_N \in K_2$, and all sufficiently small $U > 0$.
Since $N_{t_z}(\epsilon) = 0$ for $\epsilon \geq 4 + 2t_z$, this can be expressed as
$$S(U,t_z,T) \leq C \int_0^{\frac{4 + 2t_z}{T_N}} \bigg|\frac{N_{t_z}(T_N \epsilon)  - N_{t_z}(0)}{N_{t_z}(0)} 
\bigg| \frac{1}{1 + \epsilon^3}d \epsilon
+ C \int_{\frac{4 + 2t_z}{T_N}}^\infty \frac{1}{1 + \epsilon^3}d \epsilon.$$
By Theorem \ref{3Ddensityth}, we can use the estimate
$$\bigg|\frac{N_{t_z}(T_N \epsilon)  - N_{t_z}(0)}{N_{t_z}(0)} 
\bigg| \leq C T_N^2 \epsilon^2$$
in the first integral, which gives
$$S(U,t_z,T) \leq C T_N^2 \int_0^{\frac{4 + 2t_z}{T_N}} \frac{\epsilon^2}{1 + \epsilon^3}d \epsilon
+ C T_N^2
\leq C T_N^2 |\ln T_N|.$$
In light of (\ref{3DTNbound}), the lemma follows.
\end{proof}

Theorem \ref{3Dmeanfieldimprovedth} is a direct consequence of (\ref{hatmfBCSJ}), (\ref{JhatmTTN2}), and Lemma \ref{Slemma}.

\bigskip\noindent
{\bf Acknowledgements} {\it We thank J. Henheik and A. Lauritsen for useful discussions. 
E.L. acknowledges support from the Swedish Research Council, Grant No. 2023-04726. 
J.L. acknowledges support from the Swedish Research Council, Grant No.\ 2021-03877.}

\bibliographystyle{plain}
\bibliography{is}

\end{document}